\documentclass[12pt]{article}

\RequirePackage{amsthm,amsmath,amsfonts,amssymb}
\usepackage{graphicx} 

\usepackage{float}

\usepackage{pgfplots}
\usepackage[utf8]{inputenc}
\usepackage{natbib}
\usepackage{paralist}
\usepackage{enumitem}
\usepackage{authblk}
\usepackage{url}
\usepackage[colorlinks,
linkcolor=red,
anchorcolor=blue,
citecolor=blue
]{hyperref}

\usepackage{titlesec}

\titleformat{\paragraph}
  {\normalfont\normalsize\itshape}
  {\theparagraph}
  {1em}
  {}

\usepackage{diagbox}
\usepackage{makecell, multirow}
\allowdisplaybreaks

\usepackage{xcolor}

\definecolor{truegray}{RGB}{153,153,153}
\definecolor{oracleblack}{RGB}{0,0,0}
\definecolor{bayesorange}{RGB}{242,115,26}
\definecolor{mlred}{RGB}{217,26,26}
\definecolor{uvblue}{RGB}{0,77,217}
\definecolor{hpurple}{RGB}{140,0,191}
\definecolor{klbrown}{RGB}{140,89,13}

\usepackage[top=1.5in, left=1in, right=1in, bottom=1in]{geometry}  
\usepackage{xspace,exscale,relsize}
\usepackage{fancybox,shadow}
\usepackage{graphicx}
\usepackage{caption}
\usepackage{subcaption} 
\usepackage{xcolor}
\usepackage{longtable}
\usepackage[title]{appendix}
\usepackage{caption}
\usepackage{extarrows}
\usepackage{comment}
\usepackage{booktabs} 
\usepackage{makecell, multirow}

\usepackage{mathrsfs}
\makeatletter
\let\over=\@@over \let\overwithdelims=\@@overwithdelims
\let\atop=\@@atop \let\atopwithdelims=\@@atopwithdelims
\let\above=\@@above \let\abovewithdelims=\@@abovewithdelims
\makeatother
\usepackage{rotating}

	\usepackage{ifpdf}
	
	\usepackage{psfrag}
	
	\usepackage{prettyref}
	
	\usepackage{pgfplots}
	
	\usepackage{tikz}
	\usetikzlibrary{arrows}
	\tikzstyle{int}=[draw, fill=blue!20, minimum size=2em]
	\tikzstyle{dot}=[circle, draw, fill=blue!20, minimum size=2em]
	\tikzstyle{init} = [pin edge={to-,thin,black}]

	\usepackage[all]{xy}

	\usepackage[ruled,vlined]{algorithm2e}

	\usepackage{mathtools}

	     \def\EE{\mathbb{E}}

	     \def\PP{\mathbb{P}}

	\def\11{\mathbbm{1}}

	\def\calC{{\cal  C}}

	\def\calJ{{\cal  J}}

	\def\calP{{\cal  P}}

	\def\calW{{\cal  W}}

	\def\simind{\stackrel{\mbox{\scriptsize{ind}}}{\sim}}
         \def\simiid{\stackrel{\mbox{\scriptsize{iid}}}{\sim}}

	\DeclareMathOperator{\argmin}{argmin}

	\DeclareMathOperator{\E}{E}

	\DeclareMathOperator{\Var}{Var}

	\def\P{{\mathbb{P}}}
	
	\def\E{{\mathbb{E}}}

	\def\TV{{d_{\rm TV}}}

	\graphicspath {{fig/}}
	
	\usepackage{ifthen}
	\newboolean{aos}
	\setboolean{aos}{FALSE}

	\ifx\eqref\undefined
	\newcommand{\eqref}[1]{~(\ref{#1})}
	\fi
	\ifx\mod\undefined
	\def\mod{\mathop{\rm mod}}
	\fi
	
	\usepackage{bm}

	\def\argmin{\mathop{\rm argmin}}

	\def\exp{\mathop{\rm exp}}

	\def\EE{\Expect}

	\def\Var{\mathrm{Var}}

	\def\PP{\mathbb{P}}

	\newcommand{\abs}[1]{\left| #1 \right|}

	\makeatletter
	\def\bbordermatrix#1{\begingroup \m@th
		\@tempdima 4.75\p@
		\setbox\z@\vbox{%
			\def\cr{\crcr\noalign{\kern2\p@\global\let\cr\endline}}%
			\ialign{$##$\hfil\kern2\p@\kern\@tempdima&\thinspace\hfil$##$\hfil
				&&\quad\hfil$##$\hfil\crcr
				\omit\strut\hfil\crcr\noalign{\kern-\baselineskip}%
				#1\crcr\omit\strut\cr}}%
		\setbox\tw@\vbox{\unvcopy\z@\global\setbox\@ne\lastbox}%
		\setbox\tw@\hbox{\unhbox\@ne\unskip\global\setbox\@ne\lastbox}%
		\setbox\tw@\hbox{$\kern\wd\@ne\kern-\@tempdima\left[\kern-\wd\@ne
			\global\setbox\@ne\vbox{\box\@ne\kern2\p@}%
			\vcenter{\kern-\ht\@ne\unvbox\z@\kern-\baselineskip}\,\right]$}%
		\null\;\vbox{\kern\ht\@ne\box\tw@}\endgroup}
	\makeatother
	
	\newcommand{\Poi}{\mathrm{Poi}}

	\newcommand{\reals}{\mathbb{R}}
	
	\newcommand{\integers}{\mathbb{Z}}
	
	\newcommand{\Expect}{\mathbb{E}}

	\newcommand{\iid}{iid\xspace}

	\newcommand{\pth}[1]{\left( #1 \right)}
	\newcommand{\qth}[1]{\left[ #1 \right]}
	\newcommand{\sth}[1]{\left\{ #1 \right\}}

	\newcommand{\iiddistr}{{\stackrel{\text{\iid}}{\sim}}}

	\newcommand{\indc}[1]{{I\left({#1}\right)}}

	\definecolor{myblue}{rgb}{.8, .8, 1}
	\definecolor{mathblue}{rgb}{0.2472, 0.24, 0.6} 
	\definecolor{mathred}{rgb}{0.6, 0.24, 0.442893}
	\definecolor{mathyellow}{rgb}{0.6, 0.547014, 0.24}

	\newrefformat{eq}{(\ref{#1})}
	\newrefformat{thm}{Theorem~\ref{#1}}
	\newrefformat{th}{Theorem~\ref{#1}}
	\newrefformat{chap}{Chapter~\ref{#1}}
	\newrefformat{sec}{Section~\ref{#1}}
	\newrefformat{seca}{Section~\ref{#1}}
	\newrefformat{asmp}{Assumption~\ref{#1}}
	\newrefformat{fig}{Fig.~\ref{#1}}
	\newrefformat{tab}{Table~\ref{#1}}
	\newrefformat{rmk}{Remark~\ref{#1}}
	\newrefformat{clm}{Claim~\ref{#1}}
	\newrefformat{def}{Definition~\ref{#1}}
	\newrefformat{cor}{Corollary~\ref{#1}}
	\newrefformat{lmm}{Lemma~\ref{#1}}
	\newrefformat{prop}{Proposition~\ref{#1}}
	\newrefformat{pr}{Proposition~\ref{#1}}
	\newrefformat{property}{Property~\ref{#1}}
	\newrefformat{app}{Appendix~\ref{#1}}
	\newrefformat{apx}{Appendix~\ref{#1}}
	\newrefformat{ex}{Example~\ref{#1}}
	\newrefformat{exer}{Exercise~\ref{#1}}
	\newrefformat{soln}{Solution~\ref{#1}}
	
	\def\unifto{\mathop{{\mskip 3mu plus 2mu minus 1mu%
				\setbox0=\hbox{$\mathchar"3221$}%
				\raise.6ex\copy0\kern-\wd0%
				\lower0.5ex\hbox{$\mathchar"3221$}}\mskip 3mu plus 2mu minus 1mu}}

	\ifx\lesssim\undefined
	\def\simleq{{{\mskip 3mu plus 2mu minus 1mu%
				\setbox0=\hbox{$\mathchar"013C$}%
				\raise.2ex\copy0\kern-\wd0%
				\lower0.9ex\hbox{$\mathchar"0218$}}\mskip 3mu plus 2mu minus 1mu}}
	\else
	\def\simleq{\lesssim}
	\fi
	
	\ifx\gtrsim\undefined
	\def\simgeq{{{\mskip 3mu plus 2mu minus 1mu%
				\setbox0=\hbox{$\mathchar"013E$}%
				\raise.2ex\copy0\kern-\wd0%
				\lower0.9ex\hbox{$\mathchar"0218$}}\mskip 3mu plus 2mu minus 1mu}}
	\else
	\def\simgeq{\gtrsim}
	\fi

		\newif\ifmapx
		{\catcode`/=0 \catcode`\\=12/gdef/mkillslash\#1{#1}}
		\edef\jobnametmp{\expandafter\string\csname ic_apx\endcsname}
		\edef\jobnameapx{\expandafter\mkillslash\jobnametmp}
		\edef\jobnameexpand{\jobname}
		\ifx\jobnameexpand\jobnameapx
		\mapxtrue
		\else
		\mapxfalse
		\fi

		\renewcommand{\hat}{\widehat}
		\renewcommand{\tilde}{\widetilde}

		\newcommand{\hatG}{{\hat G}}
		\newcommand{\Gstar}{{G^\ast}}

\theoremstyle{plain}

\newtheorem{theorem}{Theorem}[section]
\newtheorem{lemma}{Lemma}[section]
\newtheorem{corollary}{Corollary}[section]
\newtheorem{proposition}{Proposition}[section]
\newtheorem{myassumption}{Assumption}[section]

\theoremstyle{definition}

\newtheorem{definition}{Definition}[section]

		\numberwithin{figure}{section}

\newcommand{\subexpo}{{\sf SubE}}

\title{Poisson empirical Bayes estimation of sums of random variables via minimum-distance methods}
\author[1]{Stefano Favaro}
\author[2]{Sandra Fortini}
\author[3]{Soham Jana}
\affil[1]{University of Turin and Collegio Carlo Alberto, Turin, IT}
\affil[2]{Bocconi University, Milan, IT}
\affil[3]{University of Notre Dame, Notre Dame, Indiana, US}

\date{}

\begin{document}

\maketitle

\begingroup

\renewcommand{\thefootnote}{}
\footnotetext{\texttt{stefano.favaro@unito.it}}
\footnotetext{\texttt{sandra.fortini@unibocconi.it}}
\footnotetext{\texttt{soham.jana@nd.edu}}
\endgroup

\begin{abstract}
The estimation of sums of functions of observable and unobservable variables is a long-standing problem in statistics, with applications in many domains. We consider this problem in Poisson mixture models, where empirical Bayes provides a natural framework but nonparametric theory remains limited. We develop a nonparametric empirical Bayes methodology based on regular and coarsened minimum-distance estimation of the unknown mixing distribution. For a broad class of such sums, we establish large-sample guarantees showing that the resulting plug-in estimates asymptotically merge with the oracle Bayes estimate. In particular, when the mixing distribution has finite support, we obtain a nearly parametric convergence rate, up to a logarithmic factor. We then provide a finite-sample analysis of two representative sums: the total intensity among units whose observed count does not exceed a fixed threshold, and the number of units whose observed count exceeds their latent intensity. For the total-intensity sum, we establish a lower bound on the minimax regret and derive upper bounds for both regular and coarsened minimum-distance procedures under compact-support and subexponential assumptions on the mixing distribution. These upper bounds match the minimax rate up to logarithmic factors, with the coarsened procedure in the compact-support setting matching the rate exactly when the coarsening level is fixed. The above-average sum displays a markedly different behavior: a lower bound on the minimax regret shows that bounded regret is in general impossible to achieve. We identify additional stability conditions under which bounded regret can be recovered up to logarithmic factors, and show that these conditions are automatically satisfied when the mixing distribution has finite support. Numerical experiments on both synthetic and real data illustrate the performance of the proposed methodology.
\end{abstract}

\tableofcontents


\section{Introduction}\label{sec:intro}
\subsection{Problem setting}

Consider $n\geq 1$ data points $X_{1:n}=(X_1,\ldots,X_n)$, modeled as conditionally independent Poisson random variables with unknown means $\theta_1,\ldots,\theta_n$, with the $\theta_i$'s viewed either as unobservable random variables or as fixed constants. Given a utility function $u:\integers_+\times\mathbb R_+\rightarrow\mathbb R$, a broad class of statistical problems can be formulated as the estimation of sums of random variables
\begin{equation}\label{eq}
S_n=\sum_{i=1}^n u(X_i,\theta_i);
\end{equation}
see, for instance, \citet{Rob(88),Rob(00),Zha(05)} and references therein. As an illustration, we recall the motorist example of \citet{Rob(77)}. For each individual $i$ in a pool of $n$ motorists, let $X_i$ denote the number of accidents in the current year, $Y_i$ the unobservable number of accidents in the following year, and $\theta_i$ the corresponding accident intensity, for $i=1,\ldots,n$. Different choices of the utility function $u$ in \eqref{eq} give rise to different sums $S_{n}$. For instance, if $u(x,\theta)=\theta I(x=\kappa)$, with $\kappa\in\integers_+$, then $S_n$ is the total intensity of motorists with $\kappa$ accidents in the current year; if $u(x,\theta)=I(x>\theta)$, then $S_n$ is the number of motorists whose number of accidents in the current year exceeds their intensity; and if $u(x,\theta)=\theta I(x\leq\kappa)$, with $\kappa\in\integers_+$, then $S_n$ is the expected total number of accidents in the following year among motorists with no more than $\kappa$ accidents in the current year.

The estimation of sums of random variables arises in a broad range of statistical problems. Early work in this area was largely motivated by treatment-effect estimation in clinical trials \citep{Robz(88),Rob(89),Rob(91)}, but the framework has since found applications in many other settings; see \citet{Zha(05)} for an extensive review. These include species sampling problems, such as the estimation of population richness, the missing mass and the number of unseen species \citep{Goo(53),Goo(56),Efr(76),Bun(93),Mao(02)}; data-confidentiality problems involving measures of disclosure risk and risk of re-identification \citep{Bet(90),Ski(02),Roc(19)}; and network-data problems, such as the estimation of node degrees from source-destination data \citep{Var(96),Teb(98),Zha(05)}.

\subsection{Background on empirical Bayes}

\citet{Rob(00)} and \citet{Zha(05)} developed a theory for the efficient estimation of the sums $S_n$ in \eqref{eq} within a mixture-modeling framework. They consider the $(X_i,\theta_i)$'s to be i.i.d. as
\begin{align}\label{eq:model}
X_i\mid\theta_i & \quad\simind\quad \text{Poisson}(\cdot\mid\theta_i),\qquad i=1,\ldots,n,\\
\notag\theta_i & \quad\simiid\quad \Gstar,
\end{align}
where $\text{Poisson}(\cdot\mid\theta)$ is the Poisson kernel with mean $\theta\in\Theta\subseteq\reals_+$, and $\Gstar$ is the unknown mixing distribution on $\Theta$. Under the Poisson mixture model \eqref{eq:model}, the probability mass function of each $X_i$ is
\begin{align}\label{eq:mixture-density}
f_{\Gstar}(x)=\int_{\Theta}\text{Poisson}(x\mid\theta)\Gstar(d\theta),\qquad x\in\integers_+.
\end{align}
Within this framework, empirical Bayes (EB) provides a natural approach to estimating $S_n$ \citep{Rob(56)}. Interpreting $\Gstar$ as a prior distribution, the optimal estimate of $S_n$ under squared-error loss is the posterior mean
\begin{equation}\label{eq:post_exp}
\hat{S}_{n}(\Gstar)=\E_{\Gstar}[S_n\mid X_{1:n}]=\sum_{i=1}^{n}\E_{\Gstar}[u(X_i,\theta_i)\mid X_i],
\end{equation}
that is, the oracle Bayes estimate. The EB approach consists of estimating $\Gstar$ from the data and plugging the resulting estimate into \eqref{eq:post_exp}, a strategy referred to as $g$-modeling \citep{Efr(14)}.

Parametric $g$-modeling assumes that the mixing distribution $\Gstar$ belongs to a parametric family $\{G_\tau:\tau\in\mathcal T\}$, where $\mathcal T$ is a Euclidean space. Estimating $\tau$ by $\hat\tau_n$ yields the plug-in estimate $G_{\hat\tau_n}$ and, upon substitution into \eqref{eq:post_exp}, a corresponding estimate of $S_n$. Under suitable regularity assumptions, this estimate is asymptotically efficient whenever $\hat\tau_n$ is itself asymptotically efficient for $\tau$ \citep[Theorem 2.2]{Zha(05)}. While computationally attractive, parametric $g$-modeling relies on a correctly specified model for the mixing distribution and may therefore be sensitive to misspecification. Nonparametric $g$-modeling removes this restriction by replacing $\Gstar$ in \eqref{eq:post_exp} with a nonparametric estimate $\hat G_n$. A recent development in this direction is the quasi-Bayes approach of \citet{Fav(26)}, which estimates the mixing distribution recursively and establishes frequentist large-sample guarantees for the resulting plug-in estimates of $S_n$. Despite this recent development, nonparametric $g$-modeling for sums of random variables remains largely unexplored. A classical alternative to nonparametric $g$-modeling is the ``$u,v$'' method of \citet{Rob(88)}, which is closely related to the so-called $f$-modeling strategy \citep{Efr(14)}. Although \citet[Theorem 2.5]{Zha(05)} shows that ``$u,v$'' estimates of $S_n$ are asymptotically efficient, their construction applies only to utility functions for which the corresponding ``$u,v$'' equation admits a suitable explicit solution.

\subsection{Preview of our contributions}

Our first contribution is a large-sample theory for nonparametric $g$-modeling of the sums $S_n$ in \eqref{eq} for a general utility function $u$. The theory of \citet{Zha(05)} is primarily concerned with asymptotic efficiency and information bounds, characterizing when regular $\sqrt n$-consistent estimation of $S_n$ is possible and identifying settings in which such estimation cannot be achieved. Here we take a different perspective and study plug-in estimates obtained by replacing the mixing distribution $\Gstar$ in \eqref{eq:post_exp} with nonparametric minimum-distance estimates, including the nonparametric maximum likelihood estimate as a special case. We show that, under general conditions, both regular and coarsened minimum-distance procedures yield estimates of $S_{n}$ that asymptotically merge with the oracle Bayes estimate, including in settings where regular $\sqrt n$-consistent estimation is unavailable. No explicit convergence rate is available in this general setting, reflecting the ill-posedness of recovering the mixing distribution from the Poisson mixture. When the mixing distribution has finite support, however, we obtain a nearly parametric convergence rate, up to a logarithmic factor. These results apply to a broad class of utility functions and do not require the existence of a ``$u,v$'' representation.

Our second contribution is a finite-sample analysis of two representative instances of $S_{n}$: the total-intensity and above-average sums, obtained from \eqref{eq} by taking, respectively, $u(x,\theta)=\theta I(x\leq\kappa)$, with $\kappa\in\integers_+$, and $u(x,\theta)=I(x>\theta)$. For both sums, we measure the regret of an estimator of $S_n$ as its mean-squared discrepancy from the oracle Bayes estimate at the $\sqrt n$-scale. For the total-intensity sum, we first establish a lower bound on the minimax regret and then derive corresponding upper bounds for both regular and coarsened minimum-distance procedures under compact-support and subexponential assumptions on the mixing distribution $\Gstar$. These upper bounds match the minimax rate up to logarithmic factors, with the coarsened procedure in the compact-support setting matching the rate exactly when the coarsening level is fixed. The above-average sum displays a markedly different behavior from the total-intensity sum, as a lower bound on the minimax regret shows that bounded regret is in general impossible to achieve in this estimation problem. This impossibility result motivates the introduction of additional stability conditions. Under these conditions, we derive upper bounds on the regret for both regular and coarsened minimum-distance procedures, showing that bounded regret can be recovered up to logarithmic factors. We further show that, when $\Gstar$ is supported on a finite set, the required stability conditions are automatically satisfied.

We conclude our study by empirically validating the proposed methodology on synthetic and real datasets, focusing on the total-intensity and above-average sums. Both regular and coarsened minimum-distance procedures are compared with the closed-form parametric estimates of \citet[Sections 2.3 and 2.4]{Zha(05)} and, whenever available, with nonparametric ``$u,v$'' estimates. The results show that the proposed procedures perform competitively with these alternatives, with suitable smoothing used for the above-average sum, while remaining computationally efficient and accommodating a broader class of sums through the choice of the utility function $u$.

\subsection{Organization of the paper}

The paper is structured as follows. Section \ref{sec1} introduces regular and coarsened minimum-distance estimates of the mixing distribution $\Gstar$ and establishes finite-sample guarantees for the corresponding mixture probability mass functions. Section \ref{sec2} develops the large-sample theory for the resulting plug-in estimates of $S_{n}$. Sections \ref{sec3} and \ref{sec4} focus on the total-intensity and above-average sums, respectively, providing a finite-sample regret analysis. Section \ref{sec5} presents numerical illustrations. Section \ref{sec6} concludes with a discussion on directions for future research. Proofs and additional numerical illustrations are deferred to the supplementary material.


\section{Regular and coarsened minimum-distance estimation}\label{sec1}

\subsection{Minimum-distance estimation of the mixing distribution}

In this section, we introduce the minimum-distance approach to estimating the mixing distribution $\Gstar$ in the Poisson mixture model \eqref{eq:model}. Recall that the oracle Bayes estimate of $S_n$ is the posterior mean $\hat S_n(\Gstar)$ in \eqref{eq:post_exp}. Nonparametric $g$-modeling replaces the unknown mixing distribution $\Gstar$ with an estimate $\hat G$, yielding the EB estimate
\begin{displaymath}
\hat S_n(\hat G)=\sum_{i=1}^n \E_{\hat G}[u(X_i,\theta_i)\mid X_i].
\end{displaymath}
We estimate $\Gstar$ by minimizing, over a suitable class of mixing distributions $G$, a statistical distance between the empirical distribution $\hat p_m$ of a training sample $(X'_1,\ldots,X'_m)$ and the corresponding Poisson mixture probability mass function $f_G$, with the training sample assumed to be independent of $(X_1,\ldots,X_n)$.

\subsubsection{Regular minimum-distance estimation}

To define the regular minimum-distance estimate of $\Gstar$, let $d$ denote a statistical distance between probability mass functions on $\integers_+$. Throughout the paper, we consider statistical distances satisfying the following assumption.

\begin{myassumption}\label{asmp:EB_t-ell} 
There exist maps $t:\calP(\integers_+)\to \mathbb R$ and $\ell:\mathbb R^2\to\mathbb R$ such that, for any two probability mass functions $q_1,q_2\in\calP(\integers_+)$,
\begin{displaymath}
d(q_1\|q_2)=t(q_1)+\sum_{y\geq0}\ell(q_1(y),q_2(y)),
\end{displaymath}
where, for every $a>0$, the map $b\mapsto \ell(a,b)$ is strictly decreasing and strictly convex, and $\ell(0,b)=0$ for all $b\geq0$.
\end{myassumption}

It is easy to verify that \prettyref{asmp:EB_t-ell} is satisfied, for instance, by the Kullback--Leibler divergence, the squared Hellinger distance, and the chi-squared distance.

Let $X'_1,\dots,X'_m$ be an independent training sample from the Poisson mixture model \eqref{eq:model}, also independent of $X_1,\dots,X_n$. Denote its empirical distribution by
\begin{displaymath}
\hat p_m(x)=\frac1m\sum_{i=1}^m I(X'_i=x),\qquad x\in\integers_+.
\end{displaymath}
Let $\mathcal P(\Theta)$ denote the space of probability measures on $\Theta$, endowed with the weak topology. Given a suitable class $\mathcal C\subseteq\mathcal P(\Theta)$ of mixing distributions, we define a regular minimum-$d$-distance estimate of $\Gstar$ as any solution
\begin{equation}\label{eq:optim-genr}
\widehat G_{d,m}\in\argmin_{G\in\mathcal C}d\!\left(\hat p_m\|f_{G}\right).
\end{equation}
The corresponding EB estimate of $S_n$ is then $\hat S_n(\widehat G_{d,m})$; see \citet[Section 2.1]{jana2025optimal} and references therein.

\subsubsection{Coarsened minimum-distance estimation}

For some choices of the utility function $u$, the oracle Bayes estimate $\hat S_n(\Gstar)$ depends on the mixture probability mass function $f_{\Gstar}$ only through finitely many of its values. This occurs, for instance, for the total-intensity sum, corresponding to $u(x,\theta)=\theta I(x\leq\kappa)$ for some $\kappa\in\integers_+$. In this case,
\begin{displaymath}
\hat{S}_{n}(\Gstar)=\E_{\Gstar}[S_{n}\mid X_{1:n}]=n\sum_{x=0}^{\kappa}(x+1)\frac{f_{\Gstar}(x+1)}{f_{\Gstar}(x)}\hat{p}_{n}(x).
\end{displaymath}
Hence, $\hat S_n(\Gstar)$ depends on $f_{\Gstar}$ only through $f_{\Gstar}(0),\ldots,f_{\Gstar}(\kappa+1)$, so that accurate estimation of these values is sufficient for estimating the oracle Bayes estimate. This observation motivates a coarsened version of minimum-distance estimation, in which the distance is computed after aggregating the tail probabilities beyond a prescribed threshold.

\begin{definition}\label{def:coarsening}
For any probability mass function $p\in\calP(\integers_+)$ and any integer $T\geq0$, define the $T$-coarsening of $p$ by
\begin{equation}\label{eq:coarsening}
\tilde p^{(T)}(y)=\begin{cases}
p(y) & 0\leq y\leq T,\\[0.2cm]
\sum_{j\geq T+1}p(j) & y=T+1.
\end{cases}
\end{equation}
\end{definition}

This coarsening leads to a natural generalization of the regular minimum-$d$-distance estimate \eqref{eq:optim-genr}. Given a class $\mathcal C$ of mixing distributions on $\Theta\subset\reals_+$, the $T$-coarsened minimum-$d$-distance estimate of $\Gstar$ is defined as any solution
\begin{equation}\label{eq:optim-coarse}
\widetilde G_{d,m}^{(T)}\in\argmin_{G\in\mathcal C}d\!\left(\tilde{\hat p}_m^{(T)}\|\tilde f_G^{(T)}\right),
\end{equation}
where $\tilde{\hat p}_m^{(T)}$ and $\tilde f_G^{(T)}$ denote the $T$-coarsenings of $\hat p_m$ and $f_G$, respectively. The corresponding EB estimate of $S_{n}$ is then $\hat S_n(\widetilde G_{d,m}^{(T)})$. By choosing $T$ according to the sum of interest, the coarsened procedure focuses the minimum-distance estimation problem on the components of $f_{\Gstar}$ that are directly relevant for estimating $S_n$.

\subsubsection{Existence of minimum-distance estimates}

For $h<\infty$, let $\mathcal P[0,h]$ denote the class of probability distributions supported on $[0,h]$. For $s>0$, let $\subexpo(s)$ denote the class of $s$-subexponential distributions on $\reals_+$, namely $\subexpo(s)=\left\{G:G((t,\infty))\leq 2e^{-t/s},\ \forall t>0\right\}$. The next theorem gives sufficient conditions for the existence of both the
regular and coarsened minimum-$d$-distance estimates. In particular, these conditions cover the classes $\mathcal P[0,h]$ and $\subexpo(s)$ introduced above.

\begin{theorem}\label{thm:exist-genr}
Under Assumption \ref{asmp:EB_t-ell}, the regular minimum-$d$-distance estimate \eqref{eq:optim-genr} and the $T$-coarsened minimum-$d$-distance estimate \eqref{eq:optim-coarse} exist when the class $\mathcal C$ of mixing distributions on $\Theta$ is a convex and weakly closed subset of $\mathcal P(\Theta)$ and $\mathcal C\subseteq\subexpo(s)$ for some $s>0$.
\end{theorem}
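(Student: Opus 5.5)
The plan is the direct method: a minimizing sequence, compactness of $\mathcal C$, and lower semicontinuity of the objective $G\mapsto d(\hat p_m\|f_G)$ (and its coarsened analogue). Since $t(\hat p_m)$ does not depend on $G$, it suffices to minimize
\[
L(G)=\sum_{y\ge0}\ell\bigl(\hat p_m(y),f_G(y)\bigr).
\]
Because $\ell(0,b)=0$, only the finitely many $y$ in the support $Y_m=\{y:\hat p_m(y)>0\}$ contribute. Thus $L(G)$ is a finite sum of functions $b\mapsto\ell(a,b)$ with $a>0$, each evaluated at $b=f_G(y)$. In the coarsened case the sum runs over $y\in\{0,\dots,T+1\}$ with $\hat p_m(y)>0$, where the last cell is $b=\tilde f_G^{(T)}(T+1)=1-\sum_{y\le T}f_G(y)$. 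Either way, the objective depends on $G$ only through finitely many coordinates of $f_G$.

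\textbf{Step 1: compactness.} Since $\mathcal C\subseteq\subexpo(s)$, we have $G((t,\infty))\le 2e^{-t/s}$ uniformly over $\mathcal C$, so $\mathcal C$ is tight. By Prokhorov's theorem it is relatively compact in the weak topology, and since it is weakly closed it is weakly compact. Convexity is not needed for existence; it would only matter for uniqueness of $f_{\hat G}$ through the strict convexity of $\ell$.

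\textbf{Step 2: continuity of $G\mapsto f_G(y)$.} For each fixed $y$, the map $\theta\mapsto e^{-\theta}\theta^y/y!$ is bounded and continuous on $\reals_+$. Hence $G_k\to G$ weakly implies $f_{G_k}(y)\to f_G(y)$. It follows that $\sum_{y\le T}f_{G_k}(y)$ also converges, so the coarsened tail cell converges as well.

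\textbf{Step 3: lower semicontinuity of $L$.} This is the main obstacle, because $\ell(a,\cdot)$ is only assumed strictly decreasing and strictly convex. It may be defined on $(0,\infty)$ and blow up at $b=0$ (as for the KL divergence), and at $b=0$ it may be extended by $+\infty$ or by a limit value. I would argue as follows.
\begin{itemize}[leftmargin=*]
\item A convex function on an open interval is continuous there, so $b\mapsto\ell(a,b)$ is continuous at every $b>0$.
\item At $b=0$, monotonicity gives that $\lim_{b\downarrow0}\ell(a,b)$ exists in $(-\infty,+\infty]$. If $\ell(a,0)$ is defined as this limit, or as $+\infty$, we obtain lower semicontinuity on $[0,\infty)$.
\end{itemize}
Combining this with Step 2, $L(G)\le\liminf_k L(G_k)$ whenever $G_k\to G$ weakly.

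Two further points have to be checked. First, the infimum must not be $-\infty$. Each term satisfies $\ell(a,b)\ge\ell(a,1)$ because $f_G(y)\le1$ and $\ell(a,\cdot)$ is decreasing, so $L$ is bounded below by a finite constant. Second, the infimum must be finite. This holds, for example, for any $G\in\mathcal C$, since $f_G(y)>0$ for all $y$ whenever $G\neq\delta_0$. If $\mathcal C=\{\delta_0\}$ the problem is trivial.

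\textbf{Step 4: conclusion.} Take a minimizing sequence $G_k\in\mathcal C$. By Step 1, extract a subsequence converging weakly to some $\hat G\in\mathcal C$. By Step 3, $L(\hat G)\le\inf_{\mathcal C}L$, so $\hat G$ is a minimizer. The coarsened estimate \eqref{eq:optim-coarse} follows identically, using the coordinates $\tilde f_G^{(T)}(0),\dots,\tilde f_G^{(T)}(T+1)$.
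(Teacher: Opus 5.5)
Your proof is correct and follows essentially the paper's argument. Both get compactness from tightness of $\subexpo(s)$, Prokhorov's theorem and weak closedness; both reduce the objective to the finitely many cells with positive empirical mass via $\ell(0,b)=0$; and both use continuity of $G\mapsto f_G(y)$. The difference is only in where you minimize. The paper minimizes the strictly convex $v$ over the compact convex image set $S\subset\reals^k$, where convexity of $\mathcal C$ only buys uniqueness of the fitted values $f_{\hat G}(y_i)$. You instead run the direct method on $\mathcal C$, which makes explicit the lower semicontinuity that the paper's step ``strictly convex, hence a minimizer exists'' tacitly requires.

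One correction in Step 3. Setting $\ell(a,0)=+\infty$ gives lower semicontinuity only when $\lim_{b\downarrow0}\ell(a,b)=+\infty$, as for KL and $\chi^2$. When that limit is finite, as for Hellinger with $\ell(a,b)=-2\sqrt{ab}$, you must take $\ell(a,0)$ equal to the limit. This is automatic if $\ell(a,\cdot)$ is real-valued and convex on all of $\reals$, as Assumption~\ref{asmp:EB_t-ell} literally states. Otherwise a minimizing sequence can converge to $\delta_0$, the only $G$ with a zero cell, and the infimum need not be attained.
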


The proof is given in Section \ref{app:existence}. The result applies, in particular, when $\mathcal C$ is either $\mathcal P[0,h]$ or $\subexpo(s)$, or a convex and weakly closed subset thereof. Indeed, for $s$ sufficiently large, $\mathcal P[0,h]\subset\subexpo(s)$, and both classes are convex and weakly closed; the latter property follows from the
Portmanteau theorem \citep[Theorem 29.1]{billingsley2017probability}. The additional assumptions introduced later in the paper, such as Assumptions~\ref{asmp:regular} and \ref{asmp:aa-local-mass}, naturally lead to convex and weakly closed subclasses of $\mathcal P[0,h]$ or $\subexpo(s)$. The corresponding minimum-distance estimates therefore continue to exist after imposing these additional constraints.

\subsection{Finite-sample guarantees for minimum-distance estimates of $f_{G}$}\label{sec11}

For the regular minimum-distance estimate $\hat G_{d,m}$ in \eqref{eq:optim-genr}, \citet[Section 2.2]{jana2025optimal} provides finite-sample guarantees for the corresponding mixture probability mass function $f_{\hat G_{d,m}}$ in squared Hellinger distance. In addition to Assumption \ref{asmp:EB_t-ell}, these guarantees require the following condition on the distance $d$.

\begin{myassumption}\label{asmp:dist_sandwich}
There exist absolute constants $c_1,c_2>0$ such that, for any probability mass functions $q_1,q_2$ on $\integers_+$,
\begin{align}\label{eq:EB_Assupmtion_2}
c_1H^2(q_1,q_2)\leq d(q_1\|q_2)\leq c_2\chi^2(q_1\|q_2).
\end{align}
\end{myassumption}

We first recall the regular minimum-distance bound of \citet[Theorem 2]{jana2025optimal}.

\begin{lemma}[Regular minimum-distance estimation]\label{lem:EB_density-H2}
Let $d$ satisfy Assumptions \ref{asmp:EB_t-ell}--\ref{asmp:dist_sandwich}, and let $\Gstar$ denote the data-generating mixing distribution. There exist constants $C_1,C_2$ such that the following holds for the minimum-$d$-distance estimate $\hat G_{d,m}$ in \eqref{eq:optim-genr}, where the optimization is performed over an appropriate class given below:
\begin{enumerate}[label=(\alph*)~]
\item \label{pt:bounded-regular-dens}  Let $\mathcal C\subset \calP[0,\hat h]$ be a convex and weakly closed class. If $\Gstar\in\calP[0,h]\cap\mathcal C$ with $h\leq\hat h$ and $\hat G_{d,m}$ is optimized over the class $\mathcal C$, then for any $m\geq3$
\begin{displaymath}
\EE\qth{H^2(f_{\Gstar},f_{{\hat G_{d,m}}})}\leq {C_1\over m}\min\sth{\max\{1,\hat h\}{\log m\over \log\log m},\hat h+\log m};
\end{displaymath}
\item \label{pt:subexpo-regular-dens} Let $\mathcal C\subset \subexpo(s)$ be a convex and weakly closed class. If $\Gstar\in\mathcal C$ and $\hat G_{d,m}$ is optimized over the class $\mathcal C$, then for any $m\geq2$
\begin{displaymath}
\EE\qth{H^2(f_{\Gstar},f_{{\hat G_{d,m}}})}\leq {C_2\max\{1,s\}\over m}{\log m}.
\end{displaymath} 
\end{enumerate}
\end{lemma}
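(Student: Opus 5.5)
The plan is the standard entropy-based argument for minimum-distance density estimation, in the style of \citet{jana2025optimal}. It has two ingredients: a deterministic basic inequality coming from the optimality of $\hat G_{d,m}$, and a metric-entropy bound for the Poisson mixture class $\{f_G:G\in\mathcal C\}$.

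\emph{Step 1: basic inequality.} Since $\hat G=\hat G_{d,m}$ minimizes $d(\hat p_m\|f_G)$ over $\mathcal C$ and $\Gstar\in\mathcal C$, we have $d(\hat p_m\|f_{\hat G})\le d(\hat p_m\|f_{\Gstar})$. Assumption~\ref{asmp:EB_t-ell} makes $d(\hat p_m\|\cdot)$ a sum of convex, decreasing functions of the coordinates $f_G(y)$, and $\mathcal C$ is convex. I will therefore compare $\hat G$ with the midpoint $\bar G=(\hat G+\Gstar)/2$. Convexity of $\ell(a,\cdot)$ together with the first-order optimality condition turns the basic inequality into a bound of the form
\[
H^2(f_{\Gstar},f_{\bar G})\lesssim \sum_{y}(\hat p_m(y)-f_{\Gstar}(y))\,\psi\bigl(f_{\bar G}(y),f_{\Gstar}(y)\bigr),
\]
where $\psi$ is a bounded, likelihood-ratio-type weight. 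Since $H^2(f_{\Gstar},f_{\hat G})\le C\,H^2(f_{\Gstar},f_{\bar G})$, it suffices to control the midpoint. The sandwich condition in Assumption~\ref{asmp:dist_sandwich} is what lets me pass between $d$ and $H^2$/$\chi^2$ with absolute constants. The right-hand side is then an empirical process indexed by the class $\{f_{\bar G}\}$.

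\emph{Step 2: bounding the empirical process.} I will control this process with a chaining or peeling argument, in the form of a Wong--Shen / van de Geer type bound for the Hellinger rate. That bound gives
\[
\EE H^2\lesssim \delta_m^2,
\]
where $\delta_m$ solves $m\delta^2\asymp \log N(\delta)$ and $N(\delta)$ is the Hellinger (or sup-norm) covering number of $\{f_G:G\in\mathcal C\}$. In addition, the tail $y>y_0$ must be truncated, with its contribution handled by a direct moment bound.

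\emph{Step 3: entropy bounds and the main obstacle.} The crux is bounding the metric entropy of Poisson mixtures.
\begin{itemize}
\item For $G\in\calP[0,\hat h]$, only the coordinates $y\lesssim \hat h+\log(1/\delta)$ matter; beyond that the tail mass is at most $\delta$. I will approximate each such $G$ by a discrete distribution matching its first $k$ moments. Because $e^{-\theta}\theta^y/y!$ is well approximated by polynomials of degree $k\asymp \max\{\hat h,\log(1/\delta)/\log\log(1/\delta)\}$, the moment-matching distribution has $O(k)$ atoms. Counting atoms and weights on a grid gives $\log N(\delta)\lesssim k\log(1/\delta)$. Solving the fixed-point equation then yields both branches of the minimum in part~\ref{pt:bounded-regular-dens}: the $\hat h+\log m$ branch and the $\max\{1,\hat h\}\log m/\log\log m$ branch.
\item For $\subexpo(s)$, I will truncate $\Theta$ at $h_m\asymp s\log m$, losing mass $O(1/m)$, and then apply the compact-case entropy with $\hat h=h_m$. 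This should give $\log N\lesssim \max\{1,s\}\log m$, and hence part~\ref{pt:subexpo-regular-dens}.
\end{itemize}
I expect the moment-matching and polynomial-approximation entropy bound to be the main obstacle. Its delicate parts are obtaining the sharp $\log m/\log\log m$ dependence, and handling the $1/f_{\Gstar}(y)$ weights in Step~2 uniformly in the tail.
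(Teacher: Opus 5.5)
Your route has a genuine gap: the global-entropy argument you describe cannot deliver the rates in the lemma.

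\textbf{The entropy step loses a factor $\log m$.} Grant your estimate $\log N(\delta)\lesssim k\log(1/\delta)$. The fixed-point equation $m\delta^2\asymp\log N(\delta)$ then forces $\log(1/\delta_m)\asymp\log m$, and hence $\delta_m^2\asymp k\log m/m$. This is an extra factor $\log m$. In part (a) you would get $(\log m)^2/(m\log\log m)$ or $(\hat h+\log m)\log m/m$. In part (b) you would get $\max\{1,s\}(\log m)^2/m$. Your claim $\log N\lesssim\max\{1,s\}\log m$ there conflates the effective dimension $s\log m$ with the metric entropy, which carries an extra $\log(1/\delta)\asymp\log m$.

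In part (a) no sharper entropy bound can repair this. Already the location family $\{\Poi(\theta):\theta\in[0,\hat h]\}$ has $\log N(\delta)\gtrsim\log(1/\delta)$. So any global-entropy fixed point is at least of order $\log m/m$, which is above the target $\log m/(m\log\log m)$. You would need a local-entropy refinement instead.

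\textbf{Step~1 is not justified for general $d$.} Assumptions~\ref{asmp:EB_t-ell}--\ref{asmp:dist_sandwich} only sandwich the values of $d$. For a general $\ell$, for instance $\ell(a,b)=a^2/b$ in the $\chi^2$ case, the first-order optimality condition is nonlinear in $\hat p_m$. So there is no linear empirical process with bounded weights to chain over; that linearization is special to the likelihood.

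\textbf{The missing idea.} The sandwich condition makes all of this unnecessary. Because $H$ is a metric and $d(\hat p_m\|f_{\hat G_{d,m}})\le d(\hat p_m\|f_{\Gstar})$,
\[
H^2(f_{\hat G_{d,m}},f_{\Gstar})\le 2H^2(\hat p_m,f_{\hat G_{d,m}})+2H^2(\hat p_m,f_{\Gstar})\le\frac{4}{c_1}\,d(\hat p_m\|f_{\Gstar})\le\frac{4c_2}{c_1}\,\chi^2(\hat p_m\|f_{\Gstar}),
\]
and the right-hand side no longer involves $\mathcal C$ or $\hat G_{d,m}$. Now split on the event $\{\max_i X'_i<J\}$.
\begin{itemize}
\item On this event $\hat p_m$ vanishes on $\{y\ge J\}$, so those terms contribute at most $\mathbb P_{f_{\Gstar}}(X\ge J)$.
\item For $y<J$, $\mathbb E[(\hat p_m(y)-f_{\Gstar}(y))^2]/f_{\Gstar}(y)=(1-f_{\Gstar}(y))/m$. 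The $1/f_{\Gstar}(y)$ weights you worried about cancel exactly, and the sum is at most $J/m$.
\item On the complementary event, use $H^2\le 2$ and $\mathbb P(\max_i X'_i\ge J)\le m\,\mathbb P_{f_{\Gstar}}(X\ge J)$.
\end{itemize}
Choose $J$ so that $\mathbb P_{f_{\Gstar}}(X\ge J)\le m^{-2}$. By the tail bounds of Lemmas~\ref{lmm:bounded_prior_properties}--\ref{lmm:subexpo_properties}, this allows $J\lesssim\min\{\max\{1,\hat h\}\log m/\log\log m,\ \hat h+\log m\}$ under compact support and $J\lesssim\max\{1,s\}\log m$ under $\subexpo(s)$. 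This gives the lemma at once: the rate is the effective support size of $f_{\Gstar}$ divided by $m$, with no entropy, moment matching or chaining.
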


The proof follows the argument of \citet[Theorem 2]{jana2025optimal}; details are given in Section \ref{prooflemma21}.

We next establish analogous finite-sample guarantees for the $T$-coarsened minimum-$d$-distance estimate $\widetilde G_{d,m}^{(T)}$ in \eqref{eq:optim-coarse}, now in terms of the squared Hellinger distance between the corresponding coarsened mixture probability mass functions.

\begin{lemma}[Coarsened minimum-distance estimation]
\label{lem:EB_density-H2_coars}
Let $d$ satisfy Assumptions \ref{asmp:EB_t-ell}--\ref{asmp:dist_sandwich}, and let $\Gstar$ denote the data-generating mixing distribution. Consider the $T$-coarsened minimum-$d$-distance estimate $\widetilde G_{d,m}^{(T)}$ of $\Gstar$ obtained from a training sample of size $m$, with $T\geq0$.
\begin{enumerate}[label=(\alph*)~]
\item \label{pt:bounded-coarse} Let $\mathcal C\subset \calP[0,\hat h]$ be a convex and weakly closed class. If $\Gstar\in\calP[0,h]\cap\mathcal C$ with $h\leq \hat h$ and $\widetilde G_{d,m}^{(T)}$ is optimized over $\mathcal C$, then for every $T\geq 0$ and $m\geq 3$,
\begin{align}
\mathbb E\Big[H^2\big(\widetilde f_{\Gstar}^{(T)},\,\widetilde f_{\widetilde
G^{(T)}_{d,m}}^{(T)}\big)\Big]
&\le \frac{1}{m}\min\Big\{\frac{4c_2}{c_1}(T+2),\
\Big(\frac{12c_2}{c_1}+4\Big)
(\hat h e^2+2)\frac{2\log
m}{\log\log m}\Big\},
\label{eq:final-bounded}
\end{align}
\item Let $\mathcal C\subset \subexpo(s)$ be a convex and weakly closed class. If $\Gstar\in\mathcal C$ and $\widetilde G_{d,m}^{(T)}$ is optimized over $\mathcal C$, then for every $T\geq 0$ and $m\geq \exp\left(\frac 1 2 \log(1+\frac{1}{2s})\right)$
\begin{align}
\mathbb E\Big[H^2\big(\widetilde f_{\Gstar}^{(T)},\,\widetilde f_{\widetilde
G^{(T)}_{d,m}}^{(T)}\big)\Big]
&\le \frac{1}{m}\min\Big\{\frac{4c_2}{c_1}(T+2),\
\Big(\frac{16c_2}{c_1}+6\Big)\frac{2\log m}{\log(1+1/(2s))}\Big\}.
\label{eq:final-subexp}
\end{align}
\end{enumerate}
\end{lemma}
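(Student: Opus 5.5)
Writing $p=\tilde f^{(T)}_{\Gstar}$, $q=\tilde f^{(T)}_{\widetilde G}$ and $\hat p=\tilde{\hat p}^{(T)}_m$, the proof rests on the basic-inequality approach of \citet[Theorem 2]{jana2025optimal}, transported to the coarsened alphabet $\{0,\dots,T+1\}$. Since $\hat p$ is the empirical distribution of $m$ i.i.d.\ draws from $p$, and $\widetilde G$ minimizes $d(\hat p\|\tilde f^{(T)}_G)$ over $\mathcal C\ni\Gstar$, Assumption~\ref{asmp:dist_sandwich} gives
\begin{displaymath}
c_1H^2(\hat p,q)\le d(\hat p\|q)\le d(\hat p\|p)\le c_2\chi^2(\hat p\|p).
\end{displaymath}
Combining this with the triangle inequality $H^2(p,q)\le 2H^2(p,\hat p)+2H^2(\hat p,q)$ and $H^2\le\chi^2$ yields
\begin{displaymath}
H^2(p,q)\le \Big(2+\frac{2c_2}{c_1}\Big)\chi^2(\hat p\|p).
\end{displaymath}
The two entries of the minimum will follow from two different ways of controlling this right-hand side.

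\textbf{First bound $(T+2)/m$.} Compute the expectation directly:
\begin{displaymath}
\EE\,\chi^2(\hat p\|p)=\sum_{y=0}^{T+1}\frac{\Var(\hat p(y))}{p(y)}=\sum_{y=0}^{T+1}\frac{1-p(y)}{m}\le\frac{T+2}{m}.
\end{displaymath}
This gives $C(T+2)/m$. To reach the exact constant $4c_2/c_1$, I would not use the loose triangle step above. Instead I would follow the sharper comparison in Jana et al., which works with $d$ directly through Assumption~\ref{asmp:EB_t-ell} and the monotonicity and convexity of $\ell$. The constant is routine bookkeeping, so I will not fix its exact form here.

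\textbf{Second bound $\log m/\log\log m$ (bounded case), resp.\ $\log m/\log(1+1/(2s))$ (subexponential case).} This bound must be uniform in $T$, so the direct $\chi^2$ calculation is too crude when $T$ is large. I would split the alphabet at a level $K$ and treat the two parts separately.

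\begin{itemize}
\item \emph{Cells $y\le\min(K,T)$.} Apply the same $\chi^2$ computation restricted to these cells, which contributes at most $(K+1)/m$.
\item \emph{Remaining mass.} This comprises cells $K<y\le T$ together with the aggregated cell $T+1$. Here use the Hellinger bound $\sum_y(\sqrt{a_y}-\sqrt{b_y})^2\le\sum_y(a_y+b_y)$, so the contribution is controlled by the tail masses $\PP_p(Y>K)$, $\PP_{\hat p}(Y>K)$ and $\PP_q(Y>K)$. Merging cells into $T+1$ only decreases Hellinger distance (data processing), so the remaining mass can be handled as a single aggregated cell.
\end{itemize}

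For the tails, a Poisson mixture with $\Gstar\in\calP[0,h]$ satisfies $f(y)\le e^{-h}(eh/y)^y$-type bounds. The same holds for $q$ because $\widetilde G\in\calP[0,\hat h]$, and $\hat p$ has expected tail equal to that of $p$. Choosing $K\asymp \hat h e^2+\log m/\log\log m$ makes all tails $O(1/m)$, which yields $(\hat h e^2+2)\log m/(m\log\log m)$ up to constants. In the subexponential case, $f_G(y)\le 2(1+1/s)^{-y}$-type bounds hold (with $1/(2s)$ appearing after the mixing calculation). Taking $K=2\log m/\log(1+1/(2s))$ then gives the second expression, and the requirement that $K\ge1$ is exactly the stated lower bound on $m$.

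\textbf{Main obstacle.} The delicate point is the tail of the estimator's mixture $q$. We only know $\widetilde G\in\mathcal C$, so its tail must be bounded uniformly over the class rather than through any property of the data. This is precisely why the argument relies on $\mathcal C\subset\calP[0,\hat h]$ or $\mathcal C\subset\subexpo(s)$.

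A second point needs care. Jana et al.'s argument bounds $d(\hat p\|p)$ via $\chi^2$, but the empirical cells $\hat p(y)$ with tiny $p(y)$ make $\chi^2$ large. The resolution is to apply the basic inequality only after restricting $\chi^2$ to cells $y\le K$ and lumping the remaining cells into a single one. This is legitimate because $\ell$ vanishes when its first argument is $0$ and is convex and decreasing in its second. Tracking the explicit constants $12c_2/c_1+4$ and $16c_2/c_1+6$ then follows the template of Jana et al.
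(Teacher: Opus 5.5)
\textbf{First entry.} Your $(T+2)/m$ bound is correct and is the paper's argument (Lemma~\ref{lem:E0} with $J=T+2$). No ``sharper comparison'' is needed for the constant. Assumption~\ref{asmp:dist_sandwich} forces $c_1\le c_2$: take $q_1=\delta_0$ and $q_2=(1-\eta)\delta_0+\eta\delta_1$, for which $\chi^2(q_1\|q_2)/H^2(q_1,q_2)\to1$ as $\eta\downarrow0$. Hence your constant $2+2c_2/c_1$ is already at most $4c_2/c_1$.

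\textbf{Second entry: there is a genuine gap.} Your only handle on $q$ is the global basic inequality
\[
c_1H^2(\hat p,q)\le d(\hat p\|q)\le d(\hat p\|p)\le c_2\chi^2(\hat p\|p).
\]
The problem is that $\mathbb E\,\chi^2(\hat p\|p)=\sum_{y\le T+1,\,p(y)>0}(1-p(y))/m$ charges about $1/m$ to \emph{every} cell, however small $p(y)$ is. A tail cell contributes through the rare event $\hat p(y)>0$, on which $\hat p(y)^2/p(y)$ is huge. None of your three devices removes this:
\begin{itemize}
\item Splitting $H^2(p,q)$ at $K$ does not help. The minimum-distance property is global, so there is no restricted basic inequality bounding $\sum_{y\le K}(\sqrt{\hat p(y)}-\sqrt{q(y)})^2$ by the cells $y\le K$ of $\chi^2(\hat p\|p)$.
\item Bounding the tail part of $d(\hat p\|p)$ by $\mathbb P_{\hat p}(Y>K)$ fails for the reason just given: for $\chi^2$, the cell term is $\hat p(y)^2/p(y)$, not $\hat p(y)$.
\item ``Lumping'' the tail cells goes the wrong way. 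Merging cells can only decrease an $f$-divergence, and Assumption~\ref{asmp:EB_t-ell} gives no inequality $d(\hat p\|p)\le d(\text{lumped }\hat p\,\|\,\text{lumped }p)$.
\end{itemize}
As written, the argument cannot remove the $T$-dependence.

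\textbf{The missing idea} (Lemma~\ref{lmm:hel-bound-genr}) is to split on the event $E=\{\max_i X'_i<J\}$, with $J=\min\{\lfloor K\rfloor+1,T+2\}$.
\begin{itemize}
\item On $E$, the coarsened empirical pmf vanishes on every cell $y\ge J$. Those cells contribute exactly $\sum_{y\ge J}p(y)$ to $\chi^2(\hat p\|p)$, and the cells $y<J$ contribute at most $J/m$ in expectation.
\item On $E^c$, use $H^2\le2$ together with the union bound $\mathbb P(E^c)\le m\,\mathbb P_{f_{\Gstar}}(X\ge J)$.
\end{itemize}
This gives
\[
\mathbb E\,H^2(p,q)\le\frac{4c_2}{c_1}\frac{J}{m}+\Big(\frac{4c_2}{c_1}+2m\Big)\mathbb P_{f_{\Gstar}}(X\ge J).
\]
Because of the factor $m$, the tail must be $O(m^{-2})$, not $O(m^{-1})$ as you state. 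That is where the factor $2$ in $2\log m$ comes from: Lemma~\ref{lmm:bounded_prior_properties} with $a=2$, and Lemma~\ref{lmm:subexpo_properties} with $K=2\log m/\log(1+1/(2s))$.

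Your ``main obstacle'' is not one. The basic inequality controls $H^2(p,q)$ over all cells at once, so no tail bound on the estimator's mixture $q$ is ever needed. The class restriction enters only through the tail of $f_{\Gstar}$ and the existence of the minimizer.
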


The proof is given in Section \ref{prooflemma22}. We conclude this subsection by specializing Lemmas \ref{lem:EB_density-H2} and \ref{lem:EB_density-H2_coars} to discrete mixing distributions supported on a finite set of known locations, with unknown probability masses. Specifically, let
\begin{displaymath}
\Theta_J:=\{\theta_1,\dots,\theta_J\}\subset [0,h],\qquad 0\le \theta_1<\cdots<\theta_J\le h,
\end{displaymath}
and 
\begin{displaymath}
\mathcal P(\Theta_J):=\left\{G=\sum_{j=1}^J w_j\delta_{\theta_j}:w_j\ge 0,\ \sum_{j=1}^J w_j=1\right\}.
\end{displaymath}
This setting is relevant in practice, where the support of the mixing distribution is often discretized over a known grid and the corresponding probability masses are estimated. Let
$\Gstar=\sum_{j=1}^J w_j^\ast\delta_{\theta_j}\in\mathcal P(\Theta_J)$, where the $w_j^\ast$'s are the unknown probability masses to be estimated. The following corollary is an immediate consequence of Lemmas \ref{lem:EB_density-H2}--\ref{lem:EB_density-H2_coars}.

\begin{corollary}[Regular and coarsened minimum-distance estimation on $\mathcal P(\Theta_J)$]\label{corr:hell-risk-discrete-feasible}
Let $\mathcal C\subseteq\mathcal P(\Theta_J)$ be a convex and weakly closed class, and let $\Gstar\in\mathcal C$. Let $\hat G_{d,m}$ and $\widetilde G_{d,m}^{(T)}$ be defined as in \eqref{eq:optim-genr} and \eqref{eq:optim-coarse}, respectively. Under Assumptions \ref{asmp:EB_t-ell}--\ref{asmp:dist_sandwich}, $\hat G_{d,m}$ satisfies the guarantees of \prettyref{lem:EB_density-H2} and $\widetilde G_{d,m}^{(T)}$ satisfies the guarantees of \prettyref{lem:EB_density-H2_coars}.
\end{corollary}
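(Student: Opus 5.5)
The plan is to verify that $\mathcal C\subseteq\mathcal P(\Theta_J)$ meets every structural hypothesis of Lemmas \ref{lem:EB_density-H2} and \ref{lem:EB_density-H2_coars}. The bounds themselves are then inherited with no new analysis.

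First, since $\Theta_J\subset[0,h]$, every $G\in\mathcal P(\Theta_J)$ is supported on $[0,h]$. Hence $\mathcal C\subseteq\mathcal P(\Theta_J)\subseteq\mathcal P[0,h]$, and we may take $\hat h=h$ in part (a) of each lemma. The data-generating $\Gstar$ lies in $\mathcal P[0,h]\cap\mathcal C$ by assumption.

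For the subexponential parts (b), compact support on $[0,h]$ gives $G((t,\infty))=0\le 2e^{-t/s}$ for $t\ge h$. For $t<h$ we have $G((t,\infty))\le1\le 2e^{-t/s}$ as soon as $s\ge h/\log 2$. So $\mathcal C\subseteq\subexpo(s)$ for such $s$.

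Convexity and weak closedness of $\mathcal C$ hold by hypothesis. For completeness, one can note that $\mathcal P(\Theta_J)$ itself is convex and weakly closed. It is the image of the probability simplex in $\mathbb R^J$ under the continuous map $w\mapsto\sum_j w_j\delta_{\theta_j}$, so it is compact and hence closed. Alternatively, by the Portmanteau theorem, weak limits of measures supported on the closed set $\Theta_J$ remain supported on $\Theta_J$.

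Existence of the estimators over $\mathcal C$ follows from Theorem \ref{thm:exist-genr}, since $\mathcal C$ is convex, weakly closed, and contained in some $\subexpo(s)$. With these hypotheses in place, Lemma \ref{lem:EB_density-H2} applied to $\hat G_{d,m}$ and Lemma \ref{lem:EB_density-H2_coars} applied to $\widetilde G^{(T)}_{d,m}$, both optimized over $\mathcal C$, give exactly the claimed guarantees.

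The only step needing care is that the lemmas are stated for arbitrary convex weakly closed subclasses. Their proofs must therefore not use any richness of the class, such as containing all discrete measures. Inspecting the argument of \citet[Theorem 2]{jana2025optimal}, the bound uses only two facts. The first is the basic inequality $d(\hat p_m\|f_{\hat G})\le d(\hat p_m\|f_{\Gstar})$, which holds because $\Gstar\in\mathcal C$. The second is a metric-entropy or moment-matching bound for the family $\{f_G:G\in\mathcal P[0,h]\}$, which dominates the smaller family indexed by $\mathcal C$. Both survive restriction to a subclass, so I expect no genuine obstacle there.
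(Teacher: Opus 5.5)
Your proposal is correct and matches the paper, which treats the corollary as an immediate consequence of the two lemmas: $\mathcal C\subseteq\mathcal P(\Theta_J)\subseteq\mathcal P[0,h]\cap\subexpo(s)$ for $s$ large enough, so Lemmas~\ref{lem:EB_density-H2} and~\ref{lem:EB_density-H2_coars} apply verbatim with $\hat h=h$. Your last paragraph is unnecessary, because both lemmas are already stated for arbitrary convex, weakly closed subclasses; also, their proof (Lemma~\ref{lmm:hel-bound-genr}) uses only the basic inequality and tail bounds on $f_{\Gstar}$, not a metric-entropy bound as you describe.
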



\section{Large-sample guarantees for general utility functions}\label{sec2}

In this section, we study the asymptotic behavior of empirical Bayes plug-in estimators of $S_n$ under the Poisson mixture model for a general utility function $u$. Our analysis is motivated by the nonregular regime considered by \citet{Zha(05)}, in which regular $\sqrt n$-consistent estimation is not available. We consider minimum-distance plug-in estimators constructed from an independent training sample whose size $m_n$ is allowed to depend on $n$. We establish consistency under general conditions and derive an explicit convergence-rate upper bound when the mixing distribution has finite support. For later use, we introduce the following condition:
 \begin{equation}\label{eq:condition2}
 \sum_{x\in \integers_+} f_{\Gstar}(x)\sup_{G\in\mathcal C} \int_\Theta |u(x,\theta)|^2G(d\theta\mid x)<\infty, 
 \end{equation}
where
\begin{displaymath}
       G(d\theta\mid x):=\frac{\text{Poisson}(x\mid\theta)G(d\theta)}{f_G(x)}, \quad\quad\quad x\in\mathbb Z_+.
\end{displaymath}
The next theorem establishes large-sample guarantees for the empirical Bayes plug-in estimator of $S_n$ for a general utility function $u$ satisfying suitable regularity conditions. Throughout this section, $\stackrel{\PP}{\rightarrow}$ denotes convergence in probability.

\begin{theorem}\label{th:consist1}
Let $\Gstar$ denote the data-generating mixing distribution, and let $d$ satisfy Assumptions \ref{asmp:EB_t-ell} and \ref{asmp:dist_sandwich}. The following asymptotic results hold for the minimum-$d$-distance estimator $\widehat G_{d,m_n}$ defined in \eqref{eq:optim-genr}, based on an independent training sample of size $m_n$ and with the optimization performed over a class $\mathcal C$ specified below.
\begin{enumerate}[label=(\alph*)~]
\item Suppose that $\mathcal C$ is a convex and weakly closed subset of $\subexpo(s)$, $\Gstar\in\mathcal C$, $m_n\to\infty$ as $n\to\infty$, $\theta\mapsto u(x,\theta)$ is $G$-almost surely continuous for every $x$ and every $G\in\mathcal C$, and condition \eqref{eq:condition2} holds. Then,
\begin{displaymath}
\frac{1}{n}\bigl(S_n-\widehat S_n(\widehat G_{d,m_n})\bigr)\stackrel{\PP}{\rightarrow}0.
\end{displaymath}
\item Suppose that $\mathcal C$ is a convex and weakly closed subset of $\mathcal P(\Theta_J)$ such that $G(\{0\})\leq a_0$ for some $a_0<1$ and every $G\in \mathcal C$, $\Gstar\in\mathcal C$, $m_n\asymp n$, and there exist constants $c$ and $M$ such that $u(x,\theta_j)=c$ for every $j=1,\ldots,J$ and every $x>M$. Then,
\begin{displaymath}
\frac{1}{n}\bigl(S_n-\widehat S_n(\widehat G_{d,m_n})\bigr)=O_{\mathbb P}(n^{-1/2}(\log n)^{1/2}).
\end{displaymath}
\end{enumerate}
\end{theorem}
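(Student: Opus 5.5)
The plan is to split the error into an oracle part and a plug-in part:
\[
\tfrac1n\bigl(S_n-\hat S_n(\hat G)\bigr)=\tfrac1n\bigl(S_n-\hat S_n(\Gstar)\bigr)+\tfrac1n\bigl(\hat S_n(\Gstar)-\hat S_n(\hat G)\bigr),
\qquad \hat G=\widehat G_{d,m_n}.
\]

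\textbf{Oracle term.} The summands $u(X_i,\theta_i)-\E_{\Gstar}[u(X_i,\theta_i)\mid X_i]$ are i.i.d. with mean zero. Their variance is at most $\sum_x f_{\Gstar}(x)\int|u(x,\theta)|^2\Gstar(d\theta\mid x)$, which is finite by \eqref{eq:condition2} because $\Gstar\in\mathcal C$. Chebyshev's inequality then makes this term $O_{\PP}(n^{-1/2})$. This covers both parts; in part (b), finiteness of the second moment will be checked directly from boundedness of $u$ on the finite support plus the constant tail.

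\textbf{Plug-in term, part (a).} Write $\psi_G(x)=\int u(x,\theta)G(d\theta\mid x)$, so the term equals $\sum_x \hat p_n(x)\bigl(\psi_{\Gstar}(x)-\psi_{\hat G}(x)\bigr)$. Lemma~\ref{lem:EB_density-H2}(b) gives $\E H^2(f_{\Gstar},f_{\hat G})\to0$. Since Poisson mixtures over $\subexpo(s)$ are identifiable and $\mathcal C$ is tight (all members lie in $\subexpo(s)$), every subsequence of $\hat G$ has a further subsequence converging weakly to $\Gstar$ in probability; concretely, $\hat G\to\Gstar$ weakly in probability.

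For each fixed $x$, the map $\theta\mapsto \mathrm{Poisson}(x\mid\theta)u(x,\theta)$ is $G$-a.s.\ continuous. The uniform second-moment bound in \eqref{eq:condition2} supplies uniform integrability of the posterior integrands. Together these give $\psi_{\hat G}(x)\to\psi_{\Gstar}(x)$ in probability, noting that $f_{\hat G}(x)\to f_{\Gstar}(x)>0$.

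To pass to the infinite sum, truncate at $x\le K$. For the tail, use Cauchy--Schwarz:
\[
\E\Bigl[\sum_{x>K}\hat p_n(x)\,|\psi_G(x)|\Bigr]\le \Bigl(\sum_{x>K}f_{\Gstar}(x)\sup_{G\in\mathcal C}\psi_{|u|^2,G}(x)\Bigr)^{1/2},
\]
where $\psi_{|u|^2,G}(x)=\int|u(x,\theta)|^2G(d\theta\mid x)$. This is uniform in $G$ and tends to $0$ as $K\to\infty$ by \eqref{eq:condition2}, using independence of the training and test samples. The finite head then vanishes by the pointwise convergence just established. The \textbf{main obstacle} is justifying the interchange of limit and integral for merely a.s.-continuous, possibly unbounded $u$, together with the weak-convergence step; the plan is to handle both with the uniform integrability from \eqref{eq:condition2}.

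\textbf{Part (b).} For $x>M$, $\psi_G(x)=c$ for every $G$, so only finitely many $x\le M$ contribute. For each such $x$, write
\[
\psi_G(x)=\frac{\sum_j w_j\,\mathrm{Poisson}(x\mid\theta_j)\,u(x,\theta_j)}{f_G(x)}.
\]
Because of the condition $G(\{0\})\le a_0<1$, $f_G(x)$ is bounded below uniformly over $\mathcal C$ for $x\le M$. Hence $|\psi_{\hat G}(x)-\psi_{\Gstar}(x)|\lesssim \|w_{\hat G}-w^\ast\|_1$. With finitely many known atoms, the map $w\mapsto(f_G(0),\dots,f_G(J-1))$ is injective and linear, through a Vandermonde-type matrix. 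This gives $\|w_{\hat G}-w^\ast\|_1\lesssim \|f_{\hat G}-f_{\Gstar}\|_{\ell_1}\lesssim H(f_{\hat G},f_{\Gstar})$.

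Lemma~\ref{lem:EB_density-H2}(a), or Corollary~\ref{corr:hell-risk-discrete-feasible}, gives $H^2=O_{\PP}(\log m_n/m_n)$. With $m_n\asymp n$, the plug-in term is therefore $O_{\PP}((\log n/n)^{1/2})$. Combined with the $O_{\PP}(n^{-1/2})$ oracle term, this yields the rate in part (b).
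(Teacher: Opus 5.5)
Your proposal is correct and follows essentially the paper's route: Chebyshev/CLT for the oracle fluctuation, the Hellinger bound of Lemma~\ref{lem:EB_density-H2} upgraded to weak convergence of $\widehat G_{d,m_n}$ via tightness of $\subexpo(s)$ and identifiability, pointwise convergence of posterior means via $G$-a.s.\ continuity and the uniform integrability from \eqref{eq:condition2}, and in part (b) Vandermonde inversion with denominators bounded below using $G(\{0\})\le a_0$. The differences are cosmetic. In part (a) you bound $\sum_x\hat p_n(x)\bigl(\psi_{\Gstar}(x)-\psi_{\widehat G_{d,m_n}}(x)\bigr)$ directly by a head/tail truncation, whereas the paper centers through $Z_{n,i}$ and $\overline u$. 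In part (b) you state the rate directly, whereas the paper uses the ``every $a_n\to\infty$'' device.
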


The proof is given in Section~\ref{app:proofconst1}. For $s$ sufficiently large, $\mathcal P(\Theta_J)\subset\subexpo(s)$, so part~(a) applies in particular when $\mathcal C$ is a convex and weakly closed subset of $\mathcal P(\Theta_J)$. Moreover, the assumptions in part~(b) imply condition~\eqref{eq:condition2}. Equivalently, the conclusions in parts~(a) and~(b) can be expressed in terms of the discrepancy between the empirical Bayes plug-in estimator and the oracle Bayes estimator as
\begin{displaymath}
\frac{1}{n}\bigl(\widehat S_n(\Gstar)-\widehat S_n(\widehat G_{d,m_n})\bigr)=o_{\mathbb P}(1),
\end{displaymath}
and
\begin{displaymath}
\frac{1}{n}\bigl(\widehat S_n(\Gstar)-\widehat S_n(\widehat G_{d,m_n})\bigr)=O_{\mathbb P}(n^{-1/2}(\log n)^{1/2}),
\end{displaymath}
respectively. To see this, define
\begin{displaymath}
u_G(x):=\EE_G[u(X_i,\theta_i)\mid X_i=x].
\end{displaymath}
Then the random variables $u(X_i,\theta_i)-u_{\Gstar}(X_i)$, $i=1,\ldots,n$, are independent and identically distributed with mean zero. By the central limit theorem,
\begin{displaymath}
n^{-1/2}(\log n)^{-1/2} \bigl(S_n-\widehat S_n(\Gstar)\bigr)=o_{\PP}(1).
\end{displaymath}
The absence of an explicit rate in Theorem \ref{th:consist1}(a) reflects the generality of the utility function. For a general $u$, the plug-in error depends on the convergence of the mixing distribution $\widehat G_{d,m_n}$ to $\Gstar$, rather than merely on the convergence of the corresponding mixture probability mass functions. While $f_{\widehat G_{d,m_n}}$ may converge to $f_{\Gstar}$ at a quantifiable rate, the inverse problem $f_G\mapsto G$ is ill-posed, and no comparable rate for the convergence of $\widehat G_{d,m_n}$ is available in general. In the finite-support setting of part~(b), this limitation disappears: the support points are fixed and finite, and the map from the mixing weights to finitely many values of the mixture probability mass function is invertible. This allows the convergence rate for the mixture probability mass function to be transferred to the mixing weights, yielding the sharper rate in part~(b).

We next extend the asymptotic analysis to the coarsened minimum-distance estimator in \eqref{eq:optim-coarse}, allowing the coarsening level $T_n$ to depend on $n$.

\begin{theorem}\label{th:consist2}
Let $\Gstar$ denote the data-generating mixing distribution, and let $d$ satisfy Assumptions \ref{asmp:EB_t-ell} and \ref{asmp:dist_sandwich}. The following asymptotic results hold for the coarsened minimum-$d$-distance estimator $\widetilde G^{(T_n)}_{d,m_n}$, based on an independent training sample of size $m_n\to\infty$ and with the optimization performed over a class $\mathcal C$ specified below.
\begin{enumerate}[label=(\alph*)~]
    \item Suppose that $\mathcal C$ is a convex and weakly closed subset of $\subexpo(s)$,  $\Gstar\in\mathcal C$, $T_n\ge  \max\left\{1,\frac{2}{\log(1+1/(2s))}\right\}\log m_n$ for $n$ sufficiently large, $\theta\mapsto u(x,\theta)$ is $G$-almost surely continuous for every $x$ and every $G\in\mathcal C$, and condition \eqref{eq:condition2} holds. Then, as $n\to\infty$,
\begin{displaymath}
   \frac{ S_n-\widehat S_n(\widetilde G^{(T_n)}_{d,m_n})}{n}\stackrel{\PP}{\rightarrow}0.
\end{displaymath}
     \item  Suppose that $\mathcal C$ is a convex and weakly closed subset of $\mathcal P(\Theta_J)$  such that $G(\{0\})\leq a_0$ for some $a_0<1$ and every $G\in \mathcal C$, $T_n\rightarrow\infty$,
    $m_n\asymp n$, and there exist constants $c$ and $M$ such that $u(x,\theta_j)=c$ for every $j=1,\dots,J$ and every $x>M$. Then, as
$n\to\infty$,
\begin{displaymath}
   \frac{ S_n-\widehat S_n(\widetilde G^{(T_n)}_{d,m_n})}{n} =O_{\PP}(n^{-1/2}(\log n)^{1/2}).
\end{displaymath}
\end{enumerate}
\end{theorem}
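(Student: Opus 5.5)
The plan is to mirror the proof of Theorem \ref{th:consist1}, with Lemma \ref{lem:EB_density-H2_coars} replacing Lemma \ref{lem:EB_density-H2}. Write $\widetilde G_n:=\widetilde G^{(T_n)}_{d,m_n}$ and $u_G(x)=\EE_G[u(X_i,\theta_i)\mid X_i=x]$. We decompose
\begin{displaymath}
\frac{S_n-\widehat S_n(\widetilde G_n)}{n}=\frac{1}{n}\sum_{i=1}^n\bigl(u(X_i,\theta_i)-u_{\Gstar}(X_i)\bigr)+\sum_{x\ge0}\hat p_n(x)\bigl(u_{\Gstar}(x)-u_{\widetilde G_n}(x)\bigr).
\end{displaymath}
The first term is an i.i.d. mean-zero average. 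Its variance is finite by \eqref{eq:condition2} (taking $G=\Gstar$), so it is $O_{\PP}(n^{-1/2})$. In part (b) the variance is finite because $u$ is bounded on the finite support. The whole problem therefore reduces to the plug-in term, which we handle conditionally on the training sample, using independence of the two samples.

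\emph{Part (a).} The first step is to show $H(f_{\widetilde G_n},f_{\Gstar})\stackrel{\PP}{\rightarrow}0$ for the uncoarsened mixtures. Lemma \ref{lem:EB_density-H2_coars}(b) gives $\EE H^2(\tilde f^{(T_n)}_{\Gstar},\tilde f^{(T_n)}_{\widetilde G_n})=O(\log m_n/m_n)\to0$. Coarsened Hellinger distance controls the uncoarsened one up to the tail masses beyond $T_n$:
\begin{displaymath}
H^2(f_{\widetilde G_n},f_{\Gstar})\le H^2(\tilde f^{(T_n)}_{\widetilde G_n},\tilde f^{(T_n)}_{\Gstar})+\sum_{x>T_n}\bigl(f_{\widetilde G_n}(x)+f_{\Gstar}(x)\bigr).
\end{displaymath}
The tail sum is bounded uniformly over $\subexpo(s)$. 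A Chernoff bound on the Poisson mixture gives $\sup_{G\in\subexpo(s)}\sum_{x>T}f_G(x)\lesssim (1+1/(2s))^{-T/2}$, and the hypothesis $T_n\ge \tfrac{2}{\log(1+1/(2s))}\log m_n$ makes this $O(1/m_n)$. That is exactly where the threshold on $T_n$ enters. The remaining steps copy Theorem \ref{th:consist1}(a):
\begin{enumerate}[label=(\roman*)]
\item Hellinger convergence of Poisson mixtures within the tight class $\subexpo(s)$ implies $\widetilde G_n\to\Gstar$ weakly in probability, by identifiability and a subsequence argument.
\item For each fixed $x$, $G(d\theta\mid x)$ converges weakly, and almost-sure continuity of $u(x,\cdot)$ gives $u_{\widetilde G_n}(x)\to u_{\Gstar}(x)$.
\item To pass to the sum over $x$, use Cauchy--Schwarz with $\hat p_n$ together with the uniform second-moment bound \eqref{eq:condition2}, truncating at a large $K$. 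The tail $\sum_{x>K}\hat p_n(x)|u_{\Gstar}(x)-u_{\widetilde G_n}(x)|$ is bounded in expectation by $2\bigl(\sum_{x>K}f_{\Gstar}(x)\bigr)^{1/2}\bigl(\sum_{x}f_{\Gstar}(x)\sup_G\int|u|^2G(d\theta\mid x)\bigr)^{1/2}$, which is small for large $K$.
\end{enumerate}

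\emph{Part (b).} Since $u(x,\theta_j)=c$ for $x>M$, we have $u_G(x)=c$ for every $G$ and every $x>M$. The plug-in error is therefore a finite sum over $x\le M$. For fixed $x$, $u_G(x)=\sum_j u(x,\theta_j)w_j\mathrm{Pois}(x\mid\theta_j)/f_G(x)$ is a smooth function of the weights $w$, provided $f_G(x)$ is bounded below. The bound $G(\{0\})\le a_0<1$ ensures this uniformly over $\mathcal C$ for $x\le M$: for $x\ge1$, $f_G(x)\ge (1-a_0)\min_{\theta_j>0}\mathrm{Pois}(x\mid\theta_j)$, and $f_G(0)\ge e^{-h}$. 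Hence $|u_{\widetilde G_n}(x)-u_{\Gstar}(x)|\lesssim\|\tilde w-w^\ast\|$.

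The main obstacle is converting the coarsened Hellinger rate into a rate on the weights. Because $T_n\to\infty$, we eventually have $T_n\ge J-1$. The vectors $(\mathrm{Pois}(x\mid\theta_j))_{x=0}^{J-1}$, $j=1,\dots,J$, are linearly independent (a Vandermonde-type argument on the distinct $\theta_j$). Hence
\begin{displaymath}
\|w-w^\ast\|\lesssim \max_{x<J}|f_G(x)-f_{\Gstar}(x)|\lesssim H\bigl(\tilde f^{(T_n)}_G,\tilde f^{(T_n)}_{\Gstar}\bigr),
\end{displaymath}
with constants depending only on $\Theta_J$. Lemma \ref{lem:EB_density-H2_coars}(a), via Corollary \ref{corr:hell-risk-discrete-feasible}, gives $\EE H^2\lesssim \log m_n/(m_n\log\log m_n)$. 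With $m_n\asymp n$ and Markov's inequality, $\|\tilde w-w^\ast\|=O_{\PP}((\log n/n)^{1/2})$. Multiplying by $\sum_{x\le M}\hat p_n(x)\le1$ yields the claimed $O_{\PP}(n^{-1/2}(\log n)^{1/2})$.
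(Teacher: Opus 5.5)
Your proposal is correct and follows the paper's proof in all essentials. The shared steps are the reduction to the plug-in term, the passage from the coarsened Hellinger rate of Lemma \ref{lem:EB_density-H2_coars} to uncoarsened convergence via uniform tail control over $\subexpo(s)$, weak convergence of $\widetilde G_n$ through tightness and identifiability, and pointwise convergence of the posterior means truncated using \eqref{eq:condition2} in (a); in (b) you likewise use Vandermonde inversion on the first $J$ coordinates once $T_n\ge J-1$, with denominators bounded below via $G(\{0\})\le a_0$. The differences are cosmetic: you weight by $\hat p_n$ instead of centering through the paper's $Z_{n,i}$ and $\overline u_G$, and you use Hellinger rather than total variation. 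One small correction: the tail step only requires $T_n\to\infty$, which is all the paper uses via the summable domination in Lemma \ref{lemma:pre}. So the logarithmic threshold on $T_n$ is stronger than your argument needs, not the place where it is essential.
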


The proof is given in Section \ref{app:proofconsist2}. For $s$ sufficiently large, $\mathcal P(\Theta_J)\subset\subexpo(s)$, so part~(a) applies in particular when $\mathcal C$ is a convex and weakly closed subset of $\mathcal P(\Theta_J)$. As in Theorem \ref{th:consist1}, the conclusions in parts~(a) and~(b) can equivalently be expressed as
\begin{displaymath}
\frac{\widehat S_n(\Gstar)-\widehat S_n(\widetilde G^{(T_n)}_{d,m_n})}{n}=o_{\PP}(1),
\end{displaymath}
and
\begin{displaymath}
\frac{\widehat S_n(\Gstar)-\widehat S_n(\widetilde G^{(T_n)}_{d,m_n})}{n}=O_{\PP}(n^{-1/2}(\log n)^{1/2}),
\end{displaymath}
respectively. 


 \section{Finite-sample guarantees for the total-intensity sum}\label{sec3}

In this section, we consider the utility function $u(x,\theta)=\theta I(x\leq\kappa)$, for some choice of $\kappa\in\integers_+$ \citep{Robz(88),Zha(05)}. This choice of the utility $u$ leads to a sum of random variables referred to as the total-intensity sum, i.e.,
\begin{displaymath}
S_{n,\kappa}=\sum_{i=1}^{n}\theta_{i}I(X_{i}\leq\kappa).
\end{displaymath}
From \eqref{eq:post_exp}, the oracle Bayes estimate of $S_{n,\kappa}$, which assumes knowledge of the data-generating mixing distribution $\Gstar$, is
\begin{equation}\label{eq:bayes_totin}
\hat{S}_{n,\kappa}(\Gstar)=\E_{\Gstar}[S_{n,\kappa}\mid X_{1:n}]=\sum_{i=1}^{n}\E_{\Gstar}[\theta_{i}\mid X_{i}]I(X_{i}\leq\kappa).
\end{equation}
We study the EB estimates of $S_{n,\kappa}$ obtained by plugging the regular and $T$-coarsened minimum-$d$-distance estimates of $\Gstar$ into \eqref{eq:bayes_totin}, namely $\hat G_{d,m}$ in \eqref{eq:optim-genr} and $\widetilde G_{d,m}^{(T)}$ in \eqref{eq:optim-coarse}, respectively.

\subsection{Minimax lower bound}

We first establish a minimax lower bound for the estimation of the oracle Bayes estimate $\hat S_{n,\kappa}(G)$. The bound provides a benchmark against which the finite-sample performance of the minimum-distance procedures introduced below can be assessed.

\begin{theorem}\label{thm:total_intensity_lb}
{Let $\mathcal C$ be a convex and weakly closed class of mixing distributions including $\operatorname{Unif}[a,b]$ and $G_{j,\varepsilon}$ for $j=0,1$, where 
\[
G_{0,\varepsilon}  =(1-c\varepsilon)\operatorname{Unif}[a,b]    +c\varepsilon\,\operatorname{Unif}[1-\varepsilon,1],\qquad
G_{1,\varepsilon}  =(1-c\varepsilon)\operatorname{Unif}[a,b]    +c\varepsilon\,\operatorname{Unif}[1,1+\varepsilon],
\]
for every $\varepsilon\in(0,\varepsilon_0]$, and some $c>0$, $a$, $b$ and $\varepsilon_0>0$ such that $(1+\varepsilon_0)^2<a<b\leq 3$, and $c\varepsilon_0\leq 1/2$. } 
Let 
\[
m_G^{(\kappa)}(x)=\EE_{G}[\theta_1I(X_1\leq\kappa)\mid X_1=x].
\] If $\varepsilon_0$ is small enough,  then the following results hold:
\begin{itemize}
    \item[(a)] There exists a constant $C_3>0$ such that 
\[
 \EE_{G_{j,\varepsilon}}
 \left[m_{G_{1,\varepsilon}}^{(\kappa)}(X_1)-m_{G_{0,\varepsilon}}^{(\kappa)}(X_1)\right]
 \geq C_3\varepsilon^2\quad\quad (j=0,1).
\]
    \item[(b)]  For any estimator $\widehat S$ of $
S_{n,\kappa}=\sum_{i=1}^{n}\theta_{i}I(X_{i}\leq\kappa) $ based on $(X_1,\dots,X_n,X'_1,\dots,X'_m)$, 
\[
\inf_{\widehat S}\sup_{G\in\mathcal C}
\EE_G\left[  \left(    \frac{\widehat S-\widehat S_{n,\kappa}(G)}{\sqrt n}
  \right)^2\right]\geq C_4\frac{n}{m+n},
\]
for $n$ large enough, where $C_4$ is a constant.
In particular, if $m=n$, 
\[
\inf_{\widehat S}\sup_{G\in\mathcal C}\EE_G\left[  \left(
    \frac{\widehat S-\widehat S_{n,\kappa}(G)}{\sqrt n}  \right)^2
\right]\geq \frac{C_4}{2} .
\]
The bound remains valid for estimators that only depend on $(X'_1,\dots,X'_m)$. 
\end{itemize}
\end{theorem}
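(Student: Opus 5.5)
The plan is a two-point (Le Cam) argument built on the pair $G_{0,\varepsilon},G_{1,\varepsilon}$. Part (a) gives a separation of order $\varepsilon^2$ per observation between the two oracle rules. Part (b) then chooses $\varepsilon$ so that the joint laws of all $n+m$ observations under the two priors are statistically indistinguishable.

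\emph{Part (a).} For $x\le\kappa$ we have $m_G^{(\kappa)}(x)=(x+1)f_G(x+1)/f_G(x)$, and $m_G^{(\kappa)}(x)=0$ for $x>\kappa$. The two priors share the bulk $(1-c\varepsilon)\operatorname{Unif}[a,b]$. They differ only by a mass $c\varepsilon$ which $G_{1,\varepsilon}$ places on $[1,1+\varepsilon]$, a shift by $\varepsilon$ of the mass that $G_{0,\varepsilon}$ places on $[1-\varepsilon,1]$.

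\begin{itemize}
\item Taylor-expanding $\theta\mapsto \mathrm{Poisson}(y\mid\theta)$ around $\theta=1$ and using $\partial_\theta \mathrm{Poisson}(y\mid\theta)=\mathrm{Poisson}(y\mid\theta)(y/\theta-1)$ gives
\[
f_{G_{1,\varepsilon}}(y)-f_{G_{0,\varepsilon}}(y)=c\varepsilon^2\,\mathrm{Poisson}(y\mid 1)(y-1)+O(\varepsilon^3),
\]
uniformly over $y\le\kappa+1$.
\item All $f_{G_{j,\varepsilon}}(y)$ with $y\le \kappa+1$ are bounded below, because of the uniform part on $[a,b]\subset(1,3]$. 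Hence
\[
\log m^{(\kappa)}_{G_1}(x)-\log m^{(\kappa)}_{G_0}(x)=c\varepsilon^2\bigl[r_{x+1}\,x-r_x\,(x-1)\bigr]+O(\varepsilon^3),
\qquad r_y=\frac{\mathrm{Poisson}(y\mid1)}{f_{G^\ast_0}(y)},
\]
where $G^\ast_0=\operatorname{Unif}[a,b]$ is the limiting prior.
\item Multiplying by $m_{G_0}^{(\kappa)}(x)f_{G_j}(x)$ and summing over $x\le\kappa$ then yields $\EE_{G_j}[m_{G_1}^{(\kappa)}-m_{G_0}^{(\kappa)}]= c\varepsilon^2 D_\kappa+O(\varepsilon^3)$, with $D_\kappa$ an explicit finite sum not depending on $\varepsilon$.
\end{itemize}

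The main obstacle is showing $D_\kappa>0$. The $x=0$ term, $r_0\,m_{G^\ast_0}(0)f_{G^\ast_0}(0)$, is positive, but the other terms have mixed signs. I would exploit $a>(1+\varepsilon_0)^2>1$: it makes $f_{G^\ast_0}(y+1)/f_{G^\ast_0}(y)$ exceed $\mathrm{Poisson}(y+1\mid1)/\mathrm{Poisson}(y\mid1)=1/(y+1)$, so $r_y$ decreases geometrically. Combined with $m_{G^\ast_0}(x)f_{G^\ast_0}(x)=(x+1)f_{G^\ast_0}(x+1)$, I would try to rewrite the sum as a telescoping expression whose surviving terms are positive. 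Failing that, I would use the monotone-likelihood-ratio property: $G_1$ dominates $G_0$ in likelihood ratio and the Poisson kernel is TP$_2$, so each posterior mean is nondecreasing, and a first-order computation then gives strict positivity. Choosing $\varepsilon_0$ small absorbs the $O(\varepsilon^3)$ remainder, which gives $C_3$.

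\emph{Part (b).} By the same expansion,
\[
\chi^2(f_{G_1}\|f_{G_0})\lesssim \sum_y \frac{(c\varepsilon^2)^2\,\mathrm{Poisson}(y\mid1)^2(y-1)^2}{f_{G_0}(y)}\lesssim \varepsilon^4 ,
\]
using that $f_{G_0}(y)\ge \tfrac12\,\mathrm{Poisson}(y\mid 1)\,\mathrm{Unif}$-type mass in the tail via $a\le 3$ and $c\varepsilon_0\le1/2$. Hence the total variation distance between the two $(n+m)$-fold product laws is at most $1/2$ once $\varepsilon^4=c'/(n+m)$.

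Set $\Delta=\widehat S_{n,\kappa}(G_1)-\widehat S_{n,\kappa}(G_0)=\sum_{i\le n}(m^{(\kappa)}_{G_1}-m^{(\kappa)}_{G_0})(X_i)$. By (a), $\EE_{G_j}\Delta\ge C_3 n\varepsilon^2$ for $j=0,1$. Since $|m^{(\kappa)}_{G_1}-m^{(\kappa)}_{G_0}|\lesssim\varepsilon^2$ pointwise, $\operatorname{Var}_{G_j}(\Delta)\lesssim n\varepsilon^4$. Chebyshev then gives $\PP_{G_j}(\Delta\ge C_3n\varepsilon^2/2)\ge 3/4$ for $n$ large.

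On this event, any estimator $\widehat S$ satisfies $\max_j|\widehat S-\widehat S_{n,\kappa}(G_j)|\ge C_3n\varepsilon^2/4$. The standard Le Cam reduction, with total variation at most $1/2$, then gives
\[
\max_j \EE_{G_j}\bigl[(\widehat S-\widehat S_{n,\kappa}(G_j))^2\bigr]\gtrsim n^2\varepsilon^4\asymp \frac{n^2}{n+m}.
\]
Dividing by $n$ yields the bound $C_4\,n/(n+m)$. The case $m=n$ is immediate. The argument never used the form of the estimator, so it covers estimators depending only on $X'_{1:m}$ as well.
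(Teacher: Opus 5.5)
Your expansion is correct and agrees with the paper's. In your notation the $x$-th summand of $D_\kappa$ simplifies to $(x+1)f_{G_0^\ast}(x+1)[x r_{x+1}-(x-1)r_x]=\mathrm{Poisson}(x\mid 1)\,[x-(x-1)\rho_x]$, where $\rho_x=\EE_{G_0^\ast}[\theta\mid X=x]\in[a,b]$. The genuine gap is the step you yourself call the main obstacle: you never prove $D_\kappa>0$, and part (b) rests entirely on it.

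Neither of your proposed routes closes this gap.
The geometric decay of $r_y$, i.e.\ $r_{y+1}/r_y=1/\rho_y\le 1/a$, is exactly what makes an $x\ge2$ summand negative whenever $\rho_x>x/(x-1)$. It therefore points the wrong way.
The monotone-likelihood-ratio fallback is false. $G_{0,\varepsilon}$ and $G_{1,\varepsilon}$ have equal densities on $[a,b]$, while $G_{1,\varepsilon}$ charges $(1,1+\varepsilon]$, where $G_{0,\varepsilon}$ has density zero. So $dG_{1,\varepsilon}/dG_{0,\varepsilon}$ takes the values $0,\infty,1$ as $\theta$ increases and is not monotone.
Consistently with this, the posterior means are not pointwise ordered. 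The $\varepsilon^2$-coefficient of $m^{(\kappa)}_{G_{1,\varepsilon}}(2)-m^{(\kappa)}_{G_{0,\varepsilon}}(2)$ has the sign of $2-\rho_2$, which is negative whenever $a>2$, an admissible choice.
Even an LR ordering would only give a nonnegative difference, not one of order $\varepsilon^2$.

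What is missing is the hypothesis $b\le 3$, which your proposal never uses. Since $\rho_0\ge a$ and $\rho_x\le b\le 3$, for $\kappa\ge1$ (the case $\kappa=0$ is trivial)
\[
e\,D_\kappa=\rho_0+1+\sum_{x=2}^{\kappa}\frac{x-(x-1)\rho_x}{x!}\ \ge\ a+1+\sum_{x=2}^{\infty}\frac{3-2x}{x!}=a+e-3>0 .
\]
This uses that every term $(3-2x)/x!$ with $x\ge2$ is negative, together with $a>1$. It is precisely the paper's argument, stated there as the upper bound $3-a-e$ on the $\varepsilon^2$-coefficient of $\EE_{G_{j,\varepsilon}}[m_{G_{0,\varepsilon}}^{(\kappa)}-m_{G_{1,\varepsilon}}^{(\kappa)}]$.

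Two smaller repairs are needed in (b).
First, your expansion of $f_{G_{1,\varepsilon}}-f_{G_{0,\varepsilon}}$ is uniform only for $y\le\kappa+1$, but the $\chi^2$ sum runs over all $y$. Bound instead $|f_{G_{1,\varepsilon}}(y)-f_{G_{0,\varepsilon}}(y)|\le c\varepsilon^2\sup_{|\theta-1|\le\varepsilon_0}|\partial_\theta\mathrm{Poisson}(y\mid\theta)|\lesssim\varepsilon^2(1+y)(1+\varepsilon_0)^y/y!$, and use $f_{G_{j,\varepsilon}}(y)\ge\tfrac12 e^{-b}a^y/y!$. The resulting series then converges; the paper uses $(1+\varepsilon_0)^2<a$ at this point.
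Second, let $A=\{\Delta\ge C_3 n\varepsilon^2/2\}$. With total variation at most $1/2$ and $\PP_{G_{j,\varepsilon}}(A)\ge 3/4$, the Le Cam bound $1-\mathrm{TV}-\PP_{G_{0,\varepsilon}}(A^c)-\PP_{G_{1,\varepsilon}}(A^c)$ is zero. You need the stronger conclusion $\PP_{G_{j,\varepsilon}}(A^c)=O(1/n)$, which Chebyshev in fact gives.
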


The proof is given in Section \ref{proof41_lb}. {In particular, if $\mathcal C=\mathcal P[0,\hat h]$ with $\hat h>1$ or $\mathcal C=\subexpo(s)$, then the assumption of Theorem \ref{thm:total_intensity_lb} holds.} Theorem \ref{thm:total_intensity_lb} shows that the minimax risk is bounded from below by a constant multiple of $n/(m+n)$. In particular, when the training sample size is of the same order as the sample size, $m\asymp n$, the minimax risk is bounded away from zero. Thus, under the normalization by $\sqrt n$ used above, no estimator can achieve a uniformly vanishing risk over $\mathcal C$.

\subsection{Finite-sample upper bounds}

{Throughout this section, we assume that the distance $d$ satisfies Assumptions \ref{asmp:EB_t-ell}--\ref{asmp:dist_sandwich}. We also impose the following nondegeneracy condition.

\begin{myassumption}\label{asmp:regular}
There exists a constant $b_0>0$ such that $\min_{0\leq x\leq \kappa} f_G(x)\geq b_0$ for all $G\in\mathcal C$.
\end{myassumption}

Assumption \ref{asmp:regular} is a technical nondegeneracy condition on the mixture probability mass functions induced by the elements of $\mathcal C$. It ensures that the probabilities appearing in the oracle Bayes estimate of $S_{n,\kappa}$ are bounded away from zero uniformly over $\mathcal C$. This condition is particularly useful for controlling ratios of the form $f_G(x+1)/f_G(x)$, which naturally arise in the analysis of the total-intensity sum. {If $\mathcal C\subset \subexpo(s)$ for some $s>0$ and, for some $\epsilon>0$ and $c_0<1$, $G([0,\epsilon))\leq c_0$ for every $G\in\mathcal C$, then $\mathcal C$ satisfies Assumption \ref{asmp:regular}. See Lemma \ref{lmm:mixture-lower-bound} for the proof.}
}

The next theorem provides finite-sample upper bounds for the EB estimate $\hat{S}_{n,\kappa}(\hat G_{d,m})$ of $S_{n,\kappa}$, where $\hat G_{d,m}$ is the regular minimum-$d$-distance estimate of $\Gstar$ in \eqref{eq:optim-genr}. The result relies on the finite-sample guarantees in Lemma \ref{lem:EB_density-H2}.

\begin{theorem}\label{thm:sum_regular}
Let $\Gstar$ denote the data-generating mixing distribution. Consider the regular minimum-$d$-distance estimate $\hat G_{d,m}$ of $\Gstar$ obtained from an independent training sample of size $m$.
\begin{enumerate}[label=(\alph*)]
\item \label{pt:bounded-regular} Let $\mathcal C\subset \calP[0,\hat h]$ be a convex and weakly closed class of mixing distributions satisfying Assumption \ref{asmp:regular}. If $\Gstar\in \calP[0,h]\cap\mathcal C$ with $h\leq \hat h$, and $\hatG_{d,m}$ is optimized over $\mathcal C$, then there is a constant $C_{\hat h}$ such that
			\[
			\mathbb{E}\left[
			\left(
			\frac{\hat S_{n,\kappa}(\hat G_{d,m}) - \hat{S}_{n,\kappa}(\Gstar)}{\sqrt{n}}
			\right)^2
			\right]
			\le n{C_{\hat h}(\hat h^2+h\kappa)\log m\over m\log\log m}.
			\]

            \item \label{pt:subexpo-regular} Let $\mathcal C\subset \subexpo(s)$ be a convex and weakly closed class of mixing distributions satisfying Assumption \ref{asmp:regular}. If $\Gstar\in\mathcal C$ and $\hatG_{d,m}$ is optimized over $\mathcal C$, then there is a constant $C_s$ such that
			\[
			\mathbb{E}\left[
			\left(
			\frac{\hat S_{n,\kappa}(\hat G_{d,m}) - \hat{S}_{n,\kappa}(\Gstar)}{\sqrt{n}}
			\right)^2
			\right]
			\le C_s\frac{n(\log m)^3}{m}.
			\]
	\end{enumerate}
\end{theorem}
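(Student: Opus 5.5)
The plan is to reduce the regret to a Hellinger-type discrepancy between $f_{\hat G_{d,m}}$ and $f_{\Gstar}$ on the finitely many cells $0,\dots,\kappa+1$, and then invoke Lemma \ref{lem:EB_density-H2}. Since $\hat G_{d,m}$ depends only on the training sample, it is independent of $X_{1:n}$. Write $r_G(x)=(x+1)f_G(x+1)/f_G(x)$. Then
\[
\hat S_{n,\kappa}(\hat G)-\hat S_{n,\kappa}(\Gstar)=\sum_{i=1}^n\sum_{x=0}^{\kappa}\bigl(r_{\hat G}(x)-r_{\Gstar}(x)\bigr)I(X_i=x).
\]
Conditioning on the training sample and applying Cauchy--Schwarz (or bounding the square of the sum by $n\sum_i(\cdot)^2$), we get
\[
\EE\Big[\Big(\tfrac{\hat S_{n,\kappa}(\hat G)-\hat S_{n,\kappa}(\Gstar)}{\sqrt n}\Big)^2\Big]\le n\,\EE\Big[\sum_{x\le\kappa} f_{\Gstar}(x)\bigl(r_{\hat G}(x)-r_{\Gstar}(x)\bigr)^2\Big].
\]
This explains the factor $n$ in front of a bound of order $1/m$.

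The next step is to control each ratio difference. Let $a=f_{\hat G}$ and $b=f_{\Gstar}$. Assumption \ref{asmp:regular} gives $a(x),b(x)\ge b_0$ for $x\le\kappa$, since both $\hat G$ and $\Gstar$ lie in $\mathcal C$. Writing
\[
r_{\hat G}-r_{\Gstar}=(x+1)\Bigl[\frac{a(x+1)-b(x+1)}{a(x)}+b(x+1)\frac{b(x)-a(x)}{a(x)b(x)}\Bigr],
\]
we have $|a(y)-b(y)|=|\sqrt{a(y)}-\sqrt{b(y)}|(\sqrt{a(y)}+\sqrt{b(y)})$, so that $(a(y)-b(y))^2\le 4\max(a(y),b(y))(\sqrt{a(y)}-\sqrt{b(y)})^2$. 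The remaining coefficients must be bounded. First, $(x+1)b(x+1)/b(x)=\EE_{\Gstar}[\theta\mid x]\le h$ under compact support. Second, $(x+1)a(x+1)/a(x)=\EE_{\hat G}[\theta\mid x]\le\hat h$. Third, the ratios $(x+1)^2 f(x+1)$ are controlled by $f(x)$ times a posterior second moment, which is at most $\hat h^2$ (or $h\hat h$). Summing over $x\le\kappa$ should then give
\[
\sum_{x\le\kappa}f_{\Gstar}(x)(r_{\hat G}-r_{\Gstar})^2\lesssim_{b_0}(\hat h^2+h\kappa)\,H^2(f_{\Gstar},f_{\hat G}).
\]
The $h\kappa$ term is meant to collect the pieces $(x+1)b(x+1)$ summed over $x$. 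Taking expectations and applying Lemma \ref{lem:EB_density-H2}(a), with the $\log m/\log\log m$ branch and $\max\{1,\hat h\}$ absorbed into $C_{\hat h}$, yields part (a).

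For part (b) the support is unbounded, so posterior means are not uniformly bounded, and I would truncate. Two events must be handled: $\hat G$ could put mass far out, and $\Gstar$ could too. For $x\le\kappa$, the subexponential tail together with $f_G(x)\ge b_0$ gives $\EE_G[\theta^2\mid x]\le C_s$ uniformly over $G\in\mathcal C$, because $\int\theta^{2}\,\mathrm{Poisson}(x\mid\theta)\,G(d\theta)$ is bounded using the tail $2e^{-t/s}$. This would make all coefficients constant, which suggests the bound $C_s n\log m/m$ from Lemma \ref{lem:EB_density-H2}(b). The stated $(\log m)^3$ indicates that the authors instead proceed differently, for instance by splitting on the event $\{H^2\le C\log m/m\}$ and bounding ratios through a truncation level $\asymp s\log m$, paying $(\log m)^2$. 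I would first try the uniform-constant argument. If the conversion from Hellinger distance to ratio differences needs $\max(a,b)$ factors weighted by $(x+1)^2$ with an effective $\hat h\asymp s\log m$, I would accept that and obtain $(\log m)^2\cdot\log m/m$.

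The main obstacle is the second step: turning the Hellinger bound on full mixtures into a bound on the ratio discrepancies with the correct dependence on $\hat h,h,\kappa$. This requires careful use of the lower bound $b_0$ and of posterior moment bounds for both the estimated and the true prior, particularly in the subexponential case.
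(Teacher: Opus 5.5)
Your proposal is correct. It follows a genuinely different route from the paper, and in part (b) it gives a sharper bound.

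\textbf{Where the two proofs agree.} Both start from the same reduction to $n\,\EE[(\hat\theta_{\hat G}(X)-\hat\theta_{\Gstar}(X))^2I(X\le\kappa)]$.

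\textbf{How the paper proceeds.} The paper uses Assumption~\ref{asmp:regular} only marginally. In the subexponential case it truncates $\hat G_{d,m}$ at $\hat h\asymp\log m$, and it uses $b_0$ only to show that this truncation moves the posterior means at $x\le\kappa$ by $O(m^{-4})$. It then applies the general regret inequality of Lemma~\ref{lmm:EB_general-regret-bound}, adapted from Jana et al., with $V,h,\hat h\asymp\log m$. That inequality needs no lower bound on the mixture pmf, because it works with symmetric denominators $f_{\hat G}(x)+f_{\Gstar}(x)$ and support-bounded posterior means. The price is a prefactor of order $h^2+\hat h^2+(h+\hat h)V\asymp(\log m)^2$ in front of $H^2$, and this is exactly where the stated $(\log m)^3$ comes from.

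\textbf{How you proceed.} You divide directly by $f_{\hat G}(x),f_{\Gstar}(x)\ge b_0$. Your hesitation in (b) is unnecessary. For $x\le\kappa$ and every $G\in\mathcal C$,
$(x+1)f_G(x+1)=\int\theta\,\mathrm{Poisson}(x\mid\theta)\,G(d\theta)\le\sup_{\theta\ge0}\theta^{x+1}e^{-\theta}/x!$.
So every coefficient in your decomposition is bounded by a constant depending only on $(\kappa,b_0)$; the subexponential tail is not even needed. This gives
$\sum_{x\le\kappa}f_{\Gstar}(x)(r_{\hat G}(x)-r_{\Gstar}(x))^2\le C(\kappa,b_0)\,H^2(f_{\Gstar},f_{\hat G})$.
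Lemma~\ref{lem:EB_density-H2}(b) then yields $C\,n\log m/m$, which implies the stated bound.

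\textbf{Trade-off.} Your constants blow up like a negative power of $b_0$. The paper's compact-support estimate does not depend on $b_0$; compare the explicit factor $12((\kappa+1)(h+\hat h)+\hat h^2+h^2)$ in the proof of Theorem~\ref{thm:sum_coarse}. It degrades only through the size of the support.

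\textbf{Bookkeeping in (a).} Your decomposition produces a coefficient of order $b_0^{-2}\{(\kappa+1)(h+\hat h)+h^2\}$, not literally $\hat h^2+h\kappa$. The term $f_{\Gstar}(x)(x+1)^2f_{\hat G}(x+1)/f_{\hat G}(x)^2$ contributes $(x+1)\hat h/b_0$, which is a $\kappa\hat h$ piece rather than $h\kappa$. Relatedly, $(x+1)^2f(x+1)$ equals $(x+1)f(x)$ times the posterior \emph{first} moment, not a second moment. The paper's own argument produces the same $(\kappa+1)(h+\hat h)$ term, so this affects only constants.
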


The proof is given in Section \ref{proof41}. We next turn to the $T$-coarsened minimum-distance procedure. The following theorem provides finite-sample upper bounds for the EB estimate $\hat S_{n,\kappa}(\widetilde G_{d,m}^{(T)})$, relying on the corresponding finite-sample guarantees in Lemma \ref{lem:EB_density-H2_coars}.

\begin{theorem}\label{thm:sum_coarse}
	Let $\Gstar$ denote the data-generating mixing distribution. Consider the $T$-coarsened minimum-$d$-distance estimate $\widetilde G_{d,m}^{(T)}$ of $\Gstar$ obtained from an independent  training sample of size $m$, with $T\geq\kappa+1$.
	\begin{enumerate}[label=(\alph*)]
		
        \item \label{pt:bounded-coarse-sum} Let $\mathcal C\subset \calP[0,\hat h]$ be a convex and weakly closed class of mixing distributions satisfying Assumption \ref{asmp:regular}. If $\Gstar\in\calP[0,h]\cap\mathcal C$ with $h\leq \hat h$ and $\widetilde G_{d,m}^{(T)}$ is optimized over $\mathcal C$, then there is a constant $C_{\hat h}$ such that for any $m\geq 3$,
		\[
		\mathbb{E}\left[
		\left(
		\frac{\hat S_{n,\kappa}(\widetilde G_{d,m}^{(T)}) - \hat{S}_{n,\kappa}(\Gstar)}{\sqrt{n}}
		\right)^2
		\right]
		\le C_{\hat h}\frac {n(\hat h^2+\hat h\kappa)}m\min\sth{T,{\log m\over \log\log m}}.
		\]
        
        \item \label{pt:subexpo-coarse-sum} Let $\mathcal C\subset \subexpo(s)$ be a convex and weakly closed class of mixing distributions satisfying Assumption \ref{asmp:regular}. If $\Gstar\in\mathcal C$ and $\widetilde G_{d,m}^{(T)}$ is optimized over $\mathcal C$, then there is a constant $C_s$ such that for any $m\geq \max\{2,\exp\left(\frac 1 2 \log(1+\frac{1}{2s})\right)\}$
		\[
		\mathbb{E}\left[
		\left(
		\frac{\hat S_{n,\kappa}(\widetilde G_{d,m}^{(T)}) - \hat{S}_{n,\kappa}(\Gstar)}{\sqrt{n}}
		\right)^2
		\right]
		\le C_s\frac{n\{(\log m)^2+\kappa\log m\} \min\{T,\log m\}}{m}.
		\]

	\end{enumerate}
\end{theorem}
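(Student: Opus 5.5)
We write $\hat S_{n,\kappa}(G)=n\sum_{x=0}^{\kappa}r_G(x)\hat p_n(x)$, where
\[
r_G(x)=(x+1)\frac{f_G(x+1)}{f_G(x)},
\]
and $\hat p_n$ is the empirical pmf of $X_{1:n}$. Since $\widetilde G:=\widetilde G_{d,m}^{(T)}$ depends only on the independent training sample, the difference of interest is
\[
\frac{\hat S_{n,\kappa}(\widetilde G)-\hat S_{n,\kappa}(\Gstar)}{\sqrt n}
=\sqrt n\sum_{x=0}^{\kappa}\bigl(r_{\widetilde G}(x)-r_{\Gstar}(x)\bigr)\hat p_n(x).
\]
By Cauchy--Schwarz and independence, its second moment is at most
\[
n\,\EE\Bigl[\sum_{x\le\kappa}\hat p_n(x)\Bigr]\,\EE\Bigl[\sum_{x\le\kappa}\hat p_n(x)\bigl(r_{\widetilde G}(x)-r_{\Gstar}(x)\bigr)^2\Bigr]
\le n\sum_{x\le\kappa}f_{\Gstar}(x)\,\EE\bigl[(r_{\widetilde G}(x)-r_{\Gstar}(x))^2\bigr].
\]
So the task reduces to bounding the weighted mean-squared error of the ratios $r_G(x)$ for $x\le\kappa$. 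Because $T\ge\kappa+1$, the relevant coordinates $0,\dots,\kappa+1$ are unaffected by coarsening. Hence $H^2(\tilde f^{(T)}_{\Gstar},\tilde f^{(T)}_{\widetilde G})$ controls $\sum_{y\le\kappa+1}(\sqrt{f_{\Gstar}(y)}-\sqrt{f_{\widetilde G}(y)})^2$.

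\emph{Pointwise deterministic bound.} For fixed $x\le\kappa$, write $a=f_{\Gstar}$ and $b=f_{\widetilde G}$. Then
\[
r_b-r_a=(x+1)\Bigl[\frac{b(x+1)-a(x+1)}{b(x)}+a(x+1)\frac{a(x)-b(x)}{a(x)b(x)}\Bigr].
\]
Assumption~\ref{asmp:regular} gives $a(x),b(x)\ge b_0$. Together with $|b-a|=|\sqrt b-\sqrt a|(\sqrt a+\sqrt b)\le 2|\sqrt b-\sqrt a|$ and $(x+1)a(x+1)/a(x)=r_a(x)=\EE_{\Gstar}[\theta\mid X=x]$, this yields
\[
(r_b-r_a)^2\lesssim (x+1)^2 b_0^{-2}\Bigl(1+\frac{r_a(x)^2}{b_0^{2}}\Bigr)H^2(\tilde a^{(T)},\tilde b^{(T)}).
\]
In the compact case $r_a(x)\le h\le\hat h$, so the prefactor is $O_{b_0}\bigl((x+1)^2(1+\hat h^2)\bigr)$. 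A sharper variant uses $(x+1)\sqrt{f(x+1)}\lesssim\sqrt{(x+1)\hat h\,f(x)}$, which follows from $\text{Poisson}(x+1\mid\theta)=\theta\,\text{Poisson}(x\mid\theta)/(x+1)$. This replaces the factor $(x+1)^2$ by one of order $\hat h(x+1)$. Summing over $x\le\kappa$ with weights $f_{\Gstar}(x)$, and absorbing $b_0$-dependence into the constant, I aim for a deterministic bound
\[
\sum_{x\le\kappa}f_{\Gstar}(x)(r_{\widetilde G}-r_{\Gstar})^2\le C_{\hat h}(\hat h^2+\hat h\kappa)\,H^2(\tilde f^{(T)}_{\Gstar},\tilde f^{(T)}_{\widetilde G}).
\]
Taking expectations and inserting \eqref{eq:final-bounded} from Lemma~\ref{lem:EB_density-H2_coars}(a) gives $\frac1m\min\{T,\log m/\log\log m\}$ up to constants, using $T+2\le 3T$ for $T\ge1$. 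This proves part (a).

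\emph{Subexponential case.} Here $r_{\Gstar}(x)$ and $r_{\widetilde G}(x)$ are not uniformly bounded by a constant. I would bound the posterior mean for $x\le\kappa$ by $\EE_G[\theta\mid x]\le\int\theta\,\text{Poisson}(x\mid\theta)\,G(d\theta)/b_0$, which is $O(s(\kappa+1))$ uniformly over $\subexpo(s)$. To reach the stated $(\log m)^2+\kappa\log m$ prefactor, I would instead truncate at level $L\asymp s\log m$. On $\{\theta\le L\}$ the compact argument applies with $\hat h$ replaced by $L$, giving $L^2+L\kappa$. The tail $\theta>L$ contributes $O(1/m)$ through $G((L,\infty))\le 2e^{-L/s}$, uniformly over $\mathcal C$. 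Combining this with \eqref{eq:final-subexp}, whose second term is $\asymp\log m$, yields $C_s n\{(\log m)^2+\kappa\log m\}\min\{T,\log m\}/m$.

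\emph{Main obstacle.} The hardest step is the sharp pointwise inequality linking the ratio error to coarsened Hellinger distance with prefactor $\hat h^2+\hat h\kappa$ rather than $\kappa^2$. In the subexponential case this means carrying out the truncation cleanly, since the estimated $\widetilde G$ must also be controlled; this relies only on $\widetilde G\in\mathcal C\subset\subexpo(s)$. If the sharp form fails, the crude $(x+1)^2$ bound still gives the correct $n/m$ rate, with worse $\kappa$-dependence.
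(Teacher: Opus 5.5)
Your skeleton matches the paper's. You reduce to $n\sum_{x\le\kappa}f_{\Gstar}(x)\,\EE[(r_{\widetilde G}(x)-r_{\Gstar}(x))^2]$, bound this deterministically by the coarsened Hellinger distance (coordinates $0,\dots,\kappa+1$ are untouched since $T\ge\kappa+1$), and finish with Lemma~\ref{lem:EB_density-H2_coars}. There is one minor slip: the intermediate product $\EE[\sum\hat p_n]\,\EE[\sum\hat p_n(\cdot)^2]$ is not justified, because both factors involve $\hat p_n$. Just use $\sum_{x\le\kappa}\hat p_n(x)\le 1$ and then independence.

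The rate in $(n,m,T)$ comes out right, but there is a gap in your pointwise inequality for part (a). Your splitting leaves $f_{\widetilde G}(x)$ alone in the denominator, so you need Assumption~\ref{asmp:regular}, and your constant carries $b_0^{-2}$. That factor cannot be ``absorbed'' into $C_{\hat h}$ without losing exactly the $\kappa$-dependence you say you are after. Since $\sum_{x\le\kappa}f_G(x)\le 1$, any admissible $b_0$ satisfies $b_0\le 1/(\kappa+1)$. On $\calP[0,\hat h]$ with $\kappa\ge\hat h$, one even has $b_0\le e^{-\hat h}\hat h^{\kappa}/\kappa!$. So your argument really gives $b_0^{-2}(\hat h^2+\hat h\kappa)$: an extra factor of at least $(\kappa+1)^2$, and super-exponential in $\kappa$ once $\kappa\ge\hat h$. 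The paper's constant depends only on $(\hat h,c_1,c_2)$. Your ``sharper variant'' uses the right identity $(x+1)f_G(x+1)=r_G(x)f_G(x)$, but it keeps the asymmetric denominator, so $b_0$ survives.

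The missing idea is to pass through the symmetric intermediate terms $2f_{\widetilde G}(x+1)/(f_{\widetilde G}(x)+f_{\Gstar}(x))$ and $2f_{\Gstar}(x+1)/(f_{\widetilde G}(x)+f_{\Gstar}(x))$. Then every difference $f_{\widetilde G}(y)-f_{\Gstar}(y)$ is divided by $f_{\Gstar}(x)+f_{\widetilde G}(x)\ge f_{\Gstar}(x)$, so the weight is absorbed. The only remaining ratios are posterior means $r_G(x)\le\hat h$, which are controlled by the support rather than by $b_0$. This gives $12\{(\kappa+1)(h+\hat h)+\hat h^2+h^2\}\,H^2(\tilde f^{(T)}_{\Gstar},\tilde f^{(T)}_{\widetilde G})$ with no use of Assumption~\ref{asmp:regular}.

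For (b), your first observation already suffices. Uniformly on $\mathcal C$ you have $r_G(x)\le\EE_G[\theta]/b_0\lesssim s/b_0$, so your pointwise bound in its sharper form yields $C(s,b_0)\,n(1+\kappa)\min\{T,\log m\}/m$. That is stronger in $m$ than the statement but polynomial in $1/b_0$. The truncation you switch to is the paper's route:
\begin{itemize}
\item truncate $\widetilde G$ at $C(s)\log m$;
\item control the change in the posterior mean through the tail masses and $b_0$, as in \eqref{eq:m1}--\eqref{eq:m3};
\item use $H^2(f_Q,f_{Q^{(\hat h)}})\le 4Q((\hat h,\infty))$;
\item apply Lemma~\ref{lmm:EB_general-regret-bound} with $h=\hat h=V=M\asymp\log m$.
\end{itemize}
In that route $b_0$ enters only lower-order terms and (logarithmically) the truncation level; the price is the extra $\log m$ factors. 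As written, though, your claim that the tail ``contributes $O(1/m)$'' is asserted rather than derived. It is precisely the step that needs the lower bound on $f_{\widetilde G}(x)$.
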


The proof is given in Section \ref{proof:totalcoarse}. We conclude by comparing these upper bounds with the minimax lower bound in Theorem \ref{thm:total_intensity_lb}. When $m\asymp n$ and $\kappa$ and the parameters defining the mixing class are regarded as fixed, the lower bound is of constant order. For classes to which both the lower and upper bounds apply, the upper bounds in Theorems \ref{thm:sum_regular} and \ref{thm:sum_coarse} match this benchmark up to logarithmic factors. More specifically, in the compact-support setting, the coarsened procedure with a fixed coarsening level $T\geq\kappa+1$ attains a constant-order risk when $m\asymp n$, and hence matches the order of the minimax lower bound. The regular procedure differs from the minimax benchmark only by a factor of order $\log n/\log\log n$. These results show that minimum-distance
estimation yields near-minimax-optimal estimation of the oracle Bayes estimate of the total-intensity sum, while coarsening can remove the logarithmic penalty in the compact-support setting.

\subsection{Discrete mixing distributions}

\begin{corollary}[Regular and coarsened minimum-distance estimation on \(\mathcal P(\Theta_J)\)]\label{corr:tot-sum-risk-discrete-feasible}
Let $\hat G_{d,m}$ and $\widetilde G_{d,m}^{(T)}$ be defined as in \eqref{eq:optim-genr} and \eqref{eq:optim-coarse}, respectively, with the optimization performed over $\mathcal P(\Theta_J)$ and $T\geq\kappa+1$. Then, under $\hat h=h$, $\hat G_{d,m}$ satisfies the guarantees of \prettyref{thm:sum_regular}\ref{pt:bounded-regular} and $\widetilde G_{d,m}^{(T)}$ satisfies the guarantees of \prettyref{thm:sum_coarse}\ref{pt:bounded-coarse-sum}.
\end{corollary}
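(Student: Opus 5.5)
The plan is to verify the hypotheses of Theorems \ref{thm:sum_regular}\ref{pt:bounded-regular} and \ref{thm:sum_coarse}\ref{pt:bounded-coarse-sum} for the class $\mathcal C=\mathcal P(\Theta_J)$ with $\hat h=h$, and then invoke those theorems directly.

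First, we check that $\mathcal P(\Theta_J)$ is a convex and weakly closed subset of $\calP[0,h]$. Convexity is immediate, since mixtures of measures on $\Theta_J$ remain on $\Theta_J$. For weak closedness, note that $\mathcal P(\Theta_J)$ is the continuous image of the compact simplex $\{w\in\reals_+^J:\sum_j w_j=1\}$ under $w\mapsto\sum_j w_j\delta_{\theta_j}$. Hence it is weakly compact, and therefore closed in the weak topology. Alternatively, one can apply the Portmanteau theorem to the closed set $\Theta_J$: if $G_k\to G$ weakly with $G_k(\Theta_J)=1$, then $G(\Theta_J)\ge\limsup_k G_k(\Theta_J)=1$. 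Since $\Theta_J\subset[0,h]$, we have $\mathcal C\subset\calP[0,h]=\calP[0,\hat h]$, and $\Gstar\in\calP[0,h]\cap\mathcal C$ by assumption. Corollary \ref{corr:hell-risk-discrete-feasible} then supplies the density-level guarantees of Lemmas \ref{lem:EB_density-H2}--\ref{lem:EB_density-H2_coars} that the two theorems rely on.

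The main obstacle is Assumption \ref{asmp:regular}, which the two theorems require. It fails for the full simplex if $\theta_1=0$, because $\delta_0$ gives $f_G(1)=0$ when $\kappa\ge1$.

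If $\theta_1>0$, the assumption holds directly. For every $x\le\kappa$ we have
\[
f_G(x)=\sum_j w_j e^{-\theta_j}\theta_j^x/x!\;\ge\;\min_j e^{-\theta_j}\theta_j^x/x!\;\ge\; e^{-h}\min\{\theta_1,1\}^\kappa/\kappa!,
\]
which gives $b_0>0$ uniformly over $\mathcal C$. Equivalently, one can apply Lemma \ref{lmm:mixture-lower-bound} with $\epsilon=\theta_1$ and $c_0=0$.

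If $\theta_1=0$, a lower bound for $\mathcal P(\Theta_J)$ itself cannot be proved. In that case the corollary should be read with Assumption \ref{asmp:regular} imposed on the class, as the paper implicitly does through the phrase ``satisfying Assumption \ref{asmp:regular}''. Concretely, one replaces $\mathcal C$ by the subclass of $\mathcal P(\Theta_J)$ with $w_1\le a_0<1$. This subclass is still convex and weakly closed, being the intersection with a closed half-space in the weights. Lemma \ref{lmm:mixture-lower-bound} then applies with $\epsilon=\theta_2$, yielding Assumption \ref{asmp:regular}.

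With all hypotheses verified, Theorem \ref{thm:sum_regular}\ref{pt:bounded-regular} applies to $\hat G_{d,m}$ and Theorem \ref{thm:sum_coarse}\ref{pt:bounded-coarse-sum} applies to $\widetilde G^{(T)}_{d,m}$ for $T\ge\kappa+1$, in both cases with $\hat h=h$. This gives the stated bounds.
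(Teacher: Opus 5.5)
Your proposal is correct and is essentially the paper's own argument, which is left implicit: the paper states the corollary without proof, as a direct specialization of Theorems~\ref{thm:sum_regular}(a) and~\ref{thm:sum_coarse}(a) to the convex, weakly closed class $\mathcal P(\Theta_J)\subset\calP[0,h]$, and your check of Assumption~\ref{asmp:regular} when $\theta_1>0$ is sound. Your $\theta_1=0$ caveat is a genuine point the paper glosses over, but shrinking the class to $w_1\le a_0$ changes the estimator and also requires $w_1^*\le a_0$; alternatively, the compact-support proofs use $b_0$ only to make the posterior means well defined (Lemma~\ref{lmm:EB_general-regret-bound} and \eqref{eq:inequalities} rely only on $\hat\theta_G\le\hat h$), so the full simplex also works once $\hat\theta_{\delta_0}(x)$ is assigned any value in $[0,h]$ for $x\ge1$.
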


For fixed $\Theta_J$ and $T$, the preceding corollary shows that the coarsened minimum-distance procedure achieves the parametric rate $n/m$, while the regular minimum-distance procedure achieves this rate up to a factor of order $\log m/\log\log m$. In particular, when $m\asymp n$, the coarsened procedure has uniformly bounded risk at the $\sqrt n$ scale. Thus, in the discrete setting, coarsened minimum-distance estimation of the total-intensity sum retains the parametric-rate behavior established above, while the regular procedure incurs only a logarithmic loss.


\section{Finite-sample guarantees for the above-average sum}\label{sec4}

In this section, we consider the utility function $u(x,\theta)=I(x>\theta)$; see, e.g., \citet[Examples 2.1 and 2.2]{Zha(05)}. This choice of the utility $u$ leads to a sum of random variables referred to as the above-average sum, i.e.,
\begin{displaymath}
S_{n,>}=\sum_{i=1}^{n}I(X_i>\theta_i).
\end{displaymath}
From \eqref{eq:post_exp}, the oracle Bayes estimate of $S_{n,>}$, which assumes knowledge of the data-generating mixing distribution $\Gstar$, is
\begin{equation}\label{eq:bayes_above}
\hat S_{n,>}(\Gstar)=\E_{\Gstar}[S_{n,>}\mid X_{1:n}]=\sum_{i=1}^{n}\P_{\Gstar}(\theta_i<X_i\mid X_i).
\end{equation}
We study the EB estimates of $S_{n,>}$ obtained by plugging the regular and $T$-coarsened minimum-$d$-distance estimates of $\Gstar$ into \eqref{eq:bayes_above}, namely $\hat G_{d,m}$ in \eqref{eq:optim-genr} and $\widetilde G_{d,m}^{(T)}$ in \eqref{eq:optim-coarse}, respectively. Throughout this section, we assume that the distance $d$ satisfies Assumptions \ref{asmp:EB_t-ell}--\ref{asmp:dist_sandwich}.

\subsection{Minimax lower bound}

We first establish a minimax lower bound for the estimation of the oracle Bayes estimate $\hat S_{n,>}(G)$. In contrast to the total-intensity sum, the resulting lower bound shows that, without additional structural assumptions, estimation at the parametric scale is in general impossible.

\begin{theorem}\label{above_average_lb}
    Let $\mathcal C$ be a convex and weakly closed class of mixing distributions including $\operatorname{Unif}[a,b]$ and $G_{j,\varepsilon}$ for $j=0,1$, {defined as in Theorem \ref{thm:total_intensity_lb}.} 
    Let 
\[
m_G^{(a)}(x)=\EE_{G}[I(X_1>\theta_1)\mid X_1=x].
\]  If $\varepsilon_0$ is small enough, then the following results hold:
\begin{itemize}
    \item[(a)] There exists a constant $C_5>0$ such that 
\[
 \EE_{G_{j,\varepsilon}}\left[m_{G_{0,\varepsilon}}^{(a)}(X_1)-m_{G_{1,\varepsilon}}^{(a)}(X_1)\right]\geq C_5\varepsilon\quad (j=0,1).
\]
    \item[(b)] For any estimator $\widehat S$ of $S_{n,>}=\sum_{i=1}^{n}I(X_{i}>\theta_i) $ based on $(X_1,\dots,X_n,X'_1,\dots,X'_m)$, 
\[
\inf_{\widehat S}\sup_{G\in\mathcal C}\EE_G\left[  \left(    \frac{\widehat S-\widehat S_{n,>}(G)}{\sqrt n}\right)^2\right]\geq C_6\frac{n}{\sqrt{m+n}},
\]
for $n$ large enough, where $C_6$ is a constant.
In particular, if $m=n$, 
\[
\inf_{\widehat S}\sup_{G\in\mathcal C}\EE_G\left[  \left(\frac{\widehat S-\widehat S_{n,>}(G)}{\sqrt n}  \right)^2\right]\geq \frac{C_6}{2}\sqrt n.
\]
The bound remains valid for estimators that only depend on $(X'_1,\dots,X'_m)$. 
\end{itemize}
\end{theorem}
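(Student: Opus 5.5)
Part (a) reduces to a local computation near $\theta=1$, and part (b) follows from it by a two-point (Le Cam) argument.

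\emph{Part (a).} Write $G_{j,\varepsilon}=(1-c\varepsilon)U+c\varepsilon V_j$, with $U=\operatorname{Unif}[a,b]$ and $V_0,V_1$ the uniforms on $[1-\varepsilon,1]$ and $[1,1+\varepsilon]$. Then
\[
m^{(a)}_G(x)=\frac{\int_{[0,x)}\text{Poisson}(x\mid\theta)\,G(d\theta)}{f_G(x)} .
\]
The two priors differ only on $[1-\varepsilon,1+\varepsilon]$. Since $a>(1+\varepsilon_0)^2>1$, this interval lies strictly below $[a,b]$. The key point concerns $x=1$. Under $G_{0,\varepsilon}$ the perturbation mass lies in $\theta<1$ and so counts toward $\{\theta<x\}$. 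Under $G_{1,\varepsilon}$ it lies in $\theta>1$ and does not. Hence the numerators at $x=1$ differ by about $c\varepsilon\, e^{-1}$, while the denominators $f_G(1)$ differ only by $O(\varepsilon^2)$, because $\theta e^{-\theta}$ is stationary at $\theta=1$.

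For every other $x\ge 0$, both perturbations lie entirely on the same side of $x$, up to a Lebesgue-null boundary. So only the smooth denominators change, and the resulting differences are $O(\varepsilon\cdot\varepsilon)$-small or are of a fixed sign that one checks.

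I will therefore compute
\[
m^{(a)}_{G_0}(1)-m^{(a)}_{G_1}(1)=\frac{c\varepsilon\, e^{-1}}{f(1)}+O(\varepsilon^2),
\]
and bound $\sum_{x\neq 1} f_{G_j}(x)\,\bigl|m_{G_0}(x)-m_{G_1}(x)\bigr|$ by $C\varepsilon^2$. Weighting by $f_{G_j}(1)\ge$ const then gives $C_5\varepsilon$ once $\varepsilon_0$ is small.

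Controlling the $x\neq1$ terms is the main obstacle. For $x\ge 2$ one needs a uniform bound on $|f_{G_0}(x)-f_{G_1}(x)|$ of order $\varepsilon^2\cdot$Poisson-like tails, obtained by Taylor expanding $\text{Poisson}(x\mid\theta)$ around $\theta=1$. The first-order terms do not vanish for $x\ne1$, and in that case they give order $\varepsilon^2$ anyway, since the means of $V_0$ and $V_1$ differ by $\varepsilon$ and carry weight $c\varepsilon$. These terms must be summed against $f_{G_j}(x)/f_{G}(x)$, which is bounded because of the $[a,b]$ component.

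\emph{Part (b).} The plan is a two-point argument with $G_0=G_{0,\varepsilon}$ and $G_1=G_{1,\varepsilon}$.

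1. The data $(X_{1:n},X'_{1:m})$ are i.i.d.\ from $f_{G_j}$, so the product distributions satisfy $H^2\le (m+n)H^2(f_{G_0},f_{G_1})$. Both mixtures carry the common $[a,b]$ component, so $\chi^2$ is comparable to $H^2$, and the mass difference is at most $c\varepsilon\cdot\sup|\text{Poisson}(\cdot\mid\theta)-\text{Poisson}(\cdot\mid\theta')|=O(\varepsilon^2)$. This gives $H^2(f_{G_0},f_{G_1})\lesssim \varepsilon^4$.

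2. Choose $\varepsilon\asymp (m+n)^{-1/4}$. Then the total Hellinger distance is bounded by a small constant, and the two hypotheses are not distinguishable with error probability tending to zero.

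3. The separation of targets is
\[
\widehat S_{n,>}(G_0)-\widehat S_{n,>}(G_1)=\sum_{i=1}^n \bigl[m_{G_0}(X_i)-m_{G_1}(X_i)\bigr].
\]
Its mean under either $G_j$ is at least $nC_5\varepsilon$ by (a), and its standard deviation is $O(\sqrt n)$. Hence it exceeds $nC_5\varepsilon/2$ with high probability once $n\varepsilon\gg\sqrt n$, which holds because $\varepsilon\asymp(m+n)^{-1/4}$ and $m\lesssim n^2$. When $m\gg n^2$ the claimed bound is trivial in order.

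4. A standard reduction, with the loss taken relative to the random target, gives risk $\gtrsim (n\varepsilon)^2$. After dividing by $n$ this is $n\varepsilon^2\asymp n/\sqrt{m+n}$. For estimators depending only on $X'$ the same argument applies, with $H^2$ bounded by $m\varepsilon^4$.

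I would verify step 1's $O(\varepsilon^2)$ claim by matching zeroth and first moments of $V_0$ and $V_1$ up to a shift.
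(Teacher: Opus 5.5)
Your strategy is the paper's. In (a), the only first-order contribution comes from $x=1$, where $m^{(a)}_{G_{1,\varepsilon}}(1)=0$ and $m^{(a)}_{G_{0,\varepsilon}}(1)\asymp\varepsilon$. In (b), you use Le Cam's two-point method with $H^2(f_{G_{0,\varepsilon}},f_{G_{1,\varepsilon}})\lesssim\varepsilon^4$, $\varepsilon\asymp(m+n)^{-1/4}$, and a test comparing $\widehat S$ with the two random targets.

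\textbf{The gap is in step 3 of (b).} You control the fluctuation of $\widehat S_{n,>}(G_{0,\varepsilon})-\widehat S_{n,>}(G_{1,\varepsilon})$ only through the crude bound $|m^{(a)}_{G_{0,\varepsilon}}-m^{(a)}_{G_{1,\varepsilon}}|\le 1$. This gives a standard deviation $O(\sqrt n)$, so Chebyshev gives a failure probability of order $1/(n\varepsilon^2)\asymp\sqrt{m+n}/n$. That is small only when $m\le\eta n^2$ for a small constant $\eta$. Your treatment of the remaining regime is wrong: for $m\gtrsim n^2$ the bound $C_6 n/\sqrt{m+n}$ is not ``trivial in order''. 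It is a strictly positive lower bound that still has to be proved, and your argument supplies nothing there. The theorem is stated for every $m$ with a fixed constant $C_6$, so as written your proof covers only $m=O(n^2)$ with a small constant.

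\textbf{The fix is already implicit in your part (a).} The per-observation difference is uniformly of order $\varepsilon$: $\sup_x|m^{(a)}_{G_{0,\varepsilon}}(x)-m^{(a)}_{G_{1,\varepsilon}}(x)|\le k\varepsilon$. This holds because $m^{(a)}_{G_{1,\varepsilon}}(1)=0$, $m^{(a)}_{G_{0,\varepsilon}}(1)\le c\varepsilon e^{-1}/f_{G_{0,\varepsilon}}(1)$ with $f_{G_{0,\varepsilon}}(1)$ bounded below, and all other differences are $O(\varepsilon^2)$. Each summand then has variance at most $k^2\varepsilon^2$, and Chebyshev yields $\PP_{G_{j,\varepsilon}}\bigl(\widehat S_{n,>}(G_{0,\varepsilon})-\widehat S_{n,>}(G_{1,\varepsilon})<\tfrac12 C_5 n\varepsilon\bigr)\le 4k^2/(C_5^2 n)$, uniformly in $m$. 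This is how the paper closes the argument.

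Two smaller remarks. First, what you call the main obstacle in (a) disappears. Since $b\le 3$, both priors are supported in $[0,3]$, so $m^{(a)}_{G_{0,\varepsilon}}(x)=m^{(a)}_{G_{1,\varepsilon}}(x)$ exactly for $x=0$ and $x\ge 3$. Only $x=2$ needs the $O(\varepsilon^2)$ bound: there $m^{(a)}_{G_{j,\varepsilon}}(2)=(B+u_j)/(D+u_j)$, where $u_j$ is the perturbation's contribution, $B\le D$ come from the common $[a,b]$ component, and $|u_0-u_1|=O(\varepsilon^2)$. Your bound summed over all $x\ge 2$ is valid but unnecessary. Second, the claim for estimators depending only on $X'_{1:m}$ holds simply because they form a subclass. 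Your justification via $H^2\le m\varepsilon^4$ does not fit the test you construct, which also uses $X_{1:n}$ through the targets $\widehat S_{n,>}(G_{j,\varepsilon})$.
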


The proof is given in Section \ref{proof51_lb}. Theorem \ref{above_average_lb} reveals a markedly different behavior from that of the total-intensity sum. When $m\asymp n$, the minimax risk is bounded from below by a constant multiple of $\sqrt n$, and therefore diverges under the $\sqrt n$ normalization. Thus, over the class considered in Theorem \ref{above_average_lb}, no estimator can achieve a uniformly bounded risk at the parametric scale. This motivates the additional structural assumptions introduced above, under which we next derive finite-sample upper bounds.

\subsection{Finite-sample upper bounds}
{Throughout this section, we assume that the distance $d$ satisfies Assumptions \ref{asmp:EB_t-ell}--\ref{asmp:dist_sandwich}.  We also impose additional regularity conditions tailored to the above-average sum. Since several of these conditions are used for both regular and coarsened minimum-distance estimation, we collect them here and invoke them as needed.}

\begin{definition}[Conditional above-average probability]\label{def_ratio}
For any probability measure $G$ on $\reals_+$, define $m_G(x):=\frac{A_x(G)}{B_x(G)}$, where
\begin{displaymath}
A_x(G)=\int_{\reals_+} e^{-\theta}\theta^x I(\theta<x)\,dG(\theta),\qquad B_x(G)=\int_{\reals_+} e^{-\theta}\theta^x\,dG(\theta)=x!\,f_G(x).
\end{displaymath}
Whenever $B_x(G)>0$, the ratio $m_G(x)=A_x(G)/B_x(G)$ has its usual conditional-probability interpretation. If $B_x(G)=0$, we define $m_G(x)$ by the deterministic value of the event $\{\theta<x\}$ whenever this event is $G$-almost surely constant. Thus, we set $m_G(x)=1$ if $G([0,x))=1$ and $m_G(x)=0$ if $G([x,\infty))=1$. This convention is used only outside the relevant finite window, where the posterior event is deterministic under the support constraint.
\end{definition}

\begin{definition}[Smoothing function]\label{def_smooth}
Fix $\delta\in(0,1]$. Define $\phi_\delta:\mathbb R\to[0,1]$ by
\begin{displaymath}
\phi_\delta(u):=
\begin{cases}
1, & u\ge \delta,\\[1mm]
\frac{u+\delta}{2\delta}, & |u|<\delta,\\[2mm]
0, & u\le -\delta.
\end{cases}
\end{displaymath}
\end{definition}

Then $0\le \phi_\delta\le 1$, $\phi_\delta$ is $(2\delta)^{-1}$-Lipschitz, and $\phi_\delta(x-\theta)=I(\theta<x)\quad\text{whenever }|x-\theta|\ge \delta$. Hence, for every $x\in\integers_+$ and every $\theta\in\reals_+$,
\begin{equation}\label{eq:phi-pointwise-error-main}
\big|\phi_\delta(x-\theta)-I(\theta<x)\big|\le I(|\theta-x|<\delta).
\end{equation}
Combining Definitions \ref{def_ratio}--\ref{def_smooth}, we define the smoothed conditional above-average probability $m_{G,\delta}(x):=\frac{A_{x,\delta}(G)}{B_x(G)}$, 
where
\begin{displaymath}
A_{x,\delta}(G):=\int_{\reals_+} e^{-\theta}\theta^x\,\phi_\delta(x-\theta)\,dG(\theta).
\end{displaymath}
The assumptions below are imposed on a finite window of integer values. In the compact-support case, this window is $\mathcal W=\{0,\ldots,\hat K\}$, with $\hat K=\lfloor \hat h\rfloor$, whereas in the subexponential case it is $\mathcal W=\{0,\ldots,K\}$, where $K$ is a truncation level. We write $\mathcal W_+=\mathcal W\cap\{1,2,\ldots\}$. Sums over $\mathcal W_+$ are understood to be empty when $\mathcal W_+=\varnothing$.

Let $\mathcal C$ denote the feasible class of mixing distributions used in the regular or coarsened minimum-distance optimization. Throughout the finite-sample analysis, we assume that $\Gstar\in\mathcal C$ and that the corresponding minimum-distance estimate belongs to $\mathcal C$ almost surely.

\begin{myassumption}[Local mass bound at integer points]\label{asmp:aa-local-mass}
There exists $L>0$ such that $G([x-\delta,x+\delta]\cap\mathrm{supp}(G))\leq L\delta$ for all $G\in\mathcal C$, all $\delta\in(0,1]$, and all $x\in\mathcal W$.
\end{myassumption}

The next assumption is specific to regular minimum-distance estimation. It requires that the smoothed numerator $A_{x,\delta}(G)$ can be approximated by a linear functional of the full mixture probability mass function $f_G$.

\begin{myassumption}[Self-normalized representation: regular case]\label{asmp:aa-self-normal-regular}
For each $x\in\mathcal W_+$ and each $\delta\in(0,1]$, there exist coefficients 
$\alpha_{x,\delta}(t)$, $t\in\integers_+$, and a remainder bound 
$\varepsilon_{x,\delta}\geq0$ such that, for all $G\in\mathcal C$,
\begin{equation}\label{eq:A-rep-regular-aa}
\left|A_{x,\delta}(G)-\sum_{t\geq0}\alpha_{x,\delta}(t)f_G(t)\right|\leq\varepsilon_{x,\delta},
\end{equation}
and
\begin{equation}\label{eq:A-rep-eps-regular-aa}
\varepsilon_{x,\delta}\leq C_\varepsilon |\calW_+| \delta,\qquad\delta\in(0,1],
\end{equation}
uniformly over $x\in\mathcal W_+$, with constants depending only on the structural parameters of the class $\mathcal C$.
Moreover, for all $G_1,G_2\in\mathcal C$,
\begin{equation}\label{eq:A-selfnorm-regular-aa}
\sum_{t\geq0}\alpha_{x,\delta}(t)^2\left(f_{G_1}(t)+f_{G_2}(t)\right)\leq C_A\left(A_{x,\delta}(G_1)+A_{x,\delta}(G_2)\right).
\end{equation}
\end{myassumption}

For the coarsened minimum-distance estimation, the analogue of Assumption \ref{asmp:aa-self-normal-regular} is imposed with respect to the coarsened mixture probability mass function. In particular, recall that $\tilde f_G^{(T)}$ denotes the $T$-coarsening of $f_G$.

\begin{myassumption}[Self-normalized representation: coarsened case]\label{asmp:aa-self-normal-coarse}
For each $x\in\mathcal W_+$ and each $\delta\in(0,1]$, there exist coefficients $\alpha_{x,\delta}^{(T)}(t)$, $t=0,\ldots,T+1$, and a remainder bound $\varepsilon_{x,\delta}^{(T)}\geq0$ such that, for all $G\in\mathcal C$,
\begin{equation}\label{eq:A-rep-coarse-aa}
\left|A_{x,\delta}(G)-\sum_{t=0}^{T+1}\alpha_{x,\delta}^{(T)}(t)\tilde f_G^{(T)}(t)\right|\leq\varepsilon_{x,\delta}^{(T)},
\end{equation}
and
\begin{equation}\label{eq:A-rep-eps-coarse-aa}
\varepsilon_{x,\delta}^{(T)}\leq C_\varepsilon h(T,|\calW_+|) \delta,\qquad\delta\in(0,1],
\end{equation}
uniformly over $x\in\mathcal W_+$, where $C_\varepsilon$ depends only on the structural parameters of the class $\mathcal C$, and $h(T,|\mathcal W_+|)>0$ satisfies $h(T,|\mathcal W_+|)\leq|\mathcal W_+|$. Moreover, for all $G_1,G_2\in\mathcal C$,
\begin{equation}\label{eq:A-selfnorm-coarse-aa}
\sum_{t=0}^{T+1}\left(\alpha_{x,\delta}^{(T)}(t)\right)^2\left(\tilde f_{G_1}^{(T)}(t)+\tilde f_{G_2}^{(T)}(t)\right)\leq C_A\left(A_{x,\delta}(G_1)+A_{x,\delta}(G_2)\right),
\end{equation}
with $C_A$ depending only on the structural parameters of the class $\mathcal C$.
\end{myassumption}

The next theorem provides finite-sample upper bounds for the EB estimate $\hat S_{n,>}(\hat G_{d,m})$ of $S_{n,>}$, where $\hat G_{d,m}$ is the regular minimum-$d$-distance estimate of $\Gstar$ in \eqref{eq:optim-genr}. The result relies on the finite-sample guarantees in Lemma \ref{lem:EB_density-H2}.

\begin{theorem}\label{thm:main-above-average}
Let $\Gstar$ denote the data-generating mixing distribution. Consider the regular minimum-$d$-distance estimate $\hat G_{d,m}$ of $\Gstar$ obtained from an independent training sample of size $m$.
\begin{enumerate}[label=(\alph*)]
\item Let $\hat h>0$ and $\hat K=\lfloor\hat h\rfloor$. Let $\mathcal C\subseteq\mathcal P([0,\hat h])$ be a convex and weakly closed class such that \prettyref{asmp:aa-local-mass} and
\prettyref{asmp:aa-self-normal-regular} hold on the window $\{0,\ldots,\hat K\}$. Assume that $\Gstar\in\mathcal P([0,h])\cap\mathcal C$ for some $h\leq\hat h$. Then there exist constants $C_0,C_1$, depending only on $(\hat h,L)$ and the constants in \prettyref{asmp:aa-self-normal-regular}, such that, for every $\delta\in(0,1]$ and every $m\geq3$,
\begin{align}\label{eq:main-compact-result}
&\EE\!\left[\left(\frac{\hat S_{n,>}(\hat G_{d,m})-\hat S_{n,>}(\Gstar)}{\sqrt n}\right)^2\right]\\
&\notag\quad\le nC_0\delta+nC_1 C_d(\hat h,c_1,c_2)\,\frac{1}{m}\min\!\left\{\max\{1,\hat h\}\frac{\log m}{\log\log m},
\hat h+\log m\right\}.
\end{align}
\item Let $K\geq0$ be an integer. Let $\mathcal C\subseteq\subexpo(s)$ be a convex and weakly closed class such that \prettyref{asmp:aa-local-mass} and \prettyref{asmp:aa-self-normal-regular}
hold on the window $\{0,\ldots,K\}$. Assume that $\Gstar\in\mathcal C$. Then there exist constants $C_0,C_1$, depending only on $L$ and the constants in \prettyref{asmp:aa-self-normal-regular}, such that, for every $\delta\in(0,1]$ and every $m\geq2$,
\begin{equation}\label{eq:main-sube-result}
\EE\!\left[\left(\frac{\hat S_{n,>}(\hat G_{d,m})-\hat S_{n,>}(\Gstar)}{\sqrt n}\right)^2\right]\le nC_0K\delta+nC_1 C_d(s,c_1,c_2)\frac{\log m}{m}+nC_s e^{-c_sK},
\end{equation}
where one may take $c_s:=\log\left(1+\frac{1}{2s}\right)$ and $C_s:=3$.
\end{enumerate}
\end{theorem}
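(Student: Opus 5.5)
The plan is to condition on the training sample, so that $\hat G:=\hat G_{d,m}$ is fixed, and use that $X_1,\dots,X_n$ are i.i.d.\ from $f_{\Gstar}$ and independent of $\hat G$. Writing $D(x):=m_{\hat G}(x)-m_{\Gstar}(x)$ with $|D|\le1$, we have
\[
\EE\!\left[\Big(\tfrac{1}{\sqrt n}\textstyle\sum_{i}D(X_i)\Big)^2\,\Big|\,\hat G\right]\le \textstyle\sum_x f_{\Gstar}(x)D(x)^2+n\Big(\sum_x f_{\Gstar}(x)|D(x)|\Big)^2\le (n+1)\sum_x f_{\Gstar}(x)D(x)^2 ,
\]
by Cauchy--Schwarz. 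It therefore suffices to bound $\EE\sum_x f_{\Gstar}(x)D(x)^2$ by $C_0\delta+C_1C_d\cdot(\text{Lemma \ref{lem:EB_density-H2} rate})$, plus $C_se^{-c_sK}$ in the subexponential case.

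\textbf{Splitting the sum over $x$.} For $x=0$ we have $m_G(0)=0$ for every $G$. For $x$ outside the window $\mathcal W$, in the compact case $x>\hat h\ge$ every support point, so $m_G(x)=1$ for all $G\in\mathcal C$ and $D(x)=0$. In the subexponential case the outside contribution is bounded by $f_{\Gstar}(\{x>K\})\le \PP(\theta>K/2)+\PP(\mathrm{Poi}(K/2)>K)$, which a Chernoff/moment-generating-function bound for the subexponential tail turns into $3e^{-c_sK}$ with $c_s=\log(1+1/(2s))$. On $\mathcal W_+$ we smooth: by \eqref{eq:phi-pointwise-error-main}, $|m_G(x)-m_{G,\delta}(x)|\le G(|\theta-x|<\delta\mid x)$. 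The $f_{\Gstar}$-weighted squared error is at most $\int e^{-\theta}\theta^x I(|\theta-x|<\delta)\,dG/x!$ times bounded factors. For $G=\Gstar$ this is directly at most $L\delta$ by Assumption \ref{asmp:aa-local-mass}. For $G=\hat G$ the weight is $f_{\Gstar}(x)$ rather than $f_{\hat G}(x)$, and we handle this as in the main step below. Summing over the window gives the $C_0\delta$ (or $C_0K\delta$) term.

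\textbf{Main step: the smoothed ratios.} Write
\[
m_{G_1,\delta}-m_{G_2,\delta}=\frac{A_{x,\delta}(G_1)-A_{x,\delta}(G_2)}{B_x(G_1)}+m_{G_2,\delta}\,\frac{B_x(G_2)-B_x(G_1)}{B_x(G_1)} .
\]
The second piece is controlled by $|f_{G_1}(x)-f_{G_2}(x)|/f_{G_1}(x)$, and its $f$-weighted square is bounded by a Hellinger-type quantity via $(a-b)^2\le 2(\sqrt a-\sqrt b)^2(a+b)$. For the first piece, Assumption \ref{asmp:aa-self-normal-regular} gives
\[
|A_{x,\delta}(G_1)-A_{x,\delta}(G_2)|\le 2\varepsilon_{x,\delta}+\Big|\sum_t\alpha_{x,\delta}(t)\big(f_{G_1}(t)-f_{G_2}(t)\big)\Big| .
\]
Writing $f_{G_1}-f_{G_2}=(\sqrt{f_{G_1}}-\sqrt{f_{G_2}})(\sqrt{f_{G_1}}+\sqrt{f_{G_2}})$ and applying Cauchy--Schwarz with \eqref{eq:A-selfnorm-regular-aa} bounds the last sum by
\[
\sqrt{2C_A\big(A_{x,\delta}(G_1)+A_{x,\delta}(G_2)\big)}\;H(f_{G_1},f_{G_2}).
\]
Since $A_{x,\delta}\le B_x$, dividing by the $B$'s and squaring yields a contribution of order $H^2/B_x(G_1)$. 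This is the \emph{main obstacle}: the denominator can be tiny. I would work with the symmetrized weight $f_{\Gstar}(x)+f_{\hat G}(x)$, using $m_{G_1,\delta}\le1$ and the self-normalization so that a factor $\sqrt{A}$ cancels against $B$. The cases where $B_x(\Gstar)$ and $B_x(\hat G)$ are very unbalanced are handled by the crude bound $|D|\le1$ together with $f_{\Gstar}(x)\lesssim (\sqrt{f_{\Gstar}(x)}-\sqrt{f_{\hat G}(x)})^2$ in that regime. The result is
\[
\sum_{x\in\mathcal W_+} f_{\Gstar}(x)\big(m_{\hat G,\delta}(x)-m_{\Gstar,\delta}(x)\big)^2\lesssim |\mathcal W_+|\,H^2(f_{\Gstar},f_{\hat G})+\sum_{x\in\mathcal W_+}\varepsilon_{x,\delta}^2/\ldots ,
\]
where the $\varepsilon$ terms are absorbed into $C_0\delta$ through \eqref{eq:A-rep-eps-regular-aa}. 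The same cancellation handles the mismatched weight in the smoothing-error term above.

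\textbf{Conclusion.} Taking expectations over the training sample and invoking Lemma \ref{lem:EB_density-H2}(a) or (b) for $\EE H^2$ gives \eqref{eq:main-compact-result} and \eqref{eq:main-sube-result}. The window size $|\mathcal W_+|\le\hat h$ (compact case) and the Assumption \ref{asmp:dist_sandwich} constants are absorbed into $C_1C_d$.
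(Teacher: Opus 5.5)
Your skeleton is the paper's: conditioning on the training sample, the reduction to $n\sum_x f_{\Gstar}(x)D(x)^2$, restricting to the window, the three-term smoothing split, local-mass bounds for the bias terms, and Cauchy--Schwarz against \eqref{eq:A-selfnorm-regular-aa}. Your additive decomposition of the ratio with a balanced/unbalanced case split is a legitimate substitute for the paper's Lemma~\ref{lem:quad-convert} plus Lemma~\ref{lem:ratio-stab}. However, the main step does not close as written. After squaring, the representation error enters as $f_{\Gstar}(x)\,\varepsilon_{x,\delta}^2/B_x(\Gstar)^2=\varepsilon_{x,\delta}^2/(x!\,B_x(\Gstar))$; this is your unspecified ``$\varepsilon_{x,\delta}^2/\ldots$''. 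The theorem gives no lower bound on $f_{\Gstar}(x)$ on the window, so this term cannot be absorbed into $C_0\delta$ via \eqref{eq:A-rep-eps-regular-aa}. None of your remedies (symmetrized weights, the self-normalization cancellation, the crude bound in the unbalanced regime) touches it. If $B_x(\Gstar)\asymp B_x(\hat G_{d,m})\ll\delta$, the term can be of order $\delta^2/B_x(\Gstar)\gg\delta$, although the true contribution is at most $f_{\Gstar}(x)$.

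The missing idea is to interpolate with the trivial bound, which makes the remainder linear in $\varepsilon_{x,\delta}$. The paper does this in Lemma~\ref{lem:sqrtA-control-route1-hhat}. It bounds $(\sqrt{A_1}-\sqrt{A_2})^2$ both by $8C_A\mathcal H+8\varepsilon_{x,\delta}^2/S$ and by $S=A_1+A_2$, and uses $\min\{S,8\varepsilon_{x,\delta}^2/S\}\le2\sqrt2\,\varepsilon_{x,\delta}$. In your decomposition the fix is analogous: in the balanced regime, $(A_1-A_2)^2/(x!B_1)\le(B_1+B_2)^2/(x!B_1)\lesssim B_1/x!$, and $\min\{B_1,\varepsilon_{x,\delta}^2/B_1\}\le\varepsilon_{x,\delta}$.

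Second, your bound $|\mathcal W_+|\,H^2(f_{\Gstar},f_{\hat G_{d,m}})$ is fine in (a), where $|\mathcal W_+|\le\hat h$. It fails the statement of (b), where $|\mathcal W_+|=K$ is free and $C_1$ must not depend on it; you would get $nC_1KC_d\log m/m$, an extra $\log n$ for $K\asymp\log n$. Keep the factor $1/x!$ that your own computation produces and use $\sum_x 1/x!\le e$, as the paper does. The remainder then contributes $\sum_x\varepsilon_{x,\delta}/x!\le eC_\varepsilon K\delta$, matching $C_0K\delta$.

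Finally, your tail split into $\{\theta>K/2\}$ and a Chernoff bound for $\mathrm{Poi}(K/2)$ gives exponent $\min\{1/(2s),(2\log2-1)/2\}$. This is smaller than $c_s=\log(1+1/(2s))$ for small $s$. Instead apply Markov directly to $z^X$ with $z=1+1/(2s)$ and $\EE[z^X]=\EE[e^{\theta/(2s)}]\le3$, as in Lemma~\ref{lem:tail-X-exp-nocoarsen}.
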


The proof is given in Section \ref{proof_teo_above_regular}. Before turning to the coarsened procedure, we discuss the implications of Theorem \ref{thm:main-above-average}. In the compact-support setting of Theorem~\ref{thm:main-above-average}, choosing $\delta=1/n$ gives
\begin{align*}
&\EE\!\left[\left(\frac{\hat{S}_{n,>}(\hat{G}_{d,m})-\hat{S}_{n,>}(\Gstar)}{\sqrt{n}}\right)^2\right]\\
&\quad\le C_0+C_1\,C_d(\hat h,c_1,c_2)\,\frac{n}{m}\min\!\left\{\max\{1,\hat h\}\frac{\log m}{\log\log m},\hat h+\log m\right\}.
\end{align*}
In particular, if $m\asymp n$, then
\begin{displaymath}
\EE\!\left[\left(\frac{\hat{S}_{n,>}(\hat{G}_{d,m})-\hat{S}_{n,>}(\Gstar)}{\sqrt{n}}\right)^2\right]\le C_0+C_1\,C_d(\hat h,c_1,c_2)\,\min\!\left\{\max\{1,\hat h\}\frac{\log n}{\log\log n},\hat h+\log n\right\}.
\end{displaymath}
Hence, for fixed $\hat h$,
\begin{displaymath}
\EE\!\left[\left(\frac{\hat{S}_{n,>}(\hat{G}_{d,m})-\hat{S}_{n,>}(\Gstar)}{\sqrt{n}}\right)^2\right]=O\!\left(1+\frac{\log n}{\log\log n}\right).
\end{displaymath}
Moreover, a uniformly bounded upper bound follows by taking $n\delta=O(1)$ and $\frac{nR_{\hat h}(m)}{m}=O(1)$, where we defined
\begin{displaymath}
R_{\hat h}(m):=\min\!\left\{\max\{1,\hat h\}\frac{\log m}{\log\log m},\hat h+\log m\right\}.
\end{displaymath}
For example, by taking $\delta=1/n$, it suffices that $m\gtrsim nR_{\hat h}(m)$; in particular, for any fixed $\hat h$, this is implied by
\begin{displaymath}
m\gtrsim n\frac{\log m}{\log\log m}.
\end{displaymath}
Finally, consider the subexponential case in Theorem~\ref{thm:main-above-average}. Let
\begin{displaymath}
K_n:=\left\lceil\frac{\log n}{c_s}\right\rceil.
\end{displaymath}
Taking $\delta=1/n$ gives
\begin{displaymath}
\EE\!\left[\left(\frac{\hat S_{n,>}(\hat G_{d,m})-\hat S_{n,>}(\Gstar)}{\sqrt n}\right)^2\right]\le C_0K_n+C_1C_d(s,c_1,c_2)\frac{n\log m}{m}+C_s.
\end{displaymath}
Hence, if $m\asymp n$, the bound is $O(\log n)$. On the other hand, taking $\delta=1/(nK_n)$ gives a uniformly bounded first term, and therefore $m\gtrsim n\log n$ yields an $O(1)$ bound. 

We next turn to the $T$-coarsened minimum-distance procedure. The following theorem provides finite-sample upper bounds for the EB estimate $\hat S_{n,>}(\widetilde G_{d,m}^{(T)})$, relying on the corresponding finite-sample guarantees in Lemma \ref{lem:EB_density-H2_coars}.

\begin{theorem}\label{thm:main-above-average-coarsened}
Let $\Gstar$ denote the data-generating mixing distribution. Consider the $T$-coarsened minimum-$d$-distance estimate $\widetilde G_{d,m}^{(T)}$ of $\Gstar$ obtained from an independent training sample of size $m$.
\begin{enumerate}[label=(\alph*)]
\item Let $\hat h>0$, \(\hat K=\lfloor\hat h\rfloor\) and $T\geq \hat  K$. Let $\mathcal C\subseteq\mathcal P([0,\hat h])$ be a convex and weakly closed class such that \prettyref{asmp:aa-local-mass} and \prettyref{asmp:aa-self-normal-coarse} hold on the window $\{0,\ldots,\hat K\}$. Assume that $\Gstar\in \mathcal P([0,h])\cap\mathcal C $ for some $h\leq \hat h$. Then there exist constants \(C_0,C_1\), depending only on $(\hat h,T,L)$ and the constants in \prettyref{asmp:aa-local-mass} and \prettyref{asmp:aa-self-normal-coarse}, such that, for every \(\delta\in(0,1]\) and every $m\ge3$,
\begin{equation}\label{eq:main-coarse-compact-result}
\EE\!\left[\left(\frac{\hat{S}_{n,>}(\tilde {G}^{(T)}_{d,m})-\hat{S}_{n,>}(\Gstar)}{\sqrt{n}}\right)^2\right]\le nC_0\delta+nC_1 C_d^{\mathrm{coarse}}(c_1,c_2)\frac{T+2}{m},
\end{equation}
where $C_d^{\mathrm{coarse}}(c_1,c_2)=\dfrac{4c_2}{c_1}$.
\item Let $K\geq 0$ and $T\geq K$. Let $\mathcal C\subseteq\subexpo(s)$ be a convex and weakly closed class such that \prettyref{asmp:aa-local-mass} and
\prettyref{asmp:aa-self-normal-coarse} hold on the window $\{0,\ldots,K\}$. Assume that $\Gstar\in\mathcal C $. Then there exist constants \(C_0,C_1\), depending only on $(s,L)$ and the constants in \prettyref{asmp:aa-self-normal-coarse}, such that, for every \(\delta\in(0,1]\) and every $m\geq \max\{2,\exp\left(\frac 1 2 \log(1+\frac{1}{2s})\right)\}$,
\begin{align}\label{eq:main-coarse-sube-result}
&\EE\!\left[\left(\frac{\hat{S}_{n,>}(\tilde {G}^{(T)}_{d,m})-\hat{S}_{n,>}(\Gstar)}{\sqrt{n}}\right)^2\right]\\
&\notag\quad\le nKC_0\delta+nC_s e^{-c_sK}+n C_1C_d^{\mathrm{coarse}}(c_1,c_2)\min\{\frac{T+2}{m},{\log m\over m}\},
\end{align}
where one may take $c_s=\log\left(1+\frac1{2s}\right)$ and $C_s=3$. 
\end{enumerate}
\end{theorem}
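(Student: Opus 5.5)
The proof mirrors that of Theorem \ref{thm:main-above-average}, with Lemma \ref{lem:EB_density-H2_coars} in place of Lemma \ref{lem:EB_density-H2} and Assumption \ref{asmp:aa-self-normal-coarse} in place of Assumption \ref{asmp:aa-self-normal-regular}. Write $\tilde G=\widetilde G^{(T)}_{d,m}$. Since the training sample is independent of $X_{1:n}$, we condition on $\tilde G$. Then
\[
\hat S_{n,>}(\tilde G)-\hat S_{n,>}(\Gstar)=\sum_{i=1}^n\bigl(m_{\tilde G}(X_i)-m_{\Gstar}(X_i)\bigr),
\]
and by Cauchy--Schwarz over the $n$ i.i.d.\ summands the normalized squared error is at most $n\,\EE\sum_x f_{\Gstar}(x)\bigl(m_{\tilde G}(x)-m_{\Gstar}(x)\bigr)^2$. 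This produces the leading factor $n$ in each term of the bound.

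We split the sum over $x$ into the window $\mathcal W$ and its complement. In the compact case, for $x>\hat K$ we have $x>\hat h$, so $\theta<x$ holds almost surely under every $G\in\mathcal C$; hence $m_G(x)=1$ and the contribution vanishes. In the subexponential case, the tail contributes at most $\P_{\Gstar}(X>K)$ because $|m|\le1$. We then bound this probability by $3e^{-c_sK}$ through a Chernoff bound on the Poisson mixture: using $\EE e^{\lambda X}=\int e^{\theta(e^\lambda-1)}dG$ and the subexponential tail with $e^\lambda=1+1/(2s)$, one gets $\int e^{\theta/(2s)}dG\le 3$. On $\mathcal W$, we replace $m_G$ by its smoothed version $m_{G,\delta}$. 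By \eqref{eq:phi-pointwise-error-main},
\[
|m_G(x)-m_{G,\delta}(x)|\le G(|\theta-x|<\delta\mid x),
\]
so the term $f_{\Gstar}(x)\,(m_G-m_{G,\delta})^2$ is at most $f_{\Gstar}(x)\,G(|\theta-x|<\delta\mid x)$.

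For $G=\Gstar$ this is $\int_{|\theta-x|<\delta}\mathrm{Poi}(x\mid\theta)\,d\Gstar\le L\delta$ by Assumption \ref{asmp:aa-local-mass}. For $G=\tilde G$ we need the factor $f_{\Gstar}(x)/f_{\tilde G}(x)$, which we absorb by writing $f_{\Gstar}\le 2f_{\tilde G}+2(\sqrt{f_{\Gstar}}-\sqrt{f_{\tilde G}})^2$ and charging the second piece to the Hellinger error. Summing over $x\in\mathcal W$ gives the terms $nC_0\delta$ and $nKC_0\delta$.

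The remaining term is the smoothed difference $f_{\Gstar}(x)(m_{\tilde G,\delta}-m_{\Gstar,\delta})^2$, which is the main obstacle. We write $m_{G,\delta}=A_{x,\delta}(G)/B_x(G)$ and use the standard ratio decomposition. The first piece is
\[
\frac{(A_{x,\delta}(\tilde G)-A_{x,\delta}(\Gstar))^2}{B_x},
\]
together with a cross term in $(B_x(\tilde G)-B_x(\Gstar))^2/B_x$ multiplied by $m^2\le m$. For $x\le T$, the differences $B_x(\tilde G)-B_x(\Gstar)=x!\,(\tilde f(x)-f(x))$ involve only coarsened coordinates. Writing $\tilde f-f=(\sqrt{\tilde f}-\sqrt f)(\sqrt{\tilde f}+\sqrt f)$, these pieces are bounded by constants times $H^2(\tilde f^{(T)}_{\Gstar},\tilde f^{(T)}_{\tilde G})$ after normalizing by $f_{\Gstar}(x)+f_{\tilde G}(x)$.

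For the $A$-difference we use \eqref{eq:A-rep-coarse-aa}:
\[
|A_{x,\delta}(\tilde G)-A_{x,\delta}(\Gstar)|\le\Bigl|\sum_{t=0}^{T+1}\alpha^{(T)}_{x,\delta}(t)\bigl(\tilde f^{(T)}_{\tilde G}(t)-\tilde f^{(T)}_{\Gstar}(t)\bigr)\Bigr|+2\varepsilon^{(T)}_{x,\delta}.
\]
The remainder is at most $C_\varepsilon|\mathcal W_+|\delta$ and merges into the $\delta$ terms. For the linear part we apply Cauchy--Schwarz with weights $\sqrt{\tilde f_1}+\sqrt{\tilde f_2}$, which bounds its square by $\sum_t\alpha^2(\tilde f_1+\tilde f_2)\cdot 2H^2$. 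Assumption \eqref{eq:A-selfnorm-coarse-aa} turns the first factor into $C_A(A(\tilde G)+A(\Gstar))$, and since $A\le B$ these self-normalize against the denominators $B_x(\tilde G)+B_x(\Gstar)$. The delicate point is doing the ratio algebra symmetrically, so that one divides by $B_x(\tilde G)+B_x(\Gstar)$ rather than by $B_x(\Gstar)$ alone; this avoids lower bounds on $f$ that we do not have.

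Summing over the finite window then gives a constant times $\EE H^2(\tilde f^{(T)}_{\Gstar},\tilde f^{(T)}_{\tilde G})$, with the condition $T\ge\hat K$ (resp.\ $T\ge K$) ensuring every $x\in\mathcal W$ is an uncoarsened coordinate. Lemma \ref{lem:EB_density-H2_coars} bounds this expectation by $\frac{4c_2}{c_1}\frac{T+2}{m}$ in the compact case, and by the minimum with the $\log m/m$ bound in the subexponential case. Collecting the three contributions yields \eqref{eq:main-coarse-compact-result} and \eqref{eq:main-coarse-sube-result}.
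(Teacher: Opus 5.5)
Your argument follows the structure of the paper's proof: the reduction to $n\sum_x f_{\Gstar}(x)\bigl(m_{\Gstar}(x)-m_{\tilde G}(x)\bigr)^2$, the tail/window split, the three-term smoothing decomposition, the bias terms via Assumption~\ref{asmp:aa-local-mass}, and representation plus self-normalization for the smoothed term. One step fails as written: the handling of the remainder $\varepsilon^{(T)}_{x,\delta}$.

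Write $G_1=\Gstar$, $G_2=\tilde G$, $A_i=A_{x,\delta}(G_i)$, $B_i=B_x(G_i)$, $L_i=\sum_t\alpha^{(T)}_{x,\delta}(t)\tilde f^{(T)}_{G_i}(t)$ and $S=A_1+A_2$. In your scheme the $A$-piece is $(A_1-A_2)^2/(B_1+B_2)$. The bound $|A_1-A_2|\le|L_1-L_2|+2\varepsilon^{(T)}_{x,\delta}$ then makes the remainder enter as $8(\varepsilon^{(T)}_{x,\delta})^2/(B_1+B_2)$, that is, quadratically and divided by $B_1+B_2$. You deliberately use no lower bound on $B_x$, so this term is not $O(\delta)$. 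The assertion that the remainder ``merges into the $\delta$ terms'' is therefore unjustified.

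The missing idea is to pair your estimate with the trivial one. Because $A_i\le B_i$, you have $(A_1-A_2)^2/(B_1+B_2)\le S^2/(B_1+B_2)\le S$. Your Cauchy--Schwarz bound, combined with $S/(B_1+B_2)\le 1$ and $B_1+B_2\ge S$, gives $4C_A\mathcal H^{(T)}(\Gstar,\tilde G)+8(\varepsilon^{(T)}_{x,\delta})^2/S$. Hence
\[
\frac{(A_1-A_2)^2}{B_1+B_2}\le 4C_A\,\mathcal H^{(T)}(\Gstar,\tilde G)+\min\Bigl\{S,\frac{8(\varepsilon^{(T)}_{x,\delta})^2}{S}\Bigr\}\le 4C_A\,\mathcal H^{(T)}(\Gstar,\tilde G)+2\sqrt2\,\varepsilon^{(T)}_{x,\delta}.
\]
This is precisely the device of Lemma~\ref{lem:sqrtA-control-coarse-compact}, where it appears in the form $(\sqrt{A_1}-\sqrt{A_2})^2\le\min\{S,\cdot\}$. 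Summing against $1/x!$ (recall $f_{\Gstar}(x)=B_1/x!$) and using $\varepsilon^{(T)}_{x,\delta}\le C_\varepsilon|\mathcal W_+|\delta$ then yields the $nC_0\delta$ (resp.\ $nKC_0\delta$) term, with no $K$-dependence in $C_1$.

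The symmetric ratio step, which you only assert, does go through. With $m_i=A_i/B_i$, split on whether $B_1\le B_2$ and use $A_i\le B_i$ to get $B_1(m_1-m_2)^2\le 4\{(A_1-A_2)^2+(B_1-B_2)^2\}/(B_1+B_2)$. The paper reaches the analogous bound in square-root form, through the weight $w_x$, Lemma~\ref{lem:quad-convert} and Lemma~\ref{lem:ratio-stab}.
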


The proof is given in Section \ref{proof:main-above-average-coarsened}. We next discuss the implications of Theorem \ref{thm:main-above-average-coarsened}. For the compact-support case, taking \(T=\hat K\) and \(\delta=1/n\) gives
\begin{displaymath}
\EE\!\left[\left(\frac{\hat{S}_{n,>}(\tilde {G}^{(T)}_{d,m})-\hat{S}_{n,>}(\Gstar)}{\sqrt{n}}\right)^2\right]\le C_0+C_1 C_d^{\mathrm{coarse}}(c_1,c_2)\frac{n(\hat K+2)}{m}.
\end{displaymath}
Thus, for fixed \(\hat h\), the bound is \(O(1)\) when \(m\asymp n\). Finally, consider the subexponential case in Theorem~\ref{thm:main-above-average-coarsened}. Taking
$K=K_n=\left\lceil \frac{\log n}{c_s}\right\rceil$, $T=K_n$, and $\delta=1/(nK_n)$ gives
\begin{displaymath}
\EE\!\left[\left(\frac{\hat{S}_{n,>}(\tilde {G}^{(T)}_{d,m})-\hat{S}_{n,>}(\Gstar)}{\sqrt{n}}\right)^2\right]\le C_0+C_1C_d^{\mathrm{coarse}}(c_1,c_2)\frac{n\log n}{m}+C_s.
\end{displaymath}
If the localized constants can be chosen uniformly in \(K\), then \(m\asymp n\) gives an \(O(\log n)\) bound, whereas \(m\gtrsim n\log n\) gives an \(O(1)\) bound.

These upper bounds should be contrasted with the minimax lower bound in Theorem \ref{above_average_lb}. Without the additional structural conditions used in Theorems \ref{thm:main-above-average} and \ref{thm:main-above-average-coarsened}, the risk is bounded from below by a constant multiple of $\frac{n}{\sqrt{m+n}}$. Thus, when $m\asymp n$, the minimax risk diverges at least at order $\sqrt n$, showing that uniformly bounded risk at the parametric scale is impossible over the general class considered in Theorem \ref{above_average_lb}. More generally, this lower bound becomes of constant order only when the training sample is of order at least $n^2$. Hence, without additional structure, attaining the parametric scale may require a training sample substantially larger than the sample used to define the target sum, further illustrating the intrinsic difficulty of estimating the above-average sum.

Under the additional structural conditions imposed in Theorems \ref{thm:main-above-average} and \ref{thm:main-above-average-coarsened}, substantially sharper guarantees become possible. In particular, in the compact-support setting, the coarsened procedure achieves an $O(1)$ risk for fixed $\hat h$ already when $m\asymp n$. Thus, these structural conditions fundamentally change the statistical difficulty of the problem, allowing estimation at the parametric scale with a training sample of the same order as the sample used to define the target sum.

\subsection{Discrete mixing distribution}

When the mixing distribution is supported on a finite set of known locations, the structural conditions required above can be verified directly, leading to the following finite-sample guarantees.

\begin{corollary}[Regular and coarsened minimum-distance estimation on \(\mathcal P(\Theta_J)\)]\label{corr:above-average-discrete-feasible}
Set $K_J:=\lfloor \theta_J\rfloor$ and $T_J:=\max\{K_J,J-1\}$. Let $\mathcal C\subseteq\mathcal P(\Theta_J)$ be a convex and weakly closed class such that $\min_{0\le x\le K_J} f_G(x)\ge b_J$ for every $G\in\mathcal C$, where $b_J>0$. Assume that $d$ satisfies Assumptions \ref{asmp:EB_t-ell}--\ref{asmp:dist_sandwich}, and let $\Gstar\in \mathcal C$. Consider the regular minimum-$d$-distance estimate $\hat G_{d,m}$ of $\Gstar$ and the $T$-coarsened minimum-$d$-distance estimate $\widetilde G_{d,m}^{(T)}$ of $\Gstar$ obtained from a training sample of size $m$.
\begin{enumerate}[label=(\alph*)]
\item There exists a constant $C_{\Theta_J}<\infty$, depending only on $(\Theta_J,b_J,c_1,c_2)$, such that for every $m\geq3$,
\begin{equation}\label{eq:above-average-discrete-regular}
\EE\!\left[\left(\frac{\hat S_{n,>}(\hat G_{d,m})-\hat S_{n,>}(\Gstar)}{\sqrt n}\right)^2\right]\le C_{\Theta_J}\,\frac{n\log m}{m}.
\end{equation}
\item Let \(T\ge T_J\). Then there exists a constant \(C_{{\Theta_J},T}<\infty\), depending only on
\(({\Theta_J},T,b_J,c_1,c_2)\), such that for every $m\geq 3$,
\begin{equation}\label{eq:above-average-discrete-coarsened}
\EE\!\left[\left(\frac{\hat S_{n,>}(\widetilde G_{d,m}^{(T)})-\hat S_{n,>}(\Gstar)}
{\sqrt n}\right)^2\right]\le C_{{\Theta_J},T}\,\frac{n(T+2)}{m}.
\end{equation}
\end{enumerate}
\end{corollary}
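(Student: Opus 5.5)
The natural route is to derive Corollary~\ref{corr:above-average-discrete-feasible} from Theorems~\ref{thm:main-above-average}(a) and \ref{thm:main-above-average-coarsened}(a), with $\hat h=\theta_J$ and window $\mathcal W=\{0,\ldots,K_J\}$. Two things are needed: verify Assumptions~\ref{asmp:aa-local-mass}, \ref{asmp:aa-self-normal-regular} and \ref{asmp:aa-self-normal-coarse} for $\mathcal C\subseteq\mathcal P(\Theta_J)$, and choose $\delta$ so that the $nC_0\delta$ term becomes negligible. Rather than fighting the additive $nC_0\delta$ term, I would try to make the approximation error exactly zero for small $\delta$, so that $C_0$ can be taken to vanish.

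\emph{Exact representation via small $\delta$.} Let $\delta_J:=\min\{|\theta_j-x|:\theta_j\neq x,\ x\in\mathcal W,\ j\le J\}\wedge 1>0$. For $\delta<\delta_J$ and $\theta\in\Theta_J$, we have $\phi_\delta(x-\theta)=I(\theta<x)$ unless $\theta=x$. At $\theta=x$ we get $\phi_\delta(0)=1/2$, whereas $I(\theta<x)=0$.

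This mismatch at atoms placed exactly on integers is the main obstacle: Assumption~\ref{asmp:aa-local-mass} fails there as $\delta\to0$. I would circumvent it by not passing through the smoothing theorems at all. Instead I would redo their short argument directly for the unsmoothed $A_x(G)=\sum_j w_j e^{-\theta_j}\theta_j^xI(\theta_j<x)$, which is linear in the weights $w$.

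\emph{Linear inversion.} For finite $\Theta_J$, the vectors $(f_G(0),\dots,f_G(J-1))$ determine $w$ through the invertible Vandermonde-type matrix $V_{tj}=e^{-\theta_j}\theta_j^t/t!$ (the $\theta_j$ are distinct; if $\theta_1=0$ the column is $e_0$, which is still fine). Hence $A_x(G)=\sum_{t<J}\alpha_x(t)f_G(t)$ exactly, with coefficients depending only on $\Theta_J$. Since $T\ge T_J\ge J-1$, these coordinates are unaffected by coarsening, so the same identity holds for $\tilde f^{(T)}_G$.

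\emph{Bounding the error.} For $x\in\mathcal W$,
\[
m_{\hat G}(x)-m_{\Gstar}(x)=\frac{A_x(\hat G)-A_x(\Gstar)}{B_x(\hat G)}+m_{\Gstar}(x)\frac{B_x(\Gstar)-B_x(\hat G)}{B_x(\hat G)} .
\]
Both numerators are linear combinations of $f_{\hat G}(t)-f_{\Gstar}(t)$ with $t<J$ or $t=x$. The denominators satisfy $B_x\ge x!\,b_J$ by hypothesis. Thus $|m_{\hat G}(x)-m_{\Gstar}(x)|\le C\sum_{t\le T_J}|f_{\hat G}(t)-f_{\Gstar}(t)|\le C'H(f_{\hat G},f_{\Gstar})$, using $|p-q|\le 2H$ coordinatewise for pmfs. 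For $x>K_J\ge\theta_J$, every $\theta\in\Theta_J$ has $\theta<x$ except possibly $\theta=x$, which is impossible since $x>\theta_J$. So $m_G(x)=1$ for all $G\in\mathcal C$, and the difference vanishes.

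\emph{Assembling the bound.} Writing $\hat S_{n,>}(\hat G)-\hat S_{n,>}(\Gstar)=\sum_{x\le K_J}N_x(m_{\hat G}(x)-m_{\Gstar}(x))$ with $N_x$ the counts in the main sample, independence and $\E N_x^2\le n^2$ give
\[
\E\bigl[(\hat S_{n,>}(\hat G)-\hat S_{n,>}(\Gstar))^2/n\bigr]\le C_{\Theta_J}\, n\,\E H^2 .
\]
Corollary~\ref{corr:hell-risk-discrete-feasible} with Lemma~\ref{lem:EB_density-H2}(a) gives $\E H^2\lesssim \log m/m$, which yields (a). For (b), use the coarsened Hellinger distance instead. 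This is legitimate because all coordinates $t\le T_J\le T$ are preserved by coarsening, and Lemma~\ref{lem:EB_density-H2_coars}(a) gives the bound $(4c_2/c_1)(T+2)/m$.
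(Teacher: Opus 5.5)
Your final argument is correct and is essentially the paper's proof. You bypass the smoothing theorems, write the unsmoothed $A_x(G)$ exactly as a linear combination of $f_G(0),\dots,f_G(J-1)$ by Vandermonde inversion, control the ratio difference through $B_x\ge x!\,b_J$ and coordinatewise Hellinger bounds on $t\le T_J\le T$, and conclude with Lemma~\ref{lem:EB_density-H2}(a), respectively Lemma~\ref{lem:EB_density-H2_coars}(a); your count-based assembly is just a cruder form of the paper's conditional i.i.d.\ reduction, with constants still depending only on $\Theta_J$.

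One harmless slip: $K_J=\lfloor\theta_J\rfloor\le\theta_J$, not $\ge$. The vanishing of $m_{\hat G}(x)-m_{\Gstar}(x)$ for $x>K_J$ still holds, because any integer $x\ge K_J+1$ exceeds $\theta_J$.
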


The proof is given in Section \ref{subsec:proof-above-average-discrete}. For fixed $\Theta_J$ and $T$, Corollary \ref{corr:above-average-discrete-feasible} shows that the coarsened minimum-distance procedure achieves the parametric rate $n/m$, while the regular minimum-distance procedure achieves the same rate up to a logarithmic factor. In particular, when $m\asymp n$, the coarsened procedure has uniformly bounded risk at the $\sqrt n$ scale. Thus, finite support provides sufficient structure to recover parametric-rate estimation of the above-average sum, in sharp contrast with the general setting of Theorem \ref{above_average_lb}.


\section{Numerical experiments}\label{sec5}
We illustrate the proposed minimum-distance procedures through numerical experiments, comparing their empirical performance with that of benchmark
parametric Bayes, parametric maximum-likelihood, and, whenever available, nonparametric ``$u,v$'' estimates \citep[Section 2]{Zha(05)}. 

\subsection{Preliminaries}

To mirror the theoretical setting, we distinguish between an independent training sample, used to estimate the mixing distribution $\Gstar$, and a main sample, on which the target sums of random variables are defined. Let $X'_1,\ldots,X'_m$ denote the training sample, generated according to
\begin{align*}
X'_j\mid\theta'_j &\quad\simind\quad\operatorname{Poisson}(\cdot\mid\theta'_j),\qquad j=1,\ldots,m,\\
\theta'_j &\quad\simiid\quad \Gstar.
\end{align*}
Independently of the training sample, let $(X_1,Y_1,\theta_1),\ldots,(X_n,Y_n,\theta_n)$ denote the main sample, with
\begin{align}\label{model_distr_poi}
X_i\mid\theta_i &\quad\simind\quad\operatorname{Poisson}(\cdot\mid\theta_i),\qquad i=1,\ldots,n,\\
\notag Y_i\mid\theta_i &\quad\simind\quad\operatorname{Poisson}(\cdot\mid\theta_i),\\
\notag\theta_i &\quad\simiid\quad \Gstar,
\end{align}
where $X_i$ and $Y_i$ are conditionally independent given $\theta_i$. Thus, both samples are generated from the same $\Gstar$, but only $X'_{1:m}=(X'_1,\ldots,X'_m)$ is used to estimate $\Gstar$. In practice, sample splitting is not necessary for implementing the proposed procedures: one may use the full sample both to estimate the mixing distribution and to construct the corresponding empirical Bayes estimate. As a robustness check, we also considered this implementation and found no appreciable differences in empirical performance. We therefore report only the sample-splitting results.

Given the observed main sample $X_{1:n}=(X_{1},\ldots,X_{n})$, we focus on the total-intensity sum $S_{n,\kappa}$, with $\kappa\in\integers_+$, and the above-average sum $S_{n,>}$. The regular minimum-$d$-distance estimate $\hat G_{d,m}$ of the mixing distribution $\Gstar$ is computed from the training sample $X'_{1:m}$. EB estimates of the target sums are then obtained by plugging the estimate $\hat G_{d,m}$ into the Bayes rules evaluated on the main sample $X_{1:n}$. In particular, under \eqref{model_distr_poi}, EB estimates of $S_{n,\kappa}$ and $S_{n,>}$ are
\begin{equation}\label{eq:rmd_total}
\hat{S}_{n,\kappa}(\hat G_{d,m})=\sum_{i=1}^n\E_{\hat G_{d,m}}[\theta_i\mid X_i]I(X_i\leq\kappa)=
\sum_{i=1}^{n}I(X_i\leq\kappa)\int_{\Theta}\theta\,\hat G_{d,m}(d\theta\mid X_i)
\end{equation}
and
\begin{equation}\label{eq:rmd_above}
\hat{S}_{n,>}(\hat G_{d,m})=\sum_{i=1}^n\E_{\hat G_{d,m}}\!\left[I(X_i>\theta_i)\mid X_i\right]=\sum_{i=1}^n\int_{\Theta}I(X_i>\theta)\,\hat G_{d,m}(d\theta\mid X_i),
\end{equation}
respectively. The $T$-coarsened estimates of the total-intensity and above-average sums are defined analogously. Specifically, the $T$-coarsened minimum-$d$-distance estimate $\widetilde G_{d,m}^{(T)}$ of the mixing distribution $\Gstar$ is computed from the training sample $X'_{1:m}$ and is then substituted for $\hat G_{d,m}$ in \eqref{eq:rmd_total} and \eqref{eq:rmd_above}.

\subsection{Benchmark procedures}

\subsubsection{Parametric Bayes and maximum-likelihood estimates}

To construct parametric competitors of our minimum-$d$-distance estimates, we follow \citet[Section 2.3]{Zha(05)} and assume that the mixing distribution $\Gstar$ in \eqref{model_distr_poi} is an Exponential distribution with rate $\tau>0$. The next proposition recalls the corresponding parametric Bayes and maximum-likelihood estimates of $S_{n,\kappa}$ and $S_{n,>}$, as well as those of their predictive counterparts.

\begin{proposition}\label{prop:poisson}
Let $m,n\geq1$. Let $X'_1,\ldots,X'_m$ be an independent training sample, with $X'_j\mid\theta'_j\simind \operatorname{Poisson}(\cdot\mid\theta'_j)$ and $\theta'_j\simiid \Gstar_\tau$ for $j=1,\ldots,m$, and let $(X_{1},Y_{1},\theta_{1}),\ldots,(X_{n},Y_{n},\theta_{n})$ be the main sample distributed as in \eqref{model_distr_poi}, independently of $X'_{1:m}=(X'_1,\ldots,X'_m)$. Assume that $\Gstar_\tau$ is an Exponential distribution with rate $\tau$. Then the oracle Bayes estimates of $S_{n,\kappa}$ and
$S^{\text{\tiny{[Pred]}}}_{n,\kappa}$ are
\begin{equation}\label{eq:par_tot}
\hat{S}_{n,\kappa}(\Gstar_\tau)= \hat{S}^{\text{\tiny{[Pred]}}}_{n,\kappa}(\Gstar_\tau)=\sum_{i=1}^{n}\frac{1+X_i}{1+\tau}I(X_i\leq\kappa),
\end{equation}
whereas the oracle Bayes estimates of $S_{n,>}$ and $S^{\text{\tiny{[Pred]}}}_{n,>}$ are
\begin{equation}\label{eq:par_above}
\hat{S}_{n,>}(\Gstar_\tau)=n-\sum_{i=1}^{n}\frac{\Gamma(1+X_i,(1+\tau)X_i)}{\Gamma(1+X_i)},
\end{equation}
and
\begin{equation}\label{eq:par_above_pred}
\hat{S}^{\text{\tiny{[Pred]}}}_{n,>}(\Gstar_\tau)=\sum_{i=1}^{n}\left[1-(1+\tau)^{-X_i}\right],
\end{equation}
respectively, where $\Gamma(x)=\int_{\mathbb R_+}y^{x-1}e^{-y}\,dy$ and $\Gamma(x,z)=\int_z^\infty y^{x-1}e^{-y}\,dy$. The rate parameter $\tau$ is estimated from the training sample $X'_{1:m}$. In particular, the maximum-likelihood estimate of $\tau$ is
\begin{displaymath}
\hat{\tau}^{\text{\tiny{[ML]}}}_{m}=\frac{m}{\sum_{j=1}^{m}X'_j}.
\end{displaymath}
Accordingly, the resulting parametric maximum-likelihood estimates of $S_{n,\kappa}$ and $S_{n,>}$, together with those of their predictive counterparts, are obtained by replacing $\tau$ in \eqref{eq:par_tot}--\eqref{eq:par_above_pred} with its estimate $\hat{\tau}^{\text{\tiny{[ML]}}}_{m}$. If, instead, a Beta prior distribution with hyperparameters $(\alpha,\beta)$ is placed on $\tau/(1+\tau)$, then the corresponding Bayes estimate of $\tau$ based on the training sample $X'_{1:m}$ is
\begin{displaymath}
\hat{\tau}^{\text{\tiny{[B]}}}_{m}=\frac{m+\alpha}{\beta+\sum_{j=1}^{m}X'_j}.
\end{displaymath}
Accordingly, the resulting parametric Bayes estimates of $S_{n,\kappa}$ and $S_{n,>}$, together with those of their predictive counterparts, are obtained by replacing $\tau$ in \eqref{eq:par_tot}--\eqref{eq:par_above_pred} with its estimate $\hat{\tau}^{\text{\tiny{[B]}}}_{m}$.
\end{proposition}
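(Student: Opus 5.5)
The plan is to reduce every claim of Proposition~\ref{prop:poisson} to the conjugacy of the Exponential prior with the Poisson likelihood. If $\theta\sim\Gstar_\tau$, with density $\tau e^{-\tau\theta}$, and $X\mid\theta\sim\operatorname{Poisson}(\theta)$, then the posterior density is proportional to $e^{-\theta}\theta^{x}e^{-\tau\theta}$. Hence
\[
\theta\mid X=x\sim\operatorname{Gamma}(1+x,\,1+\tau)
\]
in the shape–rate parametrization. The marginal is geometric: $f_{\Gstar_\tau}(x)=\tau/(1+\tau)^{x+1}$, that is, $\operatorname{Geom}(p)$ on $\integers_+$ with $p=\tau/(1+\tau)$. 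Each oracle Bayes estimate is then obtained termwise from \eqref{eq:post_exp}, since $\E_{\Gstar}[u(X_i,\theta_i)\mid X_{1:n}]$ depends only on $X_i$.

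For the total-intensity sum, I will use the posterior mean $\E[\theta_i\mid X_i]=(1+X_i)/(1+\tau)$ and multiply it by $I(X_i\le\kappa)$. This gives \eqref{eq:par_tot}. The predictive counterpart replaces $\theta_i$ by $Y_i$ in the sum. By conditional independence of $X_i$ and $Y_i$ given $\theta_i$, we have $\E[Y_iI(X_i\le\kappa)\mid X_i]=I(X_i\le\kappa)\E[\theta_i\mid X_i]$, so the same expression results.

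For the above-average sum, I need $\P(\theta_i<X_i\mid X_i)$ under the $\operatorname{Gamma}(1+X_i,1+\tau)$ posterior. The substitution $y=(1+\tau)\theta$ turns this into a regularized lower incomplete gamma function, namely $1-\Gamma(1+X_i,(1+\tau)X_i)/\Gamma(1+X_i)$. Summing over $i$ gives \eqref{eq:par_above}. For the predictive version, defined with $Y_i$ in place of $\theta_i$, I will compute $\P(Y_i<X_i\mid X_i)$ by mixing the Poisson kernel over the Gamma posterior. Then $Y_i\mid X_i$ is negative binomial, and the claimed form $1-(1+\tau)^{-X_i}$ should come out by summing its mass over $\{0,\dots,X_i-1\}$.

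This last summation is the main obstacle. I need to check it against the precise definition of $S^{\text{\tiny{[Pred]}}}_{n,>}$ in the paper (for example, whether the comparison is strict), and possibly also against the convention of comparing with the conditional mean. I will first try to get the closed form directly by identifying the predictive tail as a geometric-type tail, telescoping where possible.

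For the maximum-likelihood estimate of $\tau$, the marginal likelihood of the training sample is
\[
\prod_{j=1}^m \frac{\tau}{(1+\tau)^{X'_j+1}}.
\]
Setting the derivative of its logarithm to zero gives $m/\tau=(m+\sum_j X'_j)/(1+\tau)$, so $\hat\tau=m/\sum_j X'_j$; I will also check that this critical point is a maximum.

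For the Bayes estimate, I will reparametrize by $p=\tau/(1+\tau)$. The likelihood becomes $p^m(1-p)^{\sum_j X'_j}$, so a $\mathrm{Beta}(\alpha,\beta)$ prior on $p$ yields the posterior $\mathrm{Beta}\bigl(\alpha+m,\,\beta+\sum_j X'_j\bigr)$. The stated $\hat\tau^{\text{\tiny{[B]}}}_m=(m+\alpha)/(\beta+\sum_j X'_j)$ is the posterior mean of $\tau=p/(1-p)$ under this Beta-prime posterior. With shifted parameters it can equally be matched to a plug-in of the posterior mode of $p$. I will verify which convention reproduces exactly the stated formula, and adjust the parameter bookkeeping accordingly.
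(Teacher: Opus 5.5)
Your conjugacy route is correct for \eqref{eq:par_tot} (including the predictive version), for \eqref{eq:par_above}, and for $\hat\tau^{\text{\tiny{[ML]}}}_m$. For the MLE this holds on the event $\sum_jX'_j>0$; if all $X'_j=0$, the likelihood $\tau^m(1+\tau)^{-m}$ has no finite maximizer. The paper states the proposition without proof.

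The step you flag as the main obstacle cannot be completed, because \eqref{eq:par_above_pred} is not the posterior predictive probability. Given $X_i=x$, your $\mathrm{Gamma}(1+x,1+\tau)$ posterior makes $Y_i$ negative binomial:
\[
\mathbb{P}(Y_i=y\mid X_i=x)=\binom{x+y}{y}\Big(\frac{1+\tau}{2+\tau}\Big)^{x+1}\Big(\frac{1}{2+\tau}\Big)^{y}.
\]
At $x=1$ this gives $\mathbb{P}(Y_i<X_i\mid X_i=1)=(1+\tau)^2/(2+\tau)^2$, whereas \eqref{eq:par_above_pred} gives $\tau/(1+\tau)$. These agree only when $\tau^2+\tau-1=0$; for example, $\tau=1$ gives $4/9$ versus $1/2$. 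The other conventions you mention do not rescue it. Using $Y_i\le X_i$ gives $20/27$ at $\tau=1$. Comparing $X_i$ with $\mathbb{E}[Y_i\mid\theta_i]=\theta_i$ just returns $S_{n,>}$ and \eqref{eq:par_above}.

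The closed form $1-(1+\tau)^{-x}$ is in fact $\mathbb{P}(Y<x)$ for $Y$ drawn from the geometric marginal $f_{\Gstar_\tau}$, since $\sum_{y\ge x}\tau(1+\tau)^{-y-1}=(1+\tau)^{-x}$. In other words, it averages the Poisson kernel against the prior $\Gstar_\tau$. The posterior $\Gstar_\tau(d\theta\mid X_i)$ is what the paper's own real-data formula for $\hat S^{\text{\tiny{[Pred]}}}_{n,>}(\widehat G)$ uses. No telescoping will produce the stated form. For $S^{\text{\tiny{[Pred]}}}_{n,>}=\sum_iI(Y_i<X_i)$, the oracle Bayes estimate is $\sum_i\mathbb{P}\bigl(\mathrm{Bin}(2X_i,\tfrac{1+\tau}{2+\tau})\ge X_i+1\bigr)$, so \eqref{eq:par_above_pred} as stated is false and should be reported as such rather than proved.

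Your identification of $\hat\tau^{\text{\tiny{[B]}}}_m$ is also incorrect. With $S=\sum_jX'_j$ and posterior $p\mid X'_{1:m}\sim\mathrm{Beta}(\alpha+m,\beta+S)$, the posterior mean of $\tau=p/(1-p)$ is $(\alpha+m)/(\beta+S-1)$, finite only when $\beta+S>1$. It is not $(\alpha+m)/(\beta+S)$. The stated formula equals $\hat p/(1-\hat p)$ with $\hat p=(\alpha+m)/(\alpha+\beta+m+S)$, the posterior mean of $p$; equivalently, it is the ratio $\mathbb{E}[p\mid X'_{1:m}]/\mathbb{E}[1-p\mid X'_{1:m}]$. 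You should adopt this plug-in of the posterior mean of $p$ as the definition of the ``Bayes estimate'' of $\tau$. It is neither the posterior mean of $\tau$ nor, without shifting the prior parameters, a posterior mode.
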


\subsubsection{Nonparametric ``$u,v$'' estimates}

Unlike the $g$-modeling procedures considered above, the ``$u,v$'' method does not estimate the mixing distribution $\Gstar$ and therefore does not require an independent training sample. It is applied directly to the main sample $X_{1:n}$. The ``$u,v$'' method of \citet{Rob(88)} provides a nonparametric competitor to our minimum-$d$-distance procedures; see also \citet[Section 2.3]{Zha(05)} and references therein. For a generic sum of random variables $S_n$, the ``$u,v$'' method estimates $S_n$ by
\begin{displaymath}
\hat S_n^{\text{\tiny[``u,v'']}}=\sum_{i=1}^n v(X_i),
\end{displaymath}
provided there exists a function $v$ satisfying the ``$u,v$'' equation
\begin{equation}\label{int_eq1}
\sum_{x\geq0}\bigl(v(x)-u(x,\theta)\bigr)\frac{e^{-\theta}\theta^x}{x!}=0,\qquad \forall\,\theta\in\Theta.
\end{equation}
For the total-intensity sums $S_{n,\kappa}$ and $S^{\text{\tiny[Pred]}}_{n,\kappa}$, equation \eqref{int_eq1} admits the unique solution $v(x)=xI(x-1\leq\kappa)$ \citep{Rob(88),Rob(00)}. Hence, the corresponding ``$u,v$'' estimates are
\begin{equation}\label{uv_poiss}
\hat S_{n,\kappa}^{\text{\tiny[``u,v'']}}=\hat S_{n,\kappa}^{\text{\tiny[Pred,``u,v'']}}=\sum_{i=1}^{n}X_i I(X_i-1\leq\kappa).
\end{equation}
For the above-average sum $S_{n,>}$ and $S^{\text{\tiny[Pred]}}_{n,>}$, instead, the ``$u,v$'' equation \eqref{int_eq1} does not admit an explicit solution $v$, so that a corresponding explicit ``$u,v$'' estimate is not available \citep[Section 2.3]{Zha(05)}.

\subsection{Synthetic-data analysis}

\subsubsection{Weibull mixing distribution}

For $i=1,\ldots,100$, let $\mathbf X_i=X_{1:100i}$ denote a dataset of total size $m+n=100i$ generated from the Poisson mixture model with mixing distribution $\Gstar$ given by a Weibull distribution with scale $5$ and shape $3$. The datasets $\mathbf X_i$ are nested, so that their size increases progressively from $100$ to $10,000$ by adding $100$ new observations at each stage. For each $i$, we split $\mathbf X_i$ into two equally sized subsamples, with $m=n=50i$. The first $m$ observations form the training sample and are used to estimate the mixing distribution $\Gstar$, whereas the remaining $n$ observations form the main sample on which the target sums are evaluated. For notational consistency with the previous sections, we denote the training sample by $X'_{1:m}$ and relabel the second half as the main sample $X_{1:n}$. For each training sample $X'_{1:m}$, we compute the regular minimum-$d$-distance estimate $\widehat G_{d,m}$ of $\Gstar$ for two choices of the statistical distance $d$: the Hellinger (H) distance and the Kullback--Leibler (KL) divergence. We also compute the corresponding $T$-coarsened minimum-$d$-distance estimate $\widetilde G_{d,m}^{(T)}$. The coarsening level $T$ is chosen according to the distribution of the observations. Letting $q_X(p)=\inf\{t\in\integers_+:\mathbb P(X\leq t)\geq p\}$ and taking the maximum total sample size to be $10,000$, we consider the coarsening levels
\begin{displaymath}
T\in\{2,q_X(0.95),q_X(1-1/10,000)\}=\{2,9,18\}.
\end{displaymath}
This choice for the construction of $T$ keeps the coarsening levels fixed across sample sizes while allowing for different degrees of information loss, ranging from substantial coarsening at $T=2$ to a truncation that retains almost the entire Poisson mixture distribution at $T=18$.

The minimum-$d$-distance estimates of the mixing distribution $\Gstar$, both regular $\widehat G_{d,m}$ and $T$-coarsened $\widetilde G_{d,m}^{(T)}$, are computed from the training samples using the vertex direction method (VDM) described in \citet[Section 5]{jana2025optimal}. The algorithm exploits the fact that the minimum-distance estimate is a discrete distribution and iteratively updates both its support points and their associated masses. Starting from an initial discrete approximation, candidate support points are generated from the first-order optimality conditions for the minimum-$d$-distance criterion. In the Poisson mixture model, these conditions reduce to a polynomial equation whose roots provide candidate locations for the support. The masses are then re-optimized over the enlarged support, nearby support points are merged, and atoms carrying negligible mass are removed. The procedure is iterated until numerical convergence; we refer to \citet[Section 5]{jana2025optimal} for implementation details. See \citet{Sim(76)} for a similar VDM-type algorithm  and \citet{Koe(14)} for discretization-based algorithms.

The minimum-$d$-distance estimates $\widehat G_{d,m}$ and $\widetilde G_{d,m}^{(T)}$ are then plugged into the corresponding Bayes rules and evaluated on the independent main samples. In particular, $\widehat G_{d,m}$ yields the regular minimum-$d$-distance estimates of $S_{n,\kappa}$ and $S_{n,>}$ given in \eqref{eq:rmd_total} and \eqref{eq:rmd_above}, respectively, whereas replacing $\widehat G_{d,m}$ with $\widetilde G_{d,m}^{(T)}$ gives the corresponding $T$-coarsened estimates. Throughout the experiment, we set $\kappa=2$. We compare these estimates with three benchmarks: (i) the oracle Bayes estimates $\hat S_{n,\kappa}(\Gstar)$ and $\hat S_{n,>}(\Gstar)$, computed using the true Weibull mixing distribution; (ii) the parametric Bayes and maximum-likelihood estimates of Proposition~\ref{prop:poisson}, with the unknown parameter estimated from the same training sample $X'_{1:m}$ used by the minimum-distance procedures; and (iii), whenever available, the nonparametric ``$u,v$'' estimate, which is computed directly from the main sample $X_{1:n}$ and does not require a training sample.

Figures~\ref{fig_weib_s1_reg}--\ref{fig_weib_s1_coa} report the regular and $T$-coarsened minimum-$d$-distance estimates of $S_{n,2}$, whereas Figures~\ref{fig_weib_s3_reg}--\ref{fig_weib_s3_coa} report the corresponding estimates of $S_{n,>}$. All estimates, as well as the true values, are normalized by the main-sample size $n$. The figures also report the mean absolute deviation (MAD) from the true normalized sums across the $100$ sample sizes considered. More precisely, at stage $i=1,\ldots,100$, let $n_i=50i$ denote the size of the main sample, and let $\widehat S_{n_i,2}$ and $\widehat S_{n_i,>}$ denote generic estimates of the total-intensity and above-average sums, respectively. Denoting the corresponding true values by $S_{n_i,2}$ and $S_{n_i,>}$, we define
\begin{displaymath}
\mathrm{MAD}(\widehat S_{2}):=\frac{1}{100}\sum_{i=1}^{100}\left|\frac{\widehat S_{n_i,2}}{n_i}-\frac{S_{n_i,2}}{n_i}\right|,
\end{displaymath}
and
\begin{displaymath}
\mathrm{MAD}(\widehat S_{>}):=\frac{1}{100}\sum_{i=1}^{100}\left|\frac{\widehat S_{n_i,>}}{n_i}-\frac{S_{n_i,>}}{n_i}\right|.
\end{displaymath}
Thus, the MAD measures the average absolute error of the normalized estimates over increasing sample sizes.

\begin{figure}[H]
\centering
\includegraphics[width=1\textwidth]{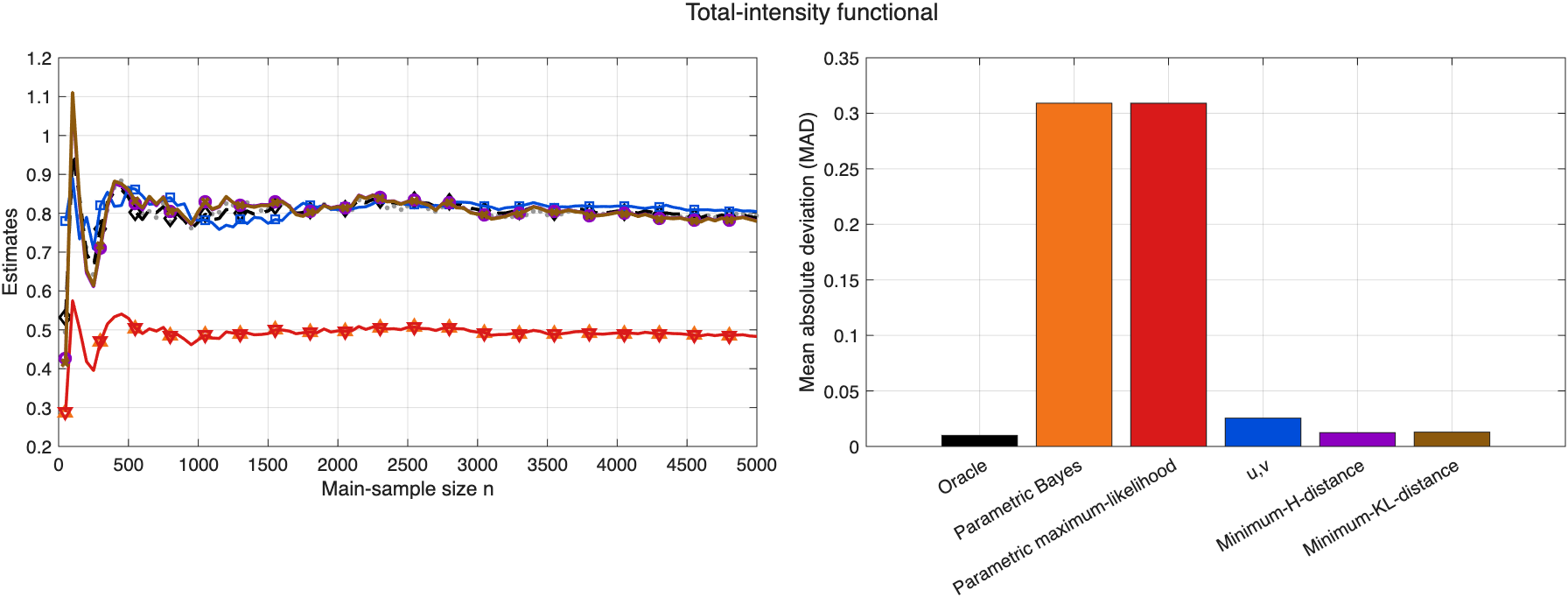}
\caption{\scriptsize{Total-intensity sum under the Weibull mixing distribution: regular minimum-$d$-distance estimation. Left: true normalized values of $S_{n,2}$ (gray $\bullet$) and normalized estimates as a function of the main-sample size $n$. The oracle estimate is shown by a black dashed line with $\diamond$, the parametric Bayes estimate by an orange line with $\triangle$, the parametric maximum-likelihood estimate by a red line with $\triangledown$, the $u,v$ estimate by a blue line with $\square$, and the minimum-H- and minimum-KL-distance estimates by purple $\circ$ and brown $\times$, respectively. Right: mean absolute deviation (MAD) of the normalized estimates from the true normalized values of  $S_{n,2}$.}}
\label{fig_weib_s1_reg}
\end{figure}

\begin{figure}[H]
\centering
\includegraphics[width=1\textwidth]{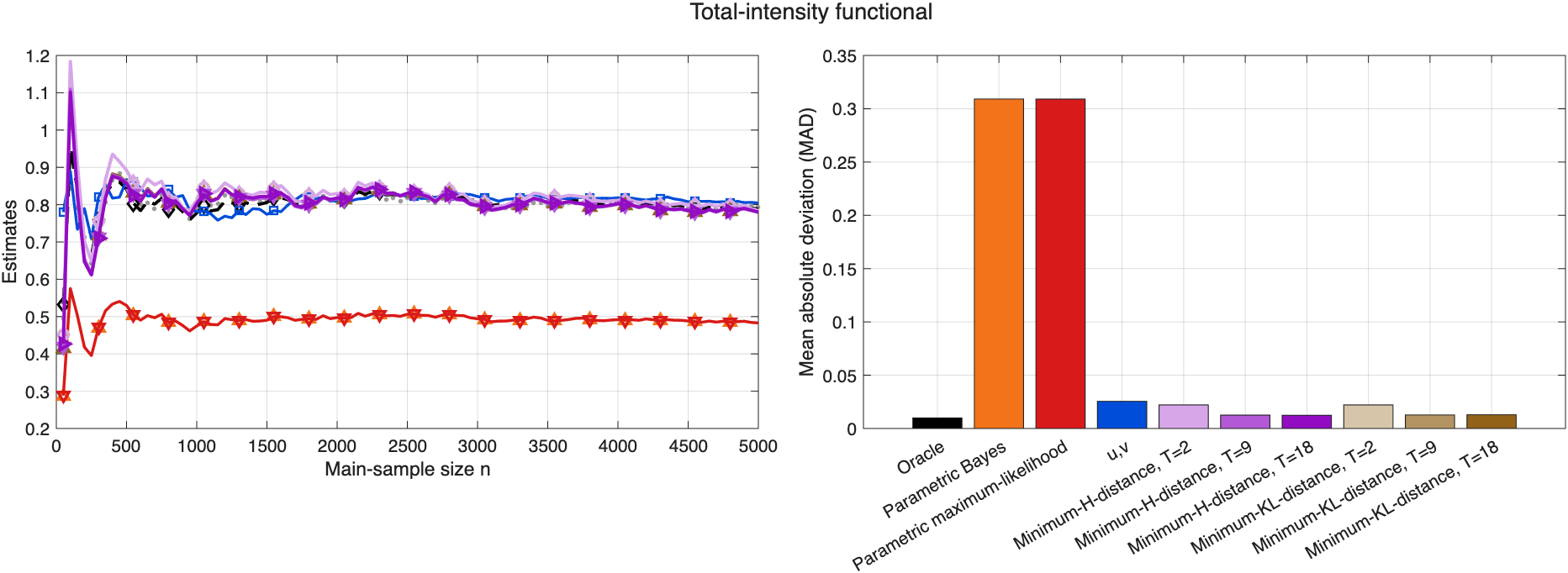}
\caption{\scriptsize{Total-intensity sum under the Weibull mixing distribution: $T$-coarsened minimum-$d$-distance estimation. Left: true normalized values of $S_{n,2}$ (gray $\bullet$) and normalized estimates as a function of the main-sample size $n$. The oracle estimate is shown by a black dashed line with $\diamond$, the parametric Bayes estimate by an orange line with $\triangle$, the parametric maximum-likelihood estimate by a red line with $\triangledown$, the $u,v$ estimate by a blue line with $\square$. The minimum-H-distance estimates for $T=2,9,18$ are shown by purple lines with $\diamond$, $\triangledown$, and $\triangleright$, respectively, whereas the corresponding minimum-KL-distance estimates are shown by brown lines with $\circ$, $\square$, and $\triangle$. Right: mean absolute deviation (MAD) of the normalized estimates from the true normalized values of $S_{n,2}$.}}
\label{fig_weib_s1_coa}
\end{figure}

\begin{figure}[H]
\centering
\includegraphics[width=1\textwidth]{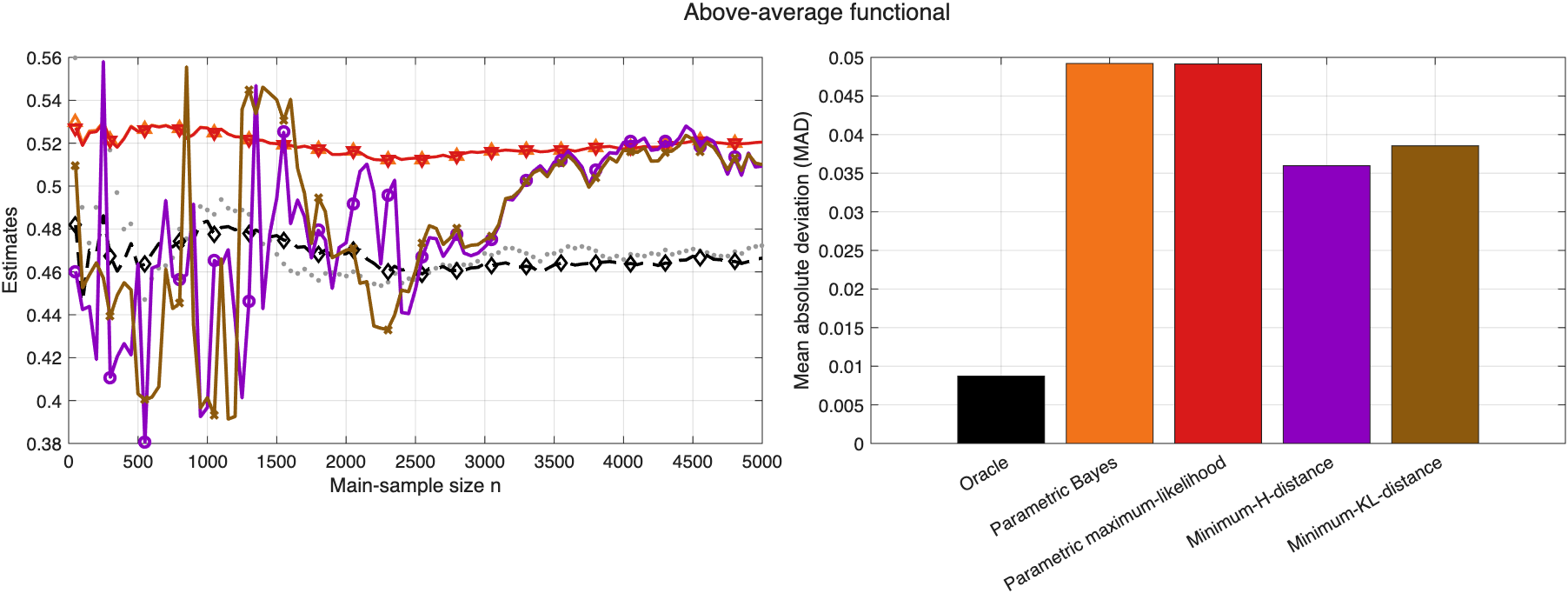}
\caption{\scriptsize{Above-average sum under the Weibull mixing distribution: regular minimum-$d$-distance estimation. Left: true normalized values of $S_{n,>}$ (gray $\bullet$) and normalized estimates as a function of the main-sample size $n$. The oracle estimate is shown by a black dashed line with $\diamond$, the parametric Bayes estimate by an
orange line with $\triangle$, the parametric maximum-likelihood estimate by a red line with $\triangledown$, and the minimum-H- and minimum-KL-distance estimates by purple $\circ$ and brown $\times$, respectively. Right: mean absolute deviation (MAD) of the normalized estimates from the true normalized values of $S_{n,>}$.}}
\label{fig_weib_s3_reg}
\end{figure}

\begin{figure}[H]
\centering
\includegraphics[width=1\textwidth]{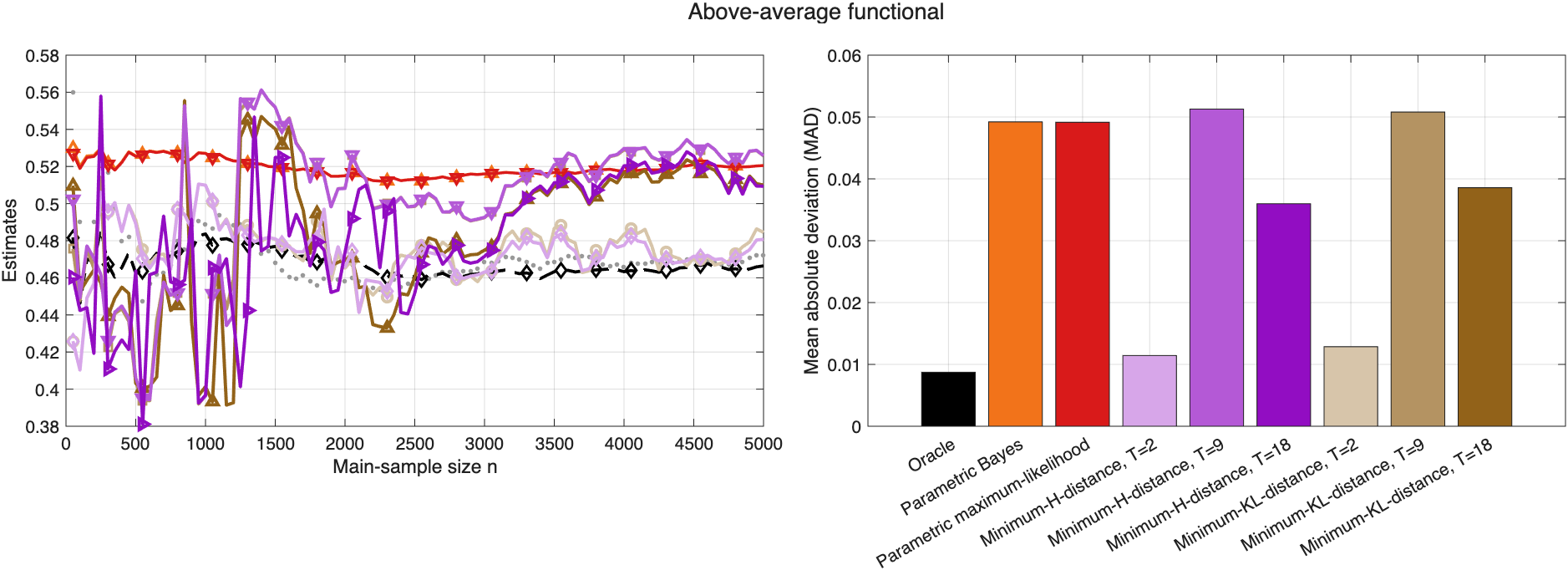}
\caption{\scriptsize{Above-average sum under the Weibull mixing distribution: $T$-coarsened minimum-$d$-distance estimation. Left: true normalized values of $S_{n,>}$ (gray $\bullet$) and normalized estimates as a function of the main-sample size $n$. The oracle estimate is shown by a black dashed line with $\diamond$, the parametric Bayes estimate by an orange line with $\triangle$, the parametric maximum-likelihood estimate by a red line with $\triangledown$. The minimum-H-distance estimates for $T=2,9,18$ are shown by purple lines with $\diamond$, $\triangledown$, and $\triangleright$, respectively, whereas the corresponding minimum-KL-distance estimates are shown by brown lines with $\circ$, $\square$, and $\triangle$. Right: mean absolute deviation (MAD) of the normalized estimates from the true normalized values of $S_{n,>}$.}}
\label{fig_weib_s3_coa}
\end{figure}

The poor performance observed for the estimation of the above-average sum $S_{n,>}$ in Figures \ref{fig_weib_s3_reg}--\ref{fig_weib_s3_coa} is not unexpected. In contrast to the estimation of the total-intensity sum $S_{n,2}$, the map $\Gstar \mapsto \P_{\Gstar}(\theta<X\mid X=x)$ involves the discontinuous function $\theta\mapsto I(\theta<x)$. Consequently, small perturbations of the estimated mixing distribution, and in particular small displacements of its support points around the observed $x$, may produce substantial changes in the corresponding plug-in estimate.  This instability is especially relevant for minimum-$d$-distance estimates, which recover the mixing distribution $\Gstar$ through a discrete approximation and may estimate the mixture distribution well without locating its individual support points with comparable accuracy. This phenomenon is consistent with the additional stability conditions required by our theoretical results for the above-average sum. To illustrate the role of this discontinuity, we also consider the smoothed above-average sum
\begin{displaymath}
S_{n,>,\delta}= \sum_{i=1}^n \phi_\delta(X_i-\theta_i),
\end{displaymath}
where $\phi_\delta$ is the Lipschitz approximation of the indicator used in our theoretical analysis, namely
\begin{displaymath}
\phi_\delta(t)=
\begin{cases}
0, & t\leq-\delta,\\
(t+\delta)/(2\delta), & -\delta<t<\delta,\\
1, & t\geq\delta.
\end{cases}
\end{displaymath}
For the regular minimum-$d$-distance estimate $\hat G_{d,m}$ of $\Gstar$, the corresponding EB estimate of $S_{n,>,\delta}$ is
\begin{equation}\label{eq:smootabove}
\hat{S}_{n,>,\delta}(\hat G_{d,m})=\sum_{i=1}^n\mathbb{E}_{\hat G_{d,m}}\left[\phi_\delta(X_i-\theta_i)\mid X_i\right]=\sum_{i=1}^n\int_{\Theta}\phi_\delta(X_i-\theta)\,\hat G_{d,m}(d\theta\mid X_i);
\end{equation}
the $T$-coarsened estimate of the smoothed above-average sum is defined analogously. Specifically, the $T$-coarsened minimum-$d$-distance estimate $\widetilde G_{d,m}^{(T)}$ is substituted for $\hat G_{d,m}$ in \eqref{eq:smootabove}.  Thus, the parameter $\delta$ controls the trade-off between approximation of the original above-average sum and stability with respect to perturbations of the estimated mixing distribution: as $\delta\rightarrow0$ the original above-average sum is recovered, whereas larger values of $\delta$ reduce the sensitivity to inaccuracies in the estimated locations of the mixing distribution. Accordingly, in the numerical comparison below we regard smoothing as a regularization device for estimating the original above-average sum $S_{n,>}$, rather than as defining a different target of interest. We therefore evaluate both the smoothed oracle and the smoothed minimum-distance estimates against the true value of $S_{n,>}$.

Figures~\ref{fig_weib_s3_reg_smooth}--\ref{fig_weib_s3_coa_smooth} report the regular and $T$-coarsened minimum-$d$-distance estimates of $S_{n,>,\delta}$, for increasing values of the smoothing parameter $\delta=0.25,0.5,1,2$. All estimates are normalized by the main-sample size $n$ and are compared with the normalized true value of $S_{n,>}$ and with the corresponding smoothed oracle estimates. The figures also report the mean absolute deviation (MAD) of the smoothed oracle and minimum-$d$-distance estimates from the true normalized $S_{n,>}$, computed over the sequence of increasing sample sizes as follows
\begin{displaymath}
\operatorname{MAD}(\widehat S_{>,\delta})=\frac{1}{100}\sum_{i=1}^{100}\left|\frac{\widehat S_{n_i,>,\delta}}{n_i}-\frac{S_{n_i,>}}{n_i}\right|.
\end{displaymath}

\begin{figure}[H]
\centering
\includegraphics[width=1\textwidth]{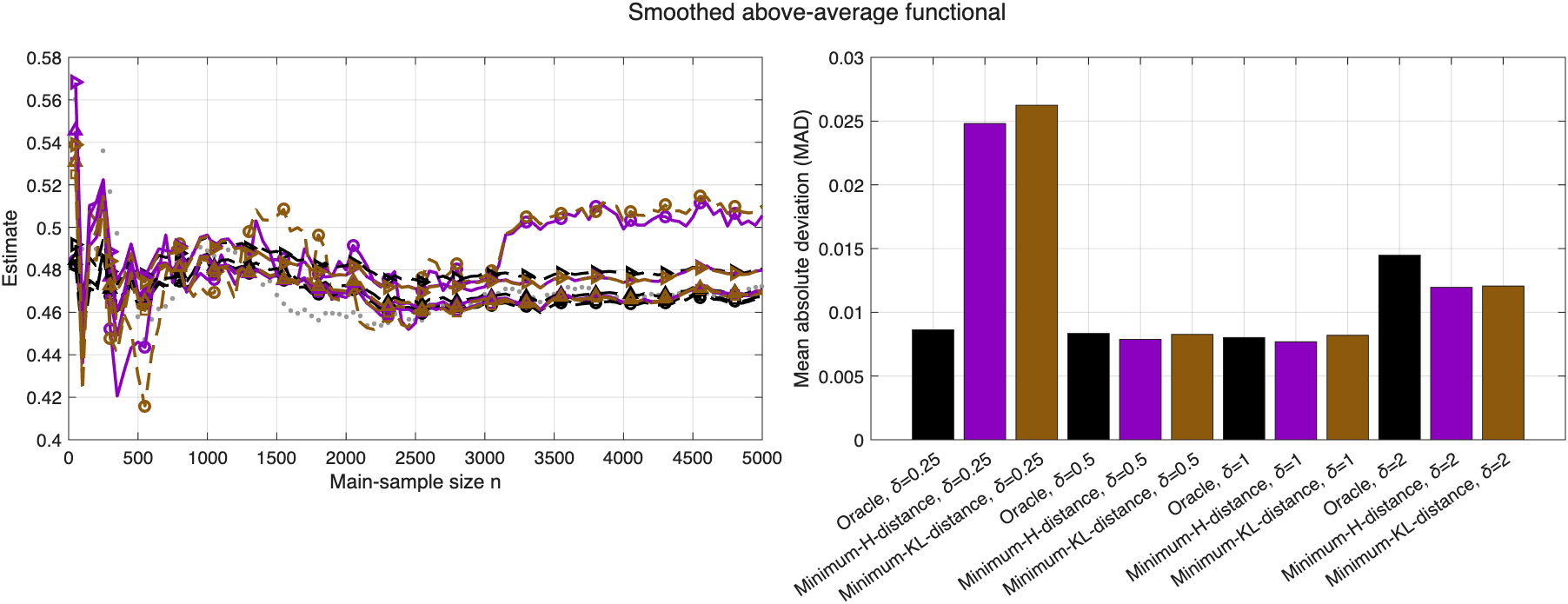}
\caption{\scriptsize{Above-average sum under the Weibull mixing distribution: regular minimum-$d$-distance estimation with smoothing. Left: true normalized values of $S_{n,>}$ (gray $\bullet$), smoothed oracle estimates (black dashed lines), and smoothed minimum-distance estimates as a function of the main-sample size $n$. The smoothed oracle, minimum-H-, and minimum-KL-distance estimates are shown for $\delta=0.25,0.5,1,2$, with $\circ$, $\square$, $\triangle$, and $\triangleright$, respectively; minimum-H- and minimum-KL-distance estimates are represented by purple solid and brown dashed lines. Right: mean absolute deviation (MAD) of the smoothed oracle and smoothed minimum-distance estimates from the true normalized values of $S_{n,>}$.}}
\label{fig_weib_s3_reg_smooth}
\end{figure}

\begin{figure}[H]
\centering
\includegraphics[width=1\textwidth]{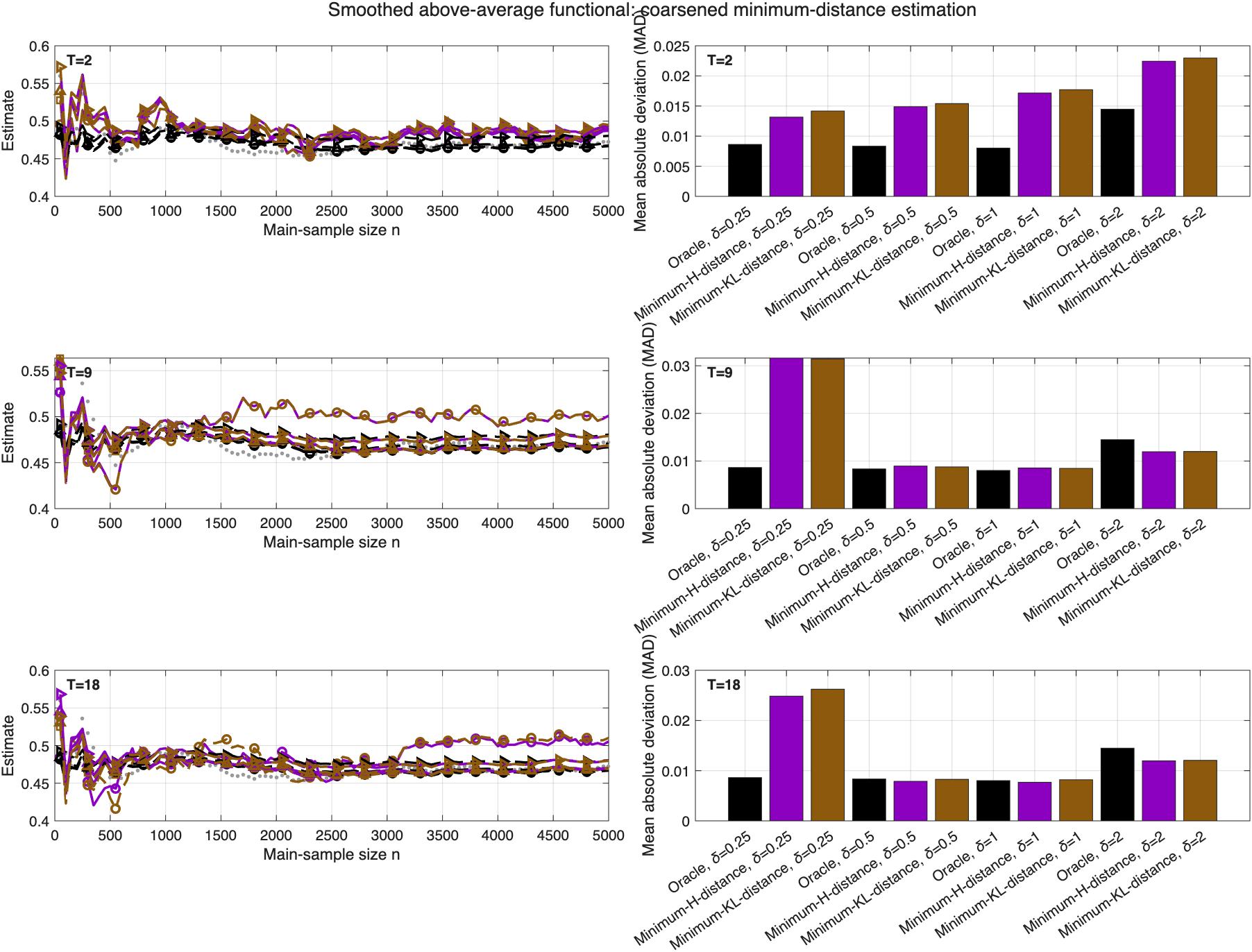}
\caption{\scriptsize{Above-average sum under the Weibull mixing distribution: $T$-coarsened minimum-$d$-distance estimation with smoothing. Each row corresponds to one coarsening level, $T=2,9,18$ from top to bottom. Left: true normalized values of $S_{n,>}$ (gray $\bullet$), smoothed oracle estimates (black dashed lines), and smoothed minimum-distance estimates as a function of the main-sample size $n$. The smoothed oracle, minimum-H-, and minimum-KL-distance estimates are shown for $\delta=0.25,0.5,1,2$, with $\circ$, $\square$, $\triangle$, and $\triangleright$, respectively; minimum-H- and minimum-KL-distance estimates are represented by purple solid and brown dashed lines. Right: mean absolute deviation (MAD) of the smoothed oracle and smoothed minimum-distance estimates from the true normalized values of $S_{n,>}$.}}
\label{fig_weib_s3_coa_smooth}
\end{figure}

\subsubsection{Uniform mixing distribution}

We repeat the same experimental design with the mixing distribution $\Gstar$ given by the Uniform distribution on $[0,10]$. Thus, for each $i=1,\ldots,100$, we generate a nested dataset of total size $m+n=100i$, split equally into a training sample of size $m=50i$ and a main sample of size $n=50i$. The regular and $T$-coarsened minimum-$d$-distance estimates are computed from the training samples for the Hellinger (H) distance and the Kullback--Leibler (KL) divergence, and are then plugged into the Bayes rules evaluated on the main samples. The coarsening levels are selected according to the same criterion as in the Weibull experiment. With $q_X(p)=\inf\{t\in\integers_+:\mathbb P(X\leq t)\geq p\}$  and maximum total sample size equal to $10,000$, we take $T\in\{2,q_X(0.95),q_X(1-1/10,000)\}=\{2,12,21\}$. In particular, these levels are kept fixed across sample sizes and reflect the tail behavior of the corresponding Poisson mixture distribution. Figures~\ref{fig_unif_s1_reg}--\ref{fig_unif_s1_coa} report the regular and $T$-coarsened minimum-$d$-distance estimates of $S_{n,2}$, whereas Figures~\ref{fig_unif_s3_reg}--\ref{fig_unif_s3_coa} report the corresponding estimates of $S_{n,>}$. All estimates, as well as the true values, are normalized by the main-sample size $n$. The figures also report the MADs, computed as in the Weibull experiment over the sequence of increasing sample sizes.

\begin{figure}[H]
\centering
\includegraphics[width=1\textwidth]{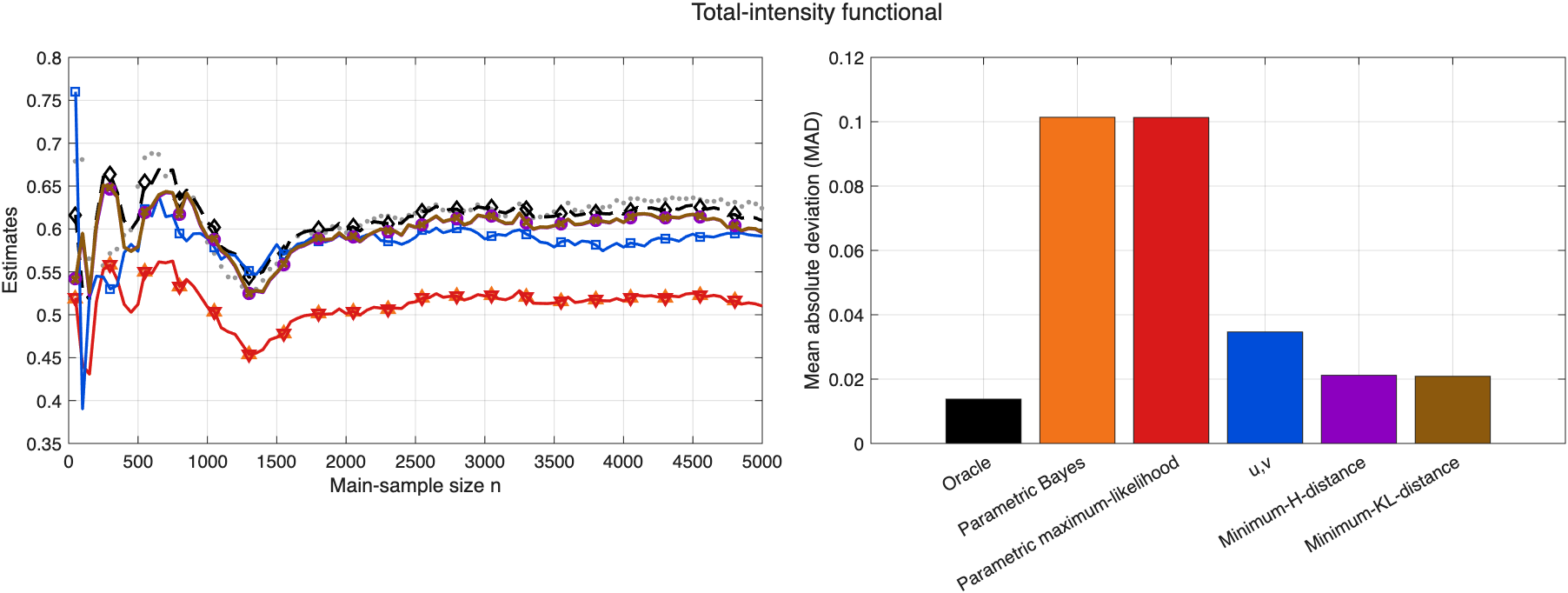}
\caption{\scriptsize{Total-intensity sum under the Uniform mixing distribution: regular minimum-$d$-distance estimation. Left: true normalized values of $S_{n,2}$ (gray $\bullet$) and normalized estimates as a function of the main-sample size $n$. The oracle estimate is shown by a black dashed line with $\diamond$, the parametric Bayes estimate by an orange line with $\triangle$, the parametric maximum-likelihood estimate by a red line with $\triangledown$, the $u,v$ estimate by a blue line with $\square$, and the minimum-H- and minimum-KL-distance estimates by purple $\circ$ and brown $\times$, respectively. Right: mean absolute deviation (MAD) of the normalized estimates from the true normalized values of  $S_{n,2}$.}}
\label{fig_unif_s1_reg}
\end{figure}

\begin{figure}[H]
\centering
\includegraphics[width=1\textwidth]{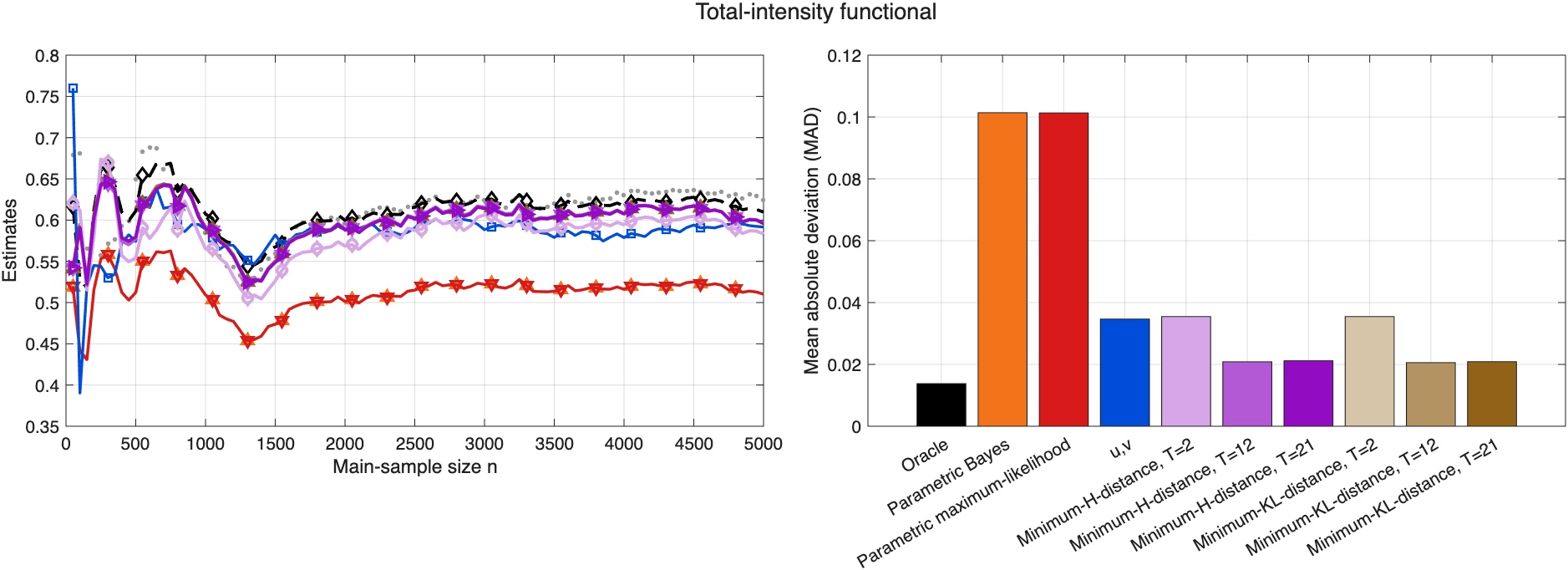}
\caption{\scriptsize{Total-intensity sum under the Uniform mixing distribution: $T$-coarsened minimum-$d$-distance estimation. Left: true normalized values of $S_{n,2}$ (gray $\bullet$) and normalized estimates as a function of the main-sample size $n$. The oracle estimate is shown by a black dashed line with $\diamond$, the parametric Bayes estimate by an orange line with $\triangle$, the parametric maximum-likelihood estimate by a red line with $\triangledown$, the $u,v$ estimate by a blue line with $\square$. The minimum-H-distance estimates for $T=2,12,21$ are shown by purple lines with $\diamond$, $\triangledown$, and $\triangleright$, respectively, whereas the corresponding minimum-KL-distance estimates are shown by brown lines with $\circ$, $\square$, and $\triangle$. Right: mean absolute deviation (MAD) of the normalized estimates from the true normalized values of $S_{n,2}$.}}
\label{fig_unif_s1_coa}
\end{figure}

\begin{figure}[H]
\centering
\includegraphics[width=1\textwidth]{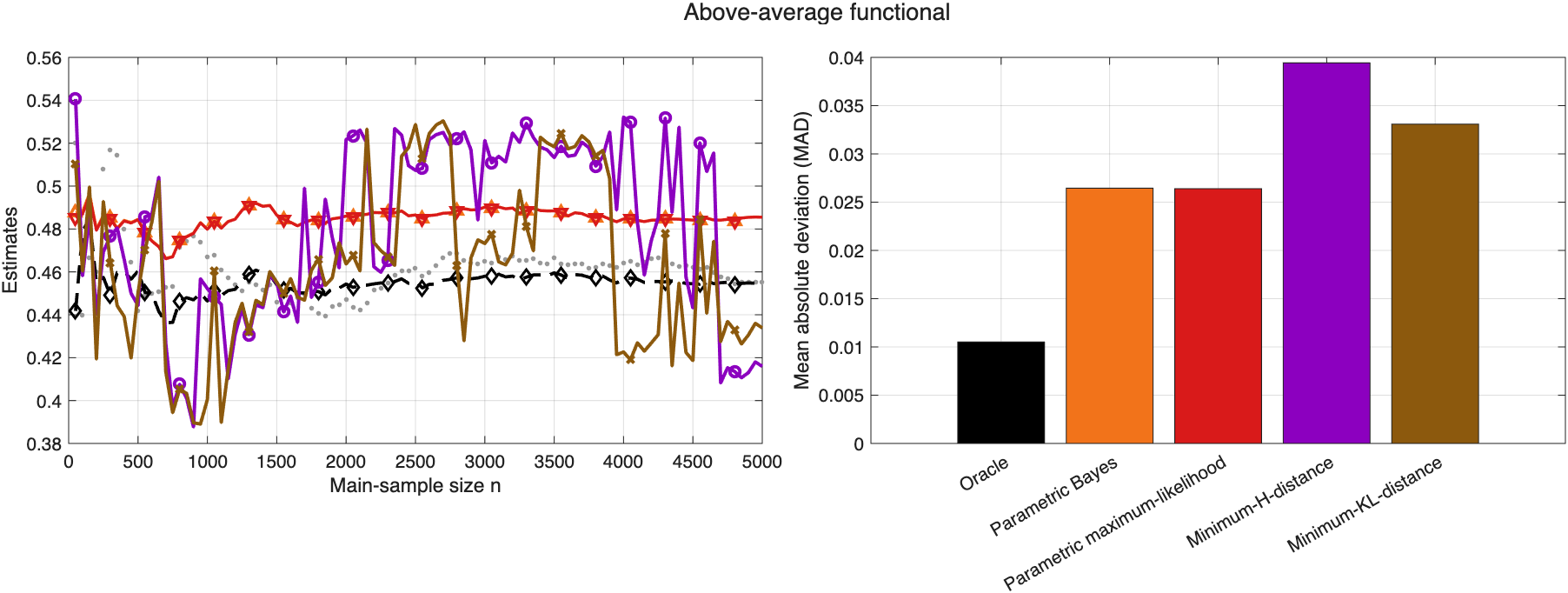}
\caption{\scriptsize{Above-average sum under the Uniform mixing distribution: regular minimum-$d$-distance estimation. Left: true normalized values of $S_{n,>}$ (gray $\bullet$) and normalized estimates as a function of the main-sample size $n$. The oracle estimate is shown by a black dashed line with $\diamond$, the parametric Bayes estimate by an
orange line with $\triangle$, the parametric maximum-likelihood estimate by a red line with $\triangledown$, and the minimum-H- and minimum-KL-distance estimates by purple $\circ$ and brown $\times$, respectively. Right: mean absolute deviation (MAD) of the normalized estimates from the true normalized values of $S_{n,>}$.}}
\label{fig_unif_s3_reg}
\end{figure}

\begin{figure}[H]
\centering
\includegraphics[width=1\textwidth]{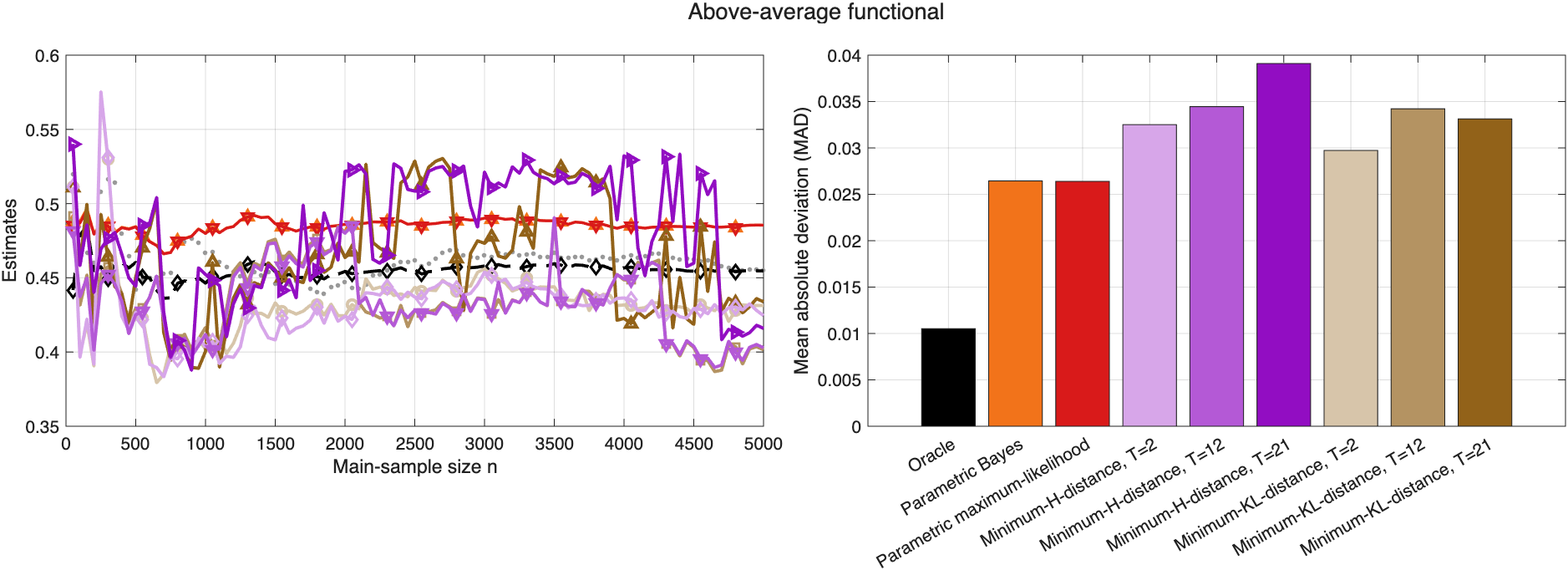}
\caption{\scriptsize{Above-average sum under the Uniform mixing distribution: $T$-coarsened minimum-$d$-distance estimation. Left: true normalized values of $S_{n,>}$ (gray $\bullet$) and normalized estimates as a function of the main-sample size $n$. The oracle estimate is shown by a black dashed line with $\diamond$, the parametric Bayes estimate by an orange line with $\triangle$, the parametric maximum-likelihood estimate by a red line with $\triangledown$. The minimum-H-distance estimates for $T=2,12,21$ are shown by purple lines with $\diamond$, $\triangledown$, and $\triangleright$, respectively, whereas the corresponding minimum-KL-distance estimates are shown by brown lines with $\circ$, $\square$, and $\triangle$. Right: mean absolute deviation (MAD) of the normalized estimates from the true normalized values of $S_{n,>}$.}}
\label{fig_unif_s3_coa}
\end{figure}

Figures~\ref{fig_unif_s3_reg_smooth}--\ref{fig_unif_s3_coa_smooth} report the regular and $T$-coarsened minimum-$d$-distance estimates of $S_{n,>,\delta}$ for the smoothing parameter $\delta=0.25,0.5,1,2$. As in the Weibull experiment, the estimates are normalized by the main-sample size $n$ and compared with the normalized true value of $S_{n,>}$ and with the corresponding smoothed oracle estimates. The figures also report the MADs of the smoothed oracle and minimum-$d$-distance estimates from the true normalized $S_{n,>}$.

\begin{figure}[H]
\centering
\includegraphics[width=1\textwidth]{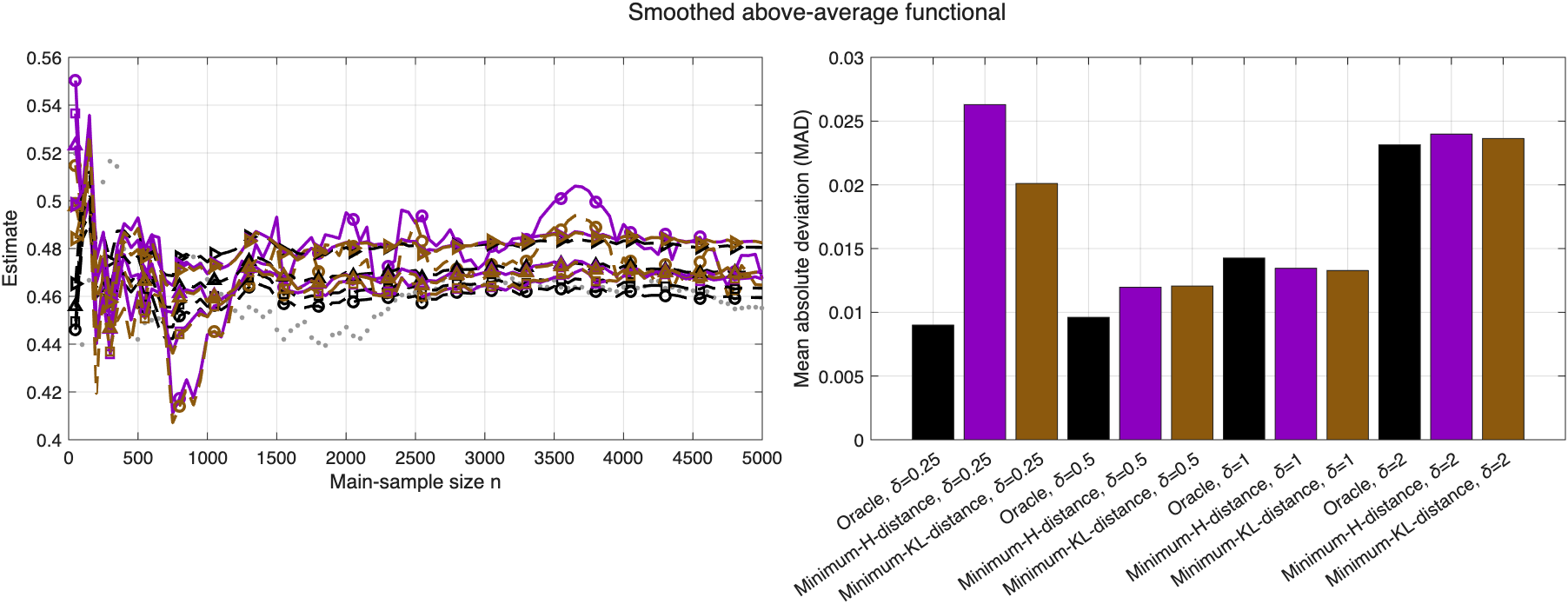}
\caption{\scriptsize{Above-average sum under the Uniform mixing distribution: regular minimum-$d$-distance estimation with smoothing. Left: true normalized values of $S_{n,>}$ (gray $\bullet$), smoothed oracle estimates (black dashed lines), and smoothed minimum-distance estimates as a function of the main-sample size $n$. The smoothed oracle, minimum-H-, and minimum-KL-distance estimates are shown for $\delta=0.25,0.5,1,2$, with $\circ$, $\square$, $\triangle$, and $\triangleright$, respectively; minimum-H- and minimum-KL-distance estimates are represented by purple solid and brown dashed lines. Right: mean absolute deviation (MAD) of the smoothed oracle and smoothed minimum-distance estimates from the true normalized values of  $S_{n,>}$.}}
\label{fig_unif_s3_reg_smooth}
\end{figure}

\begin{figure}[H]
\centering
\includegraphics[width=1\textwidth]{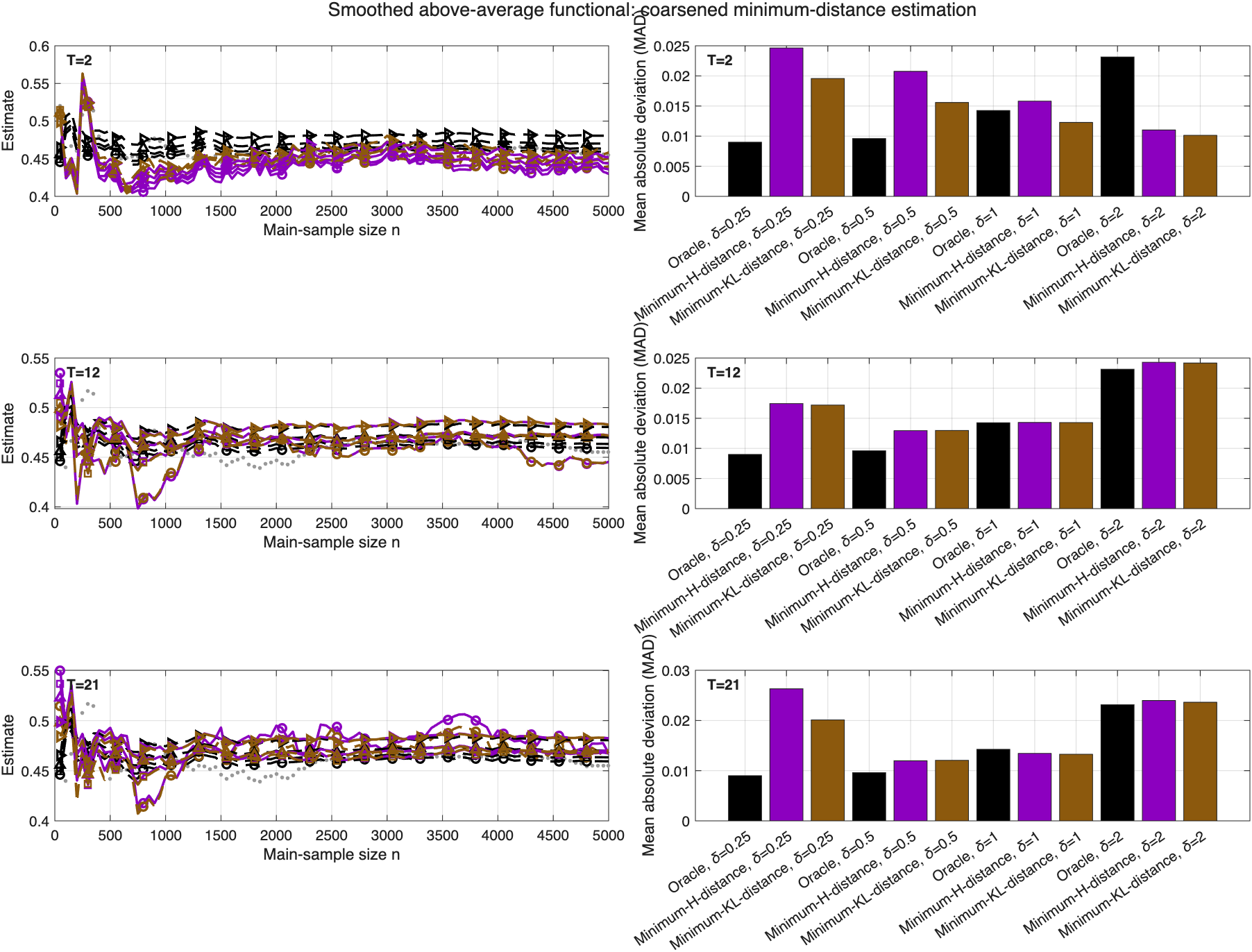}
\caption{\scriptsize{Above-average sum under the Uniform mixing distribution: $T$-coarsened minimum-$d$-distance estimation with smoothing. Each row corresponds to one coarsening level, $T=2,12,21$ from top to bottom. Left: true normalized values of $S_{n,>}$ (gray $\bullet$), smoothed oracle estimates (black dashed lines), and smoothed minimum-distance estimates as a function of the main-sample size $n$. The smoothed oracle, minimum-H-, and minimum-KL-distance estimates are shown for $\delta=0.25,0.5,1,2$, with $\circ$, $\square$, $\triangle$, and $\triangleright$, respectively; minimum-H- and minimum-KL-distance estimates are represented by purple solid and brown dashed lines. Right: mean absolute deviation (MAD) of the smoothed oracle and smoothed minimum-distance estimates from the true normalized values of $S_{n,>}$.}}
\label{fig_unif_s3_coa_smooth}
\end{figure}

\subsubsection{Half-Gaussian mixing distribution}

We further repeat the experiment with $\Gstar$ given by a half-Gaussian distribution, namely the distribution of the absolute value of a standard Gaussian random variable. For each $i=1,\ldots,100$, the nested dataset has total size $m+n=100i$ and is split equally into a training sample of size $m=50i$ and a main sample of size $n=50i$. Regular and $T$-coarsened minimum-$d$-distance estimates are computed from the training samples for the Hellinger (H) distance and the Kullback--Leibler (KL) divergence and then evaluated on the main samples through the plug-in Bayes rules. Using the same criterion for the coarsening levels, $T\in\{2,q_X(0.95),q_X(1-1/10,000)\}=\{2,3,8\}$. Again, these levels are kept fixed across sample sizes. Figures~\ref{fig_gauss_s1_reg}--\ref{fig_gauss_s1_coa} report the regular and $T$-coarsened minimum-$d$-distance estimates of $S_{n,2}$, whereas Figures~\ref{fig_gauss_s3_reg}--\ref{fig_gauss_s3_coa} report the corresponding estimates of $S_{n,>}$. All estimates, as well as the true values, are normalized by the main-sample size $n$. The figures also report the MADs, computed as in the Weibull experiment over the sequence of increasing sample sizes.

\begin{figure}[H]
\centering
\includegraphics[width=1\textwidth]{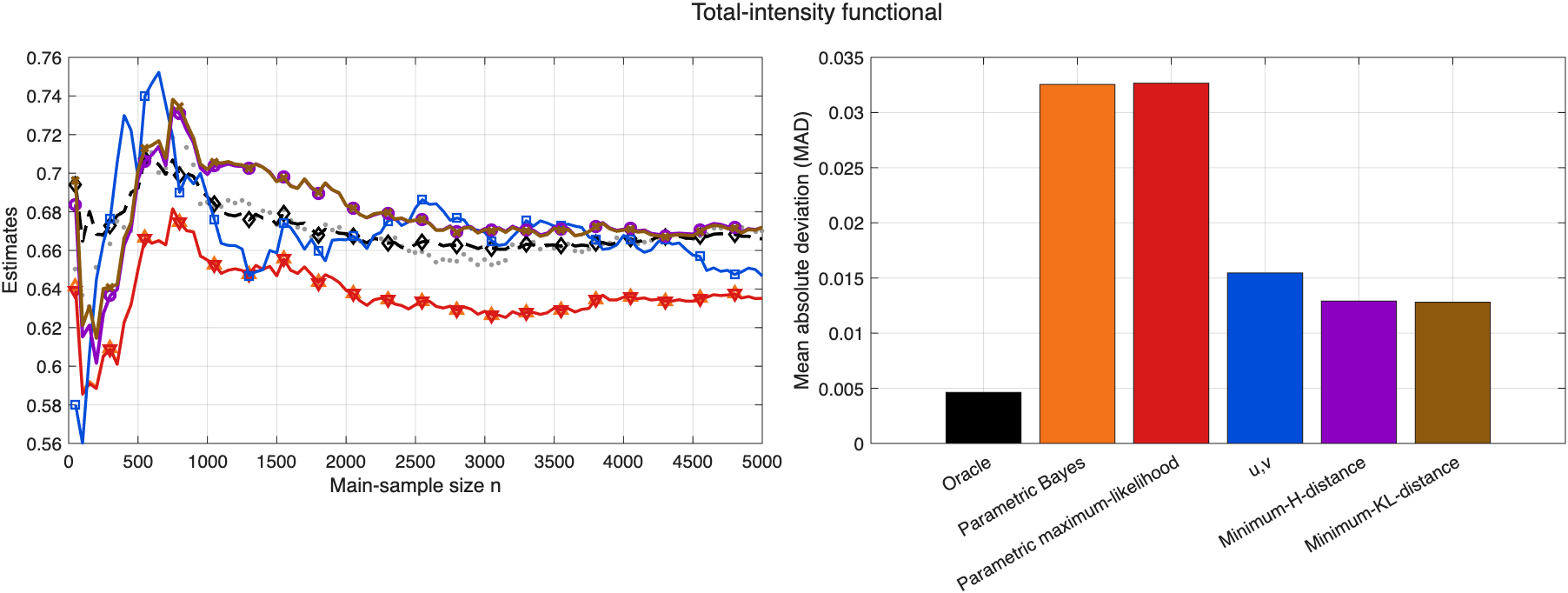}
\caption{\scriptsize{Total-intensity sum under the half-Gaussian mixing distribution: regular minimum-$d$-distance estimation. Left: true normalized  values of $S_{n,2}$ (gray $\bullet$) and normalized estimates as a function of the main-sample size $n$. The oracle estimate is shown by a black dashed line with $\diamond$, the parametric Bayes estimate by an orange line with $\triangle$, the parametric maximum-likelihood estimate by a red line with $\triangledown$, the $u,v$ estimate by a blue line with $\square$, and the minimum-H- and minimum-KL-distance estimates by purple $\circ$ and brown $\times$, respectively. Right: mean absolute deviation (MAD) of the normalized estimates from the true normalized values of  $S_{n,2}$.}}
\label{fig_gauss_s1_reg}
\end{figure}

\begin{figure}[H]
\centering
\includegraphics[width=1\textwidth]{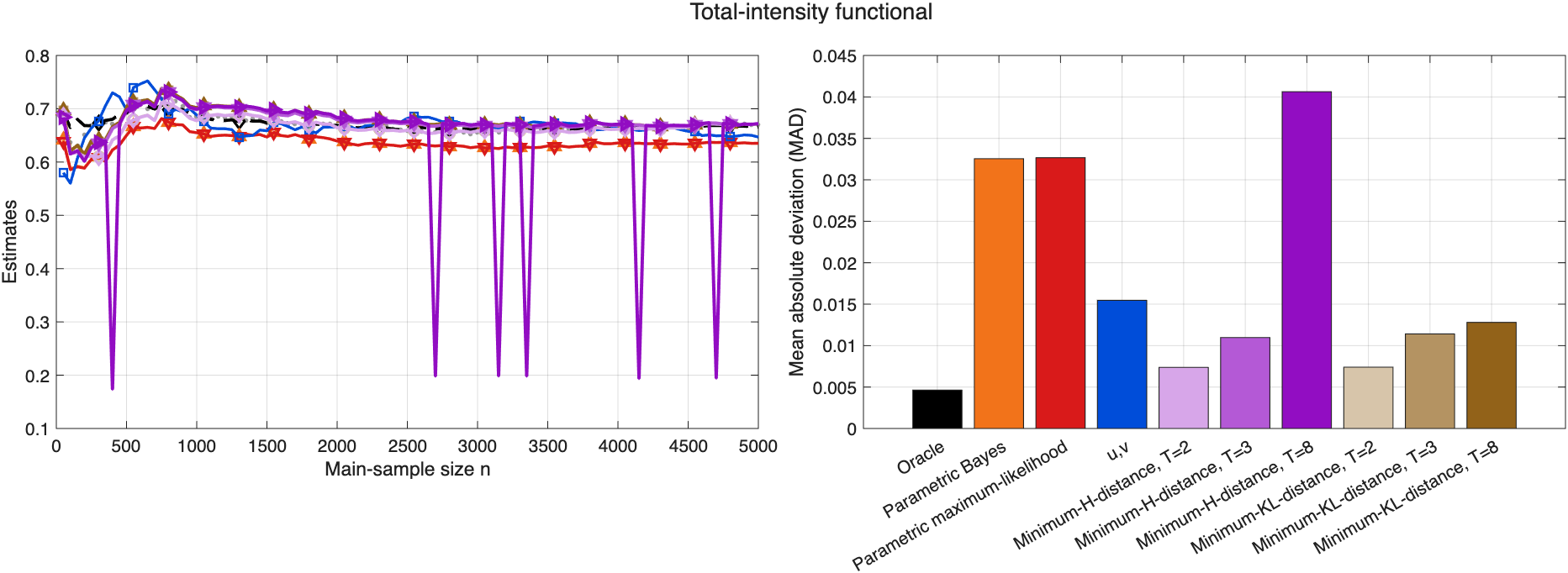}
\caption{\scriptsize{Total-intensity sum under the half-Gaussian mixing distribution: $T$-coarsened minimum-$d$-distance estimation. Left: true normalized values of $S_{n,2}$ (gray $\bullet$) and normalized estimates as a function of the main-sample size $n$. The oracle estimate is shown by a black dashed line with $\diamond$, the parametric Bayes estimate by an orange line with $\triangle$, the parametric maximum-likelihood estimate by a red line with $\triangledown$, the $u,v$ estimate by a blue line with $\square$. The minimum-H-distance estimates for $T=2,3,8$ are shown by purple lines with $\diamond$, $\triangledown$, and $\triangleright$, respectively, whereas the corresponding minimum-KL-distance estimates are shown by brown lines with $\circ$, $\square$, and $\triangle$. Right: mean absolute deviation (MAD) of the normalized estimates from the true normalized values of $S_{n,2}$.}}
\label{fig_gauss_s1_coa}
\end{figure}

\begin{figure}[H]
\centering
\includegraphics[width=1\textwidth]{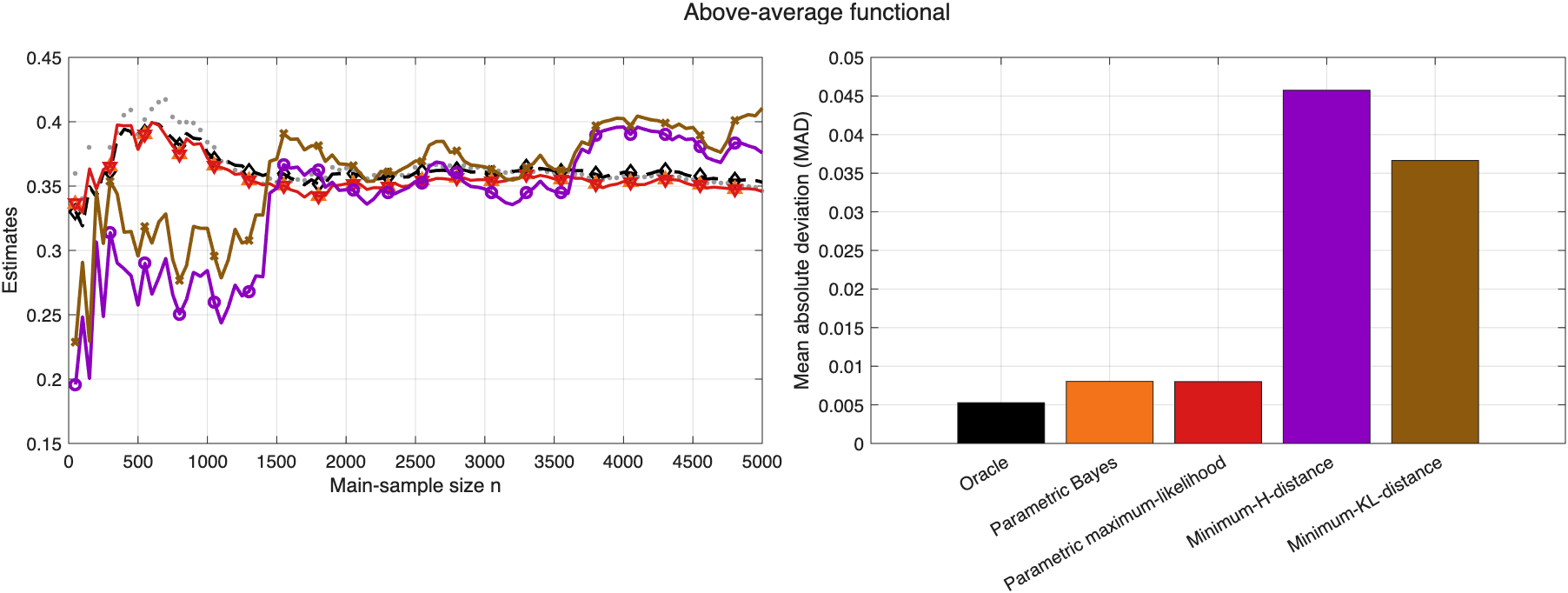}
\caption{\scriptsize{Above-average sum under the half-Gaussian mixing distribution: regular minimum-$d$-distance estimation. Left: true normalized values of $S_{n,>}$ (gray $\bullet$) and normalized estimates as a function of the main-sample size $n$. The oracle estimate is shown by a black dashed line with $\diamond$, the parametric Bayes estimate by an
orange line with $\triangle$, the parametric maximum-likelihood estimate by a red line with $\triangledown$, and the minimum-H- and minimum-KL-distance estimates by purple $\circ$ and brown $\times$, respectively. Right: mean absolute deviation (MAD) of the normalized estimates from the true normalized values of $S_{n,>}$.}}
\label{fig_gauss_s3_reg}
\end{figure}

\begin{figure}[H]
\centering
\includegraphics[width=1\textwidth]{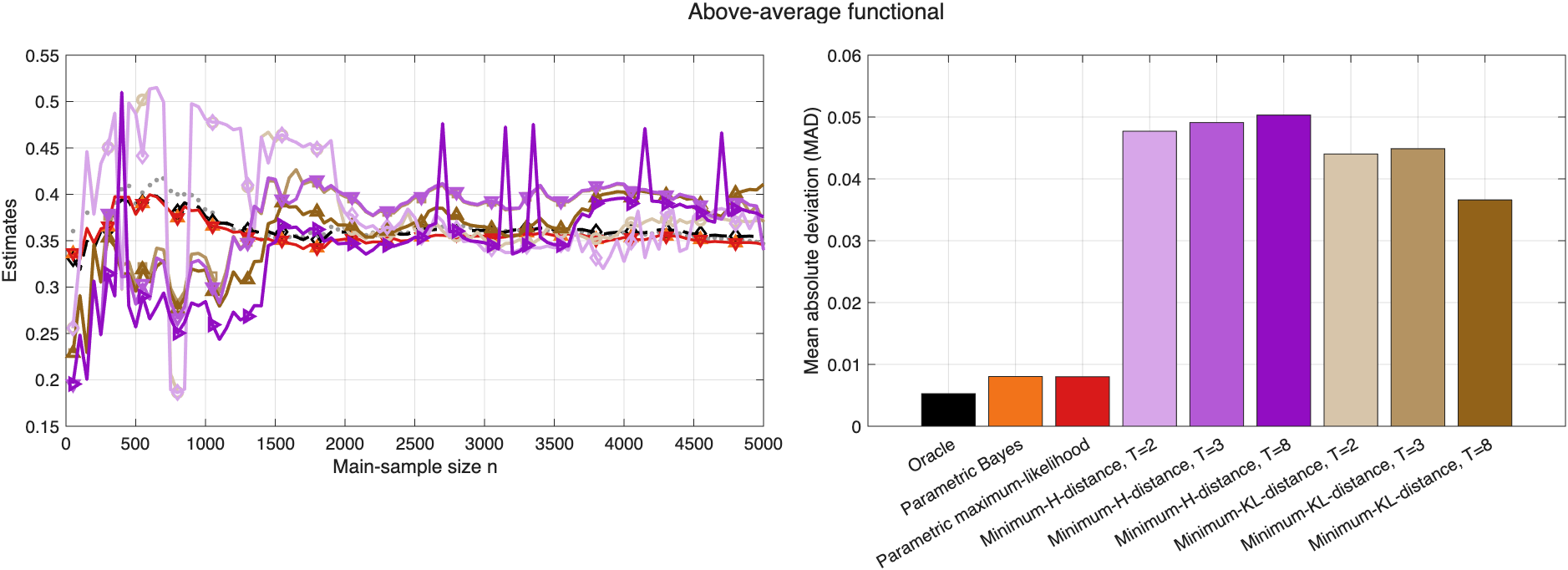}
\caption{\scriptsize{Above-average sum under the half-Gaussian mixing distribution: $T$-coarsened minimum-$d$-distance estimation. Left: true normalized values of $S_{n,>}$ (gray $\bullet$) and normalized estimates as a function of the main-sample size $n$. The oracle estimate is shown by a black dashed line with $\diamond$, the parametric Bayes estimate by an orange line with $\triangle$, the parametric maximum-likelihood estimate by a red line with $\triangledown$. The minimum-H-distance estimates for $T=2,3,8$ are shown by purple lines with $\diamond$, $\triangledown$, and $\triangleright$, respectively, whereas the corresponding minimum-KL-distance estimates are shown by brown lines with $\circ$, $\square$, and $\triangle$. Right: mean absolute deviation (MAD) of the normalized estimates from the true normalized values of $S_{n,>}$.}}
\label{fig_gauss_s3_coa}
\end{figure}

Figures~\ref{fig_gauss_s3_reg_smooth}--\ref{fig_gauss_s3_coa_smooth} report the regular and $T$-coarsened minimum-$d$-distance estimates of $S_{n,>,\delta}$ for the smoothing parameter $\delta=0.25,0.5,1,2$. As in the Weibull experiment, the estimates are normalized by the main-sample size $n$ and compared with the normalized true value of $S_{n,>}$ and with the corresponding smoothed oracle estimates. The figures also report the MADs of the smoothed oracle and minimum-$d$-distance estimates from the true normalized $S_{n,>}$.

\begin{figure}[H]
\centering
\includegraphics[width=1\textwidth]{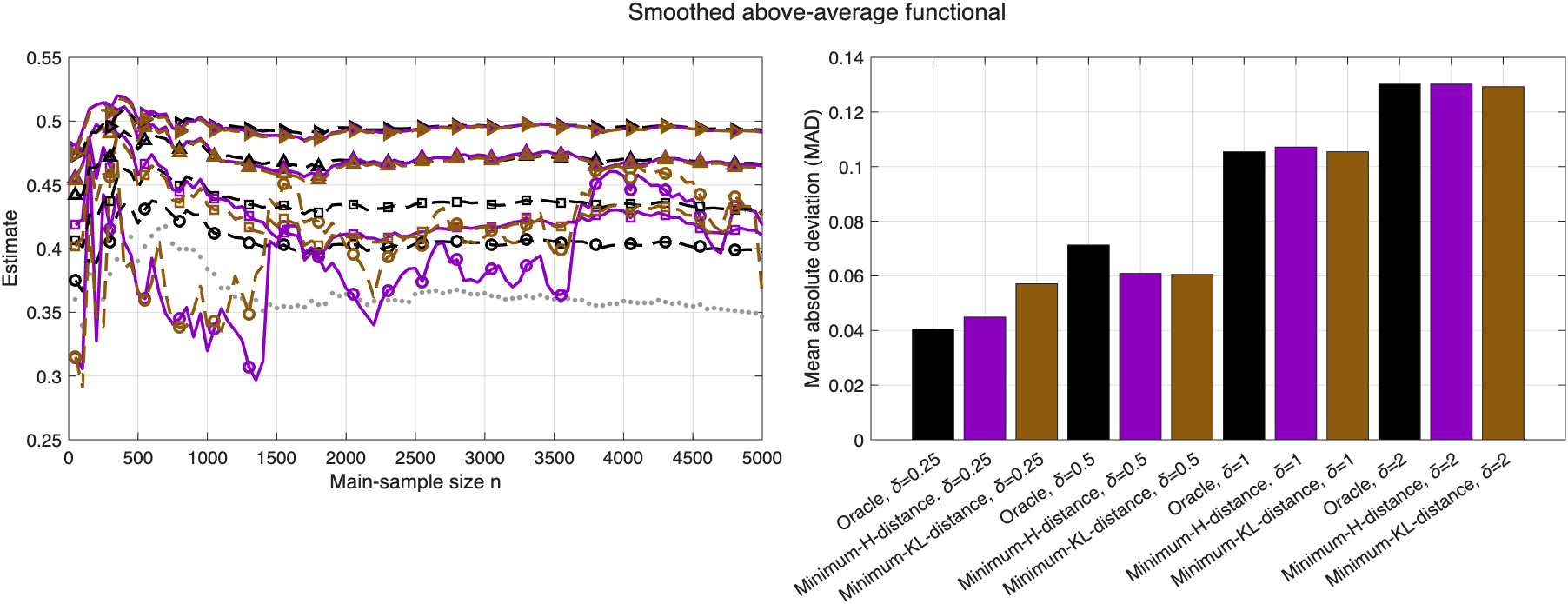}
\caption{\scriptsize{Above-average sum under the half-Gaussian mixing distribution: regular minimum-$d$-distance estimation with smoothing. Left: true normalized values of $S_{n,>}$ (gray $\bullet$), smoothed oracle estimates (black dashed lines), and smoothed minimum-distance estimates as a function of the main-sample size $n$. The smoothed oracle, minimum-H-, and minimum-KL-distance estimates are shown for $\delta=0.25,0.5,1,2$, with $\circ$, $\square$, $\triangle$, and $\triangleright$, respectively; minimum-H- and minimum-KL-distance estimates are represented by purple solid and brown dashed lines. Right: mean absolute deviation (MAD) of the smoothed oracle and smoothed minimum-distance estimates from the true normalized values of  $S_{n,>}$.}}
\label{fig_gauss_s3_reg_smooth}
\end{figure}

\begin{figure}[H]
\centering
\includegraphics[width=1\textwidth]{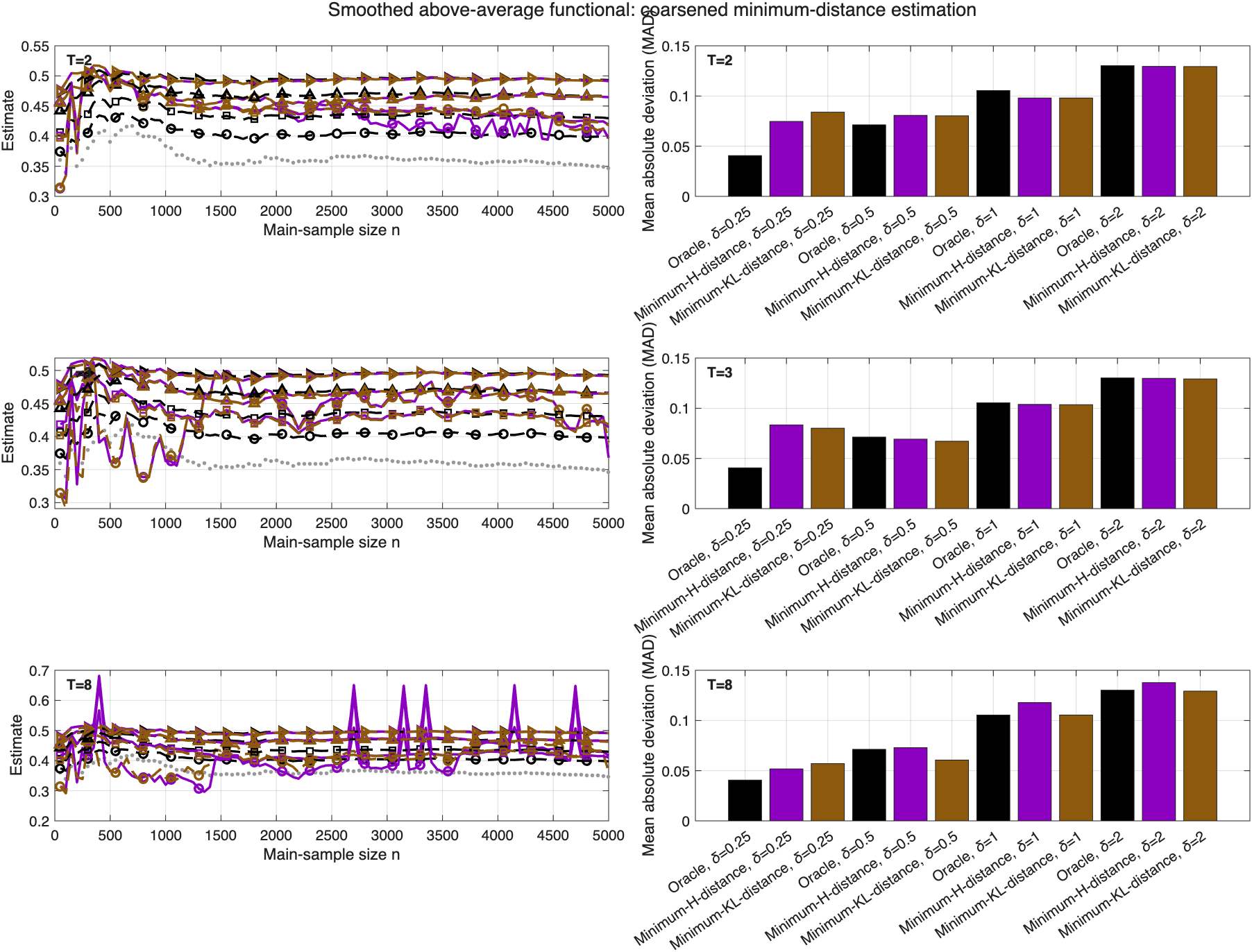}
\caption{\scriptsize{Above-average sum under the half-Gaussian mixing distribution: $T$-coarsened minimum-$d$-distance estimation with smoothing. Each row corresponds to one coarsening level, $T=2,3,8$ from top to bottom. Left: true normalized values of $S_{n,>}$ (gray $\bullet$), smoothed oracle estimates (black dashed lines), and smoothed minimum-distance estimates as a function of the main-sample size $n$. The smoothed oracle, minimum-H-, and minimum-KL-distance estimates are shown for $\delta=0.25,0.5,1,2$, with $\circ$, $\square$, $\triangle$, and $\triangleright$, respectively; minimum-H- and minimum-KL-distance estimates are represented by purple solid and brown dashed lines. Right: mean absolute deviation (MAD) of the smoothed oracle and smoothed minimum-distance estimates from the true normalized values of $S_{n,>}$.}}
\label{fig_gauss_s3_coa_smooth}
\end{figure}

\subsection{Real-data analysis}

We apply the proposed minimum-distance procedures to data on the total number of goals scored by National Hockey League players in the 2017--2018 and 2018--2019 seasons, available from \url{https://www.hockey-reference.com/}. We restrict attention to the $745$ players for whom goal totals are available in both seasons. For player $i$, let $X_i$ denote the total number of goals scored in the 2017--2018 season and let $Y_i$ denote the corresponding total in the 2018--2019 season. We analyze the data under the Poisson mixture model \eqref{model_distr_poi}, with $\theta_i$ representing the latent scoring intensity of player $i$. To conform with the training-sample and main-sample framework considered throughout the paper, we split the $745$ players into two disjoint groups. The first $m=372$ players form the training sample: their 2017--2018 goal totals, denoted by $X'_{1:m}=(X'_{1},\ldots,X'_{m})$, are used exclusively to estimate the mixing distribution $\Gstar$. The remaining $n=373$ players form the main sample. Their 2017--2018 goal totals $X_{1:n}=(X_{1},\ldots,X_{n})$ are used to construct the EB estimates of the target sums, whereas their 2018--2019 totals $Y_{1:n}=(Y_{1},\ldots,Y_{n})$ are not used at any stage of the estimation procedure and serve only for out-of-sample validation. Accordingly, the two-season structure of the data provides a natural assessment of the predictive performance of the proposed procedures.

Given the main sample $X_{1:n}$, we consider the predictive sums $S^{\text{\tiny{[Pred]}}}_{n,\kappa}$ and $S^{\text{\tiny{[Pred]}}}_{n,>}$, defined in terms of the corresponding 2018--2019 goal totals $Y_{1:n}$, where $\kappa\in\integers_+$ is a fixed threshold. Specifically,
\begin{displaymath}
S^{\text{\tiny{[Pred]}}}_{n,\kappa}=\sum_{i=1}^{n}Y_i I(X_i\leq\kappa)
\end{displaymath}
and
\begin{displaymath}
S^{\text{\tiny{[Pred]}}}_{n,>}=\sum_{i=1}^{n}I(Y_i<X_i).
\end{displaymath}
Thus, $S^{\text{\tiny{[Pred]}}}_{n,\kappa}$ is the total number of goals scored in the 2018--2019 season by players in the main sample whose 2017--2018 goal totals do not exceed $\kappa$, whereas $S^{\text{\tiny{[Pred]}}}_{n,>}$ is the number of main-sample players whose goal total decreases from the 2017--2018 to the 2018--2019 season; see \citet[Section 2.3]{Zha(05)}.

From the training sample $X'_{1:m}$, we compute the regular minimum-$d$-distance estimate $\widehat G_{d,m}$ and its $T$-coarsened  minimum-$d$-distance estimate $\widetilde G_{d,m}^{(T)}$, for both the Hellinger (H) distance and the Kullback--Leibler (KL) divergence. These estimates are then plugged into the Bayes rules and evaluated on the independent main sample $X_{1:n}$, in analogy with what is done for $S_{n,\kappa}$ and $S_{n,>}$. For a generic estimate $\widehat G$ of the mixing distribution, the resulting estimate of $S^{\text{\tiny{[Pred]}}}_{n,\kappa}$ is
\begin{displaymath}
\hat{S}^{\text{\tiny{[Pred]}}}_{n,\kappa}(\widehat G)=\sum_{i=1}^nI(X_i\leq\kappa)\int_{\Theta}\theta\,\widehat G(d\theta\mid X_i),
\end{displaymath}
where
\begin{displaymath}
\widehat G(d\theta\mid X_i)=\frac{e^{-\theta}\theta^{X_i}\widehat G(d\theta)}{\int_{\Theta}e^{-t}t^{X_i}\widehat G(dt)}.
\end{displaymath}
Similarly, the estimate of $S^{\text{\tiny{[Pred]}}}_{n,>}$ is
\begin{displaymath}
\hat{S}^{\text{\tiny{[Pred]}}}_{n,>}(\widehat G)=\sum_{i=1}^n\int_{\Theta}\left\{\sum_{y=0}^{X_i-1}e^{-\theta}\frac{\theta^y}{y!}\right\}
\widehat G(d\theta\mid X_i),
\end{displaymath}
where the inner sum is zero when $X_i=0$. Therefore, we evaluate these expressions with $\widehat G=\widehat G_{d,m}$ and $\widehat G=\widetilde G_{d,m}^{(T)}$ to obtain the regular and $T$-coarsened minimum-$d$-distance EB estimates, respectively.

We compute estimates of $S^{\text{\tiny{[Pred]}}}_{n,\kappa}$ for $\kappa\in\{2i:i=0,1,\ldots,\lfloor \max_{1\leq j\leq n}X_j/2\rfloor\}$, as well as an estimate of $S^{\text{\tiny{[Pred]}}}_{n,>}$. These estimates are compared with: (i) the parametric Bayes and maximum-likelihood estimates of Proposition~\ref{prop:poisson}, with the unknown parameter estimated from the same training sample $X'_{1:m}$ used by the minimum-distance procedures; and (ii), for $S^{\text{\tiny{[Pred]}}}_{n,\kappa}$, the nonparametric ``$u,v$'' estimate, which is computed directly from the main sample $X_{1:n}$ and does not require a training sample. As discussed above, an explicit ``$u,v$'' estimate is not available for $S^{\text{\tiny{[Pred]}}}_{n,>}$. For the $T$-coarsened minimum-$d$-distance procedures, we choose $T$ according to the same rule used in the synthetic-data experiments, which in the present dataset gives $T=50$. Figures~\ref{fig_hock_s1_reg}--\ref{fig_hock_s1_coa} report the regular and $T$-coarsened minimum-$d$-distance estimates of $S^{\text{\tiny{[Pred]}}}_{n,\kappa}$, respectively, as a function of the threshold $\kappa$. Figures~\ref{fig_hock_s3_reg}--\ref{fig_hock_s3_coa} report the regular and $T$-coarsened minimum-$d$-distance estimates of $S^{\text{\tiny{[Pred]}}}_{n,>}$. All estimates, as well as the true values, are normalized by the main-sample size $n$. The figures also report the absolute deviations (ADs) of the normalized estimates from the corresponding true normalized values.

\begin{figure}[H]
\centering
\includegraphics[width=1\textwidth]{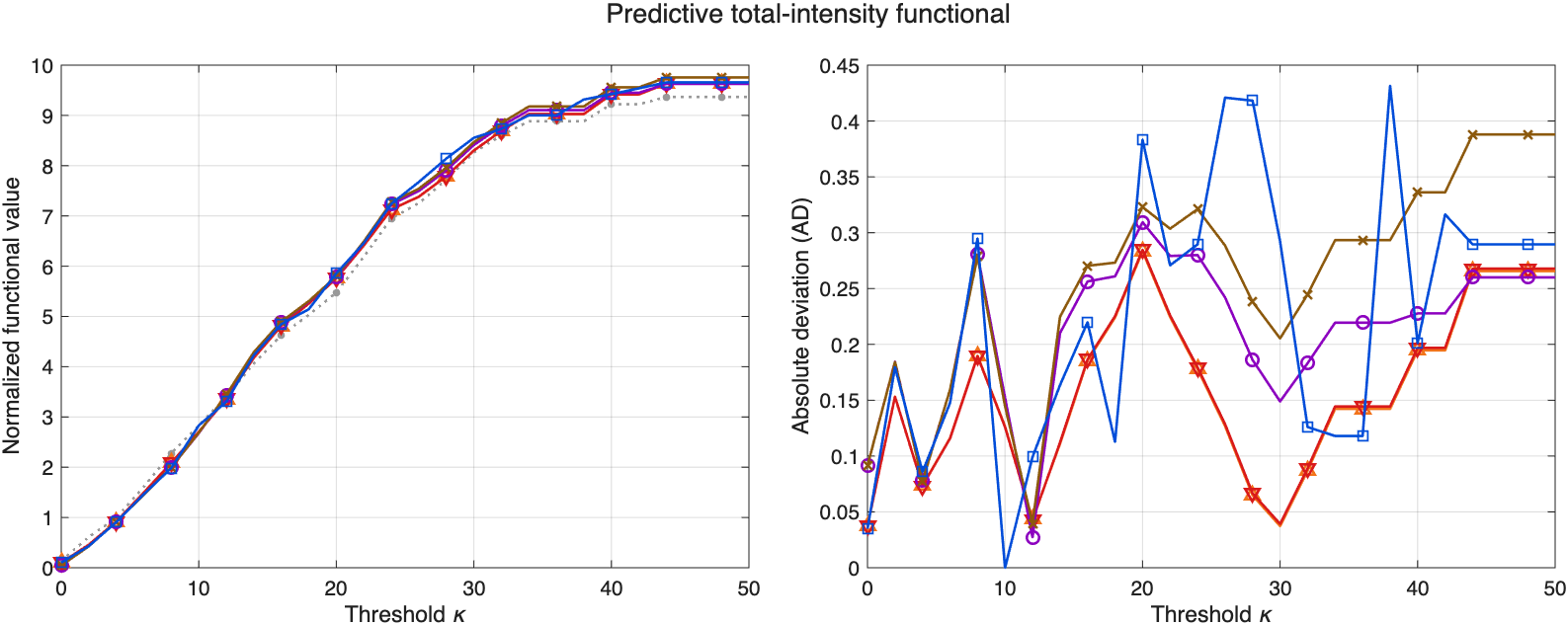}
\caption{\scriptsize{Predictive total-intensity sum for the NHL data: regular minimum-$d$-distance estimation. Left: true normalized sum values of $S^{\text{\tiny{[Pred]}}}_{n,\kappa}$ (gray $\bullet$) and normalized estimates as a function of the threshold $\kappa$. The parametric Bayes estimate is shown by an orange line with $\triangle$, the parametric maximum-likelihood estimate by a red line with $\triangledown$, the minimum-H- and minimum-KL-distance estimates by purple $\circ$ and brown $\times$, respectively, and the nonparametric ``$u,v$'' estimate by a blue line with $\square$. Right: absolute deviation (AD) of the normalized estimates from the true normalized values of $S^{\text{\tiny{[Pred]}}}_{n,\kappa}$.}}
\label{fig_hock_s1_reg}
\end{figure}

\begin{figure}[H]
\centering
\includegraphics[width=1\textwidth]{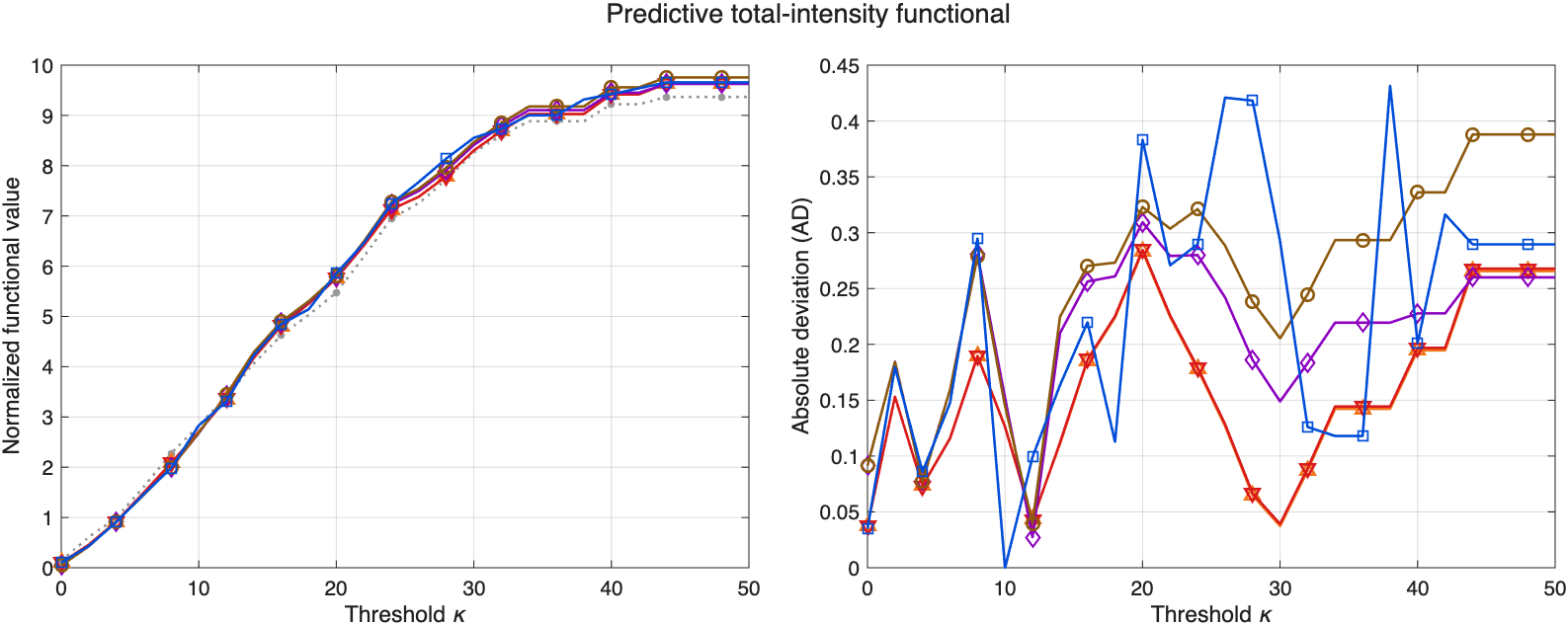}
\caption{\scriptsize{Predictive total-intensity sum for the NHL data: $T$-coarsened minimum-$d$-distance estimation. Left: true normalized values of $S^{\text{\tiny{[Pred]}}}_{n,\kappa}$ (gray $\bullet$) and normalized estimates as a function of the threshold $\kappa$. The parametric Bayes estimate is shown by an orange line with $\triangle$, the parametric maximum-likelihood estimate by a red line with $\triangledown$, the $T$-coarsened minimum-H- and minimum-KL-distance estimates by purple $\circ$ and brown $\times$, respectively, and the nonparametric ``$u,v$'' estimate by a blue line with $\square$. Right: absolute deviation (AD) of the normalized estimates from the true normalized values of $S^{\text{\tiny{[Pred]}}}_{n,\kappa}$.}}
\label{fig_hock_s1_coa}
\end{figure}

\begin{figure}[H]
\centering
\includegraphics[width=1\textwidth]{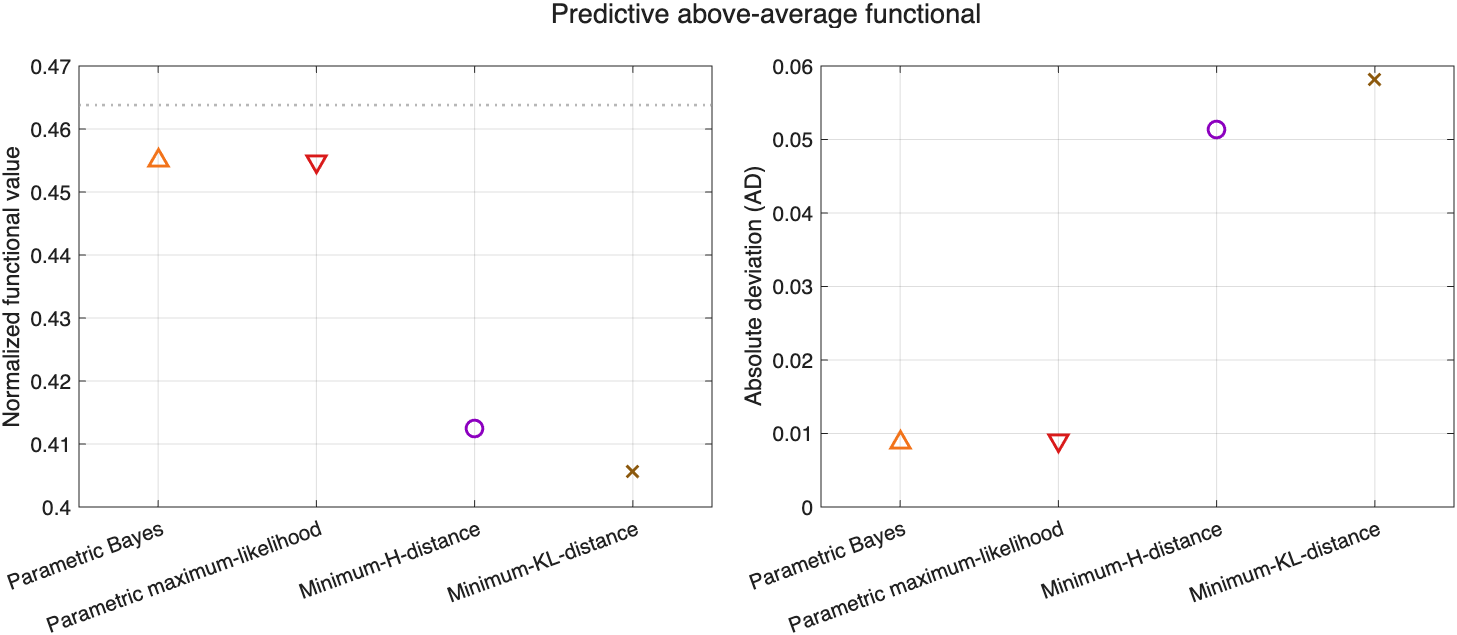}
\caption{\scriptsize{Predictive above-average sum for the NHL data: regular minimum-$d$-distance estimation. Left: true normalized value of $S^{\text{\tiny{[Pred]}}}_{n,>}$ and normalized parametric Bayes, parametric maximum-likelihood, minimum-H-distance, and minimum-KL-distance estimates. Right: absolute deviation (AD) of the normalized estimates from the true normalized value of $S^{\text{\tiny{[Pred]}}}_{n,>}$.}}
\label{fig_hock_s3_reg}
\end{figure}

\begin{figure}[H]
\centering
\includegraphics[width=1\textwidth]{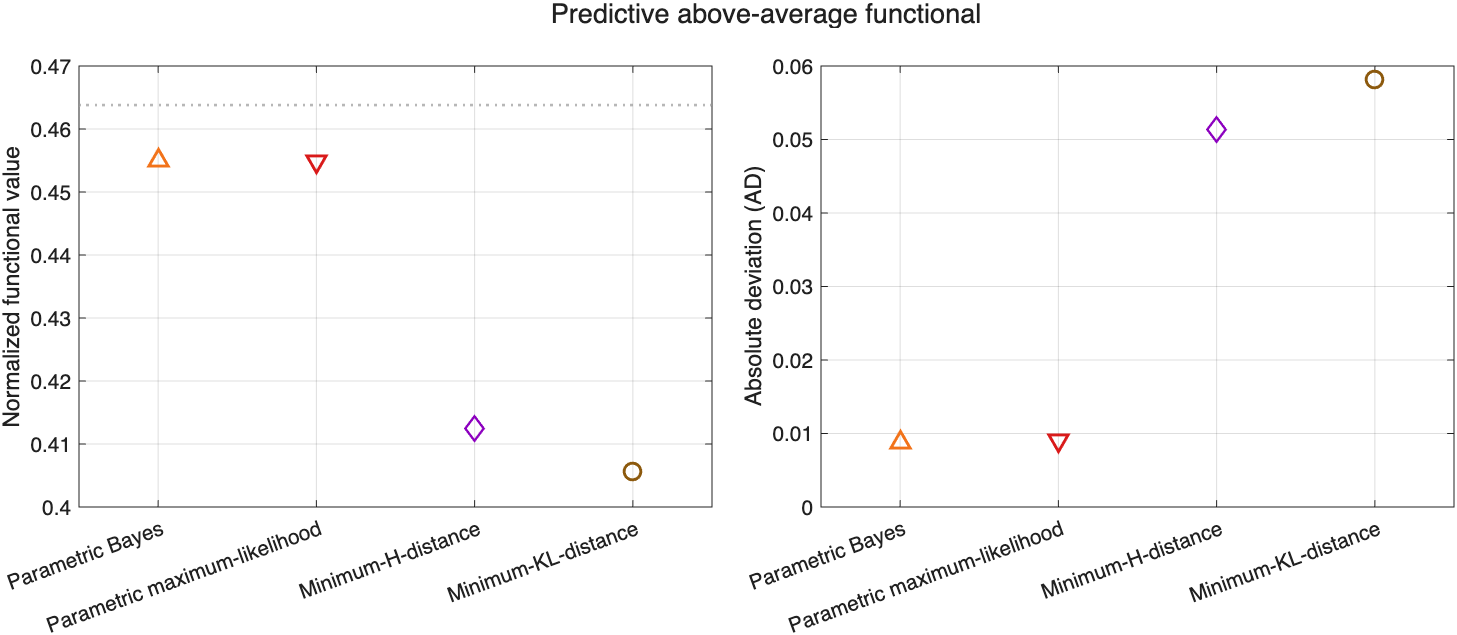}
\caption{\scriptsize{Predictive above-average sum for the NHL data: $T$-coarsened minimum-$d$-distance estimation. Left: true normalized value of $S^{\text{\tiny{[Pred]}}}_{n,>}$ and normalized parametric Bayes, parametric maximum-likelihood, $T$-coarsened minimum-H-distance, and $T$-coarsened minimum-KL-distance estimates. Right: absolute deviation (AD) of the normalized estimates from the true normalized value of $S^{\text{\tiny{[Pred]}}}_{n,>}$.}}
\label{fig_hock_s3_coa}
\end{figure}


\section{Concluding remarks}\label{sec6}

The nonparametric estimation of sums of functions of observable and unobservable random variables remains considerably less developed than empirical Bayes estimation of individual latent variables. The frequentist theory of \citet{Zha(05)} provides asymptotic information bounds and efficiency results, including for nonparametric mixture models, while recent quasi-Bayesian developments provides large-sample guarantees for recursive empirical Bayes procedures. A genuinely nonparametric Bayesian theory for this problem is, to the best of our knowledge, still unavailable, and finite-sample frequentist guarantees have received little attention. The present paper provides a first step toward filling this gap. Beyond establishing large-sample guarantees for a broad class of sums, we develop a non-asymptotic empirical Bayes theory for two representative sums, deriving finite-sample regret bounds relative to the oracle Bayes estimator and complementary minimax lower bounds. In particular, the markedly different behavior of the total-intensity and above-average sums further shows that the difficulty of estimating sums depends critically on the structure of the sum under consideration.

Several questions remain open. For the total-intensity sum, our upper and minimax lower regret bounds match up to logarithmic factors. It would be interesting to determine whether this gap is intrinsic or merely reflects the techniques used in the upper-bound analysis. In view of the sharp regret techniques developed by \citet{Pol(21)} for nonparametric empirical Bayes estimation in Poisson models, adapting similar arguments to the estimation of sums may lead to exact minimax regret rates and remove the remaining logarithmic factors.

The above-average sum raises a different set of questions. Here, the minimax lower bound shows that additional structure is necessary to obtain regret bounds at the parametric scale. Our upper bounds rely on suitable stability conditions that allow to capture such additional structure, but these conditions are not straightforward to verify analytically for general mixing distributions. The fact that they can be established directly for finitely supported mixing distributions provides only a first sufficient setting. One direction is therefore to identify broader and more easily verifiable classes of mixing distributions for which the required stability holds. A different approach is to incorporate stability directly into the estimation procedure, for instance through constrained or penalized minimum-distance estimation. Such a construction could potentially yield optimal regret guarantees without imposing difficult-to-check stability conditions on the resulting estimator, and may provide the basis for a more general non-asymptotic theory of empirical Bayes estimation for non-smooth sums.

More broadly, the contrast between the total-intensity and above-average sums suggests the need for a general finite-sample theory relating properties of the utility function $u$ to the difficulty of estimating sums $S_n=\sum_{i=1}^n u(X_i,\theta_i)$. While our large-sample results apply to a broad class of sums of random variables, the non-asymptotic analysis developed here is necessarily sum-specific. Identifying structural properties of $u$ that determine attainable minimax regret rates, and developing estimation procedures that adapt to these properties, would extend the present results beyond the two sums considered here and provide a more systematic non-asymptotic theory for empirical Bayes estimation of sums of random variables.


\clearpage

\appendix

\section{Proofs of Section \ref{sec1}}
	
	\subsection{Proof of \prettyref{thm:exist-genr}}
	\label{app:existence}

	\subsubsection{Regular procedure}
	\label{app:exist-regular}
	For any $\theta\in\reals_+$ let $\delta_\theta$ denote the Dirac measure at $\theta$. Suppose that the training sample used to compute
	\eqref{eq:optim-genr} is $X'_1,\dots,X'_m$ and define $X'_{(m)}=\max_{j=1}^m X'_j$. Denote the set
	\begin{align}
		\{y_1,\dots,y_k\}
		=
		\{0 \le y \le X'_{(m)} : \widehat p_m(y) > 0\},\quad
		y_1<y_2<\dots<y_k.
	\end{align} 
	%
    Define the set
	\begin{align}\label{eq:EB_mS-genr}
		S\triangleq\sth{f_{Q}(y): y\in \{y_1,\dots,y_k\}: Q\in \mathcal C},
	\end{align} 
	We claim that $S$ is convex and compact.
	The convexity follows from definition. For compactness, note that $S$ is bounded since $\sup_{\theta\geq 0} f_\theta(y) = e^{-y} y^y/y!$, so it suffices to check $S$ is closed. Let 
	$(f_1',\dots,f_{k}')\in\reals_+^{k}$ be the limiting point of $\sth{f_{Q_k}(y): y\in \{y_1,\dots,y_k\}}$ for some sequence 
	$\{Q_k\}$ in $\mathcal C$.  The sequence of distributions $\{Q_k\}_{k\geq 1}$ is tight since, for every $k\geq 1$, $Q_k\in \mathcal C\subseteq \subexpo(s)$, and $\subexpo(s)$ is a tight class. By Prokhorov's theorem, there is a subsequence $\{Q_{k_\ell}\}$ that converges weakly to some $Q'$, and since $\mathcal C$ os closed $Q'\in\mathcal C$. Then, by property of the weak convergence, as $\theta \mapsto f_\theta(y)$ is bounded and continuous, we get that $f_{Q_{k_\ell}}(y)\to f_{Q'}(y)$ for each $y\in \{y_1,\dots,y_K\}$. 
	Since $\theta \mapsto f_\theta(y)$ is continuous and bounded, we have $f_j'=f_{Q'}(y_j)$ for all $j=1,\dots,k$. In other words, $S$ is closed.
	
	Define $v:S \to \reals$ by $v(f_1,\dots,f_k)=\sum_{i=1}^k
	{\ell( \widehat p_m(y_i),f_i)}$.
	By \prettyref{asmp:EB_t-ell}, the value of the min-distance optimization can be written as
	\begin{equation}
		\min_{Q\in\mathcal C} d(\widehat p_m\|f_{Q}) = t(\widehat p_m) + \min_{(f_1,\ldots,f_k) \in S} v(f_1,\ldots,f_k),
		\label{eq:mindist-equiv}
	\end{equation}
	where the equality holds since, by assumption $\ell(0,b)\equiv 0$. Since $b\mapsto \ell(a,b)$ is strictly convex for $a>0$, then $v$ is strictly convex.
	Therefore, there exists a unique point $(f_1^*,\dots,f_k^*) \in S$ that achieves the minimum on the right side of \prettyref{eq:mindist-equiv}. 
	Thus, 
	the left side has a minimizer $\hatG\in\mathcal C$ that satisfies 	
	\begin{align}\label{eq:EB_m4}
		 f_{\hatG}(y_i)=f_i^*,
		\quad i=1,\dots,k.
	\end{align}

	\subsubsection{Coarsened procedure}
	For any $\theta\in\reals_+$ let $\delta_\theta$ denote the Dirac measure at $\theta$. Denote the set
	\begin{align}
		\{y_1,\dots,y_k\}
		=
		\{0 \le y \le T : \widehat p_m(y) > 0\},
		y_1<y_2<\dots<y_k,
		\quad
		\calJ_T
		=\sth{y_1,\dots,y_k,T+1}.
	\end{align} 
    Define
	\begin{align}\label{eq:EB_mS}
		S\triangleq\sth{\tilde f^{(T)}_{Q}(y): y\in \calJ_T: Q\in \mathcal C},
	\end{align} 
	where $\tilde f^{(T)}_Q(y) = \Expect_{\theta \sim Q}[\tilde f^{(T)}_\theta(y)]$ is the binned-probability mass defined in \eqref{eq:coarsening}.
	We claim that $S$ is convex and compact.
	The convexity follows from definition. For compactness, note that $S$ is bounded since $\sup_{\theta\geq 0} f_\theta(y) = e^{-y} y^y/y!$, so it suffices to check $S$ is closed. Let 
	$(f_1',\dots,f_{k}',f_{T+1}')\in\reals_+^{k+1}$ be the limiting point of $\sth{\tilde f^{(T)}_{Q_k}(y): y\in \calJ_T}$ for some sequence 
	$\{Q_k\}$ in $\mathcal C$.  The sequence of distributions $\{Q_k\}_{k\geq 1}$ is tight since, for every $k\geq 1$, $Q_k\in \mathcal C\subseteq \subexpo(s)$,  and $\subexpo(s)$ is a tight class. 	By Prokhorov's theorem, there is a subsequence $\{Q_{k_\ell}\}$ that converges weakly to some $Q'\in \mathcal C$.
	Since $\theta \mapsto f_\theta(y)$ is continuous and bounded, we have $f_j'=\tilde f^{(T)}_{Q'}(y_j)$ for all $j=1,\dots,k$ and $f_{T+1}'=\tilde f^{(T)}_{Q'}(T+1)$. This implies that, $S$ is closed.
	
	Define $v:S \to \reals$ by $v(f_1,\dots,f_k,f_{T+1})=\ell(\tilde p^{(T)}(T+1),f_{T+1})+
	\sum_{i=1}^k
	{\ell(\tilde p^{(T)}(y_i),f_i)}$.
	By \prettyref{asmp:EB_t-ell}, the value of the min-distance optimization can be written as
	\begin{equation}
		\min_{Q\in\mathcal C} d(\widetilde{\widehat p}^{(T)}_m\|\tilde f^{(T)}_{Q}) = t(\widetilde{\widehat p}^{(T)}_m) + \min_{(f_1,\ldots,f_k,f_{T+1}) \in S} v(f_1,\ldots,f_k,f_{T+1}),
		\label{eq:mindist-equiv-coarse}
	\end{equation}
	where the last equality holds by assumption $\ell(0,b)\equiv 0$. Since $b\mapsto \ell(a,b)$ is strictly convex for $a>0$, then $v$ is strictly convex.
	Therefore, there exists a unique point $(f_1^*,\dots,f_k^*,f_{T+1}^*) \in S$ that achieves the minimum on the right side of \prettyref{eq:mindist-equiv-coarse}. 
	Thus, 
	the left side has a minimizer $\hatG\in\mathcal C$ that satisfies 	
	\begin{align}
		\tilde f^{(T)}_{\hatG}(y_i)=f_i^*,
		\quad i=1,\dots,k,\quad 
		\tilde f^{(T)}_{\hatG}(T+1)=f_{T+1}^*.
	\end{align}


\subsection{Proof of \prettyref{lem:EB_density-H2}}\label{prooflemma21}

The proof is similar to the proof of \cite[Theorem 2]{jana2025optimal}. In particular, we outline below the key steps of the proof, which will be used later for proving our new some results. We first note the following lemmas about tails bounds on Poisson mixtures from \cite{jana2025optimal}.

\begin{lemma}\label{lmm:bounded_prior_properties}
	Let $h>0$ and $\Gstar\in\calP([0,h])$. If $\sth{Z_i}_{i=1}^m\iiddistr f_{\Gstar}$, then the following are satisfied given  any $t\geq 0,a\geq 1, m\geq 3$ and $K=\min\sth{{a(he^2+2)\log m\over \log\log m},he^2+a\log m}$.
	$$ \PP\qth{Z_1>K+t}\leq {2e^{-t}\over m^a},\quad \EE[Z_{(m)}^4]\leq 5K^4+ 8.
	$$
\end{lemma}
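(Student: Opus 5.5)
The plan is to handle the two claims separately: a Chernoff-type tail bound for a single $Z_1$, followed by a union bound plus integration for the fourth moment of the maximum. Throughout, write $Z\sim f_{\Gstar}$, so that conditionally on $\theta\sim\Gstar$ we have $Z\mid\theta\sim\mathrm{Poisson}(\theta)$ with $\theta\le h$. Poisson tails are stochastically increasing in the mean, so it suffices to bound $\PP[\mathrm{Poisson}(h)>K+t]$ uniformly.

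\emph{Tail bound.} For any $\lambda>0$, Markov's inequality applied to $e^{\lambda Z}$ gives
\[
\PP[Z>K+t]\le \exp\{h(e^\lambda-1)-\lambda(K+t)\}.
\]
The two expressions inside the minimum defining $K$ correspond to two choices of $\lambda$.

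First take $\lambda=2$. The bound becomes $\exp\{h(e^2-1)-2K-2t\}$. If $K=he^2+a\log m$, then $2K\ge h(e^2-1)+a\log m+\log 2$ (using $2a\log m\ge a\log m+\log 2$ for $m\ge3$). Since also $e^{-2t}\le e^{-t}$, this yields $2e^{-t}m^{-a}$, in fact with room to spare.

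Next take $\lambda=\log\log m$. The bound becomes $\exp\{h\log m-(K+t)\log\log m\}$. With $K=a(he^2+2)\log m/\log\log m$,
\[
K\log\log m=a(he^2+2)\log m\ge h\log m+a\log m+\log 2,
\]
since $a\ge1$ and $m\ge3$. We also need $e^{-t\log\log m}\le e^{-t}$, which requires $\log\log m\ge1$, i.e. $m\ge e^e\approx15$.

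For $3\le m<16$ I would fall back on $\lambda=1$, or simply note that then $\log\log m<1$, so $a(he^2+2)\log m/\log\log m\ge he^2+a\log m$ and the minimum is the first branch, already handled by $\lambda=2$. This small-$m$ bookkeeping is the one fiddly point: the constants $e^2$ and $2$ in $K$ are chosen exactly so that each branch absorbs $\log 2$ and the $h(e^\lambda-1)$ term. I would verify those inequalities carefully rather than grind elsewhere.

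\emph{Fourth moment of the maximum.} By a union bound, $\PP[Z_{(m)}>K+t]\le m\cdot 2e^{-t}m^{-a}\le 2e^{-t}$ for $a\ge1$. Then
\[
Z_{(m)}^4\le 8K^4+8(Z_{(m)}-K)_+^4
\]
by $(x+y)^4\le 8(x^4+y^4)$. Integrating the tail gives
\[
\EE(Z_{(m)}-K)_+^4=\int_0^\infty 4t^3\,\PP[Z_{(m)}>K+t]\,dt\le 48.
\]
This crude route gives a constant larger than the stated $5K^4+8$. To reach exactly $5K^4+8$, I would split more carefully: use $(K+y)^4\le(1+\eta)^3K^4+(1+1/\eta)^3y^4$ with a suitable $\eta$, and exploit the fact that $a$ can be taken as $2$ inside the union bound, giving the extra factor $m^{-1}\le1/3$. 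That factor shrinks the tail integral enough to fit the constant $8$. Matching the specific constants $5$ and $8$ is the main obstacle, since the structure of the argument is routine. If the exact constants resist, I would cite the corresponding statement in \cite{jana2025optimal}, from which the lemma is quoted.
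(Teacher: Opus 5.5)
Your Chernoff argument for the tail bound is correct. That includes your observation that for $3\le m\le 15$ the minimum defining $K$ is always $he^2+a\log m$. The paper gives no proof of this lemma: it imports both bounds from \cite{jana2025optimal}, so your citation fallback is exactly what the paper does.

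The self-contained route you outline for the fourth moment, however, does not close, for three reasons.
\begin{itemize}
\item Invoking the tail bound ``with $a=2$'' is not available. $K$ is increasing in $a$, so that bound concerns a larger $K$ than the one built from the given $a$. Moreover $a=1$ is the strongest instance of the claim, and there the stated tail plus a union bound only gives $\mathbb{P}[Z_{(m)}>K+t]\le 2e^{-t}$.
\item Even granting a prefactor $\tfrac13$, the split $(K+y)^4\le(1+\eta)^3K^4+(1+1/\eta)^3y^4$ with $(1+\eta)^3\le 5$ forces $(1+1/\eta)^3$ to be at least about $14$. An $e^{-t}$ tail gives $\mathbb{E}(Z_{(m)}-K)_+^4$ up to $8$, so the additive constant comes out above $100$.
\item More fundamentally, the stated tail is too weak here. $K$ can be as small as about $\log 3$ (let $h\downarrow 0$ with $m=3$, $a=1$). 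Integrating $2e^{-t}$ against $4(K+t)^3$ then gives a bound near $140$, against $5K^4+8\approx 15$.
\end{itemize}

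The missing idea is that your own Chernoff computation has slack that the stated tail discards.
\begin{itemize}
\item With $\lambda=2$ and $K=he^2+a\log m$, the exponent is exactly $-h(e^2+1)-2a\log m-2t$.
\item With $\lambda=\log\log m\ge 1$ (i.e.\ $m\ge 16$) and $K=a(he^2+2)\log m/\log\log m$, the exponent is at most $h\log m-a(he^2+2)\log m-t\le -2a\log m-t$.
\end{itemize}
Hence $\mathbb{P}[Z_1>K+t]$ is at most $m^{-2a}e^{-2t}$ in the first case and $m^{-2a}e^{-t}$ in the second. The union bound then leaves a factor $m^{1-2a}\le m^{-1}$ at the same $K$, for every $a\ge 1$.

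Then integrate directly rather than splitting $(K+y)^4$:
\[
\mathbb{E}Z_{(m)}^4\le\int_0^K 4x^3\,dx+\int_0^\infty 4(K+t)^3\,\mathbb{P}[Z_{(m)}>K+t]\,dt .
\]
\begin{itemize}
\item With $\mathbb{P}[Z_{(m)}>K+t]\le\tfrac13 e^{-2t}$, this gives $K^4+\tfrac23K^3+K^2+K+\tfrac12$.
\item With $\mathbb{P}[Z_{(m)}>K+t]\le\tfrac1{16}e^{-t}$, it gives $K^4+\tfrac14(K^3+3K^2+6K+6)$.
\end{itemize}
Since $K\ge\log 3>1$ in all cases, both are at most $5K^4+8$.

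The $e^{-2t}$ decay in the branch $K=he^2+a\log m$ is genuinely needed. With only $\tfrac13e^{-t}$ one gets $K^4+\tfrac43K^3+4K^2+8K+8$, which exceeds $5K^4+8$ near $K=\log 3$.
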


\begin{lemma}\label{lmm:subexpo_properties}
	Given any $s>0$ and $\Gstar\in\subexpo(s)$, the following are satisfied.
	\begin{enumerate}[label=(\roman*)~]
		\item \label{pt:i} If $\theta\sim {\Gstar}$, then $\EE[\theta^4]\leq 12s^4$.
		\item \label{pt:ii} If $\sth{Z_i}_{i=1}^m\iiddistr f_{\Gstar}$, then for any $K\geq 1$
		\begin{align}
			\label{eq:EB_prop_mixture}
			\PP\qth{Z_1\geq K}
			\leq 3 e^{-K\log\pth{1+\frac 1{2s}}},
			\quad
			\EE\qth{Z_{(m)}^4}\leq {64(\log m)^4+90\over \pth{\log\pth{1+\frac 1{2s}}}^4}.
		\end{align}
	\end{enumerate}
\end{lemma}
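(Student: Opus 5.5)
Both parts follow from the layer-cake formula $\EE[W^4]=\int_0^\infty 4t^3\,\PP(W>t)\,dt$ combined with the tail condition defining $\subexpo(s)$. Part \ref{pt:ii} additionally needs a Chernoff bound for the Poisson mixture, because the Poisson kernel inflates the tail of $\theta$. The lemma is quoted from \cite{jana2025optimal}, so the plan reconstructs that argument. I expect part \ref{pt:ii} to go through as stated. For part \ref{pt:i}, however, I do not see how the constant $12$ can be reached, and I believe it is not attainable from the tail condition alone.

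For part \ref{pt:i}, write $\EE[\theta^4]\le\int_0^\infty 4t^3\min\{1,2e^{-t/s}\}\,dt$ and split the integral at $t_0=s\log 2$. The lower piece contributes $(s\log 2)^4$. The upper piece equals $8s^4\,\Gamma(4,\log 2)$, where $\Gamma(\cdot,\cdot)$ is the upper incomplete gamma function, and $\Gamma(4,\log 2)\approx 6$. Together these give roughly $48s^4$, not $12s^4$.

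This bound cannot be improved by a sharper argument. The distribution with $\PP(\theta>t)=\min\{1,2e^{-t/s}\}$ belongs to $\subexpo(s)$ and attains it, so it contradicts the stated $12s^4$. I would therefore first check the convention for $\subexpo(s)$ used in \cite{jana2025optimal}, since a different scaling of $s$ in the tail condition would change the constant. Absent such a convention, I would replace $12$ by $48$, with the caveat that this alters the statement as worded. In all the later uses only $\EE[\theta^4]\lesssim s^4$ matters.

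For the tail bound in part \ref{pt:ii}, set $z=1+\frac1{2s}$. Conditionally on $\theta$, Markov's inequality applied to $z^{Z_1}$ gives $\PP(Z_1\ge K\mid\theta)\le z^{-K}e^{\theta(z-1)}=z^{-K}e^{\theta/(2s)}$. Taking expectations, I need to bound $\EE[e^{\theta/(2s)}]$. Integrating by parts against the tail bound,
\[
\EE\bigl[e^{\theta/(2s)}\bigr]=1+\int_0^\infty \tfrac1{2s}e^{t/(2s)}\,\PP(\theta>t)\,dt\le 1+\int_0^\infty \tfrac1{s}e^{-t/(2s)}\,dt=3.
\]
This yields $\PP(Z_1\ge K)\le 3e^{-K\log(1+1/(2s))}$, as claimed.

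For the maximum, a union bound gives $\PP(Z_{(m)}>t)\le\min\{1,3m e^{-ct}\}$ with $c=\log(1+\frac1{2s})$. Split the layer-cake integral at $t_0=\log(3m)/c$. The first piece is $t_0^4$. The second piece equals $12m\int_{t_0}^\infty t^3e^{-ct}\,dt=4\,c^{-4}\,\Gamma(4,ct_0)/e^{-ct_0}$, which is $4c^{-4}\bigl((ct_0)^3+3(ct_0)^2+6ct_0+6\bigr)$. The main bookkeeping is then to bound $\log(3m)$ by a multiple of $\log m$ (for instance $\log(3m)\le 2\log m$ once $m\ge3$, with $m=2$ checked directly) and to absorb the lower-order powers via $a^3\le\frac34a^4+\frac14$ and similar elementary inequalities. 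This should produce the constants $64$ and $90$ in the stated bound $\EE[Z_{(m)}^4]\le(64(\log m)^4+90)/c^4$.
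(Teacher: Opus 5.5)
Your diagnosis of part (i) is correct, and you can state it without hedging. The paper defines $\subexpo(s)$ explicitly by $G((t,\infty))\le 2e^{-t/s}$, so no alternative convention applies, and (i) is false as stated. The exponential distribution with mean $s$ already lies in $\subexpo(s)$ and has $\EE[\theta^4]=24s^4$. The sharp constant is $(\log 2)^4+8\Gamma(4,\log 2)\approx 47.96$, attained by your shifted exponential. The paper gives no proof of this lemma; it imports it from \cite{jana2025optimal}. Your Chernoff argument for the tail bound in (ii) is the same one the paper uses in the proof of Lemma~\ref{lem:tail-X-exp-nocoarsen}.

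The gap is in the last step for $\EE[Z_{(m)}^4]$: your bookkeeping cannot produce the constants $64$ and $90$ when $m=2$ or $m=3$. Your layer-cake computation evaluates $\int_0^\infty 4t^3\min\{1,3me^{-ct}\}\,dt$ exactly, as $c^{-4}(L^4+4L^3+12L^2+24L+24)$ with $L=\log(3m)$. Rearranging it therefore gains nothing.
\begin{itemize}
\item At $m=2$ this gives about $138.8\,c^{-4}$, against the claimed $(64(\log 2)^4+90)c^{-4}\approx 104.8\,c^{-4}$.
\item At $m=3$ it gives about $200.4\,c^{-4}$, against $183.2\,c^{-4}$.
\item For $m=1$ and $m\ge 4$ it works; for example, $257.2\le 326.4$ at $m=4$. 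However, the substitution $\log(3m)\le 2\log m$ you propose is already slightly too lossy at $m=4$.
\end{itemize}
Using integrality of $Z_{(m)}$, or the exact tail $1-(1-3e^{-ct})^m$, does not help when $s\to\infty$ (so $c\to 0$), because the integer effects vanish there. At $m=2$ the exact binomial tail still gives about $130.7\,c^{-4}$. So no argument that uses only the tail bound $3e^{-cK}$ reaches the stated constants for small $m$; you must go back to the mixing distribution. For $m\le 3$, use $\EE[Z_{(m)}^4]\le m\,\EE[Z_1^4]$ together with three facts: $\EE[Z_1^4\mid\theta]=\theta^4+6\theta^3+7\theta^2+\theta$, $\EE[\theta^k]\le 2\,k!\,s^k$, and $c\le 1/(2s)$. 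Maximizing each resulting term over $s$ gives $c^4\,\EE[Z_1^4]\le 13$ uniformly in $s$. Hence $\EE[Z_{(m)}^4]\le 39\,c^{-4}\le 90\,c^{-4}$ for $m\le 3$. Combined with your union-bound computation for $m\ge 4$, this proves (ii) as stated.
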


 We next prove a general inequality.
\begin{lemma}\label{lmm:hel-bound-genr}
	Suppose that $Z_1,\dots,Z_m\simiid f$ for some distribution $f\in \calP(\integers)$ and fix $J\geq 1$. Let $\hat q_m$ be the empirical distribution of $Z_1,\dots,Z_m$, and $\hat f \in \calP(\integers)$ satisfies
	$
	d(\hat q_m\|\hat f)\leq d(\hat q_m\|f).
	$ 
	Then we have 
	$$\EE_f\!\left[H^2(\hat f,f)\right]\le\frac{4c_2}{c_1}\,\frac{J}{m}+\Big(\frac{4c_2}{c_1}+2m\Big)\PP_{Z\sim  f}[Z\geq J].$$
\end{lemma}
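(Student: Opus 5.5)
The plan is to split $H^2(\hat f,f)$ into a head part on $\{0,\dots,J-1\}$ and a tail part on $\{J,J+1,\dots\}$. The head part is handled through the sandwich \prettyref{asmp:dist_sandwich}; the tail part is handled crudely, using that it is nonzero only when some sample lands in the tail.

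\textbf{Deterministic inequality.} By the triangle inequality for the Hellinger distance, $(a+b)^2\le 2a^2+2b^2$, and $H^2(q,f)\le d(q\|f)/c_1$,
\[
H^2(\hat f,f)\le 2H^2(\hat f,\hat q_m)+2H^2(\hat q_m,f)\le \frac{2}{c_1}\bigl(d(\hat q_m\|\hat f)+d(\hat q_m\|f)\bigr)\le \frac{4}{c_1}d(\hat q_m\|f)\le \frac{4c_2}{c_1}\chi^2(\hat q_m\|f).
\]
Here I use $d(\hat q_m\|\hat f)\le d(\hat q_m\|f)$ and the upper half of the sandwich. (If the paper's convention puts $\hat q_m$ in the second slot of $H^2$, this is the same by symmetry of $H^2$.)

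\textbf{Localization on $E=\{\max_i Z_i<J\}$.} Taking expectations of $\chi^2(\hat q_m\|f)$ directly diverges over the whole support. So I work on $E$, where $\hat q_m$ is supported on $\{0,\dots,J-1\}$. There
\[
\chi^2(\hat q_m\|f)=\sum_{y<J}\frac{(\hat q_m(y)-f(y))^2}{f(y)}+\sum_{y\ge J}f(y)
\le \sum_{y<J}\frac{(\hat q_m(y)-f(y))^2}{f(y)}+\PP_{Z\sim f}[Z\ge J].
\]
Using $\EE[(\hat q_m(y)-f(y))^2]=f(y)(1-f(y))/m$, the first sum has expectation at most $J/m$. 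This gives
\[
\EE\bigl[H^2(\hat f,f)I_E\bigr]\le \frac{4c_2}{c_1}\Bigl(\frac{J}{m}+\PP[Z\ge J]\Bigr).
\]

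\textbf{Complement of $E$.} Since $H^2\le 2$ always, and by a union bound $\PP(E^c)\le m\PP[Z\ge J]$,
\[
\EE\bigl[H^2(\hat f,f)I_{E^c}\bigr]\le 2m\,\PP[Z\ge J].
\]
Adding the two contributions yields exactly
\[
\frac{4c_2}{c_1}\frac{J}{m}+\Bigl(\frac{4c_2}{c_1}+2m\Bigr)\PP_{Z\sim f}[Z\ge J].
\]

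\textbf{Main subtlety.} The variance step needs $\EE[(\hat q_m(y)-f(y))^2 I_E]\le \EE[(\hat q_m(y)-f(y))^2]$, which holds trivially because the integrand is nonnegative. Bounding $\chi^2$ on $E$ rather than everywhere is what keeps the expectation finite. If the normalization of $H^2$ makes its maximum differ from $2$, only the constant in front of $m$ changes.
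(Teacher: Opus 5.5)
Your proof is correct and follows the paper's argument essentially step for step. You use the Hellinger triangle inequality together with Assumption~\ref{asmp:dist_sandwich} to get $H^2(\hat f,f)\le \frac{4}{c_1}d(\hat q_m\|f)$, and you localize on $\{\max_i Z_i<J\}$. On that event the $\chi^2$ tail sum is at most $\PP_{Z\sim f}[Z\ge J]$ and the head sum has expectation at most $J/m$ by the binomial variance. On the complement you use $H^2\le 2$ and a union bound. Like the paper, you implicitly take $f$ supported on $\integers_+$, which is what makes the head sum have only $J$ terms.
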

\begin{proof}

Using $d(\hat q_m\|\hat f)\le d(\hat q_m\|f)$, the triangle inequality for the Hellinger metric, and \prettyref{asmp:dist_sandwich}, we get
\begin{align}
	H^2(\hat f,f)
	&\le\Big(H(\hat q_m,\hat f)+H(\hat q_m,f)\Big)^2
	\le 2H^2(\hat q_m,\hat f)+2H^2(\hat q_m,f) \nonumber\\
	&
	\le\frac{2}{c_1}d(\hat q_m\|\hat f)+\frac{2}{c_1}d(\hat q_m\|f)
	\le\frac{4}{c_1}d(\hat q_m\|f).
	\label{eq:hell-by-optimality-feasible}
\end{align}
Let $Z_{(m)}$ denote the maximum of $Z_1,\dots,Z_m$.
Using \eqref{eq:hell-by-optimality-feasible} and the fact that $H^2\le 2$, we obtain
\begin{align}
	\EE_f\!\left[H^2(\hat f,f)\right]
	&=\EE_f\!\left[H^2(\hat f,f)I(Z_{(m)}<J)\right]+\EE_f\!\left[H^2(\hat f,f)I(Z_{(m)}\ge J)\right]
	\nonumber\\
	&\le\frac{4}{c_1}\EE_f\!\left[d(\hat q_m\|f)I(X'_{(m)}<J)\right]+2\,\PP_{f}(Z_{(m)}\ge J).
	\label{eq:split-Ymax-feasible}
\end{align}
By the upper bound in \prettyref{asmp:dist_sandwich}, we bound the first term in the right hand side of \prettyref{eq:split-Ymax-feasible} as
\begin{align}
	\label{eq:app0}
	\frac{1}{c_2}\EE_f\!\left[d(\hat q_m\|f)I(Z_{(m)}<J)\right]\le\EE_f\!\left[\chi^2(\hat q_m\|f)I(Z_{(m)}<J)\right].
\end{align}
The right hand side of the above display can be further bounded as
\begin{align}
	\EE_f\!\left[\chi^2(\hat q_m\|f)I(Z_{(m)}<J)\right]
	&=
	\sum_{y\ge 0}\frac{\EE_f[(\hat q_m(y)-f(y))^2I(Z_{(m)}<J)]}{f(y)}
	\nonumber\\
	&=\sum_{y<J}\frac{\EE_f[(\hat q_m(y)-f(y))^2I(Z_{(m)}<J)]}{f(y)}+\sum_{y\ge J}\frac{\EE_f[(\hat q_m(y)-f(y))^2I(Z_{(m)}<J)]}{f(y)}.
	\label{eq:chi-split-feasible}
\end{align}
On the event $\{Z_{(m)}<J\}$ we have $\hat q_m(y)=0$ for all $y\ge J$, so we can bound the second sum as
\[
\sum_{y\ge J}\frac{f(y)^2\,\PP_f(Z_{(m)}<J)}{f(y)}=\PP_{Z\sim f}[Z\geq J]\,\PP_G(Z_{(m)}<J)\le\PP_{Z\sim f}[Z\geq J].
\]
For the first sum in \prettyref{eq:chi-split-feasible}, we use the bound
\[
\sum_{y<J}\frac{\EE_f[(\hat q_m(y)-f(y))^2I(Z_{(m)}<J)]}{f(y)}\le\sum_{y<J}\frac{\EE_f[(\hat q_m(y)-f(y))^2]}{f(y)}.
\]
Using the following results
\[
\EE_f[\hat q_m(y)]=f(y),\qquad\Var_f(\hat p_m(y))=\frac{f(y)(1-f(y))}{m},
\]
we get,
\[
\sum_{y<J}\frac{\EE_f[(\hat p_m(y)-f(y))^2]}{f(y)}=\frac1m\sum_{\substack{y<J\\ f(y)>0}}(1-f(y))\le\frac{J}{m}.
\]
Combining these bounds with \eqref{eq:chi-split-feasible} and \eqref{eq:app0} yields
\[
\EE_f\!\left[d(\hat q_m\|f)I(Z_{(m)}<J)\right]\le c_2\Big(\frac{J}{m}+\PP_{Z\sim f}[Z\geq J]\Big).
\]
In view of the last result and the fact that we can bound $\PP_f(Z_{(m)}\ge J)$ from above using the union bound $\PP_f(Z_{(m)}\ge J)\le m\,\PP_{Z\sim f}[Z\geq J]$, we continue \eqref{eq:split-Ymax-feasible} to get the desired inequality
\[
\EE_f\!\left[H^2(\hat f,f)\right]\le\frac{4c_2}{c_1}\Big(\frac{J}{m}+\PP_{Z\sim f}[Z\geq J]\Big)+2m\,\PP_{Z\sim f}[Z\geq J].
\]
	
\end{proof}

Based on Lemmas \ref{lmm:bounded_prior_properties}--\ref{lmm:hel-bound-genr}, we are in the position of proving Lemma \ref{lem:EB_density-H2}. To apply Lemma \ref{lmm:hel-bound-genr}, we use $f_{\Gstar}$ in place of $f$ and $f_{\hat G_{d,m}}$ in the place of $\hat f$, as described in the result statement. To simplify notations, for any integer $J\ge 1$, define
\[
\Xi_J(\Gstar):=\sum_{y\ge J} f_{\Gstar}(y),\qquad Y\sim f_{\Gstar}.
\]
Hence, we can rewrite the bound in \prettyref{lmm:hel-bound-genr} as  
\begin{align}\label{eq:Xi-bound}
	\EE_{\Gstar}\!\left[H^2( f_{\hatG_{d,m}},f_{\Gstar})\right]\le\frac{4c_2}{c_1}\,\frac{J}{m}+\Big(\frac{4c_2}{c_1}+2m\Big)\Xi_J(\Gstar).
\end{align}
In the case of bounded prior, Lemma \ref{lmm:bounded_prior_properties}, with
$$
J_m=C_1\min\sth{\max\{1,\hat h\}{\log m\over \log\log m},\hat h+\log m}
$$
leads to the tail bound $\Xi_{J_m}(\Gstar)\leq m^{-2}$ for a suitably large constant $C_1>0$. This implies the desired result in the case of bounded priors. 
In the case of subexponential priors, applying Lemma \ref{lmm:subexpo_properties}    with 
\[
J_m:=\max\!\left\{1,\left\lceil \frac{2\log m}{\log(1+1/2s)}\right\rceil\right\},
\]
we  get $
\Xi_{J_m}(\Gstar)\le \bar C_s m^{-2}$, where $\bar C_s$ is a constant only depending on $s$. 
Substituting the above in \eqref{eq:Xi-bound} we get
\[
\EE_{\Gstar}\!\left[H^2(f_{\hat G_{\mathcal C}},f_{\Gstar})\right]\le\frac{4c_2}{c_1}\frac{J_m}{m}+\Big(\frac{4c_2}{c_1}+2m\Big) C_s m^{-2},
\]
for a constant $C_s$ only depending on $s$. Since $J_m\asymp \log m$ with constants depending only on $s$, the right-hand side is bounded by
\[
C_d(s,c_1,c_2)\,\frac{\log m}{m}
\]
for some finite constant $C_d(s,c_1,c_2)$. This completes our proof.

\subsection{Proof of Lemma \ref{lem:EB_density-H2_coars}}\label{prooflemma22}

Before proving the Lemma, we state a preliminary lemma.

\begin{lemma}\label{lem:E0}
Let $d$ satisfy Assumptions \ref{asmp:EB_t-ell} and \ref{asmp:dist_sandwich}, and let $\Gstar$ denote the mixing measure of the data-generating distribution. 
Let  $\widetilde G^{(T)}_{d,m}$ be the $T$-coarsened minimum-$d$-distance estimate of $\Gstar$ defined in~(\ref{eq:optim-coarse}), where the optimization is performed over an appropriate class as in
Lemma~\ref{lem:EB_density-H2}. Then, 
\begin{equation}
\label{eq:E0-main}
\mathbb E\Big[H^2\big(\widetilde f_{\Gstar}^{(T)},\,\widetilde f_{\widehat
G^{(T)}_{d,m}}^{(T)}\big)\Big]
\;\le\;\left\{
\begin{array}{ll}
\frac{4c_2}{c_1}\,\frac{J}{m}+\Big(\frac{4c_2}{c_1}+2m\Big)\,\mathbb P_{X\sim f_{\Gstar}}[X\ge J]& 1\le J\le T+1\\
\frac{4c_2}{c_1}\,\frac{J}{m} & J=T+2.
\end{array}
\right.
\end{equation}
\end{lemma}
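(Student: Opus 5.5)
The plan is to apply \prettyref{lmm:hel-bound-genr} verbatim, but to the coarsened observations. Define $Z_i:=\min\{X'_i,T+1\}$, $i=1,\dots,m$. These are i.i.d.\ with law $f:=\widetilde f^{(T)}_{\Gstar}$ on $\{0,\dots,T+1\}\subset\integers_+$. Their empirical distribution is exactly $\widetilde{\hat p}_m^{(T)}$, because coarsening commutes with taking empirical frequencies.

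Set $\hat f:=\widetilde f^{(T)}_{\widetilde G^{(T)}_{d,m}}$. By definition \eqref{eq:optim-coarse}, $\widetilde G^{(T)}_{d,m}$ minimizes $G\mapsto d(\widetilde{\hat p}_m^{(T)}\|\widetilde f^{(T)}_G)$ over $\mathcal C$, and $\Gstar\in\mathcal C$. Hence
\[
d(\hat q_m\|\hat f)\le d(\hat q_m\|f),
\]
which is the only hypothesis \prettyref{lmm:hel-bound-genr} needs. Its proof uses only Assumption~\ref{asmp:dist_sandwich}, which holds for any pair of pmfs on $\integers_+$, coarsened ones included, and the multinomial variance of $\hat q_m$. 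Existence of the minimizer is guaranteed by \prettyref{thm:exist-genr}.

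\textbf{Case $1\le J\le T+1$.} \prettyref{lmm:hel-bound-genr} gives
\[
\mathbb E[H^2]\le\frac{4c_2}{c_1}\frac{J}{m}+\Big(\frac{4c_2}{c_1}+2m\Big)\PP_{Z\sim f}[Z\ge J].
\]
For $J\le T+1$ the events coincide, $\{Z\ge J\}=\{X\ge J\}$, since $Z=\min\{X,T+1\}$ and $J\le T+1$. So the tail probability equals $\mathbb P_{X\sim f_{\Gstar}}[X\ge J]$, which is the first line of \eqref{eq:E0-main}.

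\textbf{Case $J=T+2$.} Now $\PP[Z\ge T+2]=0$ because $Z\le T+1$ almost surely, so the tail term vanishes and only $\frac{4c_2}{c_1}\frac{T+2}{m}$ remains. A direct check also works. The $\chi^2$ computation in the proof of \prettyref{lmm:hel-bound-genr} gives $\sum_{y\le T+1}(1-f(y))/m\le (T+2)/m$, and the event $\{Z_{(m)}<T+2\}$ has probability one. So the split in \eqref{eq:split-Ymax-feasible} reduces to its first term.

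The only point requiring care is to verify that the proof of \prettyref{lmm:hel-bound-genr} goes through when $f$ has finite support. Two things need checking. First, the terms with $f(y)=0$ must be discarded consistently: on such $y$ we have $\hat q_m(y)=0$ almost surely, so they contribute nothing. Second, the indicator uses $Z_{(m)}$ rather than $X'_{(m)}$, which is a notational slip in that proof. Once these are checked, the lemma follows without further estimates.
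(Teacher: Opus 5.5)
Your proposal is correct and is essentially the paper's own proof. The paper uses the same coarsened variables $Z_i=X'_iI(X'_i\le T)+(T+1)I(X'_i\ge T+1)$, identifies their empirical law with the coarsened empirical pmf, and invokes Lemma~\ref{lmm:hel-bound-genr} via the optimality of the coarsened estimate; it then reads off the tail term as $\PP[X\ge J]$ for $J\le T+1$ and as $0$ for $J=T+2$. Your remarks on zero-mass atoms and the $X'_{(m)}$ versus $Z_{(m)}$ slip are valid extra care.
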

\begin{proof}
Let $(X'_1,\dots,X'_m)$ denote the training sample, and for $i=1,\dots,m$, let
\[
Z_i := X_i'\, I(X_i'\le T) + (T+1) \,I(X_i'\ge T+1).
\]
Since $X_1',\dots,X_m'$ are i.i.d.\ with common density $f_{\Gstar}$, then the variables $Z_1,\dots,Z_m$ are i.i.d.\ with common density $f:=\widetilde f_{\Gstar}^{(T)}$. Let $\widehat q_m$ denote the empirical distribution of $Z_1,\dots,Z_m$. By definition $\widehat q_m=\widetilde p_m^{(T)}$, the $T$-coarsening of the empirical distribution of the training sample. Let $\widehat f:=\widetilde f_{\widetilde G^{(T)}_{d,m}}^{(T)}$. Then, by definition 
\[
d\big(\widehat q_m\,\|\,\widehat f\big)\;\le\; d\big(\widehat q_m\,\|\,f\big).
\]
Thus $(f,\widehat f,\widehat q_m)$ satisfy the hypotheses of Lemma~\ref{lmm:hel-bound-genr}. It follows that, for every integer $J\ge 1$,
\begin{equation}
\label{eq:E0-B3}
\mathbb E_f\big[H^2(f,\widehat f)\big]
\;\le\; \frac{4c_2}{c_1}\,\frac{J}{m}
+ \Big(\frac{4c_2}{c_1}+2m\Big)\,\mathbb P_{Z\sim f}[Z\ge J].
\end{equation}
By definition,
$$\mathbb P_{Z\sim f}[Z\ge J]=\mathbb P_{X\sim f_{\Gstar}}[X\ge J]$$ 
for every $1\le J\le T+1$, while $\mathbb P_{Z\sim f}[Z\ge T+2]=0$. Substituting into \eqref{eq:E0-B3} and noting
$H^2(f,\widehat f)=H^2(\widetilde f_{\Gstar}^{(T)},\widetilde f_{\widehat
G^{(T)}_{d,m}}^{(T)})$ gives \eqref{eq:E0-main}.
\end{proof}

Based on Lemma \ref{lem:E0}, we are now in the position of proving Lemma \ref{lem:EB_density-H2_coars}. Taking $J=T+2$ in \eqref{eq:E0-main} gives
\begin{equation}
\label{eq:E0-i}
\mathbb E\Big[H^2\big(\widetilde f_{\Gstar}^{(T)},\,\widetilde f_{\widehat
G^{(T)}_{d,m}}^{(T)}\big)\Big] \;\le\; \frac{4c_2}{c_1}\,\frac{T+2}{m}.
\end{equation}

(a)       By Lemma~\ref{lmm:bounded_prior_properties} with $a=2$ and 
$$K:=2(\hat h e^2+2)\frac{\log m}{\log\log m}$$ we obtain
$$\mathbb P_{X'\sim f_{\Gstar}}[X'>K]\le 2/m^2.$$
If $K>T+2$, then
\[
\min\left\{
\frac{4c_2}{c_1}(T+2),
\left(\frac{12c_2}{c_1}+4\right)K
\right\}
=
\frac{4c_2}{c_1}(T+2),
\]
and hence \eqref{eq:final-bounded} follows from \eqref{eq:E0-i}.

Suppose now that $K\le T+2$, and set $J=\min(\lfloor K\rfloor +1,T+2)$.

If $J\le T+1$, then $J=\lfloor K\rfloor+1$ and, since $X$ is integer-valued,
\[
\mathbb{P}_{X\sim f_{\Gstar}}(X\ge J)=\mathbb{P}_{X\sim f_{\Gstar}}(X>K)\le \frac{2}{m^2}.
\]
If $J=T+2$, then the second bound in \eqref{eq:E0-main} yields an even smaller upper bound. Hence, in either case,
\[
\begin{aligned}
\mathbb{E}\left[ H^2\left(\widetilde f^{(T)}_{\Gstar},\widetilde f^{(T)}_{\widetilde G^{(T)}_{d,m}}\right)\right]
&\le\frac{4c_2}{c_1}\frac{J}{m}+\left(\frac{4c_2}{c_1}+2m\right)\frac{2}{m^2}
\\
&\le\frac{4c_2}{c_1}\frac{K+1}{m}+\frac{8c_2}{c_1m^2}+\frac{4}{m}
\\
&=\frac{4c_2}{c_1}\frac{K}{m}+\frac{4c_2}{c_1m}\left(1+\frac{2}{m}\right)+\frac{4}{m}.
\end{aligned}
\]
Since $m\ge3$ and $K\ge1$,
\[
1+\frac{2}{m}\le\frac53\le2K,\qquad\frac4m\le\frac{4K}{m}.
\]
Therefore,
\[
\begin{aligned}
\mathbb{E}\left[H^2\left(\widetilde f^{(T)}_{\Gstar},\widetilde f^{(T)}_{\widetilde G^{(T)}_{d,m}}\right)\right]
&\le\left(\frac{12c_2}{c_1}+4\right)\frac{K}{m}
=\left(\frac{12c_2}{c_1}+4\right)(\bar h e^2+2)\frac{2\log m}{m\log\log m}.
\end{aligned}
\]
Together with \eqref{eq:E0-i}, this proves \eqref{eq:final-bounded}.

(b) By Lemma~\ref{lmm:subexpo_properties} with
$$K:=\frac{2\log m}{\log(1+1/(2s))}$$
we obtain
$$\mathbb P_{X'\sim
f_{\Gstar}}[X'\ge K]\le 3e^{-K\log(1+1/(2s))}\le 3/m^2.$$
We apply Lemma  \ref{lem:E0} with
\[
J:=\min\{\lfloor K\rfloor+1,T+2\}.
\]
If $J\le T+1$, then $J=\lfloor K\rfloor+1$ and, since $X$ is integer-valued,
\[
\mathbb{P}_{X\sim f_{\Gstar}}(X\ge J)=\mathbb{P}_{X\sim f_{\Gstar}}(X>K)\le\mathbb{P}_{X\sim f_{\Gstar}}(X\ge K)
\le \frac{3}{m^2}.
\]
If $J=T+2$, then the second bound in \eqref{eq:E0-main}
yields an even smaller upper bound. Hence, in either case,
\[
\begin{aligned}
\mathbb{E}\left[H^2\left(\widetilde f^{(T)}_{\Gstar},\widetilde f^{(T)}_{\widetilde G^{(T)}_{d,m}}\right)\right]
&\le\frac{4c_2}{c_1}\frac{J}{m}+\left(\frac{4c_2}{c_1}+2m\right)\frac{3}{m^2}\\
&\le\frac{4c_2}{c_1}\frac{K+1}{m}+\frac{12c_2}{c_1m^2}+\frac{6}{m}\\
&=\frac{4c_2}{c_1}\frac{K}{m}+\frac{4c_2}{c_1m}\left(1+\frac{3}{m}\right)+\frac{6}{m}.
\end{aligned}
\]
Since $m\ge2$ and $K\ge1$,
\[
1+\frac{3}{m}\le \frac{5}{2}\le 3K,\qquad\frac{6}{m}\le\frac{6K}{m}.
\]
Therefore,
\[
\begin{aligned}
\mathbb{E}\left[H^2\left(\widetilde f^{(T)}_{\Gstar},\widetilde f^{(T)}_{\widetilde G^{(T)}_{d,m}}\right)\right]
&\le\left(\frac{16c_2}{c_1}+6\right)\frac{K}{m}\\
&=\left(\frac{16c_2}{c_1}+6\right)\frac{2\log m}{m\log(1+1/(2s))}.
\end{aligned}
\]
Together with \eqref{eq:E0-i}, this proves \eqref{eq:final-subexp}, completing the proof.


\section{Proofs of Section \ref{sec2}}
To simplify notation, we write $k(x\mid\theta)$ for the Poisson probability mass function $\mathrm{Poisson}(x\mid\theta)$.

\subsection{Proof of Theorem \ref{th:consist1}}\label{app:proofconst1}
The proof of Theorem~\ref{th:consist1} relies on the following lemma, which immediate consequence of the definition of the class $\subexpo(s)$; for completeness, we include a short proof. 

\begin{lemma}\label{lemma:pre}
   The class $\subexpo(s)$ is tight and $\{f_G:G\in\subexpo(s)\}$
is uniformly dominated by a summable sequence. Furthermore, the mixing distributions in $\subexpo(s)$ are identifiable, i.e. 
$$
G,G'\in\subexpo(s) \mbox{ and } f_G(x)=f_{G'}(x) \,\forall x\in\integers_+\Rightarrow G=G'.
$$ 
\end{lemma}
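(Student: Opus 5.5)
The lemma has three claims: tightness, a summable dominating sequence, and identifiability. I will prove them in that order, each straight from the defining tail bound $G((t,\infty))\le 2e^{-t/s}$.

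\emph{Tightness.} Given $\eta>0$, choose $t_\eta=s\log(2/\eta)$. Then every $G\in\subexpo(s)$ satisfies $G([0,t_\eta])\ge 1-\eta$, so the class is tight.

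\emph{Summable domination.} For each $x$ and each $G\in\subexpo(s)$ I split the integral at a level $t_x$:
\[
f_G(x)\le \sup_{\theta\le t_x}\frac{e^{-\theta}\theta^x}{x!}+G((t_x,\infty)).
\]
Since $\theta\mapsto e^{-\theta}\theta^x$ is increasing on $[0,x]$, taking $t_x=x/2$ bounds the first term by $e^{-x/2}(x/2)^x/x!$. By Stirling this is at most $(e/2)^x e^{-x/2}=(\sqrt e/2)^x$, and $\sqrt e/2<1$. The second term is at most $2e^{-x/(2s)}$. Hence
\[
f_G(x)\le g(x):=(\sqrt e/2)^x+2e^{-x/(2s)},
\]
which is summable and independent of $G$. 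Alternatively, one can use the bound $\PP[Z_1\ge K]\le 3e^{-K\log(1+1/(2s))}$ of Lemma~\ref{lmm:subexpo_properties}, which holds uniformly in $G\in\subexpo(s)$. It gives $f_G(x)\le 3(1+1/(2s))^{-x}$ for $x\ge 1$, which also works.

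\emph{Identifiability.} This is the main step, and I will go through the generating function. Write
\[
\sum_x f_G(x)z^x=\int e^{\theta(z-1)}G(d\theta).
\]
Under the subexponential condition, $\int e^{\lambda\theta}G(d\theta)<\infty$ for $\lambda<1/s$. The map $z\mapsto\int e^{\theta(z-1)}\,dG$ is therefore analytic on $\{\operatorname{Re}(z-1)<1/s\}$, which contains a neighborhood of $[0,1]$. Equality of $f_G$ and $f_{G'}$ gives equality of the power series on $|z|<1$. Analytic continuation then gives equality of $\int e^{-u\theta}\,dG$ and $\int e^{-u\theta}\,dG'$ for all $u$ in an interval, say $u\in(0,1]$. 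Moreover, both sides are analytic in $u$ on a complex strip around the positive axis, so the two Laplace transforms agree for all $u\ge0$.

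The only care needed here is justifying this extension. A simpler route is available: for $z\in[0,1]$ we have $u=1-z\in[0,1]$, and the Laplace transform of a measure on $\reals_+$ is real-analytic on $(0,\infty)$. Agreement on $(0,1)$ therefore forces agreement on all of $(0,\infty)$. By uniqueness of the Laplace transform for measures on $\reals_+$, $G=G'$. This final step does not even need the subexponential tails beyond the existence of the transforms.
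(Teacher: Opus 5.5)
Your proof is correct. It differs from the paper's mainly in how the two nontrivial claims are handled.

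For the summable envelope, the paper uses a Chernoff-type bound. For $a\in(1,1+1/s)$ one has $\theta^x/x!\le e^{a\theta}/a^x$, hence $f_G(x)\le M_G(a-1)\,a^{-x}\le C a^{-x}$, with the moment generating function bounded uniformly over $\subexpo(s)$ by integrating the tail bound. You instead split the integral at $\theta=x/2$. You bound the body by the maximal Poisson mass on $[0,x/2]$ via $x!\ge (x/e)^x$, and the tail by $2e^{-x/(2s)}$. This is fully elementary and gives an equally good geometric envelope. Your alternative via the bound on $\mathbb{P}[Z_1\ge K]$ is essentially the paper's argument, since it is Markov's inequality applied to $z^X$ with $\mathbb{E}[z^X]=M_G(z-1)$.

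For identifiability, the paper simply cites the classical Feller--Teicher results, while you give a self-contained proof. Note that the generating-function identity on $z\in[0,1]$ already gives equality of the Laplace transforms at $u=1-z\in[0,1]$ directly, so no analytic continuation is needed for that step. Real-analyticity on $(0,\infty)$ then extends the equality, and uniqueness of the Laplace transform finishes. As you observe, this shows identifiability holds for arbitrary mixing distributions on $\reals_+$, which is what the cited results say as well. The subexponential assumption is therefore needed only for tightness and domination.

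Tightness, which the paper leaves implicit, is handled correctly by your choice $t_\eta=s\log(2/\eta)$ for $\eta<2$.
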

\begin{proof}
  Denoting by $M_G$ the moment generating function of $G$, we can write for $a\in (1,1+\frac 1 s)$,
$$
f_G(x)=\int_0^\infty \frac{\exp{(-\theta)}\theta^x}{x!}G(d\theta)\leq \int \exp(-\theta)\frac{\exp(a\theta)}{a^x}G(d\theta)=\frac{M_G(a-1)}{a^x}\leq \frac{C}{a^x},
$$
where the first inequality holds using the expansion of $\exp(a\theta)$ for any $a$ and the last inequality follows from 
\begin{align*}
    M_G(a-1)=\int_0^\infty \exp((a-1)\theta)G(d\theta)&=1+(a-1)\int_0^\infty \exp((a-1)t)\overline G(t)dt\\&\leq 1+2(a-1)\int_0^\infty \exp\left(\left(a-1-\frac 1 s\right)t\right)dt<+\infty,
\end{align*}
with $\overline G(t)=G(t,\infty)$. Identifiability follows from the standard identifiability of Poisson mixtures \citep{Fel43,Tei61}.
\end{proof}

\begin{lemma}\label{lemma:convuG}
Under the assumptions of Theorem \ref{th:consist1} (a) or Theorem \ref{th:consist2} (a), let $u_G(x):=\EE_G\left[u(X_i,\theta_i)\mid X_i=x\right]$ for $G\in\mathcal C$ and $x\in\mathbb Z_+$. If $G_n\in\mathcal C$ converges weakly to $\Gstar$, then, for every $x$ such that $f_{\Gstar}(x)>0$, it holds  $$u_{G_n}(x)\rightarrow u_{\Gstar}(x).$$
\end{lemma}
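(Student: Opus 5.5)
We write $u_G(x)=N_G(x)/f_G(x)$ with
\[
N_G(x):=\int_\Theta u(x,\theta)\,k(x\mid\theta)\,G(d\theta),\qquad f_G(x)=\int_\Theta k(x\mid\theta)\,G(d\theta),
\]
and prove convergence of numerator and denominator separately.

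\textbf{Denominator.} The map $\theta\mapsto k(x\mid\theta)$ is bounded and continuous on $\reals_+$, so weak convergence gives $f_{G_n}(x)\to f_{\Gstar}(x)>0$. In particular $f_{G_n}(x)$ is bounded away from zero for all large $n$.

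\textbf{Numerator.} The integrand $\theta\mapsto u(x,\theta)k(x\mid\theta)$ is $\Gstar$-almost surely continuous, but it is not bounded in general. I would handle this with a uniform integrability argument, using either the continuous mapping theorem or the Portmanteau/Skorokhod representation for a.s.-continuous functions. Concretely, I would show that
\[
\sup_n\int |u(x,\theta)|^2 k(x\mid\theta)^2\,G_n(d\theta)<\infty.
\]
Condition \eqref{eq:condition2} provides this. For $x$ with $f_{\Gstar}(x)>0$, it gives
\[
\sup_{G\in\mathcal C}\int |u(x,\theta)|^2\,G(d\theta\mid x)\le M_x<\infty.
\]
Hence
\[
\int |u|^2 k\,dG_n\le M_x f_{G_n}(x)\le M_x,
\]
and since $k\le1$ we get $\int |u|^2k^2\,dG_n\le M_x$. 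This square-integrability bound makes the family $\{u(x,\cdot)k(x\mid\cdot)\}$ uniformly integrable under $\{G_n\}$.

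With uniform integrability in hand, I would pass to the limit via Skorokhod. Take $\vartheta_n\sim G_n$ with $\vartheta_n\to\vartheta\sim\Gstar$ almost surely. Then $g(\vartheta_n)\to g(\vartheta)$ almost surely, because the discontinuity set of $g=u(x,\cdot)k(x\mid\cdot)$ is $\Gstar$-null. Vitali's theorem then gives $N_{G_n}(x)\to N_{\Gstar}(x)$.

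\textbf{Main obstacle.} The delicate step is the a.s. continuity hypothesis. The assumption says continuity holds $G$-a.s. for every $G\in\mathcal C$, and we only need it at $G=\Gstar\in\mathcal C$, which is exactly what the continuous mapping theorem requires. In the coarsened setting of Theorem \ref{th:consist2}(a) the same assumptions are present, so the argument is identical. Taking the ratio of the two limits concludes $u_{G_n}(x)\to u_{\Gstar}(x)$.
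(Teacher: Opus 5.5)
Your argument is correct and is essentially the paper's proof. Both rest on three ingredients: the uniform integrability supplied by \eqref{eq:condition2}, the fact that the discontinuity set of $u(x,\cdot)$ is $\Gstar$-null, and a generalized Portmanteau theorem, which you make explicit through Skorokhod and Vitali. The only difference is bookkeeping. The paper first shows $G_n(\cdot\mid x)\to\Gstar(\cdot\mid x)$ weakly and applies the limit theorem to $u(x,\cdot)$ under these conditional measures, whereas you apply it to $u(x,\cdot)k(x\mid\cdot)$ under $G_n$ itself and then divide by $f_{G_n}(x)\to f_{\Gstar}(x)>0$.
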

\begin{proof}
    We can write, for every $G\in\mathcal C$,
    $$
  u_G(x):=\EE_G\left[u(X_i,\theta_i)\mid X_i=x\right]=\int_\Theta u(x,\theta)G(d\theta\mid x).
  $$
  By assumption, the function $\theta\mapsto u(x,\theta)$  is  $G$-almost surely continuous for every $x$ and every $G\in\mathcal C$. Let $N_x$ be the set of discontinuities of $u(x,\theta)$, By assumption, $G(N_x)=0$. Since $G(\cdot\mid x)$ is absolutely continuous with respect to $G$, then $G(N_x\mid x)=0$. In other words, the function $\theta\mapsto u(x,\theta)$ is continuous almost surely with respect to $G(\cdot\mid x)$ for every $G\in\mathcal C$. 
 By assumption \eqref{eq:condition2}, for every $x$ such that $f_{\Gstar}(x)>0$,
\[
\sup_{G\in\mathcal C}\int_\Theta |u(x,\theta)|^2\,G(d\theta\mid x)<\infty.
\]
Thus the function $\theta\mapsto u(x,\theta)$ is uniformly integrable with respect to the family of probability measures $\{G_n(\cdot\mid x)\}$.\\
We now prove that $G_n(\,\cdot\,\mid x)$ converges weakly to $\Gstar(\,\cdot\,\mid x)$.
Since $\theta\mapsto k(x\mid\theta)$ is continuous, then $$f_{G_n}(x)=\int_\Theta k(x\mid\theta)G_n(d\theta)\rightarrow f_{\Gstar}(x).$$ 
This, together with  $f_{\Gstar}(x)>0$, implies $f_{G_n}(x)$ is bounded away from zero.\\
Let $\varphi:\Theta\to\mathbb R$ be a bounded, continuous function. Then, 
\[
\int_\Theta \varphi(\theta)\,G_n(d\theta\mid x)=\frac{\int_\Theta \varphi(\theta)k(x\mid\theta)\,G_n(d\theta)}{
f_{G_n}(x)}.
\]
Since both $k(x\mid\theta)$ and $\theta\mapsto\varphi(\theta)k(x\mid\theta)$ are bounded and continuous, and since the denominator converges to $f_{\Gstar}(x)>0$ and is bounded away from zero, then the weak convergence $G_n$ implies
\[
\int_\Theta \varphi(\theta)\,G_n(d\theta\mid x)\rightarrow\int_\Theta \varphi(\theta)\,\Gstar(d\theta\mid x),
\]
and hence $G_n(\,\cdot\,\mid x)$ converges weakly to $\Gstar(\,\cdot\,\mid x)$.
The generalized version of Portmanteau theorem yields
\[
u_{G_n}(x)\rightarrow u_{\Gstar}(x).
\]
\end{proof}

Based on Lemmas \ref{lemma:pre}--\ref{lemma:convuG}, we are now in the position of proving Theorem \ref{th:consist1}. Since the random variables
$u(X_i,\theta_i)- u_{\Gstar}(X_i)$ $(i=1,\ldots,n)$ are independent, identically distributed, and have mean zero, the central limit theorem implies that
\[
(n\log n)^{-1/2}\bigl(S_n-\widehat S_n(\Gstar)\bigr)=(n\log n)^{-1/2}\sum_{i=1}^n (u(X_i,\theta_i)- u_{\Gstar}(X_i))\overset{\mathbb P}{\longrightarrow}0.
\]
Consequently, the assertions in parts (a) and (b) are equivalent to
\[
\frac{1}{n}\bigl(\widehat S_n(\Gstar)-\widehat S_n(\widehat G_{d,m_n})\bigr)=o_{\mathbb P}(1),
\]
and
\[
\frac{1}{n}\bigl(\widehat S_n(\Gstar)-\widehat S_n(\widehat G_{d,m_n})\bigr)=O_{\mathbb P}(n^{-1/2}
(\log n)^{1/2}),
\]
respectively. 
Denote
 $$
 Z_{n,i}=(u_{\Gstar}(X_i)-\overline u_{\Gstar})-(u_{\widehat G_{d,m_n}}(X_i)-\overline u_{\widehat G_{d,m_n}}),
 $$
  where
    $$
    \overline u_{\Gstar}:=\EE[u_{\Gstar}(X_i)],\quad\quad 
    \overline u_{\widehat G_{d,m_n}}:=\EE[u_{\widehat G_{d,m_n}}(X_i)\mid \widehat G_{d,m_n}].
    $$
    The random variables $\{Z_{n,i}:i=1,\dots,n\}$ are centered and conditionally i.i.d., given the sigma algebra generated by $(X'_{n})_{n\geq 1}$, where, for every $n$, $(X'_n)=(X'_{n,1},\dots,X'_{n,m_n})$ is the training sample. Moreover, 
\begin{align*}
     \EE[ (\sum_{i=1}^n Z_{n,i})^2\mid (X'_m)_{m\geq 1}]&=\operatorname{Var}(\sum_{i=1}^n (u_{\Gstar}(X_i)-\overline u_{\Gstar})-(u_{\widehat G_{d,m_n}}(X_i)-\overline u_{\widehat G_{d,m_n}})\mid (X'_m)_{m\geq 1})\\
     &\leq 4n \sup_{G\in\mathcal C}\operatorname{Var}(u_G(X_i))\\
     &\leq Cn,
\end{align*}
where $C=4 \sup_{G\in\mathcal C}
\sum_{x\ge0} u_G(x)^2 f_{\Gstar}(x) <\infty$ by Jensen's inequality and condition \eqref{eq:condition2}.
It follows that, for every $\epsilon >0$, 
  \begin{align*}
      \PP\left[ \left|\frac{1}{(n\log n)^{1/2}} \sum_{i=1}^n Z_{n,i}\right|>\epsilon
        \right]\leq \frac{\EE[ (\sum_{i=1}^n Z_{n,i})^2]}{n\log n\epsilon^2}\rightarrow 0,
  \end{align*}
  as $n\rightarrow\infty$. Thus,
  \begin{equation}\label{eq:convZ}
  \sum_{i=1}^n Z_{n,i}  =o_{\PP}((n\log n)^{1/2}),
  \end{equation}
  as $n\rightarrow\infty$.
  
  First, we deal with the proof of part (a) of Theorem \ref{th:consist1}. We can write
    \begin{align}\label{eq:decompgendist}
    \frac{\widehat S_n(\Gstar)-\widehat S_n(\widehat G_{d,m_n})}{n}&=\frac{\sum_{i=1}^n Z_{n,i}}{n}+(\overline u_{\Gstar}-\overline u_{\widehat G_{d,m_n}})=(\overline u_{\Gstar}-\overline u_{\widehat G_{d,m_n}})+o_{\PP}(1),
    \end{align}
   where the last equality comes from \eqref{eq:convZ}.
We now prove that $\overline u_{\widehat G_{d,m_n}}$ converges in probability to $\overline u_{\Gstar}$ as $n\rightarrow\infty$. 
For every $G\in\mathcal C$, 
  $$
  u_G(x):=\EE_G\left[u(X_i,\theta_i)\mid X_i=x\right]=\int_\Theta u(x,\theta)G(d\theta\mid x)=\frac{\int_\Theta u(x,\theta)k(x\mid\theta)G(d\theta)}{f_G(x)}.
  $$
 By the assumption \eqref{eq:condition2} the class $\{u_G(x):G\in\mathcal C\}$ is uniformly integrable with respect to $f_{\Gstar}(x)$ (see Lemma \ref{lemma:pre}).
Since
$$
\overline u_{\widehat G_{d,m_n}}=\sum_{x\in \mathbb Z_+ }u_{\widehat G_{d,m_n}}(x)f_{\Gstar}(x),\quad \overline u_{\Gstar}=\sum_{x\in \mathbb Z_+}u_{\Gstar}(x)f_{\Gstar}(x),
$$
and the class $\{u_G(x):G\in\mathcal C\}$ is uniformly integrable with respect to $f_{\Gstar}(x)$, then to prove that $\overline u_{\widehat G_{d,m_n}}$ converges in probability to $\overline u_{\Gstar}$ it is enough to establish that $u_{\widehat G_{d,m_n}}(x)$ converges in probability to $u_{\Gstar}(x)$ for every $x$ such that $f_{\Gstar}(x)>0$. In turn, by Lemma \ref{lemma:convuG}, a sufficient condition is that $\widehat G_{d,m_n}$ converges in probability to $\Gstar$.
By Lemma \ref{lem:EB_density-H2}, there exists a constant $c$ such that, for $n$ large enough
    $$
   \EE[H^2( f_{\widehat G_{d,m_{n}}}, f_{\Gstar})]\leq \frac{c\log m_{n}}{m_{n}},
    $$
which implies that $H( f_{\widehat G_{d,m_{n}}}, f_{\Gstar})$ converges to zero in probability. Since the total variation distance is dominated by the Hellinger distance, then 
$$
\sum_{x=0}^\infty|  f_{\widehat G_{d,m_{n}}}(x)- f_{\Gstar}(x)|\stackrel{\PP}{\rightarrow }0.$$
We now prove that $\hat G_{d,m_n}$ converges weakly to $\Gstar$ in probability. The space $\mathcal C$ is weakly closed and uniformly tight by Lemma \ref{lemma:pre}. Hence, it is weakly compact.
Consider the map
\[
G\mapsto f_G
\]
defined on $\mathcal C$ and with values in $\ell^1(\mathbb Z_+)$. By Lemma \ref{lemma:pre} and the continuity of $k(x\mid\theta)$, this map is continuous. Moreover, again by Lemma \ref{lemma:pre}, it is injective. Thus, it  has a continuous inverse on its image (see, e.g., \citet[Theorem 26.6]{Munkres}). Hence,
\[
\|f_{\widehat G_{d,m_n}}-f_{\Gstar}\|_1
\xrightarrow{\mathbb P}0
\]
implies that $\widehat G_{d,m_n}$ converges weakly to $\Gstar$ in probability. This concludes the proof of (a).

Now, we deal with the proof of part (b) of Theorem \ref{th:consist1}. We can write
    \begin{align}\label{eq:decompGrid}
    \frac{\widehat S_n(\Gstar)-\widehat S_n(\widehat G_{d,m_n})}{(n\log n)^{1/2}}&=\frac{\sum_{i=1}^n Z_{n,i}}{(n\log n)^{1/2}}+\frac{n^{1/2}}{(\log n)^{1/2}}(\overline u_{\Gstar}-\overline u_{\widehat G_{d,m_n}})=\frac{n^{1/2}}{(\log n)^{1/2}}(\overline u_{\Gstar}-\overline u_{\widehat G_{d,m_n}})+o_\PP(1),
    \end{align}
    where the last equality comes from \eqref{eq:convZ}.
    We now prove that 
    $$\frac{n^{1/2}}{(\log n)^{1/2}}(\overline u_{\widehat G_{d,m_n}}-\overline u_{\Gstar})=O_\PP(1),$$ as $n\rightarrow\infty$. At this aim, it is sufficient to prove that, for every $a_n\rightarrow\infty$,
     \begin{equation}\label{eq:conda}
     \frac{n^{1/2}}{a_n(\log n)^{1/2}}(\overline u_{\widehat G_{d,m_n}}-\overline u_{\Gstar})=o_\PP(1).
     \end{equation}
    Denote by $\hat g_{m_n}$ and $\Gstar$ the probability mass functions corresponding to $\widehat G_{d,m_n}$ and ${\Gstar}$, respectively. We can write
$$
\overline u_{\widehat G_{d,m_n}}-\overline u_{{\Gstar}}=\sum_{x=0}^\infty \left(\frac{A_n(x)}{f_{\widehat G_{d,m_n}}}(x)- \frac{A(x)}{f_{{\Gstar}}(x)}\right)f_{\Gstar}(x)
$$
where
$$
A_n(x)=\sum_{j=1}^J u(x,\theta_j)k(x\mid\theta_j)\widehat g_{m_n}(\theta_j),\quad A(x)=\sum_{j=1}^J u(x,\theta_j)k(x\mid\theta_j) \Gstar(\theta_j).
$$
Since ${\Gstar}\in\mathcal P(\Theta_J)$, then by Lemma \ref{lem:EB_density-H2} there exists a constant $c$ such that, for $n$ large enough
    $$
   \EE[H^2( f_{\widehat G_{d,m_{n}}}, f_{\Gstar})]\leq c\frac{\log m_{n}}{m_{n}}.
    $$
This, together with $m_n\asymp n$, implies that, for every $a_n\rightarrow\infty$, 
\begin{align}
\frac{n^{1/2}}{a_n(\log n)^{\frac 12}}\EE\qth{H( f_{\widehat G_{d,m_{n}}}, f_{\Gstar})}
	\leq 	\frac{n^{1/2}}{a_n(\log n)^{\frac 12}}\sqrt{\EE\qth{H^2( f_{\widehat G_{d,m_{n}}}, f_{\Gstar})}}
	\leq 	\frac{n^{1/2}}{a_n(\log n)^{\frac 12}}\sqrt{c{\log(m_n)\over m_n}}	{\rightarrow} 0,
\end{align}
as $n\rightarrow\infty$. It follows that
$$
\frac{n^{1/2}}{a_n(\log n)^{\frac 12}}{H( f_{\widehat G_{d,m_{n}}}, f_{\Gstar})}\stackrel{\PP}{\rightarrow }0,
$$
as $n\rightarrow\infty$. Since the total variation distance is dominated by the Hellinger distance, then 
$$
\frac{n^{1/2}}{a_n(\log n)^\frac12}\sum_{x=0}^\infty|  f_{\widehat G_{d,m_{n}}}(x)- f_{\Gstar}(x)|\stackrel{\PP}{\rightarrow }0,$$
which implies that, for every $x\in\mathbb Z_+$, $$\frac{n^{1/2}}{a_n(\log n)^{1/2}}|  f_{\widehat G_{d,m_{n}}}(x)- f_{\Gstar}(x)|\stackrel{\PP}{\rightarrow}0. $$
Define the vectors $v_n$ and $v$ with coordinates $v_n(x)=x!f_{\widehat G_{d,m_{n}}}(x)$ and $v(x)=x!f_{{\Gstar}}(x)$ for $x=0,\dots,J-1$. Then $v_n(x)=\sum_{j=1}^J   C(x,j)e^{-\theta_j}\hat g_{m_n}(\theta_j)$ and $v(x)=\sum_{j=1}^J   C(x,j)e^{-\theta_j}\Gstar(\theta_j)$ where $C$ is the matrix with entries $C(x,j)=\theta_j^x$ ($j=1,\dots, J, x=0,\dots,J-1$). Since the matrix $C$ is non singular, then for every $j=1,\dots,J$
$$
\frac{n^{1/2}}{a_n(\log n)^{1/2}}(\hat g_{m_n}(\theta_j)-\Gstar(\theta_j))
=\frac{n^{1/2}}{a_n(\log n)^{1/2}}e^{\theta_j}\sum_{x=0}^{J-1}C^{-1}(j,x)x!(f_{{\widehat G_{d,m_n}}}(x)-f_{{\Gstar}}(x))\stackrel{\PP}{\rightarrow}0.
$$
Thus, for every $x\in\mathbb Z_+$, $\frac{n^{1/2}}{a_n(\log n)^{1/2}}(A_n(x)-A(x))\stackrel{\PP}{\rightarrow}0$. Then 
$$
\frac{n^{1/2}}{a_n(\log n)^{1/2}}(\overline u_{\widehat G_{d,m_n}}-\overline u_{\Gstar})=\sum_{x=0}^M\frac{f_{\Gstar}(x)\frac{n^{1/2}}{a_n(\log n)^{1/2}}(A_n(x)-A(x))-A(x)\frac{n^{1/2}}{a_n(\log n)^{1/2}}(f_{\widehat G_{d,m_n}}(x)-f_{\Gstar}(x))}{f_{\widehat G_{d,m_n}}(x)}
$$
where the equality holds since $ \int_\Theta u(x,\theta)\widehat G_{d,m_n}(d\theta\mid x)=\int_\Theta u(x,\theta)\Gstar(d\theta\mid x) $ for $x>M$.  The numerator in the right-hand side converges to zero in probability and the denominator is bounded away from zero for $0\leq x\leq M$ uniformly with respect to $n$ by the assumption  $G(\{0\})\leq a_0$ for some $a_0<1$ and every $G\in \mathcal C$. Thus,
$$
\frac{n^{1/2}}{a_n(\log n)^{1/2}}(\overline u_{\widehat G_{d,m_n}}-\overline u_{\Gstar})\stackrel{\PP}{\rightarrow}0,
$$
which proves \eqref{eq:conda}. The claim follows by the arbitrariness of $a_n$.

\subsection{Proof of Theorem \ref{th:consist2}}\label{app:proofconsist2}
Since the random variables
\(
u(X_i,\theta_i)-\overline u_{\Gstar}(X_i),\qquad i=1,\ldots,n,
\)
are independent, identically distributed, and have mean zero, the central limit theorem implies that
\[
(n\log n)^{-1/2}\bigl(S_n-\widehat S_n(\Gstar)\bigr)
\overset{\mathbb P}{\longrightarrow}0.
\]
Consequently, the assertions in parts (a) and (b) are equivalent to
\[
\frac{1}{n}\bigl(\widehat S_n(\Gstar)-\widehat  S_n(\widetilde G_{d,m_n}^{(T_n)})\bigr)
=o_{\mathbb P}(1),
\]
and
\[
\frac{1}{n}\bigl(\widehat S_n(\Gstar)-\widehat S_n(\widetilde G_{d,m_n}^{(T_n)})\bigr)
=O_{\mathbb P}(n^{-1/2}(\log n)^{1/2}),
\]
respectively. 
Denote
 $$
 Z_{n,i}=(u_{\Gstar}(X_i)-\overline u_{\Gstar})-(u_{\widetilde G_{d,m_n}^{(T_n)}}(X_i)-\overline u_{\widetilde G_{d,m_n}^{(T_n)}}),
 $$
     where
    $$
    \overline u_{\Gstar}=\EE[u_{\Gstar}(X_i)],\quad\quad \overline u_{\widetilde G_{d,m_n}^{(T_n)}}=\EE[u_{\widetilde G_{d,m_n}^{(T_n)}}(X_i)\mid \widetilde G_{d,m_n}^{(T_n)}].
    $$
 The random variables $\{Z_{n,i}:i=1,\dots,n\}$ are centered and conditionally i.i.d., given the sigma algebra generated by $(X'_{n})_{n\geq 1}$, where, for every $n$, $(X'_n)=(X'_{n,1},\dots,X'_{n,m_n})$ is the training sample. Moreover, 
\begin{align*}
     \EE[ (\sum_{i=1}^n Z_{n,i})^2\mid (X'_m)_{m\geq 1}]&=\operatorname{Var}(\sum_{i=1}^n (u_{\Gstar}(X_i)-\overline u_{\Gstar})-(u_{\widetilde G_{d,m_n}^{(T_n)}}(X_i)-\overline u_{\widetilde G_{d,m_n}^{(T_n)}})\mid (X'_m)_{m\geq 1})\\
     &\leq 4n \sup_{G\in\mathcal C}\operatorname{Var}(u_G(X_i))\\
     &\leq Cn,
\end{align*}
where $C=4 \sup_{G\in\mathcal C}
\sum_{x\ge0} u_G(x)^2 f_{\Gstar}(x) <\infty$ by Jensen's inequality and condition \eqref{eq:condition2}.
It follows that, for every $\epsilon >0$, 
  \begin{align*}
      \PP\left[
      \left|\frac{1}{(n\log n)^{1/2}} \sum_{i=1}^n Z_{n,i}\right|>\epsilon
        \right]\leq \frac{\EE[ (\sum_{i=1}^n Z_{n,i})^2]}{n\log n\epsilon^2}\rightarrow 0,
  \end{align*}
  as $n\rightarrow\infty$. Thus,
  \begin{equation}\label{eq:convZ2}
  \sum_{i=1}^n Z_{n,i}  =o_{\PP}((n\log n)^{1/2}),
  \end{equation}
  as $n\rightarrow\infty$.

  First, we deal with the proof of part (a) of Theorem \ref{th:consist2}. We can write
    \begin{align}\label{eq:decompcoars}
    \frac{\widehat S_n(\Gstar)-\widehat S_n(\widetilde G_{d,m_n}^{(T_{n})})}{n}&
    =\frac{\sum_{i=1}^nZ_{n,i}}{n}+(\overline u_{\Gstar}-\overline u_{\widetilde G_{d,m_n}^{(T_n)}})=(\overline u_{\Gstar}-\overline u_{\widetilde G_{d,m_n}^{(T_n)}})+o_\PP(1),
    \end{align}
   where the last equality follows from \eqref{eq:convZ2}.\\
We now prove that $\overline u_{\widetilde G_{d,m_n}^{(T_n)}}$ converges  to $\overline u_{\Gstar}$ in probability as $n\rightarrow\infty$. 
 By Lemma \ref{lem:EB_density-H2_coars}, there exists a constant $c$ such that for $n$ large enough
    $$
   \EE[H^2(\tilde f_{\widetilde G_{d,m_{n}}^{(T_{n})}}^{(T_{n})},\tilde f_{\Gstar}^{(T_{n})})]\leq \frac{c\log m_{n}}{m_{n}},
    $$
which  implies that $H(\tilde f_{\widetilde G_{d,m_{n}}^{(T_{n})}}^{(T_{n})},\tilde f_{\Gstar}^{(T_{n})})$ converges to zero in probability. Since the total variation distance is dominated by the Hellinger distance, then 
\begin{equation}
    \label{eq:coarsconv}
\sum_{x=0}^\infty| \tilde f_{\widetilde G_{d,m_{n}}^{(T_{n})}}^{(T_{n})}(x)-\tilde f_{\Gstar}^{(T_{n})}(x)|\stackrel{\PP}{\rightarrow }0.
\end{equation}
We now prove that 
$
\sum_{x=0}^\infty| f_{\widetilde G_{d,m_{n}}^{(T_{n})}}(x)- f_{\Gstar}(x)|$ converges to zero in probability.
We can write
$$
\sum_{x=0}^\infty| f_{\widetilde G_{d,m_{n}}^{(T_{n})}}(x)- f_{\Gstar}(x)|\leq 
\sum_{x\leq T_n}^\infty| \tilde f_{\widetilde G_{d,m_{n}}^{(T_{n})}}^{(T_{n})}(x)- f_{\Gstar}^{(T_{n})}(x)|+
\sum_{x>T_n}^\infty| f_{\widetilde G_{d,m_{n}}^{(T_{n})}}(x)- f_{\Gstar}(x)|
$$
The first term converges to zero in probability by \eqref{eq:coarsconv}.
For the second term, we can write
$$
\sum_{x>T_n}^\infty| f_{\widetilde G_{d,m_{n}}^{(T_{n})}}(x)- f_{\Gstar}(x)|\leq \sum_{x>T_n}^\infty  f_{\widetilde G_{d,m_{n}}^{(T_{n})}}(x)+\sum_{x>T_n}^\infty f_{\Gstar}(x)\leq 2 \sup_{G\in\mathcal C}\sum_{x>T_n}^\infty f_{G}(x)\rightarrow 0,
$$
as $n\rightarrow\infty$, where the convergence holds  by Lemma \ref{lemma:pre}.
Thus,
$$
\sum_{x=0}^\infty| f_{\widetilde G_{d,m_{n}}^{(T_{n})}}(x)- f_{\Gstar}(x)|\stackrel{\PP}{\rightarrow} 0.$$
We now prove that $\widetilde G_{d,m_n}^{(T_n)}$ converges weakly to $\Gstar$ in probability. The space $\mathcal C$ is weakly closed and uniformly tight by Lemma \ref{lemma:pre}. Hence, it is weakly compact.
Consider the map
\[
G\mapsto f_G
\]
defined on $\mathcal C$ and with values in $\ell^1(\mathbb Z_+)$. By Lemma \ref{lemma:pre} and the continuity of $k(x\mid\theta)$, this map is continuous. Moreover, again by Lemma \ref{lemma:pre}, it is injective. Thus, it  has a continuous inverse on its image (see, e.g., \citet[Theorem 26.6]{Munkres}). Hence,
\[
\|f_{\widetilde G_{d,m_n}^{(T_n)}}-f_{\Gstar}\|_1
\xrightarrow{\mathbb P}0
\]
implies that $\widetilde G_{d,m_n}^{(T_n)}$ converges weakly to $\Gstar$ in probability.\\
By Lemma \ref{lemma:convuG}, for every $x\in\mathbb Z_+$,
\[
u_{\widetilde G_{d,m_n}^{(T_n)}}(x)
\xrightarrow{\mathbb P}
u_{\Gstar}(x).
\]
Moreover, by Jensen's inequality,
\[
|u_G(x)|^2
\le
\int_\Theta |u(x,\theta)|^2\,G(d\theta\mid x),
\qquad G\in\mathcal C.
\]
Hence, by condition \eqref{eq:condition2}
\[
\sum_{x\in\mathbb Z_+}f_{\Gstar}(x)\sup_{G\in\mathcal C}|u_G(x)|^2<\infty.
\]
Therefore,
\[
|u_{\widetilde G_{d,m_n}^{(T_n)}}(x)-u_{\Gstar}(x)|\le2\left(\sup_{G\in\mathcal C}\int_\Theta |u(x,\theta)|^2\,G(d\theta\mid x)\right)^{1/2},
\]
and the right-hand side is summable with respect to $f_{\Gstar}(x)$.
We now prove that
\begin{equation}
    \label{eq:convsum}
\sum_{x\in\mathbb Z_+} f_{\Gstar}(x)\left|u_{\widetilde G_{d,m_n}^{(T_n)}}(x)-u_{\Gstar}(x)\right|\xrightarrow{\mathbb P}0.
\end{equation}
Fix $\varepsilon>0$. By the summability of
$x\mapsto f_{\Gstar}(x)\left(\sup_{G\in\mathcal C}\int_\Theta |u(x,\theta)|^2\,G(d\theta\mid x)\right)^{1/2}$,
there exists $K\in\mathbb N$ such that, for every $n\geq 1$,
\[
\sum_{x>K} f_{\Gstar}(x)\left|u_{\widetilde G_{d,m_n}^{(T_n)}}(x)-u_{\Gstar}(x)\right|<\frac{\varepsilon}{2}.
\]
On the other hand, by Lemma \ref{lemma:convuG},
\[
\sum_{x=0}^{K} f_{\Gstar}(x)\left|u_{\widetilde G_{d,m_n}^{(T_n)}}(x)-u_{\Gstar}(x)\right|\xrightarrow{\mathbb P}0.
\]
Hence,
\[
\mathbb P\left(\sum_{x\in\mathbb Z_+} f_{\Gstar}(x)\left|u_{\widetilde G_{d,m_n}^{(T_n)}}(x)-u_{\Gstar}(x)\right|
>\varepsilon\right)\rightarrow0.
\]
Thus, \eqref{eq:convsum} holds, which implies that
\begin{align*}
|\overline u_{\widetilde G_{d,m_n}^{(T_n)}}-\overline u_G|&=\left|\sum_{x\in \mathbb Z_+}f_{\Gstar}(x)\left(u_{\widetilde G_{d,m_n}^{(T_n)}}(x)-u_{\Gstar}(x)\right)\right|\\
&\leq \sum_{x\in\mathbb Z_+}f_{\Gstar}(x)\left|u_{\widetilde G_{d,m_n}^{(T_n)}}(x)-u_{\Gstar}(x)\right|\xrightarrow{\mathbb P}0.
\end{align*}
This concludes the proof of (a).

Now, we deal with the proof of part (b) of Theorem \ref{th:consist2}. We can write
    \begin{align}\label{eq:decompGridcoars}
    \frac{\widehat S_n(\Gstar)-\widehat S_n(\widetilde G_{d,m_n}^{(T_{n})})}{(n\log n)^{1/2}}&
    =\frac{\sum_{i=1}^nZ_{n,i}}{(n\log n)^{1/2}}+\frac{n^{1/2}}{(\log n)^{1/2}}(\overline u_{\Gstar}-\overline u_{\widetilde G_{d,m_n}^{(T_n)}})=\frac{n^{1/2}}{(\log n)^{1/2}}(\overline u_{\Gstar}-\overline u_{\widetilde G_{d,m_n}^{(T_n)}})+o_\PP(1),
    \end{align}
    where the last equality follows from \eqref{eq:convZ2}.
  We now prove that 
  $$\frac{n^{1/2}}{(\log n)^{1/2}}(\overline u_{\widetilde G_{d,m_n}^{(T_n)}}-\overline u_{\Gstar})=O_\PP(1),$$ $n\rightarrow\infty$. At this aim, it is sufficient to prove that, for every sequence $a_n\rightarrow\infty$,
  \begin{equation}\label{eq:conda2}
      \frac{n^{1/2}}{a_n(\log n)^{1/2}}(\overline u_{\widetilde G_{d,m_n}^{(T_n)}}-\overline u_{\Gstar})=o_\PP(1)
  \end{equation}
Denote by $\widetilde g_{m_n}^{(T_n)}$ and $\Gstar$ the probability mass functions corresponding to $\widetilde G_{d,m_n}^{(T_n)}$ and ${\Gstar}$, respectively.
 We can write
$$
\overline u_{\widetilde G_{d,m_n}^{(T_n)}}-\overline u_{{\Gstar}}=\sum_{x=0}^\infty \left(\frac{A_n^{(T_n)}(x)}{f_{\widetilde G_{d,m_n}^{(T_n)}}(x)}- \frac{A(x)}{f_{{\Gstar}}(x)}\right)f_{\Gstar}(x)
$$
where
$$
A_n^{(T_n)}(x)=\sum_{j=1}^J u(x,\theta_j)k(x\mid\theta_j)\widetilde g_{m_n}^{(T_n)}(\theta_j),\quad A(x)=\sum_{j=1}^J u(x,\theta_j)k(x\mid\theta_j) \Gstar(\theta_j).
$$
Since ${\Gstar}\in \mathcal P(\Theta_J)$, then by Lemma \ref{lem:EB_density-H2_coars} there exists a constant $c$ such that for $m$ large enough
    $$
   \EE[H^2( \tilde f_{\widetilde G_{d,m_{n}}^{(T_n)}}^{(T_n)}, \tilde f_{\Gstar}^{(T_n)})]\leq c\frac{\log m_{n}}{m_{n}}.
    $$
    This implies
    \begin{align}
\frac{n^{1/2}}{a_n(\log n)^{\frac 12}}\EE\qth{H( \tilde f_{\widetilde G_{d,m_{n}}^{(T_n)}}^{(T_n)}, \tilde f_{\Gstar}^{(T_n)})}
	\leq 
	\frac{n^{1/2}}{a_n(\log n)^{\frac 12}}\sqrt{\EE\qth{H^2( \tilde f_{\widetilde G_{d,m_{n}}^{(T_n)}}^{(T_n)}, \tilde f_{\Gstar}^{(T_n)})}}
	\leq 
	\frac{n^{1/2}}{a_n(\log n)^{\frac 12}}\sqrt{c{\log(m_n)\over m_n}}
	{\rightarrow} 0,
\end{align}
This, together with $m_n\asymp n$, implies that 
$$\frac{n^{1/2}}{a_n(\log n)^{1/2}}H(\tilde  f_{\widetilde G_{d,m_{n}}^{(T_n)}}^{(T_n)},\tilde  f_{\Gstar}^{(T_n)})\stackrel{\PP}{\rightarrow} 0.$$ 
as $n\rightarrow\infty$. Since the total variation distance is dominated by the Hellinger distance, then 
$$
\frac{n^{1/2}}{a_n(\log n)^{1/2}}\sum_{x=0}^\infty| \tilde f_{\widetilde G_{d,m_{n}}^{(T_n)}}^{(T_n)}(x)-\tilde  f_{\Gstar}^{(T_n)}(x)|\stackrel{\PP}{\rightarrow }0.$$
For $x\leq T_n$, $\tilde f_{\Gstar}^{(T_n)}(x)=f_{\Gstar}(x)$ and $\tilde f_{\widetilde G_{d,m_{n}}^{(T_n)}}^{(T_n)}(x)=f_{\widetilde G_{d,m_{n}}^{(T_n)}}(x) $.
Therefore $$\frac{n^{1/2}}{a_n(\log n)^{1/2}}\left|  f_{\widetilde G_{d,m_{n}}^{(T_n)}}(x)- f_{\Gstar}(x)\right|\stackrel{\PP}{\rightarrow} 0\quad \mbox{ for every }x\leq T_n, $$ 
as $n\rightarrow\infty$.
Let $n$ be such that $T_n>J$.
Define the vectors $v_n$ and $v$ with coordinates $v_n(x)=x!f_{\widetilde G_{d,m_n}^{(T_n)}}(x)$ and $v(x)=x!f_{{\Gstar}}(x)$ for $x=0,\dots,J-1$. Then
$v_n(x)=\sum_{j=1}^J   C(x,j)e^{-\theta_j}\widetilde  g_{m_n}^{(T_n)}(\theta_j)$ and $v(x)=\sum_{j=1}^J   C(x,j)e^{-\theta_j}\Gstar(\theta_j)$ where $C$ is the matrix with entries $C(x,j)=\theta_j^x$ ($j=1,\dots, J, x=0,\dots,J-1$). Since the matrix $C$ is non singular, then for every $j=1,\dots,J$
$$
\frac{n^{1/2}}{a_n(\log n)^{1/2}}(\widetilde  g_{m_n}^{(T_n)}(\theta_j)-\Gstar(\theta_j))
=\frac{n^{1/2}}{a_n(\log n)^{1/2}}e^{\theta_j}\sum_{x=0}^{J-1}C^{-1}(j,x)x!(f_{\widetilde G_{d,m_n}^{(T_n)}}(x)-f_{{\Gstar}}(x))\stackrel{\PP}{\rightarrow}0.
$$
It follows that, for every $x\in\mathbb Z_+$, $\frac{n^{1/2}}{a_n(\log n)^{1/2}}(A_n^{(T_n)}(x)-A(x))\stackrel{\PP}{\rightarrow}0$.  Then,
\begin{align*}
    \frac{n^{1/2}}{a_n(\log n)^{1/2}}(\overline u_{\widetilde G_{d,m_n}^{(T_n)}}&-\overline u_{\Gstar})
    =\frac{n^{1/2}}{a_n(\log n)^{1/2}}
\sum_{x=0}^\infty f_{\Gstar}(x)\left(\int_\Theta u(x,\theta)\widetilde G_{d,m_n}^{(T_n)}(d\theta\mid x)-\int_\Theta u(x,\theta) {\Gstar}(d\theta\mid x)\right)
\\&=\sum_{x=0}^M\frac{f_{\Gstar}(x)\frac{n^{1/2}}{a_n(\log n)^{1/2}}(A_n^{(T_n)}(x)-A(x))-A(x)\frac{n^{1/2}}{a_n(\log n)^{1/2}}(f_{\widetilde G_{d,m_n}^{(T_n)}}(x)-f_{\Gstar}(x))}{f_{\widetilde G_{d,m_n}^{(T_n)}}(x)}
\stackrel{\PP}{\rightarrow}0,
\end{align*}
where the first equality holds since $ \int_\Theta u(x,\theta)\widetilde G_{d,m_n}^{(T_n)}(d\theta\mid x)=\int_\Theta u(x,\theta){\Gstar}(d\theta\mid x) $ for $x>M$ and the convergence holds since the numerator converges to zero in probability and the denominator converges and  is bounded away from zero uniformly over
$0\le x\le M$ and $G\in\mathcal C$. Indeed, the condition $G(\{0\})\le a_0<1$ implies that for every
fixed $x\in\mathbb Z_+$,
\[
\inf_{G\in\mathcal C} f_G(x)
\ge
(1-a_0)\min_{\theta_j>0}k(x\mid\theta_j)>0,
\]
and inequality holds uniformly in $x\in [0,M]\cap\mathbb Z_+$. Thus \eqref{eq:conda2} holds.

%
%
%


\section{Proofs of Section \ref{sec3}}

\subsection{Proof of Theorem \ref{thm:total_intensity_lb}}\label{proof41_lb}
	
The following constructions and  proofs follow the standard ideas underlying Le Cam’s two-point method; see, e.g., \citet[Chapter 5]{Rigollet}. We nevertheless provide the details of the proof because our target is a random oracle sum of the main sample, rather than a fixed parameter, and the usual reduction from estimation to testing therefore requires a slight adaptation.

Before proving the theorem, we state some preliminary lemmas. For any $a,b\in\mathbb R$ with $a<b$, we denote by $\operatorname{Unif}[a,b]$ the uniform distribution on the interval $[a,b]$. 
In the proofs of the lemmas we write $p_\theta(x)=\operatorname{Poisson}(x\mid\theta)$, $f_{j,\varepsilon}=f_{G_{j,\varepsilon}}$, $\PP_{j,\varepsilon}=\PP_{G_{j,\varepsilon}}$ $\EE_{j,\varepsilon}=\EE_{G_{j,\varepsilon}}$. 

\begin{lemma}\label{lemma:lowerBound1}
Let $c>0$, $a$, $b$ and $\varepsilon_0>0$ be fixed numbers satisfying $(1+\varepsilon_0)^2<a<b\leq 3$, and $c\varepsilon_0\leq 1/2$. Define
\[
G_{0,\varepsilon}  =(1-c\varepsilon)\operatorname{Unif}[a,b]    +c\varepsilon\,\operatorname{Unif}[1-\varepsilon,1],\qquad
G_{1,\varepsilon}  =(1-c\varepsilon)\operatorname{Unif}[a,b]    +c\varepsilon\,\operatorname{Unif}[1,1+\varepsilon],
\]
for $\varepsilon\in(0,\varepsilon_0]$. Then
\begin{itemize}
    \item[(a)]  There exists a constant $C_1>0$ such that $H^2(f_{G_{0,\varepsilon}},f_{G_{1,\varepsilon}})
\leq C_1\varepsilon^4.$
\item[(b)] There exists a constant $C_2>0$ such that 
 $
\min_{j\in\{0,1\}}f_{G_{j,\varepsilon}}(1)\geq C_2.
$
\end{itemize}
\end{lemma}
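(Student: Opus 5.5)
The plan for part (a) is to write the difference of the two mixtures explicitly and control it pointwise by a Taylor argument. The uniform component on $[a,b]$ is common to both, so
\[
f_{0,\varepsilon}(x)-f_{1,\varepsilon}(x)=c\varepsilon\Big(\EE_{\theta\sim \operatorname{Unif}[1-\varepsilon,1]}[p_\theta(x)]-\EE_{\theta\sim \operatorname{Unif}[1,1+\varepsilon]}[p_\theta(x)]\Big).
\]
By the mean value theorem, each term inside the bracket differs from $p_1(x)$ by at most $\varepsilon\sup_{|\theta-1|\le\varepsilon}|\partial_\theta p_\theta(x)|$. This gives
\[
|f_{0,\varepsilon}(x)-f_{1,\varepsilon}(x)|\le 2c\varepsilon^2\sup_{|\theta-1|\le\varepsilon_0}|\partial_\theta p_\theta(x)|.
\]
No cancellation is needed at first order, since a single factor $\varepsilon$ per term already yields $\varepsilon^2$. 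I would then use $\partial_\theta p_\theta(x)=p_\theta(x)(x/\theta-1)$. Since $\theta\ge 1-\varepsilon_0\ge 1/2$ for $\varepsilon_0$ small, this gives $|\partial_\theta p_\theta(x)|\le 2(x+1)p_\theta(x)$.

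Next I would pass from pointwise bounds to Hellinger through the elementary inequality $(\sqrt{u}-\sqrt{v})^2=(u-v)^2/(\sqrt u+\sqrt v)^2\le (u-v)^2/(u+v)$. For the denominator, I keep only the common uniform component: $f_{0,\varepsilon}(x)+f_{1,\varepsilon}(x)\ge 2(1-c\varepsilon)\,\EE_{\operatorname{Unif}[a,b]}[p_\theta(x)]\ge \min_{\theta\in[a,b]}p_\theta(x)$, using $c\varepsilon_0\le 1/2$. The minimum over $[a,b]$ is at least $e^{-b}a^x/x!\ge e^{-3}a^x/x!$. Combining,
\[
H^2(f_{0,\varepsilon},f_{1,\varepsilon})\le 4c^2\varepsilon^4\, e^{3}\sum_{x\ge0}\frac{4(x+1)^2 e^{-2(1-\varepsilon_0)}(1+\varepsilon_0)^{2x}/(x!)^2}{a^x/x!}
\le C\varepsilon^4\sum_{x\ge0}(x+1)^2\frac{\big((1+\varepsilon_0)^2/a\big)^x}{x!}.
\]
The series converges, and here the hypothesis $(1+\varepsilon_0)^2<a$ makes the geometric factor $<1$. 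The factorial alone would already suffice. This yields $C_1$ depending only on $c,a,b,\varepsilon_0$.

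The step that needs the most care is this summability check: making sure the denominator lower bound comes only from the uniform component, uniformly in $\varepsilon$, and that the ratio of the squared Poisson weights near $\theta=1$ to the Poisson weights on $[a,b]$ is summable. I expect this to be routine once the bound $(u-v)^2/(u+v)$ is chosen.

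For part (b), I would again discard the small-perturbation component. This gives
\[
f_{j,\varepsilon}(1)\ge (1-c\varepsilon)\frac{1}{b-a}\int_a^b\theta e^{-\theta}\,d\theta\ge \tfrac12\min_{\theta\in[a,b]}\theta e^{-\theta}.
\]
Since $[a,b]\subset(1,3]$ and $\theta e^{-\theta}$ is decreasing on $[1,\infty)$, this is at least $\tfrac32 e^{-3}$. So one can take $C_2=\tfrac32e^{-3}$, independently of $j$ and $\varepsilon$.
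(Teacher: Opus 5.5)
Your proof is correct and follows essentially the same route as the paper. Both use $(\sqrt u-\sqrt v)^2\le (u-v)^2/(u+v)$, lower-bound the denominator by $e^{-b}a^x/x!$ from the common uniform component, and get a first-order mean-value estimate $|f_{0,\varepsilon}(x)-f_{1,\varepsilon}(x)|=O(\varepsilon^2)$ via $\partial_\theta p_\theta(x)=p_\theta(x)(x/\theta-1)$; part (b) also drops the perturbation in both. The one loose point is that $\theta\ge 1/2$ needs $\varepsilon_0\le 1/2$, which the hypotheses do not give (they only give $\varepsilon_0<\sqrt3-1$). This is fixed by using $1-\varepsilon_0>2-\sqrt3>0$, or by writing $\partial_\theta p_\theta(x)=p_\theta(x-1)-p_\theta(x)$ as the paper effectively does, at the cost of a different constant.
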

\begin{proof}
Denote $H=\operatorname{Unif}[a,b]$.

\noindent
    (a) We can write
\begin{align*}
    H^2(f_{0,\varepsilon},f_{1,\varepsilon})&=\sum_{t\geq 0}\left(\sqrt{f_{0,\varepsilon}(t)}-\sqrt{f_{1,\varepsilon}(t)}\right)^2\\
    &\leq \sum_{t\geq 0}\frac{\left(f_{0,\varepsilon}(t)-f_{1,\varepsilon}(t)\right)^2}{f_{0,\varepsilon}(t)+f_{1,\varepsilon}(t)}\\
   & \leq c^2 \sum_{t\geq 0}\frac{\left(\int_{1-\varepsilon}^1 p_\theta(t)d\theta-\int_{1}^{1+\varepsilon}p_\theta(t)d\theta\right)^2}{f_{0,\varepsilon}(t)+f_{1,\varepsilon}(t)}\\
   &\leq c^2 \varepsilon^2\sum_{t\geq 0}\frac{\left(
   \int_0^1 ( p_{1-\varepsilon+\varepsilon s}(t)-
    p_{1+\varepsilon s}(t)   d s\right)^2}{
    k_1  {a^t}/{t!}}\end{align*}
    for a constant $k_1$, where the first inequality comes from
    $$
    (\sqrt u -\sqrt v)^2=\frac{(u-v)^2}{(\sqrt u+\sqrt v)^2}\leq 
    \frac{(u-v)^2}{u+v}
    $$
for $u,v>0$, and the last inequality comes from
$$
f_{0,\varepsilon}(t)+f_{1,\varepsilon}(t)\geq \frac{2(1-c\varepsilon_0)}{b-a}\int_a^b \frac{e^{-\theta}\theta^t}{t!}d\theta\geq \frac{e^{-b}a^t}{t!}:=k_1\frac{a^t}{t!}.
$$
By the fundamental theorem of calculus
\begin{align*}
    \int_0^1 ( p_{1-\varepsilon+\varepsilon s}(t)-
    p_{1+\varepsilon s}(t) )  d s&=-
    \int_0^1\int_{1+\varepsilon s -\varepsilon}^{1+\varepsilon s}\frac{\partial p_\theta(t)}{\partial \theta}d\theta ds\\
    &=- \int_0^1\int_{1+\varepsilon s -\varepsilon}^{1+\varepsilon s}
    p_\theta(t)\left(\frac t \theta -1\right)d\theta ds.
\end{align*}
Thus,
\begin{align*}
     H^2(f_{0,\varepsilon},f_{1,\varepsilon})
     &\leq \frac{c^2}{k_1}  \varepsilon^2 \sum_{t\ge 0} 
     \frac{\left(\int_0^1\int_{1+\varepsilon s -\varepsilon}^{1+\varepsilon s}
    p_\theta(t)\left(\frac t \theta -1\right)d\theta ds\right)^2}{{a^t}/{t!}}\\
    &\leq \frac{c^2}{k_1}  \varepsilon^2 \sum_{t\ge 0} 
     \frac{\left(\int_0^1\int_{1+\varepsilon s -\varepsilon}^{1+\varepsilon s}
   \frac{e^{-\theta}\theta^t}{t!}\left(\frac t \theta -1\right)d\theta ds\right)^2}{{a^t}/{t!}}\\
   &\leq \frac{c^2}{k_1}  \varepsilon^2 \sum_{t\ge 0} 
     \frac{\left(\int_0^1\int_{1+\varepsilon s -\varepsilon}^{1+\varepsilon s}
   \left(\frac{e^{-\theta}\theta^t}{t!}+t\frac{e^{-\theta}\theta^{t-1}}{t!}\right)d\theta ds\right)^2}{{a^t}/{t!}}.
\end{align*}
For $s\in [0,1]$ and $\theta\in [1+\varepsilon s -\varepsilon,1+\varepsilon s]$, $\theta\in [1-\varepsilon_0,1+\varepsilon_0]$, which implies
\begin{align*}
H^2(f_{0,\varepsilon},f_{1,\varepsilon})
     &\leq \frac{c^2}{k_1^2}  e^{-2(1-\varepsilon_0)}\varepsilon^4 \left[  
    \sum_{t\ge 0} 
     \frac{\left(1+\varepsilon_0\right)^{2t}}{a^t}
     +\sum_{t\ge 0} \frac{t^2\left(1+\varepsilon_0\right)^{2(t-1)}}{a^t}\right]\\
     &\leq C_1\varepsilon^4,
\end{align*}
for a constant $C_1$.

\noindent (b)
We can write
\[
f_{j,\varepsilon}(1)\geq \frac{1-c\varepsilon}{b-a}\int_a^b \theta e^{-\theta}d\theta\geq \frac{1}{2(b-a)} \int_a^b \theta e^{-\theta}d\theta:=C_2>0.
\]
\end{proof}

\begin{lemma}\label{lemma:lowerBound2}
    Under the assumptions of Lemma \ref{lemma:lowerBound1}, let $P_{j,\varepsilon}^{(m,n)}$ denote the joint distribution of the training and main samples under $G_{j,\varepsilon}$, with $m+n\geq 1$. Then:
    \begin{itemize}
        \item[(a)]  
\[
H^2(P_{0,\varepsilon}^{(m,n)},P_{1,\varepsilon}^{(m,n)})
\leq C_1(m+n)\varepsilon^4,
\]
with $C_1$ as in Lemma \ref{lemma:lowerBound1}.
\item[(b)] If $\varepsilon=\gamma(m+n)^{-1/4}$ with $\gamma<\min(\varepsilon_0,(2\sqrt{C_1})^{-1/2})$, then every test $\phi\in\{0,1\}$  for $H_0:G=G_{0,\varepsilon}$ against $H_1:G=G_{1,\varepsilon}$, based on $(X_1,\dots,X_n,X'_1,\dots,X'_m)$ with $n+m\geq 1$, satisfies
\[
\PP_{0,\varepsilon}[\phi=1]+P_{1,\varepsilon}[\phi=0]\geq \frac12.
\]
    \end{itemize}
   \end{lemma}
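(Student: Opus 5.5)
For part (a), I will use tensorization of the squared Hellinger distance over independent coordinates. Under $G_{j,\varepsilon}$, the training sample $X'_1,\dots,X'_m$ and the main sample $X_1,\dots,X_n$ are together $m+n$ i.i.d. draws from $f_{j,\varepsilon}$. Hence $P^{(m,n)}_{j,\varepsilon}=f_{j,\varepsilon}^{\otimes(m+n)}$. With the unnormalized convention $H^2(p,q)=\sum(\sqrt p-\sqrt q)^2$, which is the one used in Lemma \ref{lemma:lowerBound1}, we have
\[
H^2(p^{\otimes k},q^{\otimes k})=2\bigl(1-(1-H^2(p,q)/2)^k\bigr)\le kH^2(p,q).
\]
The inequality follows from Bernoulli's inequality $(1-x)^k\ge 1-kx$ for $x\in[0,1]$. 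Combining this with Lemma \ref{lemma:lowerBound1}(a), which gives $H^2(f_{0,\varepsilon},f_{1,\varepsilon})\le C_1\varepsilon^4$, yields (a).

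For part (b), I will invoke the standard reduction of testing error to total variation. For any test $\phi$,
\[
\PP_{0,\varepsilon}[\phi=1]+\PP_{1,\varepsilon}[\phi=0]\ge 1-\TV\bigl(P^{(m,n)}_{0,\varepsilon},P^{(m,n)}_{1,\varepsilon}\bigr).
\]
I then bound the total variation by the Hellinger distance via $\TV\le H$. Under the unnormalized convention the sharper bound $\TV\le H\sqrt{1-H^2/4}\le H$ also holds, so either form suffices.

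The choice $\varepsilon=\gamma(m+n)^{-1/4}$ satisfies $\varepsilon\le\gamma<\varepsilon_0$, so it is admissible, and part (a) gives
\[
H^2\le C_1(m+n)\varepsilon^4=C_1\gamma^4.
\]
Since $\gamma<(2\sqrt{C_1})^{-1/2}$, we get $\gamma^4<1/(4C_1)$. Thus $H^2<1/4$, so $H<1/2$, and the sum of the two error probabilities is at least $1-1/2=1/2$.

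The argument is routine. The only point needing care is keeping the Hellinger normalization consistent with Lemma \ref{lemma:lowerBound1}, so that both the tensorization bound and $\TV\le H$ hold with constant $1$. I will check this against the definition $H^2=\sum(\sqrt p-\sqrt q)^2$, under which $H^2\le 2$ and $\TV=\tfrac12\|p-q\|_1$. If the paper's $\TV$ convention differed, only the numerical threshold on $\gamma$ would change.
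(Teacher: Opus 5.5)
Your proposal is correct and follows essentially the same route as the paper: you write $P^{(m,n)}_{j,\varepsilon}=f_{j,\varepsilon}^{\otimes(m+n)}$, tensorize the Hellinger affinity, and use $1-(1-u)^{m+n}\le (m+n)u$ to get (a). For (b) you bound total variation by $H\le C_1^{1/2}\gamma^2<1/2$, which is exactly the paper's threshold $\gamma<(2\sqrt{C_1})^{-1/2}$, and then use the bound $1-\mathrm{TV}$ on the sum of the two testing errors.
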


\begin{proof}
(a) Since  $P_{j,\varepsilon}^{(m,n)}=f_{j,\varepsilon}^{\otimes(m+n)}$, then 
\begin{align*}
H^2(P_{0,\varepsilon}^{(m,n)},P_{1,\varepsilon}^{(m,n)})
&=2\left(1-\sum_{x_1,\ldots,x_{m+n}}\sqrt{\prod_{i=1}^{m+n}f_{0,\varepsilon}(x_i)
\prod_{i=1}^{m+n}f_{1,\varepsilon}(x_i)}\right)\\
&=2\left(1-\prod_{i=1}^{m+n}\sum_{x_i\geq0}\sqrt{f_{0,\varepsilon}(x_i)f_{1,\varepsilon}(x_i)}\right)\\
&=2\left(1-\left(1-\frac12H^2(f_{0,\varepsilon},f_{1,\varepsilon})\right)^{m+n}\right)\\
&\leq (m+n)H^2(f_{0,\varepsilon},f_{1,\varepsilon}),
\end{align*}
where the last inequality comes from  $1-(1-u)^{m+n}\leq(m+n)u$ for $u\in[0,1]$.
By Lemma \ref{lemma:lowerBound1}, $H^2(P_{0,\varepsilon}^{(m,n)},P_{1,\varepsilon}^{(m,n)})\leq
C_1(m+n)\varepsilon^4$.

\noindent (b) the assumption $\gamma<\varepsilon_0$ ensures that $\gamma (m+n)^{-1/4}<\varepsilon_0$.
By (a) and the inequality between total variation and Hellinger
distances,
\[
\operatorname{TV}(P_{0,\varepsilon}^{(m,n)},P_{1,\varepsilon}^{(m,n)})
\leq H(P_{0,\varepsilon}^{(m,n)},P_{1,\varepsilon}^{(m,n)})\leq C_1^{1/2}(m+n)^{1/2}\varepsilon^2
=C_1^{1/2}\gamma^2.
\]
Thus, for $\gamma<\frac{1}{\sqrt 2 C_1^{1/4}}$,
\[
\operatorname{TV}(P_{0,\varepsilon}^{(m,n)},P_{1,\varepsilon}^{(m,n)})\leq \frac12.
\]
It follows that, for every test $\phi\in\{0,1\}$,
\begin{align*}
\PP_{0,\varepsilon}(\phi=1)+\PP_{1,\varepsilon}(\phi=0)
\geq 1-\operatorname{TV}(P_{0,\varepsilon}^{(m,n)},P_{1,\varepsilon}^{(m,n)})\geq \frac12.
\end{align*}
Indeed, letting $A=\{\phi=1\}$,
\[
\PP_{0,\varepsilon}(\phi=1)+\PP_{1,\varepsilon}(\phi=0)
=\PP_{0,\varepsilon}(A)+1-\PP_{1,\varepsilon}(A)
\geq
1-\operatorname{TV}(P_{0,\varepsilon}^{(m,n)},P_{1,\varepsilon}^{(m,n)})\geq \frac12.
\]
\end{proof}

Based on Lemmas \ref{lemma:lowerBound1}--\ref{lemma:lowerBound2}, we are in the position of proving Theorem \ref{thm:total_intensity_lb}.

\medskip

\noindent (a)
For $x>\kappa$, $m_{j,\varepsilon}(x)=\EE_{j,\varepsilon}[\theta_i I(X_i\leq \kappa)\mid X_i=x]=0$. 
Denote $d=(b-a)^{-1}$ and 
 \[h_x(\theta)=e^{-\theta}\theta^x, \quad \dot h_x(\theta)=\frac{\partial h_x(\theta)}{\partial \theta},\quad \ddot{h}_x=\frac{\partial^2 h_x(\theta)}{\partial \theta^2}.\]
 For $x\leq\kappa$
 \begin{align*}
    &m_{0,\varepsilon}(x)-m_{1,\varepsilon}(x)\\
    &    = \frac{c \int_{1-\varepsilon}^1 e^{-\theta}\theta^{x+1} d\theta+(1-c\varepsilon)d\int_a^b e^{-\theta}\theta^{x+1} d\theta}{c \int_{1-\varepsilon}^1 e^{-\theta}\theta^x d\theta+(1-c\varepsilon)d\int_a^b e^{-\theta}\theta^x d\theta}
    - \frac{c \int_{1}^{1+\varepsilon} e^{-\theta}\theta^{x+1} d\theta+(1-c\varepsilon)d\int_a^b e^{-\theta}\theta^{x+1} d\theta}{c \int_{1}^{1+\varepsilon} e^{-\theta}\theta^x d\theta+(1-c\varepsilon)d\int_a^b e^{-\theta}\theta^x d\theta}\\
    &=  \frac{c \int_{1-\varepsilon}^1 h_{x+1}(\theta) d\theta+(1-c\varepsilon)d\int_a^b h_{x+1}(\theta)  d\theta}{c \int_{1-\varepsilon}^1 h_x(\theta) d\theta+(1-c\varepsilon)d\int_a^b h_x(\theta) d\theta}
    - \frac{c \int_{1}^{1+\varepsilon} h_{x+1}(\theta)  d\theta+(1-c\varepsilon)d\int_a^b h_{x+1}(\theta)  d\theta}{c \int_{1}^{1+\varepsilon} h_x(\theta) d\theta+(1-c\varepsilon)d\int_a^b h_x(\theta) d\theta}
   \\
    &=  \frac{c(h_{x+1}(1)\varepsilon-\frac 1 2 \dot h_{x+1}(1) \varepsilon^2+\frac 1 6 \ddot h_{x+1}(1-\varepsilon_{x+1}')\varepsilon^3)+(1-c\varepsilon)d\int_a^b h_{x+1}(\theta)  d\theta}
    {c(h_{x}(1)\varepsilon-\frac 1 2 \dot h_{x}(1) \varepsilon^2+\frac 1 6 \ddot h_{x}(1-\varepsilon_{x}')\varepsilon^3)+(1-c\varepsilon)d\int_a^b h_x(\theta) d\theta}. \\
    &\quad\quad 
    -\frac{c(h_{x+1}(1)\varepsilon+\frac 1 2 \dot h_{x+1}(1) \varepsilon^2+\frac 1 6 \ddot h_{x+1}(1+\varepsilon_{x+1}'')\varepsilon^3)+(1-c\varepsilon)d\int_a^b h_{x+1}(\theta)  d\theta}
    {c(h_{x}(1)\varepsilon+\frac 1 2 \dot h_{x}(1) \varepsilon^2+\frac 1 6 \ddot h_{x}(1+\varepsilon_{x}'')\varepsilon^3)+(1-c\varepsilon)d\int_a^b h_x(\theta) d\theta},
    \end{align*}
with
$ h_x(1)=e^{-1}$ and $\dot h_x(1)=e^{-1}(x-1)$.
The points $1-\varepsilon_{x}',1+\varepsilon_{x}'',1-\varepsilon_{x+1}',1+\varepsilon_{x+1}''$  belong to the interval $[1-\varepsilon_0,1+\varepsilon_0]$. Since
\[
 \max_{0\leq x\leq\kappa+1}
 \sup_{\theta\in[1-\varepsilon_0,1+\varepsilon_0]}
 |\ddot h_x(\theta)|<\infty,
\]
the Taylor remainders are $O(\varepsilon^3)$ uniformly over
$x\in\{0,\ldots,\kappa\}$. Denoting
\[
 H_x:=d\int_a^bh_x(\theta)\,d\theta,
\]
we can write
\begin{align*}
 &m_{0,\varepsilon}(x)- m_{1,\varepsilon}(x)\\
 &= \frac{H_{x+1} +c(e^{-1}-H_{x+1})\varepsilon  -\frac{1}{2}ce^{-1}x\varepsilon^2  +O(\varepsilon^3)} { H_x
 +c(e^{-1}-H_x)\varepsilon -\frac{1}{2}ce^{-1}(x-1)\varepsilon^2  +O(\varepsilon^3)}-\frac{ H_{x+1} +c(e^{-1}-H_{x+1})\varepsilon  +\frac{1}{2}ce^{-1}x\varepsilon^2 +O(\varepsilon^3)} { H_x +c(e^{-1}-H_x)\varepsilon  +\frac{1}{2}ce^{-1}(x-1)\varepsilon^2  +O(\varepsilon^3)}\\
 &= \frac{ce^{-1}\bigl((x-1)H_{x+1}-xH_x\bigr)\varepsilon^2 +O(\varepsilon^3)}{H_x^2+O(\varepsilon)}\\
 &= \frac{ce^{-1}}{H_x^2}\bigl((x-1)H_{x+1}-xH_x\bigr)\varepsilon^2 +O(\varepsilon^3).
\end{align*}
For $j=0,1$
\[
 f_{j,\varepsilon}(x)=\frac{H_x}{x!}+O(\varepsilon),
\] 
uniformly with respect to $x\leq \kappa$. Thus,
\begin{align*}
 &\EE_{j,\varepsilon} \left[m_{0,\varepsilon}(X_1)-m_{1,\varepsilon}(X_1)\right] = ce^{-1}\varepsilon^2 \sum_{x=0}^{\kappa} \frac{(x-1)H_{x+1}-xH_x}{x!H_x} +O(\varepsilon^3).
\end{align*}
We now show that the coefficient of $\varepsilon^2$ is strictly negative. The terms corresponding to $x=0$ and $x=1$ are
$-H_1/H_0\leq -a$ and $-1$, respectively. Since they are negative, the claim is true if $\kappa\leq 1$. To treat the case $\kappa>1$, notice that, for every $x\geq2$,
\[
 \frac{(x-1)H_{x+1}-xH_x}{x!H_x}
 =\frac{xH_{x+1}/{H_x}-x}{x!}\leq  \frac{(x-1)b-x}{x!}\leq \frac{3(x-1)-x}{x!}= \frac{2x-3}{x!}.
\]
It follows that
\begin{align*}
 \sum_{x=0}^{\kappa}
 \frac{(x-1)H_{x+1}-xH_x}{x!H_x}
& \leq  -1-a+\sum_{x=2}^{\kappa}\frac{x(b-1)-b}{x!}\leq  -1-a+\sum_{x=2}^{\infty}\frac{2x-3}{x!}\\
 &=-1-a+2(e-1)-3(e-2)= 3-a-e<0,
\end{align*}
where we have used $b\leq 3$ and $a>1$.
Therefore, for $\varepsilon_0$ sufficiently small, there exists a
constant $C_3>0$, independent of $\varepsilon$ and $j$, such that
\[
 \EE_{j,\varepsilon} \left[m_{0,\varepsilon}(X_1)-m_{1,\varepsilon}(X_1)\right] \leq -C_3\varepsilon^2,
\]
which proves (a).

\medskip

\noindent (b)
By point (a),
\[
\EE_{j,\varepsilon}[m_{1,\varepsilon}(X_1)-m_{0,\varepsilon}(X_1)]\geq C_3\varepsilon^2.
\]
On the other hand, the explicit expression for
$m_{j,\varepsilon}(x)$ imply that there exists a constant $k_2>0$ such that 
\[
\sup_{x\leq \kappa}|m_{0,\varepsilon}(x)-m_{1,\varepsilon}(x)|
\leq k_2\varepsilon^2.
\]
Thus, 
\[
\operatorname{Var}_{j,\varepsilon}
\{m_{0,\varepsilon}(X_1)-m_{1,\varepsilon}(X_1)\}
\leq k_2^2\varepsilon^4.
\]
Using Chebyshev inequality, we can write
\begin{align*}
     &\PP_{j,\varepsilon}\left[\widehat S_{n,\kappa}(G_{1,\varepsilon})
-\widehat S_{n,\kappa}(G_{0,\varepsilon})\leq \frac 1 2C_3n\varepsilon^2\right]\\
&\leq \PP_{j,\varepsilon}\left[\widehat S_{n,\kappa}(G_{1,\varepsilon})
-\widehat S_{n,\kappa}(G_{0,\varepsilon})\leq n\EE_{j,\varepsilon}[m_{1,\varepsilon}(X_1)-m_{0,\varepsilon}(X_1)]-\frac 1 2C_3n\varepsilon^2\right]\\
&\leq \PP_{j,\varepsilon}\left[\left|\widehat S_{n,\kappa}(G_{1,\varepsilon})
-\widehat S_{n,\kappa}(G_{0,\varepsilon})-n\EE_{j,\varepsilon}[m_{1,\varepsilon}(X_1)-m_{0,\varepsilon}(X_1)]\right|\geq \frac 1 2C_3n\varepsilon^2\right]\\
&\leq \frac{4 k_2^2}{C_3^2n}\rightarrow 0.
\end{align*}
Thus
\begin{equation}\label{eq:exampleinf}
    \operatorname{Inf}_{\{\epsilon\in (0,\epsilon_0],j=0,1\}}\PP_{j,\epsilon}\left[
\left|
\widehat S_{n,\kappa}(G_{1,\varepsilon})
-
\widehat S_{n,\kappa}(G_{0,\varepsilon})
\right|
\geq k_1 n\varepsilon^2\right]\rightarrow 1.
\end{equation}
where $k_1={C_3}/{2}.$
Given an estimator $\widehat S$, define a test $\phi$ for
$H_0:G=G_{0,\varepsilon}$ against $H_1:G=G_{1,\varepsilon}$ by
\[
\phi=
\begin{cases}
0,&\text{if }|\widehat S-\widehat S_{n,\kappa}(G_{0,\varepsilon})|<
|\widehat S-\widehat S_{n,\kappa}(G_{1,\varepsilon})|,
\\
1,&\text{otherwise}.
\end{cases}
\]
Let
\[
A_n=\left\{\left|\widehat S_{n,\kappa}(G_{0,\varepsilon})
-\widehat S_{n,\kappa}(G_{1,\varepsilon})\right|\geq k_1 n\varepsilon^2 \right\}.
\]
Then, $\PP_{j,\varepsilon}(A_n^c)\rightarrow0$ for $j=0,1$. Moreover, on $A_n\cap\{\phi=1\}$,
\[
\left|\widehat S-\widehat S_{n,\kappa}(G_{0,\varepsilon})
\right|\geq
\frac{k_1}{2}n\varepsilon^2,
\]
and on $A_n\cap\{\phi=0\}$,
\[
\left|\widehat S-\widehat S_{n,\kappa}(G_{1,\varepsilon})\right|
\geq\frac{k_1}{2}n\varepsilon^2.
\]
Hence
\begin{align*}
&\EE_{{0,\varepsilon}}\left[\left(
\frac{\widehat S-\widehat S_{n,\kappa}(G_{0,\varepsilon})}{\sqrt n}
\right)^2\right]\geq \frac{k_1^2}{4}n\varepsilon^4 \PP_{0,\varepsilon}
\left(A_n\cap\{\phi=1\}\right)
\end{align*}
and
\begin{align*}
&\EE_{{1,\varepsilon}}\left[\left(\frac{\widehat S-\widehat S_{n,\kappa}(G_{1,\varepsilon})}{\sqrt n}\right)^2\right]\geq \frac{k_1^2}{4}n\varepsilon^4 \PP_{1,\varepsilon}\left(A_n\cap\{\phi=0\}\right),
\end{align*}
with $\PP_{j,\varepsilon}$ denoting the probability measure under the mixing distribution $G_{j,\varepsilon}$ for $j=0,1$.
Moreover,
\begin{align*}
&\PP_{0,\varepsilon}\left(A_n\cap\{\phi=1\}\right)+\PP_{1,\varepsilon}
\left(A_n\cap\{\phi=0\}\right)\geq \frac12-o(1),
\end{align*}
where the last inequality follows from \eqref{eq:exampleinf} and Lemma \ref{lemma:lowerBound2}.
Hence, there exists a constant $k_3$  such that, for $n$ sufficiently large,
\[
\max_{j\in\{0,1\}}\EE_{{j,\varepsilon}}\left[\left(\frac{
\widehat S-\widehat S_{n,\kappa}(G_{j,\varepsilon})
}{\sqrt n}\right)^2\right]\geq k_3n\varepsilon^4.
\]
Taking
\[
\varepsilon=\gamma(m+n)^{-1/4}
\]
with $\gamma\in (0,\min(\varepsilon_0,(2\sqrt{C_1})^{-1/2})$ gives
\[
\max_{j\in\{0,1\}}\EE_{{j,\varepsilon}}\left[\left(\frac{
\widehat S-\widehat S_{n,\kappa}(G_{j,\varepsilon})}{\sqrt n}\right)^2\right]
\geq C_4\frac{n}{m+n},
\]
with $C_4=k_3\gamma^4$. Since $G_{j,\varepsilon}\in\mathcal C$ for $j=0,1$ the result follows.

	\subsection{Proof of \prettyref{thm:sum_regular}}\label{proof41}

The proof makes use of the following result, which is a modification of \citet[Lemma 4]{jana2025optimal}.
	\begin{lemma}\label{lmm:EB_general-regret-bound}
		Let $\Gstar$ be a distribution such that $\EE_{\Gstar}[\theta^4]\leq M$ for some constant $M$. Then given any distribution $\hat {Q}$ on $[0,\hat h]$, and $h>0$ with $\Gstar([0,h])>\frac 12$ and any $V\geq 1$, with $\epsilon_V(\Gstar)=\PP_{X\sim f_\Gstar}[X\geq V]$,
		\begin{align*}
			&\EE_{\Gstar}\left[\left(\hat{\theta}_{\hat Q}(x) - \hat{\theta}_{\Gstar}(x)\right)^2I(X\leq V-1)\right]
			\notag\\
			&\leq  
			\pth{6(h^2+\hat h^2)+24(h+\hat h)V}(\sum_{x=0}^{V}(\sqrt{f_{\Gstar}(x)}-\sqrt{f_{\hat{Q}}(x)})^2+4\Gstar((h,\infty)))
			\nonumber\\
			&\quad +2(h+\hat h)^2\epsilon_V(\Gstar) +{2(1+2\sqrt 2)\sqrt{(M+\hat h^4)\Gstar((h,\infty))}}.
		\end{align*}
	\end{lemma}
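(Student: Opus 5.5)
The plan is to follow the proof of \citet[Lemma 4]{jana2025optimal}. We first compare the prior $\Gstar$ with its restriction to $[0,h]$, so that both priors in the main comparison have bounded support. Let $G_h:=\Gstar(\cdot\cap[0,h])/\Gstar([0,h])$, which is well defined because $\Gstar([0,h])>\frac12$. Write $\hat\theta_Q(x)=(x+1)f_Q(x+1)/f_Q(x)$ for the posterior mean under $Q$.

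\textbf{Step 1 (splitting off the tail of $\Gstar$).} Split the target error as
\[
\hat\theta_{\hat Q}-\hat\theta_{\Gstar}=(\hat\theta_{\hat Q}-\hat\theta_{G_h})+(\hat\theta_{G_h}-\hat\theta_{\Gstar}),
\]
and apply $(u+v)^2\le 2u^2+2v^2$. The second piece is treated in Step 4.

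\textbf{Step 2 (pointwise bound for the first piece).} Set $a=f_{G_h}$ and $b=f_{\hat Q}$. From $\hat\theta_{\hat Q}(x)\,b(x)=(x+1)b(x+1)$ and $\hat\theta_{G_h}(x)\,a(x)=(x+1)a(x+1)$ we get the algebraic identity
\[
(\hat\theta_{\hat Q}-\hat\theta_{G_h})(x)\,\bigl(a(x)+b(x)\bigr)=2(x+1)\bigl(b(x+1)-a(x+1)\bigr)+(\hat\theta_{G_h}-\hat\theta_{\hat Q})(x)\,\bigl(b(x)-a(x)\bigr)\cdot(\text{sign bookkeeping}).
\]
Since $0\le\hat\theta_{\hat Q}\le\hat h$ and $0\le\hat\theta_{G_h}\le h$, this yields
\[
|\hat\theta_{\hat Q}-\hat\theta_{G_h}|(x)\,(a+b)(x)\lesssim (x+1)\,|b(x+1)-a(x+1)|+(h+\hat h)\,|a(x)-b(x)|.
\]

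\textbf{Step 3 (conversion to Hellinger terms and summation).} Square the bound of Step 2 and divide by $a(x)+b(x)$; this dominates the weight $a(x)$ that appears in the expectation. Two facts convert the result into Hellinger terms:
\begin{itemize}
\item $(u-v)^2\le 2(u+v)(\sqrt u-\sqrt v)^2$;
\item $(x+1)\bigl(a(x+1)+b(x+1)\bigr)\le (h+\hat h)\bigl(a(x)+b(x)\bigr)$, which is the posterior-mean bound rewritten.
\end{itemize}
Together they give, for $x\le V-1$,
\[
\bigl(\hat\theta_{\hat Q}-\hat\theta_{G_h}\bigr)^2(x)\,a(x)\lesssim \bigl((h+\hat h)V+h^2+\hat h^2\bigr)\Bigl[\bigl(\sqrt{a(x)}-\sqrt{b(x)}\bigr)^2+\bigl(\sqrt{a(x+1)}-\sqrt{b(x+1)}\bigr)^2\Bigr].
\]
Summing over $x\le V-1$ produces the restricted squared Hellinger distance over $\{0,\dots,V\}$ between $f_{G_h}$ and $f_{\hat Q}$.

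Next, replace $f_{G_h}$ by $f_{\Gstar}$ using the triangle inequality for the Hellinger metric. The mixture satisfies $f_{\Gstar}=\Gstar([0,h])f_{G_h}+\Gstar((h,\infty))f_{G'}$, so $H^2(f_{\Gstar},f_{G_h})\le 2\,\Gstar((h,\infty))$. This is the origin of the term $4\Gstar((h,\infty))$.

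The expectation over $X\sim f_{\Gstar}$ of the bounded quantity $(\hat\theta_{\hat Q}-\hat\theta_{G_h})^2\le(h+\hat h)^2$ must also be compared with the expectation under $f_{G_h}$. On $\{X\le V-1\}$ the weights differ by at most the $\Gstar((h,\infty))$-mixture component. Any remaining uncontrolled mass beyond $V$ is charged to $2(h+\hat h)^2\epsilon_V(\Gstar)$.

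\textbf{Step 4 (the main obstacle: comparing $\hat\theta_{G_h}$ with $\hat\theta_{\Gstar}$).} The posterior mean $\hat\theta_{\Gstar}$ is not bounded by $h$. The plan is as follows.
\begin{itemize}
\item Write $\hat\theta_{\Gstar}(X)-\hat\theta_{G_h}(X)$ as a conditional expectation: $\hat\theta_{\Gstar}(X)=\EE[\theta\mid X]$ under the joint law of $(\theta,X)$, and $\hat\theta_{G_h}(X)$ is the same conditional expectation given additionally $\theta\le h$.
\item The difference is then supported, in conditional probability, on the event $\{\theta>h\}$. Its $L^2(f_{\Gstar})$ norm is bounded via Cauchy--Schwarz by $\sqrt{\EE[(\theta+\hat h)^4]\,\PP(\theta>h)}$.
\item Use Jensen for the conditional expectations and $\EE_{\Gstar}[\theta^4]\le M$, together with $\Gstar([0,h])>\frac12$ to control the normalisation of $G_h$.
\end{itemize}
This should produce a term of the form $C\sqrt{(M+\hat h^4)\Gstar((h,\infty))}$. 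The delicate part is the careful mixing bookkeeping needed to reach the constant $2(1+2\sqrt2)$; I would first try the crude Cauchy--Schwarz route and only then tune the constants.
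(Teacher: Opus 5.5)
\textbf{Overall.} Your route is viable up to absolute constants, but it does not establish the explicit constants in the statement. This is the gap, and it sits in a different place from the one you flagged.

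\textbf{Route comparison.} The paper gives no proof of this lemma beyond pointing to \citet[Lemma 4]{jana2025optimal}. The standard argument behind that lemma works with the excess risk. Since $I(X\le V-1)$ is a function of $X$, the left-hand side equals $\mathbb{E}[(\hat\theta_{\hat Q}(X)-\theta)^2I(X\le V-1)]-\mathbb{E}[(\hat\theta_{\Gstar}(X)-\theta)^2I(X\le V-1)]$. One splits the first expectation on $\{\theta\le h\}$ and $\{\theta>h\}$, and cancels the Bayes risk on $\{\theta\le h\}$ because $\hat\theta_{G_h}$ is Bayes for $G_h$. The event $\{\theta>h\}$ is handled by Cauchy--Schwarz, $\sqrt{\mathbb{E}[(\hat\theta_{\hat Q}(X)-\theta)^4]\,\Gstar((h,\infty))}\le 2\sqrt2\sqrt{(M+\hat h^4)\Gstar((h,\infty))}$. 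What remains is $\Gstar([0,h])\,\mathbb{E}_{G_h}[(\hat\theta_{\hat Q}-\hat\theta_{G_h})^2I(X\le V-1)]$, which goes to the bounded-support Hellinger comparison.

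You instead split the estimator difference through $\hat\theta_{G_h}$. That is a legitimate alternative, and your Step 4 is in fact the easy part. We have $\hat\theta_{\Gstar}(X)-\hat\theta_{G_h}(X)=\mathbb{E}[(\theta-\hat\theta_{G_h}(X))I(\theta>h)\mid X]$, which lies between $0$ and $\mathbb{E}[\theta I(\theta>h)\mid X]$ because $0\le\hat\theta_{G_h}\le h$. Conditional Jensen and Cauchy--Schwarz then bound its second moment by $\mathbb{E}[\theta^2I(\theta>h)]\le\sqrt{M\,\Gstar((h,\infty))}$, with no use of $\Gstar([0,h])>\frac12$. Separately, the identity in Step 2 should read $(\hat\theta_{\hat Q}-\hat\theta_{G_h})(a+b)=2(x+1)\{b(x+1)-a(x+1)\}+(\hat\theta_{\hat Q}+\hat\theta_{G_h})(a-b)$, with a sum rather than a difference. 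The inequality you draw from it is nonetheless correct.

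\textbf{The gap: the constants.} The tail constant is harmless: you need only $2\sqrt{M\Gstar((h,\infty))}$ against a budget of $2(1+2\sqrt2)\sqrt{(M+\hat h^4)\Gstar((h,\infty))}$. The binding quantity is the common multiplier $C_T:=6(h^2+\hat h^2)+24(h+\hat h)V$ of $S:=\sum_{x=0}^V(\sqrt{f_{\Gstar}(x)}-\sqrt{f_{\hat Q}(x)})^2$. Your two unweighted steps each cost a factor $2$ on $S$. The first is $(u+v)^2\le 2u^2+2v^2$. The second is $\sum_{x=0}^V(\sqrt{f_{G_h}}-\sqrt{f_{\hat Q}})^2\le 2S+2H^2(f_{\Gstar},f_{G_h})$. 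Together they put $4C_{\mathrm{bs}}$ in front of $S$, where $C_{\mathrm{bs}}$ is your Step 3 constant. The usual three-term splitting gives $C_{\mathrm{bs}}=6(h^2+\hat h^2)+24\max\{h,\hat h\}V$, so you overshoot by a factor of up to $4$.

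Part of this can be recovered in two ways.
\begin{itemize}
\item Use $(1+t)u^2+(1+t^{-1})v^2$ with $t^{-1}=1+4\sqrt2$, which is exactly what the tail budget allows.
\item Compare $\Gstar([0,h])f_{G_h}$ with $f_{\Gstar}$ directly. This is allowed because the posterior mean is scale invariant, and it costs $(1+s)S+(1+s^{-1})\Gstar((h,\infty))$ in Hellinger.
\end{itemize}
Even so, the coefficient of $S$ becomes $(1+t)(1+s)C_{\mathrm{bs}}>C_{\mathrm{bs}}$. This still exceeds $C_T$ when $h,\hat h$ are large relative to $V$. Closing the chain would need a bounded-support comparison strictly sharper than the standard one, which your proposal does not supply; the paper offers only the citation for these constants.

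As written, you prove the lemma up to an absolute constant factor. That is all its uses in Theorems~\ref{thm:sum_regular} and~\ref{thm:sum_coarse} require, but it is not the displayed inequality.
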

	Additionally, we will use the following result that provides a lower bound on the mixture density evaluated at $x\in \{0,1,\dots,\kappa\}$.
	\begin{lemma}
		\label{lmm:mixture-lower-bound}
Let $\mathcal{C}$ be a class of prior distributions on $[0, \infty)$ such that $\mathcal C\subset \subexpo(s)$ for some $s>0$.  Let $f_G(x)$ denote the Poisson mixture density defined as in \eqref{eq:mixture-density}.
		If, for some $\epsilon>0$ and $c_0<1$, $G([0,\epsilon))\leq c_0$ for every $G\in\mathcal C$,  then, for a given $\kappa > 0$, there exists a constant $b_\kappa > 0$ such that
		\[
		\min_{0 \le x \le \kappa} f_G(x) \ge b_\kappa \quad \text{for all } G \in \mathcal{C}.
		\]
	\end{lemma}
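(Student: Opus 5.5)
We bound $f_G(x)$ from below by restricting the defining integral to a compact window $[\epsilon,R]$ of $\theta$-values that carries a fixed fraction of the mass of every $G\in\mathcal C$. On such a window the Poisson kernel is bounded below uniformly, so the lower bound is uniform over the class.

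\emph{Step 1: a uniform compact window.} Since $\mathcal C\subset\subexpo(s)$, for every $G\in\mathcal C$ and every $R>0$ we have $G((R,\infty))\le 2e^{-R/s}$. Fix $R>\epsilon$ large enough that $2e^{-R/s}\le (1-c_0)/2$; for instance $R=\max\{\epsilon,\,s\log(4/(1-c_0))\}+1$. The hypothesis $G([0,\epsilon))\le c_0$ then gives, for every $G\in\mathcal C$,
\[
G([\epsilon,R])\;\ge\;1-c_0-2e^{-R/s}\;\ge\;\frac{1-c_0}{2}>0 .
\]

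\emph{Step 2: kernel lower bound on the window.} For each fixed $x\in\{0,\ldots,\kappa\}$, the map $\theta\mapsto e^{-\theta}\theta^x/x!$ is continuous and strictly positive on the compact interval $[\epsilon,R]$, because $\epsilon>0$. It therefore attains a positive minimum there. An explicit bound is
\[
\frac{e^{-\theta}\theta^x}{x!}\;\ge\;\frac{e^{-R}\min\{\epsilon,1\}^{\kappa}}{\kappa!}=:\beta_\kappa\qquad\text{for all }\theta\in[\epsilon,R],\ 0\le x\le\kappa .
\]
This holds since $\theta^x\ge \min\{\epsilon,1\}^x\ge\min\{\epsilon,1\}^\kappa$ and $x!\le\kappa!$.

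\emph{Step 3: conclusion.} Drop the part of the integral in \eqref{eq:mixture-density} outside $[\epsilon,R]$, which is nonnegative. This yields
\[
f_G(x)\;\ge\;\int_{[\epsilon,R]}\frac{e^{-\theta}\theta^x}{x!}\,G(d\theta)\;\ge\;\beta_\kappa\,\frac{1-c_0}{2}=:b_\kappa>0,
\]
and the bound is uniform over $G\in\mathcal C$ and $0\le x\le\kappa$. The same argument covers $\kappa=0$.

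There is no real obstacle here. The only point needing care is that the window $[\epsilon,R]$ does not depend on $G$. The subexponential tail bound is exactly what makes $R$ uniform, and $\epsilon$ is fixed by hypothesis.
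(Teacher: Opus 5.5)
Your proof is correct and follows essentially the same route as the paper. Both arguments restrict the integral to a window $[\epsilon,R]$ that carries at least a fixed fraction of $1-c_0$ uniformly over $\mathcal C$, and then bound the Poisson kernel below there by $e^{-R}\min\{\epsilon,1\}^\kappa/\kappa!$. The only difference is cosmetic: the paper appeals to tightness of $\subexpo(s)$ with a generic $\delta\in(0,1-c_0)$, whereas you use the explicit tail bound $2e^{-R/s}$ to choose $R>\epsilon$ concretely, which gives an explicit value of $b_\kappa$.
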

	
	\begin{proof}
    Since $\mathcal C\subset \subexpo(s)$, then $\mathcal C$ is tight. Fix $\delta\in (0,1-c_0)$, and let $M$ be such that $G(M,\infty)<\delta$ for every $G\in\mathcal C$. Then, for every $G\in\mathcal C$, $G([\epsilon,M]>1-c_0-\delta>0$.	Thus,    
    for any $x \in \{0,\dots,\kappa\}$, we have
          \begin{align*}
            \int_0^\infty \frac{e^{-\theta}\theta^x}{x!}G(d\theta)&\geq \frac{1}{\kappa!}\int_{\epsilon}^{M}e^{-\theta}\theta^xG(d\theta)\\
            &\geq \frac{1}{\kappa!}\min\{1,\epsilon^\kappa\}e^{-M}G([\epsilon,M])\\
            &\geq  \frac{1}{\kappa!}\min\{1,\epsilon^\kappa\}e^{-M}(1-c_0-\delta):=b_k>0.\
          \end{align*}
        
         	\end{proof}
	
	Based on Lemmas \ref{lmm:EB_general-regret-bound}--\ref{lmm:mixture-lower-bound}, we are now in the position of proving Theorem \ref{thm:sum_regular}
	
\subsubsection{Subexponential setting}\label{proof:th41subexpo}
	For a simplicity of notation, we denote for every $G$, $$\hat\theta_{G}(X_i):=\EE_{G}[\theta_i|X_i].$$ We first bound the error term $\mathbb{E}\!\Biggl[\left(
	\frac{\hat S_{n,\kappa}(\hatG_{d,m})-\hat S_{n,\kappa}(\Gstar)}{\sqrt n}
	\right)^2\Biggr]$ from above as
	\begin{align}\label{eq:m4}
		\mathbb{E}\!\Biggl[\left(
		\frac{\hat S_{n,\kappa}(\hatG_{d,m})-\hat S_{n,\kappa}(\Gstar)}{\sqrt n}
		\right)^2\Biggr]
		&\leq \frac 1n
		\EE\qth{\pth{\sum_{i=1}^n(\hat\theta_\Gstar(X_i)-\hat\theta_{\hatG_{d,m}}(X_i))I(X_i\leq \kappa)}^2}
		\notag\\
		&\leq 
		n\EE\qth{(\hat\theta_\Gstar(X)-\hat\theta_{\hatG_{d,m}}(X))^2I(X\leq \kappa)}
	\end{align}
	where $X$ is an independent representative sample point. We would like to bound the right hand side of the above display. To this end, consider $\hat h=C(s)\log m$ for a large constant $C(s)>0$ such that 
	$$
	\inf_{G\in \mathcal C} G([0,\hat h])
	\geq 1- \frac {\breve c}{2m^4}
	$$ 
	for some constant $\breve c>0$ which is satisfied due to the subexponential tail assumption on $G\in \calC$. Let $\hat G_{d,m}^{(\hat h)}$ be the restriction of $\hat G_{d,m}$ on $[0,\hat{h}]$, given as
	\[
	\hat G_{d,m}^{(\hat h)}(d\theta) = \frac{\hat G_{d,m}(d\theta)\,I(\theta \le \hat{h})}{\hat G_{d,m}([0,\hat{h}])}.
	\]
	Hence,
	\[
	\hat G_{d,m}(d\theta) = \hat G_{d,m}^{(\hat h)}(d\theta)\,\hat G_{d,m}([0,\hat{h}]) + \hat G_{d,m}(d\theta)I(\theta > \hat{h}).
	\]
	For an $x\leq \kappa$, define the following quantities.
	\[
	p = \hat G_{d,m}([0,\hat{h}]),
	\quad 
	q = \int_{\theta > \hat{h}} f_\theta(x)\, d\hat G_{d,m}(\theta), \quad
	q_1 = \int_{\theta > \hat{h}} f_\theta(x+1)\, d\hat G_{d,m}(\theta).
	\]
	In view of the above definition, we get
	\[
	f_{\hat G_{d,m}}(x) = p f_{\hat G_{d,m}^{(h)}}(x) + q, \quad
	f_{\hat G_{d,m}}(x+1) = p f_{\hat G_{d,m}^{(h)}}(x+1) + q_1.
	\]
	Define $a = f_{\hat G_{d,m}^{(\hat h)}}(x+1)$ and $b = f_{\hat G_{d,m}^{(\hat h)}}(x)$ and note that by our assumptions we have $b,\; pb+q\geq c$ for a constant $c>0$.
   Indeed, letting $c:=b_\kappa/2$, we obtain $pb+q=f_{\hat G_{d,m}}(x)>c$ and 
    $$
    b=f_{\hat G_{d,m}^{(\hat h)}}(x)\geq\frac{f_{\hat G_{d,m}}(x)-\hat G_{d,m}((\hat h,+\infty))}{\hat G_{d,m}([0,\hat h])}
    \geq \frac{b_\kappa}{2}=c.
    $$
    In view of the above definitions, for any $x\leq \kappa$, we have
	\begin{align}\label{eq:m1}
		\left| \hat{\theta}_{\hat G_{d,m}}(x) - \hat{\theta}_{\hat G_{d,m}^{(h)}}(x) \right|\nonumber
		&= (x+1)\left| \frac{f_{\hat G_{d,m}}(x+1)}{f_{\hat G_{d,m}}(x)} - \frac{f_{\hat G_{d,m}^{(\hat h)}}(x+1)}{f_{\hat G_{d,m}^{(h)}}(x)} \right|= (x+1)\left| \frac{p a + q_1}{p b + q} - \frac{a}{b} \right|\nonumber\\
		&= (x+1)\left| \frac{pab + bq_1 - pab - aq}{b(pb+q)} \right|
		= (x+1)\left| \frac{b q_1 - a q}{b(pb+q)} \right|\nonumber
		\\&\le (x+1)\frac{q_1}{pb+q} + (x+1)\frac{q}{c^{2}}.
	\end{align}
Hence, we have 
	\begin{align}\label{eq:m3}
		\EE\left[\left(\hat{\theta}_{\hat G_{d,m}}(X) - \hat{\theta}_{\Gstar}(X)\right)^2I(X\leq \kappa)\right]
		&\leq 
		2\EE\left[\left(\hat{\theta}_{\hatG_{d,m}^{(\hat h)}}(X) - \hat{\theta}_{\Gstar}(X)\right)^2I(X\leq \kappa)\right]
		\nonumber\\
        &+2\EE\left[\left(\hat{\theta}_{\hat G_{d,m}}(X) - \hat{\theta}_{\hatG_{d,m}^{(\hat h)}}(X)\right)^2I(X\leq \kappa)\right]
		\nonumber\\
		&
		\leq 
		2\EE\left[\left(\hat{\theta}_{\hatG_{d,m}^{(\hat h)}}(X) - \hat{\theta}_{\Gstar}(X)\right)^2I(X\leq \kappa)\right]
		+ \frac 4{c{m^4}},
	\end{align}
	where the last inequality used \prettyref{eq:m1} and the fact that 
	$$
	q,q_1\leq \hat G[(\hat h,\infty)]\leq \frac{c}{m^4}.
	$$
	 To bound $\EE\left[\left(\hat{\theta}_{\hatG_{d,m}^{(\hat h)}}(X) - \hat{\theta}_{\Gstar}(X)\right)^2I(X\leq \kappa)\right]$, we 
use Lemma \ref{lmm:EB_general-regret-bound} with $V,h,M = C_1(s)\log m, \hat Q=\hat G^{(\hat h)}$ for a large constant $C_1(s)$ that relies on $s$ and satisfies
	$$
	\Gstar([0,h])\geq 1-\frac 1{m^5},
	\quad 
	\epsilon_V(\Gstar)\leq \frac 1{m^5},
	$$
	where $\epsilon_V(\Gstar)=\PP_{X\sim f_\Gstar}[X\geq V]$. Note that the first inequality is guaranteed as $\Gstar\in \subexpo(s)$ and the second inequality is guaranteed in view of \cite[Lemma 12]{jana2025optimal}. Hence, we get, conditioned on $\hat G,\hat h$
	\begin{align}\label{eq:m2}
		\EE\left[\left(\hat{\theta}_{\hatG_{d,m}^{(\hat h)}}(X) - \hat{\theta}_{\Gstar}(X)\right)^2I(X\leq \kappa)\right]
		\leq C_2(s)(\log m)^2\pth{H^2(f_{{\Gstar}},f_{\hatG_{d,m}^{(\hat h)}})
			+\frac 1{m^{5/2}}}.
	\end{align}
	Next, we use the following inequality that
	\[
	H^2(f_{Q^{(\tilde h)}},f_{Q}) \leq 2 \TV(f_{Q^{(\tilde h)}},f_{Q}) \leq 2 \TV(Q^{(\tilde h)},Q) = 4 Q((\tilde h,\infty)),
	\]
	where $\TV$ denotes the total variation and the middle inequality applies the data-processing inequality \cite{csiszar1967information} and the last equality followed as
	\begin{align*}
		\TV(Q^{(\tilde h)},Q)
		&=\int_0^{\tilde h}\abs{dQ^{(\tilde h)}(\theta)-dQ(\theta)}+\int_{\tilde h}^\infty
		dQ(\theta)\\
		&=\pth{\frac 1{Q([0,\tilde h])}-1}\int_0^{\tilde h} dQ(\theta)+Q((\tilde h,\infty))\\
		&=2Q((\tilde h,\infty)).
	\end{align*}
	In view of the above, we use $Q=\hatG_{d,m},\tilde h=\hat h$ to get
	\begin{align*}\label{eq:D1Hineq}
		H^2(f_{\Gstar},f_{\hatG_{d,m}^{(\hat h)}})
		\leq 
		2H^2(f_{\Gstar},f_{\hatG_{d,m}})+2H^2(f_{\hatG_{d,m}},f_{\hatG_{d,m}^{(\hat h)}})
		\leq
		2H^2(f_{\Gstar},f_{\hatG_{d,m}})+ \frac 2{m^4}.
	\end{align*}
	Then our result follows from combining the last display with \eqref{eq:m2}, \eqref{eq:m3} and \eqref{eq:m4} in view of \prettyref{lem:EB_density-H2} \ref{pt:subexpo-regular-dens}.

	\subsubsection{Compact-support setting} The proof of the bounded supported prior follows similarly by using a constant $\hat h$, where $\hat h$ denotes the parameter involved in the distribution class $\calP[0,\hat h]$. The rest of the arguments are identical with \eqref{eq:m2} replaced by
	\begin{align}
		\EE\left[\left(\hat{\theta}_{\hatG_{d,m}^{(\hat h)}}(X) - \hat{\theta}_{\Gstar}(X)\right)^2I(X\leq \kappa)\right]
		\leq C(\hat h)(\hat h^2+h\kappa)\pth{H^2(f_{{\Gstar}},f_{\hatG_{d,m}^{(\hat h)}})
			+\frac 1{m^4}},
	\end{align}
	and noting that $\hatG_{d,m}^{(\hat h)}$ is identical to $\hat G$.

\subsection{Proof of \prettyref{thm:sum_coarse} }\label{proof:totalcoarse}

\subsubsection{Compact-support setting} We can write
\begin{align}
\frac{1}{n}
\mathbb{E}\!\left(\hat S_{n,\kappa}(\tilde G_{d,m}^{(T)})- \hat S_{n,\kappa}(\Gstar)\right)^2
&=
\frac{1}{n}\,
\mathbb{E}\!\left[
\left(
\sum_{i=1}^n
(\widehat\theta_{\tilde G^{(T)}_{d,m}}(X_i)
-\hat \theta_{\Gstar}(X_i))
I(X_i\le \kappa)
\right)^2
\right]
\notag\\
&\le
\sum_{i=1}^n
\mathbb{E}\!\left[
(\widehat\theta_{{\tilde G}^{(T)}_{d,m}}(X_i)
-\hat \theta_{\Gstar}(X_i))^2
I(X_i\le \kappa)
\right],\label{eq:firstinequality}
\end{align}
where we have used the Cauchy-Schwartz inequality. Moreover,
\begin{align}
	&\mathbb{E}\!\left[	(\widehat\theta_{{\tilde G_{d,m}^{(T)}}}(X_i)-\hat \theta_{\Gstar}(X_i))^2	I(X_i\le \kappa)	\right]	=	\sum_{x=0}^\kappa	(x+1)^2	\left(	\frac{f_{{\tilde G_{d,m}^{(T)}}}(x+1)}{f_{{\tilde G_{d,m}^{(T)}}}(x)}
	-	\frac{f_{\Gstar}(x+1)}{f_{\Gstar}(x)}	\right)^2	f_{\Gstar}(x)	\nonumber\\
	&\nonumber	\le	3\Biggl\{\sum_{x=0}^\kappa	(x+1)^2	\left(	\frac{f_{{\tilde G_{d,m}^{(T)}}}(x+1)}{f_{{\tilde G_{d,m}^{(T)}}}(x)}	-	\frac{2f_{\tilde G_{d,m}^{(T)}}(x+1)}{f_{{\tilde G_{d,m}^{(T)}}}(x)+f_{\Gstar}(x)}
	\right)^2	f_{\Gstar}(x)\\
	&+\sum_{x=0}^\kappa	(x+1)^2	\left(	\frac{2(f_{\tilde G_{d,m}^{(T)}}(x+1)-f_{\Gstar}(x+1))}{f_{{\tilde G_{d,m}^{(T)}}}(x)+f_{\Gstar}(x)}	\right)^2	f_{\Gstar}(x)	\nonumber\\
	&	+\sum_{x=0}^\kappa	(x+1)^2	\left(	\frac{2f_{\Gstar}(x+1)}{f_{{\tilde G_{d,m}^{(T)}}}(x)+f_{\Gstar}(x)}
	-	\frac{f_{\Gstar}(x+1)}{f_{{\tilde G_{d,m}^{(T)}}}(x)}	\right)^2 f_{\Gstar}(x)\Biggr\}	\nonumber\\
	&\le	3\Biggl\{\sum_{x=0}^\kappa (x+1)^2	\left(	\frac{f_{{\tilde G_{d,m}^{(T)}}}(x+1)}{f_{{\tilde G_{d,m}^{(T)}}}(x)}	\right)^2	\left(	\frac{f_{\Gstar}(x)-f_{{\tilde G_{d,m}^{(T)}}}(x)}{f_{\Gstar}(x)+f_{{\tilde G_{d,m}^{(T)}}}(x)}	\right)^2 f_{\Gstar}(x)	\nonumber\\
    &\nonumber +4\sum_{x=0}^\kappa (x+1)^2	\frac{\big(f_{{\tilde G_{d,m}^{(T)}}}(x+1)-f_{\Gstar}(x+1)\big)^2}	{f_{{\tilde G_{d,m}^{(T)}}}(x)+f_{\Gstar}(x)}	f_{\Gstar}(x)	\nonumber\\
	&+\sum_{x=0}^\kappa	\left\{	(x+1)\frac{f_{\Gstar}(x+1)}{f_{\Gstar}(x)}	\right\}^2	\left(	\frac{f_{\Gstar}(x)-f_{{\tilde G_{d,m}^{(T)}}}(x)}{f_{\Gstar}(x)+f_{{\tilde G_{d,m}^{(T)}}}(x)}	\right)^2f_{\Gstar}(x)\Biggr\}	\nonumber\\
	&\leq	3\Biggl\{\sum_{x=0}^\kappa	\left(\widehat\theta_{\tilde G_{d,m}^{(T)}}(x)\right)^2	\left(\sqrt{f_{\Gstar}(x)}-\sqrt{f_{{\tilde G_{d,m}^{(T)}}}(x)}\right)^2	\nonumber\\
    &\nonumber +4(x+1)^2	\frac{f_{{\tilde G_{d,m}^{(T)}}}(x+1)+f_{\Gstar}(x+1)}	{f_{\Gstar}(x)+f_{{\tilde G_{d,m}^{(T)}}}(x)}	\left(	\sqrt{f_{{\tilde G_{d,m}^{(T)}}}(x+1)}-\sqrt{f_{\Gstar}(x+1)}	\right)^2	\nonumber\\
	&+	\left(\widehat\theta_{\Gstar}(x)\right)^2	\left(\sqrt{f_{\Gstar}(x)}-\sqrt{f_{{\tilde G_{d,m}^{(T)}}}(x)}\right)^2\Biggr\}	\nonumber\\
	&	\leq 	3\sum_{x=0}^\kappa \Biggl\{	\left((\widehat\theta_\Gstar(x))^2+(\widehat\theta_{{\tilde G_{d,m}^{(T)}}}(x))^2\right)	\left(	\sqrt{f_{\Gstar}(x)}-\sqrt{f_{{\tilde G_{d,m}^{(T)}}}(x)}	\right)^2\nonumber\\
    &\nonumber +4(x+1)	\left(\widehat\theta_\Gstar(x)+\widehat\theta_{{\tilde G_{d,m}^{(T)}}}(x)\right)	\left(
	\sqrt{f_{\Gstar}(x+1)}-\sqrt{f_{{\tilde G_{d,m}^{(T)}}}(x+1)}	\right)^2\Biggr\}	\nonumber\\
	&\le	12((\kappa+1)(h+\hat h)+\widehat h^2+h^2)	H^2\!\left(	\tilde f_{\Gstar}^{(T)},	\tilde f_{{\tilde G_{d,m}^{(T)}}}^{(T)}	\right),\label{eq:inequalities}
\end{align}
where the last inequality follows from $T\geq \kappa+1$. 
By  Lemma~\ref{lem:EB_density-H2_coars} (a), we can write, for every  $m \ge 3$,
\begin{equation}\label{eq:expectation}
\mathbb{E}\Big[H^2\Big(\widetilde f^{(T)}_{\Gstar}, \widetilde f^{(T)}_{\widehat G^{(T)}_{d,m}}\Big)\Big]
\le \frac{1}{m}\min\left\{ \frac{4c_2}{c_1}(T+2),\ \left(\frac{12c_2}{c_1} + 4 \right)\frac{2(\hat h e^2+2)\log m}{\log\log m}\right\}.
\end{equation}
Combining \eqref{eq:inequalities} and \eqref{eq:expectation}, we obtain
\begin{align*}
&\mathbb{E}\Big[\big(\hat\theta_{\widehat G^{(T)}_{d,m}}(X_i) - \hat\theta_{\Gstar}(X_i)\big)^2 I(X_i \le \kappa)\Big]\\
&\quad\le \frac{C(\hat h,\kappa,h)}{m}\min\left\{ \frac{4c_2}{c_1}(T+2),\left( \frac{12c_2}{c_1} + 4\right) \frac{2(\hat h e^2+2)\log m}{\log\log m}\right\},
\end{align*}
where $C(\hat h,\kappa,h) := 12\big((\kappa+1)(h+\hat h) + \hat h^2 + h^2\big)$. Substituting in  \eqref{eq:firstinequality}, we obtain
\begin{equation*}
\frac{1}{n}\,\mathbb{E}\left[\Big(\widehat S_{n,\kappa}\big(\widehat G^{(T)}_{d,m}\big) - \widehat S_{n,\kappa}(\Gstar)\Big)^2\right]
\le n\, C_{\hat h}\, \frac{\hat h^2 + \hat h \kappa}{m}\, \min\left\{ T,\ \frac{\log m}{\log\log m} \right\},
\end{equation*}
for a constant $C_{\hat h}$ which only depends on  $\hat h,c_1,c_2$.
\subsubsection{Subexponential setting}
Following the same reasoning as in Section \ref{proof:th41subexpo}, we can write
	\begin{align*}
		\mathbb{E}\!\Biggl[\left(
		\frac{\hat S_{n,\kappa}(\tilde{G}_{d,m}^{(T)})-\hat S_{n,\kappa}(\Gstar)}{\sqrt n}
		\right)^2\Biggr]
		&\leq 
		n\EE\qth{(\hat\theta_\Gstar(X)-\hat\theta_{\tilde{G}_{d,m}^{(T)}}(X))^2I(X\leq \kappa)},
	\end{align*}
with $X$ denoting an independent copy of the $X_i$. Let $\hat h=C_1(s)\log m$ be such that 
$$
	\hat h \geq \kappa+1,\quad \inf_{G\in\mathcal C}G([0,\hat h])\geq 1-\frac {b_0}{2m^2},
$$
and let $\tilde G^{(T,\hat h)}_{d,m}$ denote the restriction of $\tilde G^{(T)}_{d,m}$ to $[0,\hat h]$.
Reasoning as in \eqref{eq:m3}, with $\tilde{G}_{d,m}^{(T)}$ in the place of $\hat G_{d,m}$, we can write 
    \begin{align*}
\EE\left[\left(\hat{\theta}_{\tilde{G}_{d,m}^{(T)}}(X) - \hat{\theta}_{\Gstar}(X)\right)^2I(X\leq \kappa)\right]
&\leq 2\EE\left[\left(\hat{\theta}_{\tilde G^{(T,\hat h)}_{d,m}}(X) - \hat{\theta}_{\Gstar}(X)\right)^2I(X\leq \kappa)\right]+ \frac 4{c{m^4}}.
	\end{align*}
By \prettyref{lmm:EB_general-regret-bound} with $V=h=\widehat h=M=C_1(s)\log m$,
for $C_1(s)$ sufficiently large,  $\hat Q=\tilde G^{(T,\hat h)}_{d,m}$, we can write
\begin{align*}
	&\EE_{\Gstar}\left[\left(\hat{\theta}_{\tilde G^{(T,\hat h)}_{d,m}}(X) - \hat{\theta}_{\Gstar}(X)\right)^2I(X\leq \kappa)\right]\\
	&\leq  
	C_1(s)((\log m)^2+\kappa\log m)\sum_{x=0}^{\kappa+1}(\sqrt{f_{\Gstar}(x)}-\sqrt{f_{\tilde G^{(T,\hat h)}_{d,m}}(x)})^2 + \frac {C_2(s)(\log m)^2}{m^2}\\
    &\leq  C_1(s)((\log m)^2+\kappa\log m)
    H^2(\tilde f_{\Gstar}^{(T)},f_{\tilde G^{(T,\hat h)}_{d,m}}^{(T)}) + \frac {C_2(s)(\log m)^2}{m^2},
\end{align*}
where the last inequality comes from $T\geq \kappa+1$.
Next, as in Section \ref{proof:th41subexpo}, it holds
\begin{align*}
		H^2(\tilde f_{\Gstar}^{(T)},\tilde f_{\tilde G^{(T,\hat h)}_{d,m}}^{(T)})
				\leq
		2H^2(\tilde f_{\Gstar}^{(T)},f_{\tilde G_{d,m}^{(T)}}^{(T)})+ \frac 2{m^4}.
	\end{align*}
Combining this inequality with Lemma~\ref{lem:EB_density-H2_coars} (b) yields
\[
E\!\left[
\left(
\frac{\widehat S_{n,\kappa}(\widetilde G^{(T)}_{d,m})-
\widehat S_{n,\kappa}(\Gstar)}{\sqrt n}
\right)^2
\right]
\le
C_1\frac{\{(\log m)^2+\kappa\log m\}\min\{T,\log m\}}{m}
+
C_2\frac{(\log m)^2}{m^2}.
\]
The claim follows since 
\[
\frac{(\log m)^2}{m^2}
\le
\frac{\{(\log m)^2+\kappa\log m\}\min\{T,\log m\}}{m},
\]
for every admissible $m$ and $T\ge1$.


\section{Proofs of Section \ref{sec4}}

\subsection{Proof of Theorem \ref{above_average_lb}}\label{proof51_lb}

Based Lemmas \ref{lemma:lowerBound1}--\ref{lemma:lowerBound2}, we are in the position of proving Theorem \ref{above_average_lb}. We write  $p_\theta(x)=\operatorname{Poisson}(x\mid\theta)$, $f_{j,\varepsilon}=f_{G_{j,\varepsilon}}$, $\PP_{j,\varepsilon}=\PP_{G_{j,\varepsilon}}$ $\EE_{j,\varepsilon}=\EE_{G_{j,\varepsilon}}$ as in Section \ref{proof41_lb}.

\medskip

\noindent    (a) 
By definition, $m_{1,\varepsilon}(1)=0$. On the other hand, since $\varepsilon\leq \varepsilon_0\leq \frac 1 4$ and $c\varepsilon\leq \frac 1 2$, then
\begin{align*}
    m_{0,\varepsilon}(1)&=\frac{c \int_{1-\varepsilon}^1 \theta e^{-\theta} d\theta}{c \int_{1-\varepsilon}^1 \theta e^{-\theta}d\theta+(1-c\varepsilon)\int_a^b \theta e^{-\theta}\theta d\theta/(b-a)}\\
    &\geq \frac{c \varepsilon \frac 3 4 e^{-5/4}}{e^{-1}}\\
    &\geq k_1\varepsilon,
\end{align*}
for a constant $k_1$, where we have used $\theta e^{-\theta}\leq e^{-1}$.
Thus, $m_{0,\varepsilon}(1)-m_{1,\varepsilon}(1)\geq k_1\varepsilon$. 

On the other hand, $m_{0,\varepsilon}(x)-m_{1,\varepsilon}(x)=0$ for $x=0$ and $x\geq 3$, and, denoting $d=(b-a)^{-1}$
\begin{align*}
 & |  m_{0,\varepsilon}(2)- m_{1,\varepsilon}(2)|\\
 &=  \left|\frac{c \int_{1-\varepsilon}^1 e^{-\theta}\theta^2 d\theta+(1-c\varepsilon)d\int_a^b I(\theta <2)e^{-\theta}\theta^2d\theta}{c \int_{1-\varepsilon}^1 e^{-\theta}\theta^2 d\theta+(1-c\varepsilon)d\int_a^b e^{-\theta}\theta^2 d\theta}
    - \frac{c \int_{1}^{1+\varepsilon} e^{-\theta}\theta^2 d\theta+(1-c\varepsilon)d\int_a^b I(\theta <2)e^{-\theta}\theta^2d\theta}{c \int_{1}^{1+\varepsilon} e^{-\theta}\theta^2 d\theta+(1-c\varepsilon)d\int_a^b e^{-\theta}\theta^2 d\theta}
    \right|\\
    &=\left|\frac{B_\varepsilon+u_{0,\varepsilon}}{D_\varepsilon +u_{0,\varepsilon}}-\frac{B_\varepsilon+u_{1,\varepsilon}}{D_\varepsilon +u_{1,\varepsilon}}\right|,
    \end{align*}
where
$$
B_\varepsilon=(1-c\varepsilon)d\int_a^b I(\theta<2)e^{-\theta}\theta^2d\theta,\quad 
D_\varepsilon=(1-c\varepsilon)d\int_a^{b}e^{-\theta}\theta^2d\theta,
$$
$$
u_{0,\varepsilon}=c\int_{1-\varepsilon}^1 e^{-\theta}\theta^2 d\theta,\quad u_{1,\varepsilon}=c\int_{1}^{1+\varepsilon} e^{-\theta}\theta^2 d\theta.
$$
Since $B_\varepsilon\leq D_\varepsilon$, with $D_\varepsilon$ bounded away from zero, uniformly with respect to $\varepsilon\leq \varepsilon_0$, then  the function $u\mapsto \frac{B_\varepsilon+u}{D_\varepsilon +u}$ is differentiable on $[0,\infty)$ and its derivative $\frac{D_\varepsilon-B_\varepsilon}{(D_\varepsilon+u)^2}$ is uniformly bounded with respect to $\varepsilon\leq \varepsilon_0$. It follows that there exists a constant $k_2>0$ such that
\begin{align*}
 |  m_{0,\varepsilon}(2)- m_{1,\varepsilon}(2)|&\leq k_2\left|\int_{1-\varepsilon}^1 e^{-\theta}\theta^2 d\theta-\int_{1}^{1+\varepsilon} e^{-\theta}\theta^2 d\theta\right|\\
   &=k_2\left|\varepsilon  \int_0^1e^{-(1-\varepsilon+\varepsilon s)}
    (1-\varepsilon+\varepsilon s)^2-e^{-(1+\varepsilon s)} (1+\varepsilon s)^2 ds\right|\\
     &=k_2\varepsilon \left| \int_0^1\int_{1+\varepsilon s-\varepsilon}^{1+\varepsilon s}(2-\theta)\theta e^{-\theta}d\theta
     ds\right|\\
     &\leq k_2\varepsilon^2 \left(\frac 5 4 \right)^2e^{-3/4}:=k_3\varepsilon^2.
\end{align*}
This, together with Lemma \ref{lemma:lowerBound1} yields
\begin{align*}
    \EE_{j,\varepsilon}[m_{0,\varepsilon}(X_1)-m_{1,\varepsilon}(X_1)]&
    =\sum_{t=0}^\infty (m_{0,\varepsilon}(t)-m_{1,\varepsilon}(t))f_{j,\varepsilon}(t)\\
    &\geq (m_{0,\varepsilon}(1)-m_{1,\varepsilon}(1))f_{j,\varepsilon}(1)-|m_{0,\varepsilon}(2)-m_{1,\varepsilon}(2)|f_{j,\varepsilon}(2)\\
    &\geq k_1 C_2\varepsilon- k_3\varepsilon^2\geq C_5\varepsilon,
\end{align*}
for a constant $C_5>0$ and $\varepsilon_0$ small enough.
\medskip

\noindent (b)
By point (a), its proof and the explicit expression of $m_{0,\varepsilon}(1)$,
\begin{align*}
    \EE_{j,\varepsilon}[m_{0,\varepsilon}(X_1)-m_{1,\varepsilon}(X_1)]    \geq C_5\varepsilon,\quad 
\sup_{t\geq0}|m_{0,\varepsilon}(t)-m_{1,\varepsilon}(t)|
\leq k_4 \varepsilon,
\end{align*}
for constants $C_5$, $k_4$ and $\varepsilon_0$ small enough
Thus, 
\[
\operatorname{Var}_{j,\varepsilon}
(m_{0,\varepsilon}(X_1)-m_{1,\varepsilon}(X_1))\leq k_4^2\varepsilon^2.
\]
Using Chebyshev inequality, we can write
\begin{align*}
     &\PP_{j,\varepsilon}\left[\widehat S_{n,>}(G_{0,\varepsilon})
-\widehat S_{n,>}(G_{1,\varepsilon})\leq \frac 1 2C_5n\varepsilon\right]\\
&\leq \PP_{j,\varepsilon}\left[\widehat S_{n,>}(G_{0,\varepsilon})
-\widehat S_{n,>}(G_{1,\varepsilon})\leq n\EE_{j,\varepsilon}[m_{0,\varepsilon}(X_1)-m_{1,\varepsilon}(X_1)]-\frac 1 2C_5n\varepsilon\right]\\
&\leq \PP_{j,\varepsilon}\left[\left|\widehat S_{n,>}(G_{0,\varepsilon})
-\widehat S_{n,>}(G_{1,\varepsilon})-n\EE_{j,\varepsilon}[m_{0,\varepsilon}(X_1)-m_{1,\varepsilon}(X_1)]\right|\geq \frac 1 2C_5n\varepsilon\right]\\
&\leq \frac{4 k_4^2}{C_5^2n}\rightarrow 0,
\end{align*}
as $n\rightarrow\infty$. Thus,
\begin{equation}\label{eq:aboveIneq}
        \PP_{j,\varepsilon}\left[\left|\widehat S_{n,>}(G_{0,\varepsilon})
-\widehat S_{n,>}(G_{1,\varepsilon})\right|\leq \frac 1 2C_5n\varepsilon\right]
\rightarrow 0,
\end{equation}
Given an estimator $\widehat S$, define a test $\phi$ for
$H_0:G=G_{0,\varepsilon}$ against $H_1:G=G_{1,\varepsilon}$ by
\[
\phi=
\begin{cases}
0,&\text{if }|\widehat S-\widehat S_{n,>}(G_{0,\varepsilon})|<
|\widehat S-\widehat S_{n,>}(G_{1,\varepsilon})|,
\\
1,&\text{otherwise}.
\end{cases}
\]
Let
\[
A_n=\left\{\left|\widehat S_{n,>}(G_{0,\varepsilon})
-\widehat S_{n,>}(G_{1,\varepsilon})\right|\geq \frac 1 2 C_5n\varepsilon \right\},
\]
with $C_5$ as in point (a).
By \eqref{eq:aboveIneq}, $\PP_{j,\varepsilon}(A_n^c)\rightarrow0$ for $j=0,1$. Moreover, on $A_n\cap\{\phi=1\}$,
\[
\left|\widehat S-\widehat S_{n,>}(G_{0,\varepsilon})
\right|\geq
\frac{C_5}{4}n\varepsilon,
\]
and on $A_n\cap\{\phi=0\}$,
\[
\left|\widehat S-\widehat S_{n,>}(G_{1,\varepsilon})\right|
\geq\frac{C_5}{4}n\varepsilon.
\]
Hence
\begin{align*}
&\EE_{{0,\varepsilon}}\left[\left(
\frac{\widehat S-\widehat S_{n,>}(G_{0,\varepsilon})}{\sqrt n}
\right)^2\right]\geq \frac{C_5^2}{16}n\varepsilon^2 \PP_{0,\varepsilon}
\left(A_n\cap\{\phi=1\}\right)
\end{align*}
and
\begin{align*}
&\EE_{{1,\varepsilon}}\left[\left(\frac{\widehat S-\widehat S_{n,>}(G_{1,\varepsilon})}{\sqrt n}\right)^2\right]\geq \frac{C_5^2}{16}n\varepsilon^2 \PP_{1,\varepsilon}\left(A_n\cap\{\phi=0\}\right).
\end{align*}
Moreover,
\begin{align*}
&\PP_{0,\varepsilon}\left(A_n\cap\{\phi=1\}\right)+\PP_{1,\varepsilon}
\left(A_n\cap\{\phi=0\}\right)\geq \frac12-o(1),
\end{align*}
where the last inequality follows from Lemma \ref{lemma:lowerBound2}. Hence, for 
$n$ sufficiently large,
\[
\max_{j\in\{0,1\}}\EE_{{j,\varepsilon}}\left[\left(\frac{
\widehat S-\widehat S_{n,>}(G_{j,\varepsilon})
}{\sqrt n}\right)^2\right]\geq \frac{C_5^2}{128}n\varepsilon^2.
\]
Taking
\[
\varepsilon=\gamma(m+n)^{-1/4}
\]
with $\gamma\in (0,\min(\varepsilon_0,(2\sqrt{C_1})^{-1/2})$ gives
\[
\max_{j\in\{0,1\}}\EE_{{j,\varepsilon}}\left[\left(\frac{
\widehat S-\widehat S_{n,>}(G_{j,\varepsilon})}{\sqrt n}\right)^2\right]
\geq \frac{\gamma^2C_5^2}{128}\frac{n}{\sqrt{m+n}}:=C_6 \frac{n}{\sqrt{m+n}}.
\]
Since $G_{j,\varepsilon}\in\mathcal C$ for $j=0,1$ the result follows.

\subsection{Proof of Theorem~\ref{thm:main-above-average}}\label{proof_teo_above_regular}

\subsubsection{Preliminary lemmas}

\begin{lemma}\label{lem:ratio-stab}
Let $A,B,\tilde A,\tilde B\ge 0$ with $B>0$, $\tilde B>0$, and $A\le B$, $\tilde A\le \tilde B$. Then
\begin{displaymath}
\frac{B\tilde B}{B+\tilde B}\Big(\frac{A}{B}-\frac{\tilde A}{\tilde B}\Big)^2\le 8\Big\{(\sqrt A-\sqrt{\tilde A})^2+(\sqrt B-\sqrt{\tilde B})^2\Big\}.
\end{displaymath}
\end{lemma}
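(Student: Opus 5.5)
The plan is to split the ratio difference into a numerator perturbation and a denominator perturbation, then convert squared differences of nonnegative quantities into squared differences of square roots via $(u-v)^2=(\sqrt u-\sqrt v)^2(\sqrt u+\sqrt v)^2$, keeping track of the weight $w:=B\tilde B/(B+\tilde B)$.

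Set $r=A/B$ and $\tilde r=\tilde A/\tilde B$, both in $[0,1]$. The first step is the identity
\[
\frac AB-\frac{\tilde A}{\tilde B}=\frac{A-\tilde A}{B}+\tilde A\Big(\frac1B-\frac1{\tilde B}\Big)=\frac{A-\tilde A}{B}+\tilde r\,\frac{\tilde B-B}{B}.
\]
This form has a defect: it is not symmetric in $(B,\tilde B)$, and $w/B^2=\tilde B/(B(B+\tilde B))$ can blow up when $B$ is small. So I would symmetrize instead. Write the difference both ways, as above and with the roles exchanged, and average with weights $\tilde B/(B+\tilde B)$ and $B/(B+\tilde B)$. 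This gives
\[
r-\tilde r=\frac{2(A-\tilde A)-(r+\tilde r)(B-\tilde B)}{B+\tilde B},
\]
which one can check directly by clearing denominators. It then follows that
\[
w\,(r-\tilde r)^2\le \frac{B\tilde B}{(B+\tilde B)^3}\Big\{8(A-\tilde A)^2+8(B-\tilde B)^2\Big\},
\]
using $(r+\tilde r)^2\le 4$ and $(x+y)^2\le 2x^2+2y^2$.

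Next, $(A-\tilde A)^2=(\sqrt A-\sqrt{\tilde A})^2(\sqrt A+\sqrt{\tilde A})^2\le 2(A+\tilde A)(\sqrt A-\sqrt{\tilde A})^2\le 2(B+\tilde B)(\sqrt A-\sqrt{\tilde A})^2$, and in the same way $(B-\tilde B)^2\le 2(B+\tilde B)(\sqrt B-\sqrt{\tilde B})^2$. Combining with $B\tilde B\le (B+\tilde B)^2/4$ gives the bound
\[
\frac{(B+\tilde B)^2}{4}\cdot\frac{16(B+\tilde B)}{(B+\tilde B)^3}\{\cdots\}=4\{(\sqrt A-\sqrt{\tilde A})^2+(\sqrt B-\sqrt{\tilde B})^2\},
\]
which is within the claimed constant 8.

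The main obstacle is the first step: finding the symmetric decomposition so that no $1/B$ factor survives unweighted. Once the averaged identity is verified algebraically, the remaining steps are routine.
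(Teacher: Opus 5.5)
Your proof is correct and reaches the sharper constant $4$ instead of $8$. It takes a different route from the paper's.

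\emph{The paper's route.} The paper passes to square roots at the outset. With $a=\sqrt A$, $b=\sqrt B$, $\tilde a=\sqrt{\tilde A}$, $\tilde b=\sqrt{\tilde B}$, it first bounds $(A/B-\tilde A/\tilde B)^2\le 4(a/b-\tilde a/\tilde b)^2$ using $a/b+\tilde a/\tilde b\le 2$. After this step the weight $b^2\tilde b^2/(b^2+\tilde b^2)$ cancels the denominators exactly, leaving $(a\tilde b-\tilde a b)^2/(b^2+\tilde b^2)$. At that point the naive asymmetric split $a\tilde b-\tilde a b=(a-\tilde a)\tilde b+\tilde a(\tilde b-b)$ is harmless, because $\tilde a\le\tilde b$ and $b^2+\tilde b^2\ge\tilde b^2$.

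\emph{Your route.} You stay at the level of $A,B$, where the left-hand side equals $(A\tilde B-\tilde A B)^2/\{B\tilde B(B+\tilde B)\}$. Here the asymmetric split really would leave the uncontrolled factor $\tilde B/\{B(B+\tilde B)\}$, exactly as you observed. Your averaged identity removes this problem. The conversion $(u-v)^2\le 2(u+v)(\sqrt u-\sqrt v)^2$, together with $A+\tilde A\le B+\tilde B$ and $B\tilde B\le (B+\tilde B)^2/4$, then closes the argument.

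\emph{Comparison.} The paper's version is slightly shorter, since taking square roots first makes the weight cancel outright. Yours needs the symmetrization but gives the better constant. The improvement has no downstream effect, because the constant is absorbed into $C_A$-type constants wherever the lemma is used.
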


\begin{proof}
Set $a=\sqrt A$, $b=\sqrt B$, $\tilde a=\sqrt{\tilde A}$, and $\tilde b=\sqrt{\tilde B}$.  Since $A\le B$ and $\tilde A\le \tilde B$, we have $0\le a\le b$ and $0\le \tilde a\le \tilde b$. 
Now
\begin{displaymath}
\frac{A}{B}-\frac{\tilde A}{\tilde B}=\Big(\frac{a}{b}\Big)^2-\Big(\frac{\tilde a}{\tilde b}\Big)^2=\Big(\frac{a}{b}-\frac{\tilde a}{\tilde b}\Big)\Big(\frac{a}{b}+\frac{\tilde a}{\tilde b}\Big).
\end{displaymath}
Because $a/b,\tilde a/\tilde b\in[0,1]$, we have $\frac{a}{b}+\frac{\tilde a}{\tilde b}\le 2$. Therefore $\Big(\frac{A}{B}-\frac{\tilde A}{\tilde B}\Big)^2\le4\Big(\frac{a}{b}-\frac{\tilde a}{\tilde b}\Big)^2$. Multiplying by $\frac{B\tilde B}{B+\tilde B}=\frac{b^2\tilde b^2}{b^2+\tilde b^2}$ gives
\begin{displaymath}
\frac{B\tilde B}{B+\tilde B}\Big(\frac{A}{B}-\frac{\tilde A}{\tilde B}\Big)^2\le4\cdot \frac{b^2\tilde b^2}{b^2+\tilde b^2}\Big(\frac{a}{b}-\frac{\tilde a}{\tilde b}\Big)^2=4\cdot \frac{(a\tilde b-\tilde a b)^2}{b^2+\tilde b^2}.
\end{displaymath}
Next, decompose $a\tilde b-\tilde a b=(a-\tilde a)\tilde b+\tilde a(\tilde b-b)$. Hence, by $(u+v)^2\le 2u^2+2v^2$, $(a\tilde b-\tilde a b)^2\le2(a-\tilde a)^2\tilde b^2+2\tilde a^2(\tilde b-b)^2$. Since $\tilde a\le \tilde b$, this yields
\begin{displaymath}
(a\tilde b-\tilde a b)^2\le2(a-\tilde a)^2\tilde b^2+2\tilde b^2(\tilde b-b)^2.
\end{displaymath}
Divide by $b^2+\tilde b^2\ge \tilde b^2$:
\begin{displaymath}
\frac{(a\tilde b-\tilde a b)^2}{b^2+\tilde b^2}\le2(a-\tilde a)^2+2(\tilde b-b)^2.
\end{displaymath}
Substituting back gives
\begin{displaymath}
\frac{B\tilde B}{B+\tilde B}\Big(\frac{A}{B}-\frac{\tilde A}{\tilde B}\Big)^2\le8(a-\tilde a)^2+8(\tilde b-b)^2=8\Big\{(\sqrt A-\sqrt{\tilde A})^2+(\sqrt B-\sqrt{\tilde B})^2\Big\}.
\end{displaymath}
This proves the lemma.
\end{proof}

\begin{lemma}\label{lem:quad-convert}
For any $a,b\ge 0$ (with the convention $0/0=0$),
\begin{displaymath}
a \le 4\,\frac{ab}{a+b} + 2\,\frac{(a-b)^2}{a+b}.
\end{displaymath}
Consequently, for any $u\in[0,1]$,
\begin{displaymath}
a\,u \le 4\,\frac{ab}{a+b}\,u + 2\,\frac{(a-b)^2}{a+b}.
\end{displaymath}
\end{lemma}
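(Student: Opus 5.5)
The first inequality is purely algebraic, so the plan is to reduce it to a polynomial identity by clearing denominators and then deduce the consequence from it.

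\emph{Degenerate case.} If $a+b=0$, then $a=b=0$. With the convention $0/0=0$ both sides vanish, so there is nothing to prove.

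\emph{Main case.} For $a+b>0$, we multiply through by $a+b$. The claim then becomes
\[
a(a+b)\le 4ab+2(a-b)^2 .
\]
Expanding the right-hand side gives $4ab+2a^2-4ab+2b^2=2a^2+2b^2$. The inequality is therefore equivalent to
\[
a^2+ab\le 2a^2+2b^2,
\qquad\text{that is,}\qquad
0\le a^2-ab+2b^2 .
\]
This last expression is nonnegative because
\[
a^2-ab+2b^2=\Bigl(a-\tfrac b2\Bigr)^2+\tfrac74 b^2 .
\]
As a cross-check, the bound $ab\le (a^2+b^2)/2$ also works, since it gives $a^2+ab\le \tfrac32 a^2+\tfrac12 b^2\le 2a^2+2b^2$.

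\emph{Consequence.} Multiply the first inequality by $u\in[0,1]$. This yields
\[
au\le 4\frac{ab}{a+b}\,u+2\frac{(a-b)^2}{a+b}\,u .
\]
Since $\frac{(a-b)^2}{a+b}\ge0$ and $u\le1$, the last term is at most $2\frac{(a-b)^2}{a+b}$, which is the stated bound.

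The only point requiring care is the zero-denominator convention; everything else is a two-line expansion, so no step should present a real obstacle.
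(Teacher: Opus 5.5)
Your proof is correct and follows the paper's argument essentially line for line: handle $a+b=0$ trivially, multiply by $a+b$, reduce to $a^2-ab+2b^2\ge 0$, then multiply by $u\in[0,1]$ and use $u\le 1$ on the nonnegative term. The only cosmetic difference is the sum-of-squares certificate: you use $\bigl(a-\tfrac b2\bigr)^2+\tfrac74 b^2$, while the paper writes $\tfrac12(a-b)^2+\tfrac12 a^2+\tfrac32 b^2$.
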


\begin{proof}
If $a+b=0$, then $a=b=0$ and the inequality is trivial. Assume now $a+b>0$. Multiplying the desired inequality by $(a+b)$, it is enough to prove that $a(a+b)\le 4ab+2(a-b)^2$. Expand the right-hand side: $4ab+2(a-b)^2=4ab+2(a^2-2ab+b^2)=2a^2+2b^2$. So it suffices to prove that $a(a+b)\le 2a^2+2b^2$. But
\begin{displaymath}
2a^2+2b^2-a(a+b)=a^2-ab+2b^2.
\end{displaymath}
Now $a^2-ab+2b^2=\frac12(a-b)^2+\frac12 a^2+\frac32 b^2\ge 0$. Hence $a(a+b)\le 2a^2+2b^2=4ab+2(a-b)^2$, which proves
\begin{displaymath}
a \le 4\,\frac{ab}{a+b}+2\,\frac{(a-b)^2}{a+b}.
\end{displaymath}
For the second statement, simply multiply both sides by $u\in[0,1]$ and use $u\le 1$:
\begin{displaymath}
a\,u\le4\,\frac{ab}{a+b}\,u+2\,\frac{(a-b)^2}{a+b}\,u\le4\,\frac{ab}{a+b}\,u+2\,\frac{(a-b)^2}{a+b}.
\end{displaymath}
This proves the lemma.
\end{proof}

\begin{lemma}[$\chi^2$ control by Hellinger]\label{lem:chi2-hell}
For any $a,b\ge 0$ (with $0/0=0$),
\begin{displaymath}
\frac{(a-b)^2}{a+b}\le 4(\sqrt a-\sqrt b)^2.
\end{displaymath}
Hence for pmfs $p,q$ on a countable alphabet,
\begin{displaymath}
\sum_t \frac{(p(t)-q(t))^2}{p(t)+q(t)}\le4\sum_t(\sqrt{p(t)}-\sqrt{q(t)})^2.
\end{displaymath}
\end{lemma}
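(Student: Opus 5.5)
The plan is to prove the scalar inequality directly and then sum it termwise. Fix $a,b\ge 0$. If $a+b=0$, then $a=b=0$ and both sides vanish under the convention $0/0=0$, so we may assume $a+b>0$. The key step is to factor the difference of the arguments through their square roots:
\begin{displaymath}
a-b=(\sqrt a-\sqrt b)(\sqrt a+\sqrt b),
\qquad\text{so}\qquad
\frac{(a-b)^2}{a+b}=(\sqrt a-\sqrt b)^2\,\frac{(\sqrt a+\sqrt b)^2}{a+b}.
\end{displaymath}

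It then remains to bound the multiplier $(\sqrt a+\sqrt b)^2/(a+b)$ by a constant. By the elementary inequality $(u+v)^2\le 2u^2+2v^2$ applied with $u=\sqrt a$ and $v=\sqrt b$, we get $(\sqrt a+\sqrt b)^2\le 2(a+b)$. Hence the multiplier is at most $2$, which gives
\begin{displaymath}
\frac{(a-b)^2}{a+b}\le 2(\sqrt a-\sqrt b)^2\le 4(\sqrt a-\sqrt b)^2.
\end{displaymath}
This is in fact stronger than the stated bound, which has constant $4$ rather than $2$.

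For the second statement, take probability mass functions $p,q$ on a countable alphabet. We apply the scalar inequality with $a=p(t)$ and $b=q(t)$ for each $t$ and sum over $t$. All terms are nonnegative, so the interchange of summation and the comparison are valid even when the series are infinite, by monotone convergence. This yields
\begin{displaymath}
\sum_t \frac{(p(t)-q(t))^2}{p(t)+q(t)}\le 4\sum_t\bigl(\sqrt{p(t)}-\sqrt{q(t)}\bigr)^2,
\end{displaymath}
where the right-hand side is $4H^2(p,q)$ in the paper's normalization.

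There is no real obstacle in this argument. The only points needing care are the degenerate case $a=b=0$, which the convention $0/0=0$ handles, and the fact that the summation step requires no absolute-convergence argument because every term is nonnegative.
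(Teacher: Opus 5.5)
Your proposal is correct and follows essentially the same route as the paper: factor $a-b=(\sqrt a-\sqrt b)(\sqrt a+\sqrt b)$, bound $(\sqrt a+\sqrt b)^2\le 2(a+b)$ to get the sharper constant $2$, and sum the nonnegative terms. The paper obtains the same bound via $2\sqrt{ab}\le a+b$ rather than $(u+v)^2\le 2u^2+2v^2$, which is equivalent.
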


\begin{proof}
If $a=b=0$, the claim is trivial. Otherwise, use $a-b=(\sqrt a-\sqrt b)(\sqrt a+\sqrt b)$, so $(a-b)^2=(\sqrt a-\sqrt b)^2(\sqrt a+\sqrt b)^2$. Since $(\sqrt a+\sqrt b)^2=a+b+2\sqrt{ab}\le 2(a+b)$, we obtain
\begin{displaymath}
(a-b)^2\le 2(a+b)(\sqrt a-\sqrt b)^2.
\end{displaymath}
In fact this already gives the sharper bound with constant \(2\); hence certainly $\frac{(a-b)^2}{a+b}\le 4(\sqrt a-\sqrt b)^2$. Summing the pointwise inequality with $a=p(t)$ and $b=q(t)$ yields
\begin{displaymath}
\sum_t \frac{(p(t)-q(t))^2}{p(t)+q(t)}\le4\sum_t(\sqrt{p(t)}-\sqrt{q(t)})^2.
\end{displaymath}
This proves the lemma.
\end{proof}

\subsubsection{Compact-support setting}

\paragraph{The smoothed numerator bound.}

For \(G_1,G_2\in\mathcal C\), let
\begin{displaymath}
\mathcal H(G_1,G_2):=H^2(f_{G_1},f_{G_2})=\sum_{t\ge0}\left(\sqrt{f_{G_1}(t)}-\sqrt{f_{G_2}(t)}\right)^2.
\end{displaymath}

\begin{lemma}\label{lem:sqrtA-control-route1-hhat}
Fix \(x\in\mathcal W_+\) and \(\delta\in(0,1]\). Assume \prettyref{asmp:aa-self-normal-regular} holds on \(\mathcal C\). Then, for all \(G_1,G_2\in\mathcal C\),
\begin{displaymath}
\big(\sqrt{A_{x,\delta}(G_1)}-\sqrt{A_{x,\delta}(G_2)}\big)^2\le 8C_A\,\mathcal H(G_1,G_2)+2\sqrt{2}\,\varepsilon_{x,\delta}.
\end{displaymath}
\end{lemma}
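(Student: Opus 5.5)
Write $A_j=A_{x,\delta}(G_j)$ and $L_j=\sum_{t\ge0}\alpha_{x,\delta}(t)f_{G_j}(t)$ for $j=1,2$. The plan is to pass through the linear surrogates $L_j$: first control $A_1-A_2$ by $L_1-L_2$ plus the remainder, and then control $L_1-L_2$ by the Hellinger distance using the self-normalization \eqref{eq:A-selfnorm-regular-aa}. The key identity is
\[
(\sqrt{A_1}-\sqrt{A_2})^2=\frac{(A_1-A_2)^2}{(\sqrt{A_1}+\sqrt{A_2})^2}\le \frac{(A_1-A_2)^2}{A_1+A_2},
\]
valid when $A_1+A_2>0$; the case $A_1=A_2=0$ is trivial.

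\textbf{Step 1 (the main term).} By Cauchy--Schwarz, writing $f_{G_1}(t)-f_{G_2}(t)=(\sqrt{f_{G_1}}-\sqrt{f_{G_2}})(\sqrt{f_{G_1}}+\sqrt{f_{G_2}})$ and using $(\sqrt a+\sqrt b)^2\le 2(a+b)$, we get
\[
(L_1-L_2)^2\le \Big(\sum_t \alpha_{x,\delta}(t)^2(\sqrt{f_{G_1}(t)}+\sqrt{f_{G_2}(t)})^2\Big)\,\mathcal H(G_1,G_2)\le 2C_A(A_1+A_2)\,\mathcal H(G_1,G_2),
\]
where the last inequality is \eqref{eq:A-selfnorm-regular-aa}. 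Hence $(L_1-L_2)^2/(A_1+A_2)\le 2C_A\mathcal H(G_1,G_2)$.

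\textbf{Step 2 (the remainder).} By \eqref{eq:A-rep-regular-aa}, $|A_1-A_2|\le |L_1-L_2|+2\varepsilon_{x,\delta}$. Rather than squaring directly, which would produce $\varepsilon^2/(A_1+A_2)$, a term that can blow up when $A_1+A_2$ is small, I would split into two cases.

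In the first case, $A_1+A_2\ge \varepsilon_{x,\delta}$. Using $(u+v)^2\le 2u^2+2v^2$ gives
\[
\frac{(A_1-A_2)^2}{A_1+A_2}\le 4C_A\mathcal H+\frac{8\varepsilon_{x,\delta}^2}{A_1+A_2}\le 4C_A\mathcal H+8\varepsilon_{x,\delta}.
\]
In the second case, $A_1+A_2<\varepsilon_{x,\delta}$, and we bound crudely: $(\sqrt{A_1}-\sqrt{A_2})^2\le \max(A_1,A_2)\le \varepsilon_{x,\delta}$.

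This already gives the bound $8C_A\mathcal H+8\varepsilon_{x,\delta}$.

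\textbf{Step 3 (matching the stated constants).} Getting the sharp constant $2\sqrt2$ instead of $8$ is the main obstacle. I would try a better threshold: split at $A_1+A_2\ge \lambda\varepsilon_{x,\delta}$ and use the weighted inequality $(u+v)^2\le (1+\eta)u^2+(1+\eta^{-1})v^2$. With $\eta=3$, this yields
\[
4\cdot 2C_A\mathcal H+\tfrac43\cdot 4\varepsilon_{x,\delta}^2/(\lambda\varepsilon_{x,\delta})
\]
in the first case, against $\lambda\varepsilon_{x,\delta}$ in the second. Optimizing in $\lambda$ balances the two remainder terms at order $\varepsilon_{x,\delta}$. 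Since only the order of $\varepsilon_{x,\delta}$ matters in Theorem \ref{thm:main-above-average}, I would first secure the bound with some absolute constant and then tune $\eta$ and $\lambda$ to reach $8C_A$ and $2\sqrt2$ (for instance $\lambda=2\sqrt2$ in the balanced case).
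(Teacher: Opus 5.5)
Your proposal is correct and follows essentially the paper's route: pass through the linear surrogates $L_j$, bound $(L_1-L_2)^2$ by a multiple of $C_A(A_1+A_2)\,\mathcal H(G_1,G_2)$ via Cauchy--Schwarz and self-normalization, and play the $\varepsilon_{x,\delta}^2/(A_1+A_2)$ term off against the trivial bound $(\sqrt{A_1}-\sqrt{A_2})^2\le A_1+A_2$. The paper does this last step in one line as $\min\{S,8\varepsilon_{x,\delta}^2/S\}\le 2\sqrt2\,\varepsilon_{x,\delta}$ with $S=A_1+A_2$. So the constants are not really an obstacle: keep the plain inequality of your Step 2 and move your threshold from $\varepsilon_{x,\delta}$ to $2\sqrt2\,\varepsilon_{x,\delta}$, and you already get $4C_A\mathcal H(G_1,G_2)+2\sqrt2\,\varepsilon_{x,\delta}$, since your Step 1 is a factor $2$ sharper than the paper's use of Lemma~\ref{lem:chi2-hell}. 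The weighted inequality with $\eta=3$ is therefore unnecessary, although your choice $\lambda=2\sqrt2$ there also works.
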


\begin{proof}
Write $A_j=A_{x,\delta}(G_j)$ and $L_j=\sum_{t\ge0}\alpha_{x,\delta}(t)f_{G_j}(t)$ for $j=1,2$. Set \(S=A_1+A_2\). If \(S=0\), then \(A_1=A_2=0\), and the claim is trivial. Assume \(S>0\). By the representation bound in Assumption~\ref{asmp:aa-self-normal-regular}, 
\begin{displaymath} 
|A_j-L_j|\le \varepsilon_{x,\delta}, \qquad j=1,2.
 \end{displaymath}
Hence $|A_1-A_2|\le|L_1-L_2|+2\varepsilon_{x,\delta}$. Using $(\sqrt u-\sqrt v)^2\le \frac{(u-v)^2}{u+v}$ for $u,v\ge0$, we obtain
\begin{displaymath}
(\sqrt{A_1}-\sqrt{A_2})^2\le\frac{2|L_1-L_2|^2}{S}+\frac{8\varepsilon_{x,\delta}^2}{S}.
\end{displaymath}
By Cauchy--Schwarz inequality
\begin{displaymath}
|L_1-L_2|^2\le\left\{\sum_{t\ge0}\alpha_{x,\delta}(t)^2(f_{G_1}(t)+f_{G_2}(t))\right\}\left\{\sum_{t\ge0}\frac{(f_{G_1}(t)-f_{G_2}(t))^2}{f_{G_1}(t)+f_{G_2}(t)}\right\}.
\end{displaymath}
By the self-normalization bound in Assumption~\ref{asmp:aa-self-normal-regular} and Lemma~\ref{lem:chi2-hell},
\begin{displaymath}
|L_1-L_2|^2\le 4C_A S\,\mathcal H(G_1,G_2).
\end{displaymath}
Therefore
\begin{displaymath}
(\sqrt{A_1}-\sqrt{A_2})^2\le 8C_A\,\mathcal H(G_1,G_2)+\frac{8\varepsilon_{x,\delta}^2}{S}.
\end{displaymath}
On the other hand,
\begin{displaymath}
(\sqrt{A_1}-\sqrt{A_2})^2\le A_1+A_2=S.
\end{displaymath}
Combining the two bounds gives
\begin{displaymath}
(\sqrt{A_1}-\sqrt{A_2})^2\le8C_A\,\mathcal H(G_1,G_2)+\min\left\{S,\frac{8\varepsilon_{x,\delta}^2}{S}\right\}.
\end{displaymath}
Finally, $\min\left\{S,\frac{8\varepsilon_{x,\delta}^2}{S}\right\}\le2\sqrt{2}\,\varepsilon_{x,\delta}$, which proves the result.
\end{proof}

We first note that the ratios appearing below are well defined without any margin assumption. Assumption~\ref{asmp:aa-local-mass}, applied at $x=0$, implies that $G(\{0\})=0$ for every $G\in\mathcal C$. Indeed, for every $\delta\in(0,1]$,
\begin{displaymath}
G(\{0\})\le G([-\delta,\delta]\cap\operatorname{supp}(G))\le L\delta,
\end{displaymath}
and the claim follows by letting $\delta\downarrow0$. Since $G$ is a probability measure supported on $[0,\hat h]$, it follows that, for every $x\geq1$,
\begin{displaymath}
B_x(G)=\int_{[0,\hat h]}e^{-\theta}\theta^x\,dG(\theta)>0,
\qquad G\in\mathcal C.
\end{displaymath}

If $\hat K=0$, then $m_G(0)=0$ and $m_G(x)=1$ for every $x\geq1$ and every $G\in\mathcal C$, so the result is immediate. We may therefore assume $\hat K\geq1$.

\paragraph{Regret with respect to the oracle.}

Define $\hat S_{n,>}(\hat G_{d,m}):=\sum_{i=1}^n m_{\hat G_{d,m}}(X_i)$. We start from the basic reduction
\begin{align}\label{eq:basic-reduction-hhat}
\frac{1}{n}\EE\!\left[\left(\hat S_{n,>}(\Gstar)-\hat S_{n,>}(\hat G_{d,m})\right)^2\mid \hat G_{d,m}\right]
&\le n\sum_{x=1}^{\hat K}(m_{\Gstar}(x)-m_{\hat G_{d,m}}(x))^2 f_{\Gstar}(x).
\end{align}
Since both $\Gstar$ and $\hat G_{d,m}$ are supported on $[0,\hat h]$, for every integer $x\ge \hat K+1$ we have $x>\hat h$, and therefore
\begin{displaymath}
I(\theta<x)=1\qquad\text{for every }\theta\in[0,\hat h].
\end{displaymath}
Hence $m_{\Gstar}(x)=m_{\hat G_{d,m}}(x)=1$ for every $x\ge \hat K+1$. Also, since $\theta\ge 0$ almost surely under both $\Gstar$ and $\hat G_{d,m}$,
\begin{displaymath}
I(\theta<0)=0,
\end{displaymath}
and therefore $m_{\Gstar}(0)=0$ and $m_{\hat G_{d,m}}(0)=0$. Thus the difference $m_{\Gstar}(x)-m_{\hat G_{d,m}}(x)$ vanishes outside $\{1,\dots,\hat K\}$. Let
\begin{displaymath}
\Delta(x):=(m_{\Gstar}(x)-m_{\hat G_{d,m}}(x))\,I(1\le x\le \hat K).
\end{displaymath}
Then, conditional on $\hat G_{d,m}$, the variables $\Delta(X_1),\dots,\Delta(X_n)$ are i.i.d.\ with common law $f_{\Gstar}$. Moreover, $0\le m_{\Gstar},m_{\hat G_{d,m}}\le 1$, hence $|\Delta(x)|\le 1$. Now
\begin{displaymath}
\hat S_{n,>}(\Gstar)-\hat S_{n,>}(\hat G_{d,m})=\sum_{i=1}^n \Delta(X_i),
\end{displaymath}
and therefore
\begin{displaymath}
\EE\!\left[\left(\sum_{i=1}^n \Delta(X_i)\right)^2\Bigm|\hat G_{d,m}\right]=\sum_{i=1}^n \EE[\Delta(X_i)^2\mid\hat G_{d,m}]+\sum_{i\neq j}\EE[\Delta(X_i)\Delta(X_j)\mid\hat G_{d,m}].
\end{displaymath}
By conditional independence, $\EE[\Delta(X_i)\Delta(X_j)\mid\hat G_{d,m}]=\EE[\Delta(X)\mid\hat G_{d,m}]^2$, for $i\neq j$. Thus
\begin{displaymath}
\EE\!\left[\left(\sum_{i=1}^n \Delta(X_i)\right)^2\Bigm|\hat G_{d,m}\right]=n\,\EE[\Delta(X)^2\mid\hat G_{d,m}]+n(n-1)\,\EE[\Delta(X)\mid\hat G_{d,m}]^2.
\end{displaymath}
Since $\EE[\Delta(X)\mid\hat G_{d,m}]^2\le \EE[\Delta(X)^2\mid\hat G_{d,m}]$, we obtain
\begin{displaymath}
\EE\!\left[\left(\sum_{i=1}^n \Delta(X_i)\right)^2\Bigm|\hat G_{d,m}\right]\le n^2\,\EE[\Delta(X)^2\mid\hat G_{d,m}].
\end{displaymath}
Dividing by \(n\) and writing out the expectation under \(X\sim f_{\Gstar}\) yields
\begin{displaymath}
\frac1n\EE\!\left[\left(\hat S_{n,>}(\Gstar)-\hat S_{n,>}(\hat G_{d,m})\right)^2\Bigm|\hat G_{d,m}\right]\le n\sum_{x=1}^{\hat K}(m_{\Gstar}(x)-m_{\hat G_{d,m}}(x))^2 f_{\Gstar}(x),
\end{displaymath}
which is \eqref{eq:basic-reduction-hhat}.

\paragraph{Smoothing decomposition.}

By $(a+b+c)^2\le 3(a^2+b^2+c^2)$,
\begin{align}\label{eq:main-sum-decomp-hhat}
&\sum_{x=1}^{\hat K}(m_{\Gstar}(x)-m_{\hat G_{d,m}}(x))^2 f_{\Gstar}(x)\\
&\notag\quad\le 3\sum_{x=1}^{\hat K}(m_{\Gstar}(x)-m_{\Gstar,\delta}(x))^2 f_{\Gstar}(x)\\
&\notag\quad\quad+3\sum_{x=1}^{\hat K}(m_{\Gstar,\delta}(x)-m_{\hat G_{d,m},\delta}(x))^2 f_{\Gstar}(x)+3\sum_{x=1}^{\hat K}(m_{\hat G_{d,m},\delta}(x)-m_{\hat G_{d,m}}(x))^2 f_{\Gstar}(x).
\end{align}

We now control the three terms on the right-hand side separately.

\paragraph{Bias term for $\Gstar$.}

We first establish a bound for a generic $G\in\mathcal C$. Fix $x\in\{1,\dots,\hat K\}$. Since
\begin{displaymath}
m_G(x)=\frac{A_x(G)}{B_x(G)},\qquad m_{G,\delta}(x)=\frac{A_{x,\delta}(G)}{B_x(G)},
\end{displaymath}
we have
\begin{displaymath}
m_{G,\delta}(x)-m_G(x)=\frac{A_{x,\delta}(G)-A_x(G)}{B_x(G)}.
\end{displaymath}
Hence, using \eqref{eq:phi-pointwise-error-main},
\begin{displaymath}
|m_{G,\delta}(x)-m_G(x)|\le\frac{1}{f_G(x)}\int_{\mathbb R_+}\frac{e^{-\theta}\theta^x}{x!}I(|\theta-x|<\delta)\,dG(\theta).
\end{displaymath}
Since
\begin{displaymath}
\int_{\mathbb R_+}\frac{e^{-\theta}\theta^x}{x!}I(|\theta-x|<\delta)\,dG(\theta)\le f_G(x),
\end{displaymath}
we obtain
\begin{align}
&\sum_{x=1}^{\hat K} f_G(x)\bigl(m_{G,\delta}(x)-m_G(x)\bigr)^2\notag\\
&\qquad\le\sum_{x=1}^{\hat K}\frac{1}{f_G(x)}\left\{\int_{\mathbb R_+}\frac{e^{-\theta}\theta^x}{x!}I(|\theta-x|<\delta)\,dG(\theta)\right\}^2\notag\\
&\qquad\le\sum_{x=1}^{\hat K}\int_{\mathbb R_+}\frac{e^{-\theta}\theta^x}{x!}I(|\theta-x|<\delta)\,dG(\theta)
\le\sum_{x=1}^{\hat K}G([x-\delta,x+\delta]\cap\operatorname{supp}(G))\le\hat K L\delta,\label{eq:bias-generic-hhat}
\end{align}
where the last inequality follows from
Assumption~\ref{asmp:aa-local-mass}. In particular, taking $G=\Gstar$ gives
\begin{equation}\label{eq:bias-G-hhat}
\sum_{x=1}^{\hat K} f_{\Gstar}(x)\bigl(m_{\Gstar,\delta}(x)-m_{\Gstar}(x)\bigr)^2\le\hat K L\delta.
\end{equation}

\paragraph{Bias term for $\hat G_{d,m}$.}

Applying \eqref{eq:bias-generic-hhat} with $G=\hat G_{d,m}$ gives 
\begin{equation}\label{eq:bias-Ghat-own-hhat}
\sum_{x=1}^{\hat K} f_{\hat G_{d,m}}(x)\bigl(m_{\hat G_{d,m},\delta}(x)-m_{\hat G_{d,m}}(x)\bigr)^2\le\hat K L\delta.
\end{equation}
Now define
\begin{displaymath}
u(x):=\bigl(m_{\hat G_{d,m},\delta}(x)-m_{\hat G_{d,m}}(x)\bigr)^2\in[0,1].
\end{displaymath}
Applying Lemma~\ref{lem:quad-convert} pointwise with $a=f_{\Gstar}(x)$ and $b=f_{\hat G_{d,m}}(x)$ gives
\begin{displaymath}
f_{\Gstar}(x)u(x)\le 4\,\frac{f_{\Gstar}(x)f_{\hat G_{d,m}}(x)}{f_{\Gstar}(x)+f_{\hat G_{d,m}}(x)}\,u(x)+2\,\frac{(f_{\Gstar}(x)-f_{\hat G_{d,m}}(x))^2}{f_{\Gstar}(x)+f_{\hat G_{d,m}}(x)}.
\end{displaymath}
Summing over $x=1,\dots,\hat K$ and using
\begin{displaymath}
\frac{f_{\Gstar}(x)f_{\hat G_{d,m}}(x)}{f_{\Gstar}(x)+f_{\hat G_{d,m}}(x)}\le f_{\hat G_{d,m}}(x),
\end{displaymath}
we obtain
\begin{displaymath}
\sum_{x=1}^{\hat K}f_{\Gstar}(x)u(x)\le 4\sum_{x=1}^{\hat K}f_{\hat G_{d,m}}(x)u(x)+2\sum_{x=1}^{\hat K}\frac{(f_{\Gstar}(x)-f_{\hat G_{d,m}}(x))^2}{f_{\Gstar}(x)+f_{\hat G_{d,m}}(x)}.
\end{displaymath}
By \eqref{eq:bias-Ghat-own-hhat} and Lemma~\ref{lem:chi2-hell}, it follows that
\begin{equation}\label{eq:bias-Gtilde-hhat}
\sum_{x=1}^{\hat K} f_{\Gstar}(x)\bigl(m_{\hat G_{d,m},\delta}(x)-m_{\hat G_{d,m}}(x)\bigr)^2\le 4\hat K L\delta+8\mathcal H(\Gstar,\hat G_{d,m}).
\end{equation}

\paragraph{Smoothed estimation error.}

Let $u(x):=(m_{\Gstar,\delta}(x)-m_{\hat G_{d,m},\delta}(x))^2\in[0,1]$. Applying Lemma~\ref{lem:quad-convert} with $a=f_{\Gstar}(x)$ and $b=f_{\hat G_{d,m}}(x)$ yields
\begin{displaymath}
\sum_{x=1}^{\hat K} f_{\Gstar}(x)u(x)\le 4\sum_{x=1}^{\hat K} w_x(\Gstar,\hat G_{d,m})u(x)+2\sum_{x=1}^{\hat K}\frac{(f_{\Gstar}(x)-f_{\hat G_{d,m}}(x))^2}{f_{\Gstar}(x)+f_{\hat G_{d,m}}(x)},
\end{displaymath}
where
\begin{displaymath}
w_x(\Gstar,\hat G_{d,m}):=\frac{f_{\Gstar}(x)f_{\hat G_{d,m}}(x)}{f_{\Gstar}(x)+f_{\hat G_{d,m}}(x)}.
\end{displaymath}
Using Lemma~\ref{lem:chi2-hell},
\begin{equation}\label{eq:smoothed-convert-hhat}
\sum_{x=1}^{\hat K} f_{\Gstar}(x)u(x)\le4\sum_{x=1}^{\hat K} w_x(\Gstar,\hat G_{d,m})u(x)+8\mathcal H(\Gstar,\hat G_{d,m}).
\end{equation}
Now fix \(x\le \hat K\). Apply Lemma~\ref{lem:ratio-stab} with $A=A_{x,\delta}(\Gstar)$, $B=B_x(\Gstar)$, $\tilde A=A_{x,\delta}(\hat G_{d,m})$ and $\tilde B=B_x(\hat G_{d,m})$. Since \(0\le \phi_\delta\le 1\), we have \(A_{x,\delta}(\Gstar)\le B_x(\Gstar)\) and \(A_{x,\delta}(\hat G_{d,m})\le B_x(\hat G_{d,m})\), so the lemma applies. Therefore
\begin{align*}
&\frac{B_x(\Gstar)B_x(\hat G_{d,m})}{B_x(\Gstar)+B_x(\hat G_{d,m})}\Big(m_{\Gstar,\delta}(x)-m_{\hat G_{d,m},\delta}(x)\Big)^2\\
&\quad\le 8\Big\{(\sqrt{A_{x,\delta}(\Gstar)}-\sqrt{A_{x,\delta}(\hat G_{d,m})})^2 +(\sqrt{B_x(\Gstar)}-\sqrt{B_x(\hat G_{d,m})})^2\Big\}.
\end{align*}
Now $B_x(\Gstar)=x!f_{\Gstar}(x)$ and $B_x(\hat G_{d,m})=x!f_{\hat G_{d,m}}(x)$, hence
\begin{displaymath}
\frac{B_x(\Gstar)B_x(\hat G_{d,m})}{B_x(\Gstar)+B_x(\hat G_{d,m})}=x!\,\frac{f_{\Gstar}(x)f_{\hat G_{d,m}}(x)}{f_{\Gstar}(x)+f_{\hat G_{d,m}}(x)}=x!\,w_x(\Gstar,\hat G_{d,m}).
\end{displaymath}
Thus
\begin{displaymath}
w_x(\Gstar,\hat G_{d,m})\,u(x)\le\frac{8}{x!}\Big\{(\sqrt{A_{x,\delta}(\Gstar)}-\sqrt{A_{x,\delta}(\hat G_{d,m})})^2+(\sqrt{B_x(\Gstar)}-\sqrt{B_x(\hat G_{d,m})})^2\Big\}.
\end{displaymath}
Since $\sqrt{B_x(\Gstar)}=\sqrt{x!}\,\sqrt{f_{\Gstar}(x)}$ and $\sqrt{B_x(\hat G_{d,m})}=\sqrt{x!}\,\sqrt{f_{\hat G_{d,m}}(x)}$, we get
\begin{displaymath}
\frac{1}{x!}(\sqrt{B_x(\Gstar)}-\sqrt{B_x(\hat G_{d,m})})^2=(\sqrt{f_{\Gstar}(x)}-\sqrt{f_{\hat G_{d,m}}(x)})^2.
\end{displaymath}
Hence
\begin{displaymath}
w_x(\Gstar,\hat G_{d,m})\,u(x)\le8(\sqrt{f_{\Gstar}(x)}-\sqrt{f_{\hat G_{d,m}}(x)})^2+\frac{8}{x!}(\sqrt{A_{x,\delta}(\Gstar)}-\sqrt{A_{x,\delta}(\hat G_{d,m})})^2.
\end{displaymath}
Summing over \(x=1,\dots,\hat K\) gives
\begin{equation}\label{eq:w-sum-split-hhat}
\sum_{x=1}^{\hat K} w_x(\Gstar,\hat G_{d,m})u(x)\le8\sum_{x=1}^{\hat K}(\sqrt{f_{\Gstar}(x)}-\sqrt{f_{\hat G_{d,m}}(x)})^2+8\sum_{x=1}^{\hat K}\frac{1}{x!}(\sqrt{A_{x,\delta}(\Gstar)}-\sqrt{A_{x,\delta}(\hat G_{d,m})})^2.
\end{equation}
The first term is bounded by \(8\mathcal H(\Gstar,\hat G_{d,m})\). By Lemma~\ref{lem:sqrtA-control-route1-hhat},
\begin{displaymath}
\Big(\sqrt{A_{x,\delta}(\Gstar)}-\sqrt{A_{x,\delta}(\hat G_{d,m})}\Big)^2\le 8C_A\,\mathcal H(\Gstar,\hat G_{d,m})+2\sqrt{2}\,\varepsilon_{x,\delta}.
\end{displaymath}
Summing over \(x=1,\dots,\hat K\) and using \(\sum_{x=1}^{\hat K}1/x!\le e\) yields
\begin{align}\label{eq:A-sum-control-route1-hhat}
\sum_{x=1}^{\hat K}\frac{1}{x!}\Big(\sqrt{A_{x,\delta}(\Gstar)}-\sqrt{A_{x,\delta}(\hat G_{d,m})}\Big)^2
&\le 8eC_A\,\mathcal H(\Gstar,\hat G_{d,m})+2\sqrt{2}\sum_{x=1}^{\hat K}\frac{\varepsilon_{x,\delta}}{x!}.
\end{align}
By Assumption~\ref{asmp:aa-self-normal-regular}, \(\varepsilon_{x,\delta}\le C_\varepsilon\delta\). Therefore
\begin{displaymath}
\sum_{x=1}^{\hat K}\frac{\varepsilon_{x,\delta}}{x!}\le C_\varepsilon\delta\sum_{x=1}^{\hat K}\frac1{x!}\le eC_\varepsilon\delta.
\end{displaymath}
Consequently,
\begin{equation}\label{eq:A-sum-control-route1-final-hhat}
\sum_{x=1}^{\hat K}\frac{1}{x!}\Big(\sqrt{A_{x,\delta}(\Gstar)}-\sqrt{A_{x,\delta}(\hat G_{d,m})}\Big)^2\le C_{A,1}\,\mathcal H(\Gstar,\hat G_{d,m})+ C_{A,2}\,\delta,
\end{equation}
for constants \(C_{A,1}\) and \(C_{A,2}\) depending only on the constants in
\prettyref{asmp:aa-self-normal-regular}. Combining \eqref{eq:w-sum-split-hhat} with \eqref{eq:A-sum-control-route1-final-hhat} gives
\begin{equation}\label{eq:w-sum-final-hhat}
\sum_{x=1}^{\hat K} w_x(\Gstar,\hat G_{d,m})u(x)\le C_{w,1}\,\mathcal H(\Gstar,\hat G_{d,m})+C_{w,2}\,\delta,
\end{equation}
for constants \(C_{w,1},C_{w,2}\) depending only on \(\hat h\) and the constants in \prettyref{asmp:aa-self-normal-regular}. Plugging \eqref{eq:w-sum-final-hhat} into \eqref{eq:smoothed-convert-hhat} yields
\begin{equation}\label{eq:smoothed-est-hhat}
\sum_{x=1}^{\hat K} f_{\Gstar}(x)\bigl(m_{\Gstar,\delta}(x)-m_{\hat G_{d,m},\delta}(x)\bigr)^2\le C_{\mathrm{sm},1}\,\mathcal H(\Gstar,\hat G_{d,m})+C_{\mathrm{sm},2}\,\delta,
\end{equation}
for constants \(C_{\mathrm{sm},1},C_{\mathrm{sm},2}\) depending only on \(\hat h\) and the constants in \prettyref{asmp:aa-self-normal-regular}.

\paragraph{Putting the three pieces together.}

Combine \eqref{eq:main-sum-decomp-hhat}, \eqref{eq:bias-G-hhat}, \eqref{eq:bias-Gtilde-hhat}, and \eqref{eq:smoothed-est-hhat}: there exist constants \(C_0,C_1\), depending only on \((\hat h,L)\) and the constants in Assumption~\ref{asmp:aa-self-normal-regular}, such that
\begin{equation}\label{eq:pointwise-main-hhat}
\sum_{x=1}^{\hat K}(m_{\Gstar}(x)-m_{\hat G_{d,m}}(x))^2 f_{\Gstar}(x)\le C_0\,\delta + C_1\,\mathcal H(\Gstar,\hat G_{d,m}).
\end{equation}
Using \eqref{eq:basic-reduction-hhat} and taking expectations over the training sample,
\begin{equation}\label{eq:pre-final-hhat}
\frac{1}{n}\EE\!\left[\left(\hat S_{n,>}(\Gstar)-\hat S_{n,>}(\hat G_{d,m})\right)^2\right]\le n\,C_0\,\delta+n\,C_1\,\EE\!\big[\mathcal H(\Gstar,\hat G_{d,m})\big].
\end{equation}

\paragraph{Final regret bound.}

By Lemma~\ref{lem:EB_density-H2}\textnormal{(a)}, with
\begin{displaymath}
R_{\hat h}(m):=\min\!\left\{\max\{1,\hat h\}\frac{\log m}{\log\log m},\hat h+\log m\right\},
\end{displaymath}
we have
\begin{displaymath}
\EE[\mathcal H(\Gstar,\hat G_{d,m})]=\EE\!\left[H^2(f_{\Gstar},f_{\hat G_{d,m}})\right]\le C_d(\hat h,c_1,c_2)\,\frac{R_{\hat h}(m)}{m}.
\end{displaymath}
Plug this into \eqref{eq:pre-final-hhat}:
\begin{displaymath}
\frac{1}{n}\EE\!\left[\left(\hat S_{n,>}(\Gstar)-\hat S_{n,>}(\hat G_{d,m})\right)^2\right]\le n\,C_0\,\delta+n\,C_1\,C_d(\hat h,c_1,c_2)\,\frac{R_{\hat h}(m)}{m}.
\end{displaymath}
That is,
\begin{align}\label{eq:final-regret-bound-hhat}
&\frac{1}{n}\EE\!\left[\left(\hat S_{n,>}(\Gstar)-\hat S_{n,>}(\hat G_{d,m})\right)^2\right]\\
&\notag\quad\le n\,C_0\,\delta+n\,C_1\,C_d(\hat h,c_1,c_2)\,\frac{1}{m}\min\!\left\{\max\{1,\hat h\}\frac{\log m}{\log\log m},\hat h+\log m\right\}.
\end{align}

\subsubsection{Subexponential setting}\label{app:proof-subexpo-new}


\begin{lemma}\label{lem:tail-X-exp-nocoarsen}
	Let $\Gstar\in\subexpo(s)$ and let $X\mid \theta\sim\Poi(\theta)$, $\theta\sim \Gstar$. Then there exist constants $c_s>0$ and $C_s<\infty$, depending only on $s$, such that
	\begin{displaymath}
		\PP_{\Gstar}(X>K)\le C_s e^{-c_s K},\qquad K\in\integers_+.
	\end{displaymath}
	One may take
	\begin{displaymath}
		c_s:=\log\Big(1+\frac{1}{2s}\Big),\qquad C_s:=3.
	\end{displaymath}
\end{lemma}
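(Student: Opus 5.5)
The tail bound follows almost directly from Lemma~\ref{lmm:subexpo_properties}\ref{pt:ii}, which already gives $\PP[Z_1\ge K]\le 3e^{-K\log(1+\frac1{2s})}$ for every $K\ge1$. Since $X$ is integer valued, $\{X>K\}\subseteq\{X\ge K\}$, so for $K\ge1$ we get $\PP_{\Gstar}(X>K)\le 3e^{-c_sK}$ with $c_s=\log(1+\frac1{2s})$. For $K=0$ the right-hand side equals $3\ge1$, so the bound is trivial. This settles the statement with $C_s=3$.

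For a self-contained derivation, I would use a Chernoff bound on the mixture. For $\lambda>0$, the Poisson moment generating function gives $\EE[e^{\lambda X}]=\EE_{\Gstar}[e^{\theta(e^\lambda-1)}]$. Writing $t=e^\lambda-1$ and integrating by parts with $\Gstar((u,\infty))\le 2e^{-u/s}$, this equals
\[
1+t\int_0^\infty e^{tu}\,\Gstar((u,\infty))\,du\le 1+\frac{2t}{1/s-t}
\]
for $t<1/s$.

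Choosing $t=1/(2s)$, i.e.\ $e^\lambda=1+\frac1{2s}$, the bound becomes $1+2=3$. Markov's inequality then yields
\[
\PP(X\ge K)\le 3e^{-\lambda K}=3\Big(1+\frac1{2s}\Big)^{-K},
\]
which is exactly the claimed constants.

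No step is a real obstacle. The only care needed is the integration-by-parts identity for the moment generating function of a nonnegative random variable, and checking that $t=1/(2s)$ lies in the region where the integral converges.
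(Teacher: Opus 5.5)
Your proposal is correct, and your self-contained derivation is essentially the paper's proof: you bound $\EE[e^{\theta/(2s)}]\le 3$ via the tail-integral formula, identify this with $\EE[(1+\frac{1}{2s})^X]$ through the Poisson generating function, and apply Markov. The paper applies Markov to $\{X\ge K+1\}$ rather than $\{X\ge K\}$, which changes nothing; your first-paragraph shortcut through part (ii) of Lemma~\ref{lmm:subexpo_properties}, with $K=0$ handled separately, is also valid.
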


\begin{proof}
	For every $\lambda\in(0,1/s)$,
	\begin{displaymath}
		\EE[e^{\lambda \theta}]=1+\lambda\int_0^\infty e^{\lambda t}\,\Gstar((t,\infty))\,dt\le1+2\lambda\int_0^\infty e^{-(1/s-\lambda)t}\,dt=1+\frac{2\lambda s}{1-\lambda s}.
	\end{displaymath}
	Choose $\lambda=(2s)^{-1}$ to get
	\begin{displaymath}
		\EE[e^{\theta/(2s)}]\le 3.
	\end{displaymath}
	Now let $z_s:=1+\frac{1}{2s}$ and $c_s:=\log z_s$. Since $X\mid \theta\sim\Poi(\theta)$,
	\begin{displaymath}
		\EE[z_s^X\mid \theta]=\exp\big(\theta(z_s-1)\big)=e^{\theta/(2s)}.
	\end{displaymath}
	Hence
	\begin{displaymath}
		\EE[z_s^X]\le 3.
	\end{displaymath}
	By Markov's inequality,
	\begin{displaymath}
		\PP_{\Gstar}(X\ge K+1)\le z_s^{-(K+1)}\EE[z_s^X]\le3 z_s^{-(K+1)}\le3e^{-c_s K}.
	\end{displaymath}
	This proves the claim.
\end{proof}

\paragraph{Finite-window stability and completion of the proof.}

We first note that all ratios appearing below are well defined. Assumption~\ref{asmp:aa-local-mass}, applied at \(x=0\), implies that $G(\{0\})=0$ for every $G\in\mathcal C$. Indeed, for every \(\delta\in(0,1]\),
\begin{displaymath}
G(\{0\})\le G([-\delta,\delta]\cap\operatorname{supp}(G))\le L\delta,
\end{displaymath}
and the claim follows by letting \(\delta\downarrow0\). Since every \(G\in\mathcal C\) is a probability measure on \(\mathbb R_+\), it follows that
\begin{displaymath}
B_x(G)=\int_{\mathbb R_+}e^{-\theta}\theta^x\,dG(\theta)>0,\qquad x\ge1,\quad G\in\mathcal C.
\end{displaymath}

We now work on the finite window \(\{0,\dots,K\}\). If \(K=0\), then the positive part of the window is empty. Since $m_{\Gstar}(0)=m_{\hat G_{d,m}}(0)=0$ and $|m_{\Gstar}(x)-m_{\hat G_{d,m}}(x)|\le1$, the same reduction used below gives
\begin{displaymath}
\frac1n\EE\!\left[\left(\hat S_{n,>}(\Gstar)-\hat S_{n,>}(\hat G_{d,m})\right)^2\right]\le n\PP_{\Gstar}(X>0)\le nC_s,
\end{displaymath}
which is bounded by the right-hand side of \eqref{eq:main-sube-result}. We may therefore assume \(K\ge1\).

Since \(\Gstar\in\mathcal C\) and \(\hat G_{d,m}\in\mathcal C\) almost surely, Assumptions~\ref{asmp:aa-local-mass} and \ref{asmp:aa-self-normal-regular} hold for both
\(\Gstar\) and \(\hat G_{d,m}\) on the window \(\{0,\dots,K\}\).

We next record the truncated reduction. Conditional on \(\hat G_{d,m}\), set $\Delta(x):=m_{\Gstar}(x)-m_{\hat G_{d,m}}(x)$. Then \(|\Delta(x)|\le1\) and \(\Delta(0)=0\). As in the
compact-support case,
\begin{displaymath}
\frac1n\EE\!\left[\left(\hat S_{n,>}(\Gstar)-\hat S_{n,>}(\hat G_{d,m})\right)^2\Bigm|\hat G_{d,m}\right]\le n\sum_{x\ge0}\Delta(x)^2f_{\Gstar}(x).
\end{displaymath}
Splitting the last sum at \(K\) gives
\begin{equation}\label{eq:subexpo-new-basic-reduction-exp-short}
\frac1n\EE\!\left[\left(\hat S_{n,>}(\Gstar)-\hat S_{n,>}(\hat G_{d,m})\right)^2\Bigm|\hat G_{d,m}\right]\le n\sum_{x=1}^{K}\bigl(m_{\Gstar}(x)-m_{\hat G_{d,m}}(x)\bigr)^2f_{\Gstar}(x)+n\PP_{\Gstar}(X>K).
\end{equation}

\paragraph{Smoothing decomposition.}

By $(a+b+c)^2\le 3(a^2+b^2+c^2)$,
\begin{align}\label{eq:subexpo-new-main-sum-decomp-hhat}
&\sum_{x=1}^{K}\bigl(m_{\Gstar}(x)-m_{\hat G_{d,m}}(x)\bigr)^2f_{\Gstar}(x)\notag\\
&\quad\le 3\sum_{x=1}^{K}\bigl(m_{\Gstar}(x)-m_{\Gstar,\delta}(x)\bigr)^2f_{\Gstar}(x)\notag\\
&\qquad+3\sum_{x=1}^{K}\bigl(m_{\Gstar,\delta}(x)-m_{\hat G_{d,m},\delta}(x)\bigr)^2f_{\Gstar}(x)
+3\sum_{x=1}^{K}\bigl(m_{\hat G_{d,m},\delta}(x)-m_{\hat G_{d,m}}(x)\bigr)^2f_{\Gstar}(x).
\end{align}

We now control the three terms on the right-hand side separately.

\paragraph{Bias term for $\Gstar$.}
\label{app:bias}
We will first establish a bound on $\sum_{x=1}^K\bigl(m_G(x)-m_{G,\delta}(x)\bigr)^2f_G(x)$ for any generic $G$ satisfying \prettyref{asmp:aa-local-mass}. Fix $x\in\{1,\dots,K\}$. Since 
\begin{align*}
	m_{G}(x)&=\frac{A_x(G)}{B_x(G)}=
	{\int_{\reals_+}
	{e^{-\theta}\theta^x\over x!}\indc{\theta<x}dG(\theta)
	\over
	f_{G}(x)}\\
	m_{G,\delta}(x)&=\frac{A_{x,\delta}(G)}{B_x(G)}
	=
	{\int_{\reals_+}
		{e^{-\theta}\theta^x\over x!}\phi_\delta(x-\theta)dG(\theta)
		\over
		f_{G}(x)},
\end{align*} 
we have
\begin{displaymath}
	m_{G,\delta}(x)-m_{G}(x)=\frac{A_{x,\delta}(G)-A_x(G)}{B_x(G)}
	={\int_{\reals_+}
		{e^{-\theta}\theta^x\over x!}\bigl(\phi_\delta(x-\theta)-I(\theta<x)\bigr)dG(\theta)
		\over
		f_{G}(x)}.
\end{displaymath}
Hence, using \eqref{eq:phi-pointwise-error-main},
\begin{displaymath}
	|m_{G,\delta}(x)-m_{G}(x)|
	\le\frac 1{f_{G}(x)}\int_{\reals_+} {e^{-\theta}\theta^x\over x!}\,I(|\theta-x|<\delta)\,dG(\theta).
\end{displaymath}
Thus, using the inequality $\int_{\reals_+} {e^{-\theta}\theta^x\over x!} I(|\theta-x|<\delta)\,dG(\theta)\leq \int_{\reals_+}{e^{-\theta}\theta^x\over x!}\,dG(\theta)=f_G(x)$ we get
\begin{align}\label{eq:m5}
	\sum_{x=1}^{K} f_{G}(x)\,
	(m_{G,\delta}(x)-m_{G}(x))^2
	&\le\sum_{x=1}^{K} \frac 1{f_{G}(x)}
	\pth{\int_{\reals_+} {e^{-\theta}\theta^x\over x!}\,I(|\theta-x|<\delta)\,dG(\theta)}^2
	\notag\\
	&{\le}
	\sum_{x=1}^K \int_{\reals_+} {e^{-\theta}\theta^x\over x!} I(|\theta-x|<\delta)\,dG(\theta)
	\notag\\
	&{\le}
	\sum_{x=1}^K \int_{\reals_+} I(|\theta-x|<\delta)\,dG(\theta)
	=\sum_{x=1}^K G([x-\delta,x+\delta])
	{\le}  KL\delta,
\end{align}
where the last display followed from \prettyref{asmp:aa-local-mass}.
Hence, plugging in $G=\Gstar$ gives us
\begin{equation}\label{eq:subexpo-new-bias-G-hhat}
	\sum_{x=1}^{K} f_{\Gstar}(x)\,(m_{\Gstar,\delta}(x)-m_{\Gstar}(x))^2\le KL\delta.
\end{equation}

\paragraph{Bias term for $\hat G_{d,m}$.}

First we use \eqref{eq:m5} with $G=\hat G_{d,m}$ to get
\begin{align}\label{eq:m6}
	\sum_{x=1}^{K} f_{\hat G_{d,m}}(x)\,(m_{\hat G_{d,m},\delta}(x)-m_{\hat G_{d,m}}(x))^2\le KL\delta.
\end{align}
Now define $u(x):=(m_{\hat G_{d,m},\delta}(x)-m_{\hat G_{d,m}}(x))^2\in[0,1]$. Applying Lemma~\ref{lem:quad-convert} pointwise with $a=f_{\Gstar}(x)$ and $b=f_{\hat G_{d,m}}(x)$, we get
\begin{displaymath}
	f_{\Gstar}(x)u(x)\le 4\,\frac{f_{\Gstar}(x)f_{\hat G_{d,m}}(x)}{f_{\Gstar}(x)+f_{\hat G_{d,m}}(x)}\,u(x)+2\,\frac{(f_{\Gstar}(x)-f_{\hat G_{d,m}}(x))^2}{f_{\Gstar}(x)+f_{\hat G_{d,m}}(x)}.
\end{displaymath}
Summing over \(x\le K\) and using
\begin{displaymath}
	\frac{f_{\Gstar}(x)f_{\hat G_{d,m}}(x)}{f_{\Gstar}(x)+f_{\hat G_{d,m}}(x)}\le f_{\hat G_{d,m}}(x),
\end{displaymath}
we obtain
\begin{displaymath}
	\sum_{x=1}^{K} f_{\Gstar}(x)u(x)\le 4\sum_{x=1}^{K} f_{\hat G_{d,m}}(x)u(x)+2\sum_{x=1}^{K}\frac{(f_{\Gstar}(x)-f_{\hat G_{d,m}}(x))^2}{f_{\Gstar}(x)+f_{\hat G_{d,m}}(x)}.
\end{displaymath}
The first term is at most \(4KL\delta\) by \eqref{eq:m6}. By Lemma~\ref{lem:chi2-hell},
\begin{displaymath}
	\sum_{x=1}^{K}\frac{(f_{\Gstar}(x)-f_{\hat G_{d,m}}(x))^2}{f_{\Gstar}(x)+f_{\hat G_{d,m}}(x)}\le 4\sum_{x=1}^{K}(\sqrt{f_{\Gstar}(x)}-\sqrt{f_{\hat G_{d,m}}(x)})^2\le4\mathcal H({\Gstar},\hat G_{d,m}).
\end{displaymath}
Therefore
\begin{equation}\label{eq:subexpo-new-bias-Gtilde-hhat}
	\sum_{x=1}^{K} f_{\Gstar}(x)\,(m_{\hat G_{d,m},\delta}(x)-m_{\hat G_{d,m}}(x))^2\le 4KL\delta + 8\mathcal H(\Gstar,\hat G_{d,m}).
\end{equation}

\paragraph{Smoothed estimation error.}

Let $u(x):=(m_{\Gstar,\delta}(x)-m_{\hat G_{d,m},\delta}(x))^2\in[0,1]$. Applying Lemma~\ref{lem:quad-convert} with $a=f_{\Gstar}(x)$ and $b=f_{\hat G_{d,m}}(x)$ yields
\begin{displaymath}
	\sum_{x=1}^{K} f_{\Gstar}(x)u(x)\le 4\sum_{x=1}^{K} w_x(\Gstar,\hat G_{d,m})u(x)+2\sum_{x=1}^{K}\frac{(f_{\Gstar}(x)-f_{\hat G_{d,m}}(x))^2}{f_{\Gstar}(x)+f_{\hat G_{d,m}}(x)},
\end{displaymath}
where
\begin{displaymath}
	w_x(\Gstar,\hat G_{d,m}):=\frac{f_{\Gstar}(x)f_{\hat G_{d,m}}(x)}{f_{\Gstar}(x)+f_{\hat G_{d,m}}(x)}.
\end{displaymath}
Using Lemma~\ref{lem:chi2-hell},
\begin{equation}\label{eq:subexpo-new-smoothed-convert-hhat}
	\sum_{x=1}^{K} f_{\Gstar}(x)u(x)\le4\sum_{x=1}^{K} w_x(\Gstar,\hat G_{d,m})u(x)+8\mathcal H(\Gstar,\hat G_{d,m}).
\end{equation}
Now fix \(x\le K\). Apply Lemma~\ref{lem:ratio-stab} with $A=A_{x,\delta}(\Gstar)$, $B=B_x(\Gstar)$, $\tilde A=A_{x,\delta}(\hat G_{d,m})$ and $\tilde B=B_x(\hat G_{d,m})$. Since \(0\le \phi_\delta\le 1\), we have \(A_{x,\delta}(\Gstar)\le B_x(\Gstar)\) and \(A_{x,\delta}(\hat G_{d,m})\le B_x(\hat G_{d,m})\), so the lemma applies. Therefore
\begin{align*}
&\frac{B_x(\Gstar)B_x(\hat G_{d,m})}{B_x(\Gstar)+B_x(\hat G_{d,m})}\Big(m_{\Gstar,\delta}(x)-m_{\hat G_{d,m},\delta}(x)\Big)^2\\
&\quad\le 8\Big\{(\sqrt{A_{x,\delta}(\Gstar)}-\sqrt{A_{x,\delta}(\hat G_{d,m})})^2+(\sqrt{B_x(\Gstar)}-\sqrt{B_x(\hat G_{d,m})})^2\Big\}.
\end{align*}
Now $B_x(\Gstar)=x!f_{\Gstar}(x)$ and $B_x(\hat G_{d,m})=x!f_{\hat G_{d,m}}(x)$, hence
\begin{displaymath}
	\frac{B_x(\Gstar)B_x(\hat G_{d,m})}{B_x(\Gstar)+B_x(\hat G_{d,m})}=x!\,\frac{f_{\Gstar}(x)f_{\hat G_{d,m}}(x)}{f_{\Gstar}(x)+f_{\hat G_{d,m}}(x)}=x!\,w_x(\Gstar,\hat G_{d,m}).
\end{displaymath}
Thus
\begin{displaymath}
	w_x(\Gstar,\hat G_{d,m})\,u(x)\le\frac{8}{x!}\Big\{(\sqrt{A_{x,\delta}(\Gstar)}-\sqrt{A_{x,\delta}(\hat G_{d,m})})^2+(\sqrt{B_x(\Gstar)}-\sqrt{B_x(\hat G_{d,m})})^2\Big\}.
\end{displaymath}
Since $\sqrt{B_x(\Gstar)}=\sqrt{x!}\,\sqrt{f_{\Gstar}(x)}$ and $\sqrt{B_x(\hat G_{d,m})}=\sqrt{x!}\,\sqrt{f_{\hat G_{d,m}}(x)}$, we get
\begin{displaymath}
	\frac{1}{x!}(\sqrt{B_x(\Gstar)}-\sqrt{B_x(\hat G_{d,m})})^2=(\sqrt{f_{\Gstar}(x)}-\sqrt{f_{\hat G_{d,m}}(x)})^2.
\end{displaymath}
Hence
\begin{displaymath}
	w_x(\Gstar,\hat G_{d,m})\,u(x)\le8(\sqrt{f_{\Gstar}(x)}-\sqrt{f_{\hat G_{d,m}}(x)})^2+\frac{8}{x!}(\sqrt{A_{x,\delta}(\Gstar)}-\sqrt{A_{x,\delta}(\hat G_{d,m})})^2.
\end{displaymath}
Summing over \(x=1,\dots,K\) gives
\begin{equation}\label{eq:subexpo-new-w-sum-split-hhat}
	\sum_{x=1}^{K} w_x(\Gstar,\hat G_{d,m})u(x)\le8\sum_{x=1}^{K}(\sqrt{f_{\Gstar}(x)}-\sqrt{f_{\hat G_{d,m}}(x)})^2+8\sum_{x=1}^{K}\frac{1}{x!}(\sqrt{A_{x,\delta}(\Gstar)}-\sqrt{A_{x,\delta}(\hat G_{d,m})})^2.
\end{equation}
The first term is bounded by \(8\mathcal H(\Gstar,\hat G_{d,m})\). By Lemma~\ref{lem:sqrtA-control-route1-hhat},
\begin{displaymath}
	\Big(\sqrt{A_{x,\delta}(\Gstar)}-\sqrt{A_{x,\delta}(\hat G_{d,m})}\Big)^2\le 8C_A\,\mathcal H(\Gstar,\hat G_{d,m})+2\sqrt{2}\,\varepsilon_{x,\delta}.
\end{displaymath}
Summing over \(x=1,\dots,K\) and using \(\sum_{x=1}^{K}1/x!\le e\) yields
\begin{align}\label{eq:subexpo-new-A-sum-control-route1-hhat}
	\sum_{x=1}^{K}\frac{1}{x!}\Big(\sqrt{A_{x,\delta}(\Gstar)}-\sqrt{A_{x,\delta}(\hat G_{d,m})}\Big)^2
	&\le
	8eC_A\,\mathcal H(\Gstar,\hat G_{d,m})+2\sqrt{2}\sum_{x=1}^{K}\frac{\varepsilon_{x,\delta}}{x!}.
\end{align}
By Assumption~\ref{asmp:aa-self-normal-regular}, \(\varepsilon_{x,\delta}\le C_\varepsilon K \delta\). Therefore
\begin{align}
	\label{eq:m7}
	\sum_{x=1}^{K}\frac{\varepsilon_{x,\delta}}{x!}\le C_\varepsilon K \delta\sum_{x=1}^{K}\frac1{x!}\le eC_\varepsilon K \delta.
\end{align}
Consequently,
\begin{equation}\label{eq:subexpo-new-A-sum-control-route1-final-hhat}
	\sum_{x=1}^{K}\frac{1}{x!}\Big(\sqrt{A_{x,\delta}(\Gstar)}-\sqrt{A_{x,\delta}(\hat G_{d,m})}\Big)^2\le C_{A,1}\,\mathcal H(\Gstar,\hat G_{d,m})+ C_{A,2} \,K \,\delta,
\end{equation}
for constants \(C_{A,1}\) and \(C_{A,2}\) depending only on the constants in
\prettyref{asmp:aa-self-normal-regular}. Combining \eqref{eq:subexpo-new-w-sum-split-hhat} with \eqref{eq:subexpo-new-A-sum-control-route1-final-hhat} gives
\begin{equation}\label{eq:subexpo-new-w-sum-final-hhat}
	\sum_{x=1}^{K} w_x(\Gstar,\hat G_{d,m})u(x)\le C_{w,1}\,\mathcal H(\Gstar,\hat G_{d,m})+C_{w,2}\,K\,\delta,
\end{equation}
for constants \(C_{w,1},C_{w,2}\) depending only on the constants in \prettyref{asmp:aa-self-normal-regular}. Plugging \eqref{eq:subexpo-new-w-sum-final-hhat} into \eqref{eq:subexpo-new-smoothed-convert-hhat} yields
\begin{equation}\label{eq:subexpo-new-smoothed-est-hhat}
	\sum_{x=1}^{K} f_{\Gstar}(x)\bigl(m_{\Gstar,\delta}(x)-m_{\hat G_{d,m},\delta}(x)\bigr)^2\le C_{\mathrm{sm},1}\,\mathcal H(\Gstar,\hat G_{d,m})+C_{\mathrm{sm},2}\,K\,\delta,
\end{equation}
for constants \(C_{\mathrm{sm},1},C_{\mathrm{sm},2}\) depending only on the constants in \prettyref{asmp:aa-self-normal-regular}.

\paragraph{Putting the three pieces together.}

Combine \eqref{eq:subexpo-new-main-sum-decomp-hhat}, \eqref{eq:subexpo-new-bias-G-hhat}, \eqref{eq:subexpo-new-bias-Gtilde-hhat}, and \eqref{eq:subexpo-new-smoothed-est-hhat}: there exist constants \(C_0,C_1\), depending only on \(L\) and the constants in Assumption~\ref{asmp:aa-self-normal-regular}, such that
\begin{equation}\label{eq:subexpo-new-pointwise-main-hhat}
\sum_{x=1}^{K}\bigl(m_{\Gstar}(x)-m_{\hat G_{d,m}}(x)\bigr)^2f_{\Gstar}(x)\le C_0K\delta+C_1\mathcal H(\Gstar,\hat G_{d,m}).
\end{equation}
Using \eqref{eq:subexpo-new-basic-reduction-exp-short} and
\eqref{eq:subexpo-new-pointwise-main-hhat}, and then taking expectations
over the training sample, we obtain
\begin{equation}\label{eq:subexpo-new-pre-final-hhat}
\frac1n\EE\!\left[\left(\hat S_{n,>}(\Gstar)-\hat S_{n,>}(\hat G_{d,m})\right)^2\right]\le nC_0K\delta+nC_1\EE\!\left[H^2(f_{\Gstar},f_{\hat G_{d,m}})\right]+n\PP_{\Gstar}(X>K).
\end{equation}

By Lemma~\ref{lem:EB_density-H2}\textnormal{(b)},
\begin{displaymath}
\EE\!\left[H^2(f_{\Gstar},f_{\hat G_{d,m}})\right]\le C_d(s,c_1,c_2)\frac{\log m}{m}.
\end{displaymath}
Moreover, by Lemma~\ref{lem:tail-X-exp-nocoarsen},
\begin{displaymath}
\PP_{\Gstar}(X>K)\le C_s e^{-c_sK}.
\end{displaymath}
Therefore,
\begin{displaymath}
\frac1n\EE\!\left[\left(\hat S_{n,>}(\Gstar)-\hat S_{n,>}(\hat G_{d,m})\right)^2\right]\le nC_0K\delta+nC_1C_d(s,c_1,c_2)\frac{\log m}{m}+nC_s e^{-c_sK}.
\end{displaymath}
This proves Theorem~\ref{thm:main-above-average}\textnormal{(b)}.


\subsection{Proof of Theorem~\ref{thm:main-above-average-coarsened}}\label{proof:main-above-average-coarsened}

For notational convenience, throughout this appendix we write \(\mathcal H^{(T)}(G_1,G_2)\) for the squared Hellinger distance between the \(T\)-coarsened mixture probability mass functions, namely
\begin{displaymath}
\mathcal H^{(T)}(G_1,G_2):=H^2\!\left(\tilde f_{G_1}^{(T)},\tilde f_{G_2}^{(T)}\right)=\sum_{t=0}^{T+1}\left(\sqrt{\tilde f_{G_1}^{(T)}(t)}-\sqrt{\tilde f_{G_2}^{(T)}(t)}\right)^2.
\end{displaymath}

\subsubsection{Compact-support setting}


\paragraph{The coarsened smoothed numerator bound.}

\begin{lemma}\label{lem:sqrtA-control-coarse-compact}
Fix \(x\in\{1,\dots,\hat K\}\) and \(\delta\in(0,1]\). Assume \prettyref{asmp:aa-self-normal-coarse} holds on \(\mathcal C\). Then, for all \(G_1,G_2\in\mathcal C\),
\begin{displaymath}
\left(\sqrt{A_{x,\delta}(G_1)}-\sqrt{A_{x,\delta}(G_2)}\right)^2\le 8C_A\,\mathcal H^{(T)}(G_1,G_2)+2\sqrt{2}\,\varepsilon_{x,\delta}^{(T)}.
\end{displaymath}
\end{lemma}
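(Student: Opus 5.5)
This is the coarsened analogue of Lemma~\ref{lem:sqrtA-control-route1-hhat}, and we plan to repeat that argument verbatim with the coarsened objects. The full-pmf representation is replaced by the one in Assumption~\ref{asmp:aa-self-normal-coarse}. Fix $x$ and $\delta$, and write $A_j=A_{x,\delta}(G_j)$ for $j=1,2$. Define the linear functionals
\[
L_j=\sum_{t=0}^{T+1}\alpha^{(T)}_{x,\delta}(t)\,\tilde f^{(T)}_{G_j}(t),
\]
and set $S=A_1+A_2$. If $S=0$ then $A_1=A_2=0$ and there is nothing to prove, so we assume $S>0$.

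By \eqref{eq:A-rep-coarse-aa}, $|A_j-L_j|\le\varepsilon^{(T)}_{x,\delta}$, so $|A_1-A_2|\le|L_1-L_2|+2\varepsilon^{(T)}_{x,\delta}$. Combining $(\sqrt u-\sqrt v)^2\le (u-v)^2/(u+v)$ with $(p+q)^2\le 2p^2+2q^2$ gives
\[
(\sqrt{A_1}-\sqrt{A_2})^2\le \frac{2|L_1-L_2|^2}{S}+\frac{8(\varepsilon^{(T)}_{x,\delta})^2}{S}.
\]
Next we apply Cauchy--Schwarz on the finite alphabet $\{0,\dots,T+1\}$, weighting by $\tilde f^{(T)}_{G_1}(t)+\tilde f^{(T)}_{G_2}(t)$:
\[
|L_1-L_2|^2\le\Big\{\sum_{t=0}^{T+1}(\alpha^{(T)}_{x,\delta}(t))^2(\tilde f^{(T)}_{G_1}(t)+\tilde f^{(T)}_{G_2}(t))\Big\}\Big\{\sum_{t=0}^{T+1}\frac{(\tilde f^{(T)}_{G_1}(t)-\tilde f^{(T)}_{G_2}(t))^2}{\tilde f^{(T)}_{G_1}(t)+\tilde f^{(T)}_{G_2}(t)}\Big\}.
\]
Terms with zero denominator have zero numerator, so we drop them. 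The first factor is at most $C_A S$ by \eqref{eq:A-selfnorm-coarse-aa}. The coarsenings are genuine pmfs, so Lemma~\ref{lem:chi2-hell} bounds the second factor by $4\mathcal H^{(T)}(G_1,G_2)$. Together these give
\[
(\sqrt{A_1}-\sqrt{A_2})^2\le 8C_A\mathcal H^{(T)}(G_1,G_2)+\frac{8(\varepsilon^{(T)}_{x,\delta})^2}{S}.
\]

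Finally, the trivial bound $(\sqrt{A_1}-\sqrt{A_2})^2\le A_1+A_2=S$ shows that the left side is at most
\[
8C_A\mathcal H^{(T)}(G_1,G_2)+\min\{S,\,8(\varepsilon^{(T)}_{x,\delta})^2/S\}.
\]
Since $\min\{S,c/S\}\le\sqrt c$, the minimum is at most $2\sqrt2\,\varepsilon^{(T)}_{x,\delta}$, which is the claim.

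None of these steps is a genuine obstacle. The only point needing care is the bookkeeping in the Cauchy--Schwarz step: everything must be expressed through the coarsened pmfs, so that the self-normalization \eqref{eq:A-selfnorm-coarse-aa} applies. Nothing requires $x\le\hat K$ beyond $x$ lying in the window where the assumption is imposed.
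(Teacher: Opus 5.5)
Your proposal is correct and follows the paper's proof step for step: the representation error bound, the inequality $(\sqrt u-\sqrt v)^2\le (u-v)^2/(u+v)$, the weighted Cauchy--Schwarz combined with the coarsened self-normalization and Lemma~\ref{lem:chi2-hell}, and the final bound $\min\{S,8(\varepsilon^{(T)}_{x,\delta})^2/S\}\le 2\sqrt2\,\varepsilon^{(T)}_{x,\delta}$. Your remark about dropping zero-denominator terms in the Cauchy--Schwarz step makes explicit a detail the paper leaves implicit.
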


\begin{proof}
Write $A_j=A_{x,\delta}(G_j)$, $L_j=\sum_{t=0}^{T+1}\alpha_{x,\delta}^{(T)}(t)\tilde f_{G_j}^{(T)}(t)$ for $j=1,2$. Set \(S=A_1+A_2\). If \(S=0\), then \(A_1=A_2=0\), and the claim is trivial. Therefore, assume \(S>0\). By the representation bound in Assumption~\ref{asmp:aa-self-normal-coarse},
\begin{displaymath}
|A_j-L_j|\le \varepsilon_{x,\delta}^{(T)},
\qquad j=1,2.
\end{displaymath}
Hence
\begin{displaymath}
|A_1-A_2|\le|L_1-L_2|+2\varepsilon_{x,\delta}^{(T)}.
\end{displaymath}
Using
\begin{displaymath}
(\sqrt u-\sqrt v)^2\le \frac{(u-v)^2}{u+v},\qquad u,v\ge0,
\end{displaymath}
we obtain
\begin{displaymath}
(\sqrt{A_1}-\sqrt{A_2})^2\le\frac{2|L_1-L_2|^2}{S}+\frac{8(\varepsilon_{x,\delta}^{(T)})^2}{S}.
\end{displaymath}
By Cauchy--Schwarz inequality
\begin{displaymath}
|L_1-L_2|^2\le\left\{\sum_{t=0}^{T+1}\left(\alpha_{x,\delta}^{(T)}(t)\right)^2\left(\tilde f_{G_1}^{(T)}(t)+\tilde f_{G_2}^{(T)}(t)\right)\right\}\left\{\sum_{t=0}^{T+1}\frac{\left(\tilde f_{G_1}^{(T)}(t)-\tilde f_{G_2}^{(T)}(t)\right)^2}{\tilde f_{G_1}^{(T)}(t)+\tilde f_{G_2}^{(T)}(t)}\right\}.
\end{displaymath}
By the self-normalization bound in Assumption~\ref{asmp:aa-self-normal-coarse} and Lemma~\ref{lem:chi2-hell}, $|L_1-L_2|^2\le4C_A S\,\mathcal H^{(T)}(G_1,G_2)$. Therefore
\begin{displaymath}
(\sqrt{A_1}-\sqrt{A_2})^2\le8C_A\,\mathcal H^{(T)}(G_1,G_2)+\frac{8(\varepsilon_{x,\delta}^{(T)})^2}{S}.
\end{displaymath}
On the other hand,
\begin{displaymath}
(\sqrt{A_1}-\sqrt{A_2})^2\le A_1+A_2=S.
\end{displaymath}
Combining the two bounds gives
\begin{displaymath}
(\sqrt{A_1}-\sqrt{A_2})^2\le8C_A\,\mathcal H^{(T)}(G_1,G_2)+\min\left\{S,\frac{8(\varepsilon_{x,\delta}^{(T)})^2}{S}\right\}.
\end{displaymath}
Finally, $\min\left\{S,\frac{8(\varepsilon_{x,\delta}^{(T)})^2}{S}\right\}\le2\sqrt{2}\,\varepsilon_{x,\delta}^{(T)}$. This proves the lemma.
\end{proof}

\paragraph{Regret with respect to the oracle.}

Define $\hat S_{n,>}(\tilde G^{(T)}_{d,m}):=\sum_{i=1}^n m_{\tilde G^{(T)}_{d,m}}(X_i)$. We start from the reduction
\begin{equation}\label{eq:basic-reduction-coarse-compact}
\frac1n\EE\!\left[\left(\hat S_{n,>}(\Gstar)-\hat S_{n,>}(\tilde G^{(T)}_{d,m})\right)^2\mid \tilde G^{(T)}_{d,m}\right]\le n\sum_{x=1}^{\hat K}\left(m_{\Gstar}(x)-m_{\tilde G^{(T)}_{d,m}}(x)\right)^2f_{\Gstar}(x).
\end{equation}
Indeed, since both \(\Gstar\) and \(\tilde G^{(T)}_{d,m}\) are supported on \([0,\hat h]\), for every integer \(x\ge \hat K+1\) we have \(x>\hat h\), and therefore
\begin{displaymath}
m_{\Gstar}(x)=m_{\tilde G^{(T)}_{d,m}}(x)=1.
\end{displaymath}
Moreover, $m_{\Gstar}(0)=m_{\tilde G^{(T)}_{d,m}}(0)=0$. Thus \(m_{\Gstar}(x)-m_{\tilde G^{(T)}_{d,m}}(x)\) vanishes outside \(\{1,\dots,\hat K\}\).
Conditional on \(\tilde G^{(T)}_{d,m}\), set
\begin{displaymath}
\Delta(x):=
\left(m_{\Gstar}(x)-m_{\tilde G^{(T)}_{d,m}}(x)\right)\mathbf 1\{1\le x\le \hat K\}.
\end{displaymath}
Then \(\Delta(X_1),\dots,\Delta(X_n)\) are i.i.d. under \(X_i\sim f_{\Gstar}\), and
\(|\Delta(x)|\le1\). Hence, as in the proof of
\eqref{eq:basic-reduction-hhat},
\begin{displaymath}
\EE\!\left[
\left(\sum_{i=1}^n\Delta(X_i)\right)^2
\Bigm|\tilde G^{(T)}_{d,m}
\right]
\le
n^2\EE[\Delta(X)^2\mid\tilde G^{(T)}_{d,m}],
\end{displaymath}
which gives \eqref{eq:basic-reduction-coarse-compact} after dividing by \(n\).

\paragraph{Smoothing decomposition.}

By \((a+b+c)^2\le3(a^2+b^2+c^2)\),
\begin{align}\label{eq:main-sum-decomp-coarse-compact}
&\sum_{x=1}^{\hat K}\left(m_{\Gstar}(x)-m_{\tilde G^{(T)}_{d,m}}(x)\right)^2 f_{\Gstar}(x)
\nonumber\\
&\qquad\le
3\sum_{x=1}^{\hat K}\left(m_{\Gstar}(x)-m_{\Gstar,\delta}(x)\right)^2f_{\Gstar}(x)+3\sum_{x=1}^{\hat K}\left(m_{\Gstar,\delta}(x)-m_{\tilde G^{(T)}_{d,m},\delta}(x)\right)^2f_{\Gstar}(x)
\nonumber\\
&\qquad\quad+
3\sum_{x=1}^{\hat K}\left(m_{\tilde G^{(T)}_{d,m},\delta}(x)-m_{\tilde G^{(T)}_{d,m}}(x)\right)^2f_{\Gstar}(x).
\end{align}

We now control the three terms on the right-hand side separately. The first term controls the bias in approximating $m_{\Gstar}(\cdot)$ using $m_{\Gstar,\delta}(\cdot)$. In view of the analysis in \prettyref{app:bias}, we have
\begin{equation}\label{eq:bias-G-coarse-compact}
\sum_{x=1}^{\hat K}f_{\Gstar}(x)\left(m_{\Gstar,\delta}(x)-m_{\Gstar}(x)\right)^2\le \hat K L\delta.
\end{equation}

Next, we control the third term in \eqref{eq:main-sum-decomp-coarse-compact}, i.e. the bias term in approximating $m_{\tilde G^{(T)}_{d,m}}(\cdot)$ using $m_{\tilde G^{(T)}_{d,m},\delta}(\cdot)$. We use \eqref{eq:m5} with $G=\tilde G^{(T)}_{d,m}$ and $K$ replaced by $\hat K$ to get
\begin{align}\label{eq:m6-truncated}
	\sum_{x=1}^{\hat K} f_{\tilde G^{(T)}_{d,m}}(x)\,(m_{\tilde G^{(T)}_{d,m},\delta}(x)-m_{\tilde G^{(T)}_{d,m}}(x))^2\le \hat KL\delta.
\end{align}

Now set $u(x):=\left(m_{\tilde G^{(T)}_{d,m},\delta}(x)-m_{\tilde G^{(T)}_{d,m}}(x)\right)^2\in[0,1]$. By Lemma~\ref{lem:quad-convert}, with \(a=f_{\Gstar}(x)\) and \(b=f_{\tilde G^{(T)}_{d,m}}(x)\), we obtain that
\begin{displaymath}
f_{\Gstar}(x)u(x)\le 4\frac{f_{\Gstar}(x)f_{\tilde G^{(T)}_{d,m}}(x)}{f_{\Gstar}(x)+f_{\tilde G^{(T)}_{d,m}}(x)}u(x)+2\frac{(f_{\Gstar}(x)-f_{\tilde G^{(T)}_{d,m}}(x))^2}{f_{\Gstar}(x)+f_{\tilde G^{(T)}_{d,m}}(x)}.
\end{displaymath}
Summing over \(x=1,\dots,\hat K\), using
\begin{displaymath}
\frac{f_{\Gstar}(x)f_{\tilde G^{(T)}_{d,m}}(x)}{f_{\Gstar}(x)+f_{\tilde G^{(T)}_{d,m}}(x)}\le f_{\tilde G^{(T)}_{d,m}}(x),
\end{displaymath}
and applying Lemma~\ref{lem:chi2-hell} together with \(T\ge\hat K\), we obtain
\begin{equation}\label{eq:bias-Gtilde-coarse-compact}
\sum_{x=1}^{\hat K}f_{\Gstar}(x)\left(m_{\tilde G^{(T)}_{d,m},\delta}(x)-m_{\tilde G^{(T)}_{d,m}}(x)\right)^2\le 4\hat KL\delta+8\mathcal H^{(T)}(\Gstar,\tilde G^{(T)}_{d,m}).
\end{equation}

\paragraph{Smoothed estimation error.}

Let $u(x):=\left(m_{\Gstar,\delta}(x)-m_{\tilde G^{(T)}_{d,m},\delta}(x)\right)^2\in[0,1]$. By Lemma~\ref{lem:quad-convert},
\begin{displaymath}
\sum_{x=1}^{\hat K}f_{\Gstar}(x)u(x)\le 4\sum_{x=1}^{\hat K}w_x(\Gstar,\tilde G^{(T)}_{d,m})u(x)+2\sum_{x=1}^{\hat K}
\frac{(f_{\Gstar}(x)-f_{\tilde G^{(T)}_{d,m}}(x))^2}{f_{\Gstar}(x)+f_{\tilde G^{(T)}_{d,m}}(x)},
\end{displaymath}
where $$w_x(\Gstar,\tilde G^{(T)}_{d,m}):=\frac{f_{\Gstar}(x)f_{\tilde G^{(T)}_{d,m}}(x)}{f_{\Gstar}(x)+f_{\tilde G^{(T)}_{d,m}}(x)}.$$
Using Lemma~\ref{lem:chi2-hell} and \(T\ge\hat K\),
\begin{equation}\label{eq:smoothed-convert-coarse-compact}
\sum_{x=1}^{\hat K}f_{\Gstar}(x)u(x)\le 4\sum_{x=1}^{\hat K}w_x(\Gstar,\tilde G^{(T)}_{d,m})u(x)+8\mathcal H^{(T)}(\Gstar,\tilde G^{(T)}_{d,m}).
\end{equation}
Fix \(x\le\hat K\). Apply Lemma~\ref{lem:ratio-stab} with $A=A_{x,\delta}(\Gstar)$, $B=B_x(\Gstar)$, $\tilde A=A_{x,\delta}(\tilde G^{(T)}_{d,m})$ and $\tilde B=B_x(\tilde G^{(T)}_{d,m})$. Since \(0\le\phi_\delta\le1\),
\begin{displaymath}
A_{x,\delta}(\Gstar)\le B_x(\Gstar),\qquad A_{x,\delta}(\tilde G^{(T)}_{d,m})\le B_x(\tilde G^{(T)}_{d,m}).
\end{displaymath}
Therefore
\begin{displaymath}
w_x(\Gstar,\tilde G^{(T)}_{d,m})u(x)\le\frac{8}{x!}\left\{\left(\sqrt{A_{x,\delta}(\Gstar)}-\sqrt{A_{x,\delta}(\tilde G^{(T)}_{d,m})}\right)^2+\left(\sqrt{B_x(\Gstar)}-\sqrt{B_x(\tilde G^{(T)}_{d,m})}\right)^2\right\}.
\end{displaymath}
Since \(B_x(G)=x!f_G(x)\), this becomes
\begin{displaymath}
w_x(\Gstar,\tilde G^{(T)}_{d,m})u(x)\le 8\left(\sqrt{f_{\Gstar}(x)}-\sqrt{f_{\tilde G^{(T)}_{d,m}}(x)}\right)^2+\frac8{x!}\left(\sqrt{A_{x,\delta}(\Gstar)}-\sqrt{A_{x,\delta}(\tilde G^{(T)}_{d,m})}\right)^2.
\end{displaymath}
Summing over \(x=1,\dots,\hat K\), and using \(T\ge\hat K\), gives
\begin{equation}\label{eq:w-sum-split-coarse-compact}
\sum_{x=1}^{\hat K}w_x(\Gstar,\tilde G^{(T)}_{d,m})u(x)\le 8\mathcal H^{(T)}(\Gstar,\tilde G^{(T)}_{d,m})+8\sum_{x=1}^{\hat K}\frac1{x!}\left(\sqrt{A_{x,\delta}(\Gstar)}-\sqrt{A_{x,\delta}(\tilde G^{(T)}_{d,m})}\right)^2.
\end{equation}
For the \(A\)-part, apply Lemma~\ref{lem:sqrtA-control-coarse-compact} with \((G_1,G_2)=(\Gstar,\tilde G^{(T)}_{d,m})\). Then
\begin{displaymath}
\left(\sqrt{A_{x,\delta}(\Gstar)}-\sqrt{A_{x,\delta}(\tilde G^{(T)}_{d,m})}\right)^2\le 8C_A\mathcal H^{(T)}(\Gstar,\tilde G^{(T)}_{d,m})+2\sqrt{2}\,\varepsilon_{x,\delta}^{(T)}.
\end{displaymath}
Summing over \(x=1,\dots,\hat K\), and using \(\sum_{x=1}^{\hat K}1/x!\le e\), yields
\begin{equation}\label{eq:A-sum-control-coarse-compact}
\sum_{x=1}^{\hat K}\frac1{x!}\left(\sqrt{A_{x,\delta}(\Gstar)}-\sqrt{A_{x,\delta}(\tilde G^{(T)}_{d,m})}\right)^2\le 8eC_A\mathcal H^{(T)}(\Gstar,\tilde G^{(T)}_{d,m})+2\sqrt{2}\sum_{x=1}^{\hat K}\frac{\varepsilon_{x,\delta}^{(T)}}{x!}.
\end{equation}
By Assumption~\ref{asmp:aa-self-normal-coarse}, \(\varepsilon_{x,\delta}^{(T)}\le C_\varepsilon|\calW_+|\delta\leq C_\epsilon K\delta\). Therefore
\begin{displaymath}
\sum_{x=1}^{\hat K}\frac{\varepsilon_{x,\delta}^{(T)}}{x!}\le C_\varepsilon\delta\sum_{x=1}^{\hat K}\frac1{x!}\le eC_\varepsilon\delta.
\end{displaymath}
Consequently,
\begin{equation}\label{eq:A-sum-control-coarse-final}
\sum_{x=1}^{\hat K}\frac1{x!}\left(\sqrt{A_{x,\delta}(\Gstar)}-\sqrt{A_{x,\delta}(\tilde G^{(T)}_{d,m})}\right)^2\le C_{A,1}\mathcal H^{(T)}(\Gstar,\tilde G^{(T)}_{d,m})+C_{A,2}\delta,
\end{equation}
for constants \(C_{A,1},C_{A,2}\) depending on the structural parameters of $\calC$ and the constants in \prettyref{asmp:aa-self-normal-coarse}. Combining \eqref{eq:w-sum-split-coarse-compact} with \eqref{eq:A-sum-control-coarse-final}, we obtain
\begin{equation}\label{eq:w-sum-final-coarse-compact}
\sum_{x=1}^{\hat K}w_x(\Gstar,\tilde G^{(T)}_{d,m})u(x)\le C_{w,1}\mathcal H^{(T)}(\Gstar,\tilde G^{(T)}_{d,m})+C_{w,2}\delta.
\end{equation}
Plugging this into \eqref{eq:smoothed-convert-coarse-compact} gives
\begin{equation}\label{eq:smoothed-est-coarse-compact}
\sum_{x=1}^{\hat K}f_{\Gstar}(x)\left(m_{\Gstar,\delta}(x)-m_{\tilde G^{(T)}_{d,m},\delta}(x)\right)^2\le C_{\mathrm{sm},1}\mathcal H^{(T)}(\Gstar,\tilde G^{(T)}_{d,m})+C_{\mathrm{sm},2}\delta.
\end{equation}

\paragraph{Putting the three pieces together.}

Combining \eqref{eq:main-sum-decomp-coarse-compact}, \eqref{eq:bias-G-coarse-compact}, \eqref{eq:bias-Gtilde-coarse-compact}, and \eqref{eq:smoothed-est-coarse-compact}, there exist constants \(C_0,C_1\), depending only on \((\hat h,T,L)\) and the constants in \prettyref{asmp:aa-self-normal-coarse}, such that

\begin{equation}\label{eq:pointwise-main-coarse-compact}
\sum_{x=1}^{\hat K}\left(m_{\Gstar}(x)-m_{\tilde G^{(T)}_{d,m}}(x)\right)^2 f_{\Gstar}(x)\le C_0\delta+C_1\mathcal H^{(T)}(\Gstar,\tilde G^{(T)}_{d,m}).
\end{equation}
Using \eqref{eq:basic-reduction-coarse-compact} and taking expectations over the training sample,
\begin{displaymath}
\frac1n\EE\!\left[\left(\hat S_{n,>}(\Gstar)-\hat S_{n,>}(\tilde G^{(T)}_{d,m})\right)^2\right]\le nC_0\delta+nC_1\EE\!\left[\mathcal H^{(T)}(\Gstar,\tilde G^{(T)}_{d,m})\right].
\end{displaymath}
Finally, by Lemma~\ref{lem:EB_density-H2_coars},
\begin{displaymath}
\EE\!\left[\mathcal H^{(T)}(\Gstar,\tilde G^{(T)}_{d,m})\right]\le C_d^{\mathrm{coarse}}(c_1,c_2)\frac{T+2}{m}.
\end{displaymath}
Therefore
\begin{displaymath}
\frac1n\EE\!\left[\left(\hat S_{n,>}(\Gstar)-\hat S_{n,>}(\tilde G^{(T)}_{d,m})\right)^2\right]\le n C_0\delta+nC_1C_d^{\mathrm{coarse}}(c_1,c_2)\frac{T+2}{m}.
\end{displaymath}
This proves Theorem~\ref{thm:main-above-average-coarsened} \textnormal{(a)}.

\subsubsection{Subexponential setting}


We now assume \(\Gstar\in\mathcal P_s\), fix integers \(K\ge0\) and \(T\ge K\), and let \(\tilde G^{(T)}_{d,m}\in\mathcal C\) denote the coarsened minimum-distance estimate
appearing in Theorem~\ref{thm:main-above-average-coarsened} \textnormal{(b)}. If \(K=0\), the finite window is empty. Since $m_{\Gstar}(0)=m_{\tilde G^{(T)}_{d,m}}(0)=0$  and \(|m_{\Gstar}(x)-m_{\tilde G^{(T)}_{d,m}}(x)|\le1\), the same reduction used below gives
\begin{displaymath}
\frac1n\EE\!\left[\left(\hat S_{n,>}(\Gstar)-\hat S_{n,>}(\tilde G^{(T)}_{d,m})\right)^2\right]\le n\PP_{\Gstar}(X>0)\le nC_s,
\end{displaymath}
which is upper bounded by the right-hand side of \eqref{eq:main-coarse-sube-result}. We may therefore assume \(K\ge1\). Conditional on \(\tilde G^{(T)}_{d,m}\), set
\begin{displaymath}
\Delta(x):=m_{\Gstar}(x)-m_{\tilde G^{(T)}_{d,m}}(x).
\end{displaymath}
Then \(|\Delta(x)|\le1\) and \(\Delta(0)=0\). As in the compact-support proof,
\begin{displaymath}
\frac1n\EE\!\left[\left(\hat S_{n,>}(\Gstar)-\hat S_{n,>}(\tilde G^{(T)}_{d,m})\right)^2\mid \tilde G^{(T)}_{d,m}\right]\le n\sum_{x\ge0}\Delta(x)^2 f_{\Gstar}(x).
\end{displaymath}
Splitting the last sum at \(K\) gives
\begin{equation}\label{eq:basic-reduction-coarse-sube}
\frac1n\EE\!\left[\left(\hat S_{n,>}(\Gstar)-\hat S_{n,>}(\tilde G^{(T)}_{d,m})\right)^2\mid \tilde G^{(T)}_{d,m}\right]\le n\sum_{x=1}^{K}\left(m_{\Gstar}(x)-m_{\tilde G^{(T)}_{d,m}}(x)\right)^2f_{\Gstar}(x)+n\PP_{\Gstar}(X>K).
\end{equation}

In view of \eqref{eq:basic-reduction-coarse-sube}, taking expectation over the training sample, an argument similar to that as in \prettyref{app:proof-subexpo-new} gives
\begin{displaymath}
\frac1n\EE\!\left[\left(\hat S_{n,>}(\Gstar)-\hat S_{n,>}(\tilde G^{(T)}_{d,m})\right)^2\right]\le nKC_0\delta+nC_1\EE\!\left[\mathcal H^{(T)}(\Gstar,\tilde G^{(T)}_{d,m})\right]+n\PP_{\Gstar}(X>K).
\end{displaymath}
Here, to bounds terms similar as in \eqref{eq:subexpo-new-A-sum-control-route1-hhat} and \eqref{eq:m7} we used $\epsilon_{x,\delta}\leq C_\epsilon h(T,|\calW_+|)\delta\leq C_\epsilon K\delta$.
By Lemma~\ref{lem:EB_density-H2_coars},
\begin{displaymath}
\EE\!\left[\mathcal H^{(T)}(\Gstar,\tilde G^{(T)}_{d,m})\right]\le C_d^{\mathrm{coarse}}(c_1,c_2)\frac{T+2}{m}.
\end{displaymath}
Moreover, by Lemma~\ref{lem:tail-X-exp-nocoarsen},
\begin{displaymath}
\PP_{\Gstar}(X>K)\le C_s e^{-c_sK},\qquad c_s=\log\left(1+\frac1{2s}\right),\qquad C_s=3.
\end{displaymath}
Therefore
\begin{displaymath}
\frac1n\EE\!\left[\left(\hat S_{n,>}(\Gstar)-\hat S_{n,>}(\tilde G^{(T)}_{d,m})\right)^2\right]\le nKC_0(T)\delta+nC_1(T)C_d^{\mathrm{coarse}}(c_1,c_2)\frac{T+2}{m}+nC_s e^{-c_sK}.
\end{displaymath}
This proves Theorem~\ref{thm:main-above-average-coarsened} \textnormal{(b)}.

\subsection{Proof of Corollary~\ref{corr:above-average-discrete-feasible}}\label{subsec:proof-above-average-discrete}

\subsubsection{Preliminary lemmas}

We first record two elementary facts for the finite-support model. For \(G=\sum_{j=1}^J w_j\delta_{\theta_j}\in\mathcal P(\Theta_J)\), set $v_G(j):=w_j e^{-\theta_j},
\qquad j=1,\dots,J$. Also write
\begin{displaymath}
q_G(t):=t!f_G(t)=\sum_{j=1}^J \theta_j^t v_G(j),\qquad t=0,\dots,J-1.
\end{displaymath}
Let \(V\) be the \(J\times J\) Vandermonde matrix with entries $V_{t,j}:=\theta_j^t$ for $t=0,\dots,J-1,\quad j=1,\dots,J$. Since the points \(\theta_1,\dots,\theta_J\) are distinct, \(V\) is invertible.

\begin{lemma}\label{lem:exact-A-representation-discrete-above}
For each \(x=1,\dots,K_J\), there exist constants \(\alpha_x(0),\dots,\alpha_x(J-1)\), depending only on \(\Theta_J\), such that
\begin{displaymath}
A_x(G)=\sum_{t=0}^{J-1}\alpha_x(t)f_G(t),\qquad G\in\mathcal P(\Theta_J).
\end{displaymath}
\end{lemma}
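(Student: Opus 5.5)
The plan is to write $A_x(G)$ as a fixed linear combination of the weights $v_G(j)=w_je^{-\theta_j}$, and then express those weights linearly in $q_G(0),\dots,q_G(J-1)$ by inverting the Vandermonde matrix $V$.

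First, for $G=\sum_j w_j\delta_{\theta_j}$, the definition in Definition~\ref{def_ratio} gives
\[
A_x(G)=\sum_{j=1}^J e^{-\theta_j}\theta_j^x I(\theta_j<x)\,w_j=\sum_{j=1}^J \beta_x(j)\,v_G(j),
\qquad \beta_x(j):=\theta_j^x I(\theta_j<x).
\]
The coefficients $\beta_x(j)$ depend only on $\Theta_J$ and $x$. So $A_x$ is a linear functional of the vector $v_G=(v_G(1),\dots,v_G(J))$.

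Next, by definition $q_G=Vv_G$, where $q_G=(q_G(0),\dots,q_G(J-1))^\top$. Since the $\theta_j$ are distinct, $V$ is invertible, and therefore $v_G=V^{-1}q_G$. Substituting,
\[
A_x(G)=\beta_x^\top V^{-1}q_G=\sum_{t=0}^{J-1}\bigl(\beta_x^\top V^{-1}\bigr)_t\, t!\, f_G(t).
\]
Setting $\alpha_x(t):=t!\,(\beta_x^\top V^{-1})_t$ then gives the claimed representation. These coefficients depend only on $\Theta_J$ (through $V$ and $\beta_x$) and not on the weights $w_j$, so the identity holds uniformly over $\mathcal P(\Theta_J)$.

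The one point to check is invertibility of $V$, including the case $\theta_1=0$. Using the convention $0^0=1$, the row $t=0$ is all ones, and the matrix is the standard Vandermonde matrix in the distinct nodes $\theta_1,\dots,\theta_J$. Its determinant is $\prod_{i<j}(\theta_j-\theta_i)\neq0$. The same convention makes $q_G(0)=\sum_j v_G(j)=f_G(0)$ consistent with $0!f_G(0)$, so no further adjustment is needed.
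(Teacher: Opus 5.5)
Your proposal is correct and follows the paper's proof essentially verbatim: write $A_x(G)$ as a fixed linear functional of $v_G$, invert the Vandermonde system $q_G=Vv_G$, and substitute $q_G(t)=t!\,f_G(t)$. Your explicit check of the convention $0^0=1$ when $\theta_1=0$, which keeps $V$ a genuine Vandermonde matrix and makes $q_G(0)=f_G(0)$ consistent, is a small detail that the paper leaves implicit.
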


\begin{proof}
For \(G\in\mathcal P(\Theta_J)\), the vector \((q_G(0),\dots,q_G(J-1))^\top\) satisfies $q_G=Vv_G$. Hence \(v_G=V^{-1}q_G\). For fixed \(x=1,\dots,K_J\),
\begin{displaymath}
A_x(G)=\sum_{j:\theta_j<x}\theta_j^x v_G(j).
\end{displaymath}
Thus \(A_x(G)\) is a linear functional of \(q_G\). Since \(q_G(t)=t!f_G(t)\), there are constants \(\alpha_x(0),\dots,\alpha_x(J-1)\), depending only on \(\Theta_J\), such that
\begin{displaymath}
A_x(G)=\sum_{t=0}^{J-1}\alpha_x(t)f_G(t).
\end{displaymath}
This proves the claim.
\end{proof}

\begin{lemma}\label{lem:ratio-control-discrete-above}
For all \(G_1,G_2\in\mathcal C\), there exists a constant \(C_{\Theta_J}<\infty\), depending only on \(({\Theta_J},b_J)\), such that
\begin{displaymath}
\sum_{x=1}^{K_J} f_{G_1}(x)\bigl(m_{G_1}(x)-m_{G_2}(x)\bigr)^2\le C_{\Theta_J} H^2(f_{G_1},f_{G_2}).
\end{displaymath}
Moreover, if \(T\ge T_J\), then there exists a constant \(C_{{\Theta_J},T,b_J}<\infty\), depending only on \(({\Theta_J},T,b_J)\), such that
\begin{displaymath}
\sum_{x=1}^{K_J} f_{G_1}(x)\bigl(m_{G_1}(x)-m_{G_2}(x)\bigr)^2\le C_{{\Theta_J},T,b_J}H^2\!\left(\tilde f_{G_1}^{(T)},\tilde f_{G_2}^{(T)}\right).
\end{displaymath}
\end{lemma}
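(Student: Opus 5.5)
The plan is to write $m_G(x)=A_x(G)/B_x(G)$ with $B_x(G)=x!f_G(x)$ and to control both the numerator and the denominator by finitely many values of the mixture pmf. On $\mathcal P(\Theta_J)$ these values determine $G$ linearly through the Vandermonde matrix $V$.

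\textbf{Step 1: reduce the ratio difference to coordinatewise differences.} Fix $x\in\{1,\dots,K_J\}$. The nondegeneracy hypothesis gives $f_{G_i}(x)\ge b_J$ for $i=1,2$, and $x\le K_J$ is a fixed finite range. Writing
\[
m_{G_1}(x)-m_{G_2}(x)=\frac{A_x(G_1)-A_x(G_2)}{B_x(G_1)}+A_x(G_2)\,\frac{B_x(G_2)-B_x(G_1)}{B_x(G_1)B_x(G_2)},
\]
and using $0\le A_x\le B_x$, we get
\[
|m_{G_1}(x)-m_{G_2}(x)|\le \frac{|A_x(G_1)-A_x(G_2)|+x!\,|f_{G_1}(x)-f_{G_2}(x)|}{x!\,b_J}.
\]

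\textbf{Step 2: make the numerator linear in the pmf.} By Lemma~\ref{lem:exact-A-representation-discrete-above}, $A_x(G)=\sum_{t=0}^{J-1}\alpha_x(t)f_G(t)$ with coefficients depending only on $\Theta_J$. Hence
\[
|A_x(G_1)-A_x(G_2)|\le C'_{\Theta_J}\max_{0\le t\le J-1}|f_{G_1}(t)-f_{G_2}(t)|.
\]
Consequently, every term $(m_{G_1}(x)-m_{G_2}(x))^2$ with $x\le K_J$ is bounded by a constant times $\sum_{t\le \max(J-1,K_J)}(f_{G_1}(t)-f_{G_2}(t))^2$, and the weights $f_{G_1}(x)\le 1$ can be dropped.

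\textbf{Step 3: pass from squared differences to Hellinger.} Use the pointwise bound
\[
(a-b)^2=(\sqrt a-\sqrt b)^2(\sqrt a+\sqrt b)^2\le 4(\sqrt a-\sqrt b)^2,\qquad a,b\in[0,1],
\]
so that $\sum_t (f_{G_1}(t)-f_{G_2}(t))^2\le 4H^2(f_{G_1},f_{G_2})$. This proves the first claim, with a constant depending on $\Theta_J$ and $b_J$ only (through $J$, $K_J$, the entries of $V^{-1}$, $e^{\theta_J}$, and $b_J^{-1}$).

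\textbf{Step 4: the coarsened bound.} For the second claim, the assumption $T\ge T_J=\max\{K_J,J-1\}$ ensures that every index $t$ used above satisfies $t\le T$. For such $t$ the coarsened pmf agrees with the original one, $\tilde f^{(T)}_G(t)=f_G(t)$. The same chain therefore runs with $H^2(f_{G_1},f_{G_2})$ replaced by the partial sum $\sum_{t\le T}(\sqrt{\tilde f^{(T)}_{G_1}(t)}-\sqrt{\tilde f^{(T)}_{G_2}(t)})^2$, which is at most $H^2(\tilde f^{(T)}_{G_1},\tilde f^{(T)}_{G_2})$. The resulting constant may also depend on $T$.

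The only delicate point is Step 1: it needs the denominator floor $b_J$ applied uniformly to both $G_1$ and $G_2$. This is exactly what the class assumption $\min_{0\le x\le K_J}f_G(x)\ge b_J$ provides. Everything else is finite-dimensional linear algebra, in which $V^{-1}$ supplies the dependence on $\Theta_J$.
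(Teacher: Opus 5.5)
Your proposal is correct and follows the paper's proof essentially step for step: the same splitting of $A_1/B_1-A_2/B_2$ using the floor $B_x(G_r)\ge x!\,b_J$, the exact linear representation of $A_x$ from Lemma~\ref{lem:exact-A-representation-discrete-above}, the pointwise bound $(a-b)^2\le 4(\sqrt a-\sqrt b)^2$ on $[0,1]$, and the observation that $T\ge T_J$ keeps every coordinate used in the non-tail part of the coarsened alphabet. The paper bounds the linear functional by Cauchy--Schwarz where you use a maximum, which makes no real difference.
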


\begin{proof}
Fix \(x=1,\dots,K_J\). For \(r=1,2\), write $A_r=A_x(G_r)$ and $B_r=B_x(G_r)=x!f_{G_r}(x)$. Since \(G_1,G_2\in\mathcal C\),
\begin{displaymath}
B_r\ge x!b_J,\qquad r=1,2.
\end{displaymath}
Also \(0\le A_r\le B_r\). Hence
\begin{equation}\label{eq:discrete-ratio-direct}
\left|\frac{A_1}{B_1}-\frac{A_2}{B_2}\right|\le\frac{|A_1-A_2|}{B_1}+\frac{A_2|B_1-B_2|}{B_1B_2}\le\frac{|A_1-A_2|}{x!b_J}+\frac{|B_1-B_2|}{x!b_J}.
\end{equation}
By Lemma~\ref{lem:exact-A-representation-discrete-above},
\begin{displaymath}
A_1-A_2=\sum_{t=0}^{J-1}\alpha_x(t)\bigl(f_{G_1}(t)-f_{G_2}(t)\bigr).
\end{displaymath}
Therefore, by Cauchy--Schwarz and the elementary inequality \((a-b)^2\le4(\sqrt a-\sqrt b)^2\) for \(a,b\in[0,1]\),
\begin{displaymath}
|A_1-A_2|^2\le C_{{\Theta_J},x}H^2(f_{G_1},f_{G_2}),
\end{displaymath}
for a finite constant \(C_{{\Theta_J},x}\) depending only on \({\Theta_J}\). Similarly,
\begin{displaymath}
|B_1-B_2|^2=(x!)^2|f_{G_1}(x)-f_{G_2}(x)|^2\le 4(x!)^2H^2(f_{G_1},f_{G_2}).
\end{displaymath}
Combining these bounds with \eqref{eq:discrete-ratio-direct}, we obtain $\bigl(m_{G_1}(x)-m_{G_2}(x)\bigr)^2\le C_{{\Theta_J},b_J,x}H^2(f_{G_1},f_{G_2})$. Multiplying by \(f_{G_1}(x)\le1\) and summing over \(x=1,\dots,K_J\) gives
the first assertion.

For the coarsened assertion, assume \(T\ge T_J=\max\{K_J,J-1\}\). Then the coordinates
\(0,\dots,J-1\) and \(1,\dots,K_J\) all belong to the non-tail part of the
\(T\)-coarsened alphabet. Hence the preceding bounds remain valid with
\(H^2(f_{G_1},f_{G_2})\) replaced by
\(H^2(\tilde f_{G_1}^{(T)},\tilde f_{G_2}^{(T)})\). This proves the lemma.
\end{proof}

\subsubsection{Regular procedure}


If \(K_J=0\), then \(x>\theta_J\) for every integer \(x\ge1\). Hence, for every \(G\in\mathcal C\)
$m_G(0)=0$ and  $m_G(x)=1$ for $x\ge1$. Thus \(\hat S_{n,>}(G)=\hat S_{n,>}(\Gstar)\) for every \(G\in\mathcal C\).

First consider the regular estimator \(\hat G_{d,m}\). Conditional on \(\hat G_{d,m}\), define
\begin{displaymath}
\Delta(x):=m_{\Gstar}(x)-m_{\hat G_{d,m}}(x).
\end{displaymath}
Then \(|\Delta(x)|\le1\), \(\Delta(0)=0\), and \(\Delta(x)=0\) for all \(x>K_J\), because \(x>\theta_J\) implies \(\theta<x\) almost surely under both \(\Gstar\) and \(\hat G_{d,m}\). As in the proof of Theorem~\ref{thm:main-above-average},
\begin{displaymath}
\frac1n\EE\!\left[\left(\hat S_{n,>}(\hat G_{d,m})-\hat S_{n,>}(\Gstar)\right)^2\Bigm| \hat G_{d,m}\right]\le n\sum_{x=1}^{K_J}\Delta(x)^2 f_{\Gstar}(x).
\end{displaymath}
By Lemma~\ref{lem:ratio-control-discrete-above}, with \(G_1=\Gstar\) and \(G_2=\hat G_{d,m}\),
\begin{displaymath}
\sum_{x=1}^{K_J}\Delta(x)^2 f_{\Gstar}(x)\le C_{\Theta_J} H^2(f_{\Gstar},f_{\hat G_{d,m}}).
\end{displaymath}
Taking expectations gives
\begin{displaymath}
\frac1n\EE\!\left[\left(\hat S_{n,>}(\hat G_{d,m})-\hat S_{n,>}(\Gstar)\right)^2\right]\le nC_{\Theta_J}\EE\!\left[H^2(f_{\Gstar},f_{\hat G_{d,m}})\right].
\end{displaymath}
By Corollary~\ref{corr:hell-risk-discrete-feasible}, applied to the regular estimator in $\mathcal C$
\begin{displaymath}
\EE\!\left[H^2(f_{\Gstar},f_{\hat G_{d,m}})\right]\le C({\Theta_J},c_1,c_2)\frac{\log m}{m}.
\end{displaymath}
This proves \eqref{eq:above-average-discrete-regular}.

\subsubsection{Coarsened procedure}

Now consider the coarsened estimator \(\widetilde G_{d,m}^{(T)}\), with \(T\ge T_J\). The same reduction gives
\begin{displaymath}
\frac1n\EE\!\left[\left(\hat S_{n,>}(\widetilde G_{d,m}^{(T)})-\hat S_{n,>}(\Gstar)\right)^2\Bigm| \widetilde G_{d,m}^{(T)}\right]\le n\sum_{x=1}^{K_J}\left(m_{\Gstar}(x)-m_{\widetilde G_{d,m}^{(T)}}(x)\right)^2 f_{\Gstar}(x).
\end{displaymath}
By the coarsened part of Lemma~\ref{lem:ratio-control-discrete-above}, with \(G_1=\Gstar\) and \(G_2=\widetilde G_{d,m}^{(T)}\),
\begin{displaymath}
\sum_{x=1}^{K_J}\left(m_{\Gstar}(x)-m_{\widetilde G_{d,m}^{(T)}}(x)\right)^2 f_{\Gstar}(x)\le C_{\Theta_J,T,b_J}H^2\!\left(\tilde f_{\Gstar}^{(T)},\tilde f_{\widetilde G_{d,m}^{(T)}}^{(T)}\right).
\end{displaymath}
Taking expectations and applying Lemma~\ref{lem:EB_density-H2_coars} gives
\begin{displaymath}
\EE\!\left[H^2\!\left(\tilde f_{\Gstar}^{(T)},\tilde f_{\widetilde G_{d,m}^{(T)}}^{(T)}\right)\right]\le C_d^{\mathrm{coarse}}(c_1,c_2)\frac{T+2}{m}.
\end{displaymath}
Therefore
\begin{displaymath}
\frac1n\EE\!\left[\left(\hat S_{n,>}(\widetilde G_{d,m}^{(T)})-\hat S_{n,>}(\Gstar)\right)^2\right]\le C_{\Theta_J,T,b_J}C_d^{\mathrm{coarse}}(c_1,c_2)\frac{n(T+2)}{m}.
\end{displaymath}
Absorbing constants proves \eqref{eq:above-average-discrete-coarsened}.

\end{document}